\documentclass[11pt]{article}
\usepackage[margin=1in]{geometry}
\usepackage[utf8]{inputenc}
\usepackage[T1]{fontenc}
\usepackage{authblk}
\usepackage{backref}
\usepackage{todonotes}
\usepackage{xspace}
\usepackage[numbers,sort]{natbib}
\usepackage{enumitem}
\usepackage[normalem]{ulem}

\usepackage{comment}

\usepackage{bm}
\usepackage{nicefrac}
\usepackage{algorithm}
\usepackage{algorithmicx}
\usepackage[noend]{algpseudocode}

\usepackage{amssymb}
\usepackage{amsmath}
\usepackage{amsthm}
\usepackage{dsfont}
\usepackage{xcolor}
\usepackage{nameref}
\usepackage{url}
\definecolor{ForestGreen}{rgb}{0.1333,0.5451,0.1333}
\definecolor{DarkRed}{rgb}{0.65,0,0}
\definecolor{Red}{rgb}{1,0,0}
\usepackage[linktocpage=true,backref=page,
pagebackref=true,colorlinks,
linkcolor=DarkRed,citecolor=ForestGreen,
bookmarks,bookmarksopen,bookmarksnumbered]
{hyperref}

\usepackage[all]{hypcap}
\usepackage{footnote}
\usepackage{thm-restate}
\usepackage[tableposition=bottom]{caption}
\usepackage[all]{hypcap}
\usepackage{subcaption}
\usepackage{framed}
\usepackage[framemethod=tikz]{mdframed}
\usepackage[skins]{tcolorbox}

\makeatletter
\g@addto@macro{\maketitle}{\@thanks}
\makeatother

\usepackage{cleveref}
\theoremstyle{plain}
\newtheorem{thm}{Theorem}[section]
\newtheorem{cor}[thm]{Corollary}

\newtheorem{fact}[thm]{Fact}
\newtheorem{lem}[thm]{Lemma}
\newtheorem{Def}[thm]{Definition}
\newtheorem{obs}[thm]{Observation}
\newtheorem{claim}[thm]{Claim}

\newtheorem{remark}[thm]{Remark}

\newcommand{\E}{\mathbb{E}}%
\newcommand{\Ber}{\mathrm{Ber}}
\newcommand{\Var}{\mathrm{Var}}
\newcommand{\Cov}{\mathrm{Cov}}

\newcommand{\eps}{\epsilon}%
\newcommand{\poly}{\mathrm{poly}}

\renewcommand{\algorithmiccomment}[1]{\bgroup\hfill$\rhd$~#1\egroup}

\newcounter{note}[section]

\newcommand{\calA}{\mathcal{A}}
\newcommand{\calB}{\mathcal{B}}
\newcommand{\calD}{\mathcal{D}}
\newcommand{\calI}{\mathcal{I}}
\newcommand{\calH}{\mathcal{H}}
\newcommand{\calM}{\mathcal{M}}

\newcommand{\calS}{\mathcal{S}}

\renewcommand{\vec}[1]{\bf{#1}}

\newenvironment{wrapper}[1]
{
	\begin{center}
		\begin{minipage}{\linewidth}
			\begin{mdframed}[hidealllines=true, backgroundcolor=gray!20, leftmargin=0cm,innerleftmargin=0.4cm,innerrightmargin=0.4cm,innertopmargin=0.4cm,innerbottommargin=0.4cm,roundcorner=10pt]
				#1}
			{\end{mdframed}
		\end{minipage}
	\end{center}
} 

\renewcommand{\paragraph}[1]{\medskip\noindent\textbf{#1}}

\newcommand{\ratio}{0.652}
\newcommand{\ratiobmatching}{0.646}

\title{Online Dependent Rounding Schemes for Bipartite Matchings, with Applications}
\author[1]{Joseph (Seffi) Naor\thanks{Supported in part by ISF grant 3001/24 and United States - Israel BSF grant 2022418.}}
\author[2]{Aravind Srinivasan\thanks{Supported in part by NSF award number CCF-1918749, and by research awards from Amazon and Google.}}
\author[1]{David Wajc\footnote{Supported in part by a Taub Family Foundation “Leader in Science and Technology” fellowship, and by ISF grant 3200/24, ``Prophets, Philosophers, and Online Algorithms''. Work done in part while at Stanford University.}}
\affil[1]{Technion -- Israel Institute of Technology}
\affil[2]{University of Maryland, College Park}
\date{}

\begin{document}

	\maketitle

	\pagenumbering{gobble}
	\begin{abstract}
We introduce the abstract problem of rounding an unknown fractional bipartite $b$-matching $\bf{x}$ revealed online (e.g., output by an online fractional algorithm), exposed node-by-node on~one~side.
The objective is to maximize the \emph{rounding ratio} of the output matching $\mathcal{M}$, which is the minimum over all fractional $b$-matchings $\bf{x}$, and edges $e$, of the ratio $\Pr[e\in \mathcal{M}]/x_e$.
In analogy with the highly influential offline dependent rounding schemes of Gandhi et al.~(FOCS'02, J.ACM'06), we refer to such algorithms as \emph{online dependent rounding schemes} (ODRSes). 
This problem, with additional restrictions on the possible inputs $\bf{x}$, has played a key role in recent developments in online computing.

We provide the first generic $b$-matching ODRSes that impose no restrictions on $\bf{x}$. 
Specifically, we provide ODRSes with rounding ratios of $\ratiobmatching$ and $\ratio$ for $b$-matchings and simple matchings, respectively. 
This breaks the natural barrier of $1-1/e$, prevalent for online matching problems, and numerous online problems more broadly. Using our ODRSes, we provide a number of algorithms with similar better-than-$(1-1/e)$ ratios for several problems in online edge coloring, stochastic optimization, and more.

Our techniques, which have already found applications in several follow-up works (Patel and Wajc SODA'24, Blikstad et al.~SODA'25, Braverman et al.~SODA'25, and Aouad et al.~2024), include periodic use of \emph{offline} contention resolution schemes (in online algorithm design), grouping nodes, and a new scaling method which we call \emph{group discount and individual markup}.
\end{abstract}
	
	\newpage 
	\tableofcontents
	\newpage
	
	\pagenumbering{arabic}
	\section{Introduction}
We initiate the study of the abstract problem of online \emph{randomized rounding of fractional $b$-matchings}. Here, nodes on one side of an unknown bipartite graph $G=(V,E)$ are revealed over time; each such \emph{online node} and its edges' associated fractions in a fractional $b$-matching $\bf{x}$ are revealed online (e.g., provided by some online fractional algorithm). 
An online dependent rounding scheme (ODRS) must decide immediately and irrevocably, upon a node's arrival, which of its edges to match.\footnote{As in the offline dependent rounding schemes of \cite{gandhi2006dependent}, the term \emph{dependent} here underscores the need for random choices to depend on each other, since independent coin tosses do not yield a valid high-value ($b$-)matching.}
The objective is to match each with probability close to the value associated with it in $\bf{x}$.

\begin{Def}
    An ODRS $\calA$ for online matching has \emph{(oblivious) rounding ratio} $\alpha\in [0,1]$ if, when a vertex with edges $e_1,\dots,e_n$ are revealed online with their associated fractions $x_{e_1},\dots,x_{e_n}$, with the guarantee that ${\bf{x}}\in \{{\bf{x}}\in \mathbb{R}^{|E|}_{\geq 0} \mid x(\delta(v))\leq 1 \; \forall  
 v\in V\}$ is a fractional matching, $\calA$ outputs an \emph{integral} matching $\calM$ satisfying 
    $$\Pr[e \in \calM]\geq \alpha\cdot x_{e}\quad  \forall e\in E.$$
\end{Def}

In offline settings a rounding ratio of one is achievable, by the integrality of the bipartite matching polytope. In contrast, in online settings, this is provably impossible \cite{devanur2013randomized,cohen2018randomized,buchbinder2023lossless}.
Nonetheless, as illustrated by the FOCS 2023 workshop on online algorithms and online rounding,\footnote{\href{https://sites.google.com/view/focs23workshop-online-rounding/home}{Online Algorithms
and Online Rounding: Recent Progress} @ FOCS 2023.} 
algorithms for rounding fractional matchings online are central to numerous breakthroughs in online computing in recent years, for problems such as online bipartite edge-weighted matching \cite{fahrbach2020edge,blanc2021multiway,gao2021improved}, stochastic online bipartite matching \cite{tang2022fractional,huang2022power}, and online edge coloring \cite{cohen2019tight,kulkarni2022online,blikstad2024online}, among others.
Despite the impossibility of rounding ratios of one or $1-o(1)$ in general, many of the above results follow from ODRS with such high rounding ratios for 
\emph{structured} fractional matchings.
Echoing these recent developments, Buchbinder et al.~\cite{buchbinder2023lossless} asked what additional constraints can be imposed on fractional matchings to allow for lossless (i.e., rounding ratio of one) online rounding.

Breaking from prior work, we explicitly study the problem of designing ODRSes for \emph{unrestricted} fractional ($b$-)matchings, and their applications.
While recent work of \cite{saberi2021greedy} implies an ODRS with non-trivial rounding ratio of $0.526$ for matchings (but not for $b$-matchings), our understanding of such rounding schemes is nascent, both on the techniques side, as well as the applications side.

\paragraph{Our ODRSes.} We provide a number of ODRSes, for both structured and unrestricted fractional $b$-matchings, as well as a host of applications. 

For star graphs, or equivalently rounding in a uniform matroid (i.e., fractions summing up to at most some $b$ are revealed online), we provide two simple ODRSes with rounding ratio of one respecting prefix constraints, thus providing a counterpart to the classic pivotal sampling of \cite{srinivasan2001distributions}. Indeed, as we show (in \Cref{alg:online-rounding-properties}), one of our algorithm's output distribution is \emph{identical} to the offline algorithm's output distribution, and thus inherits the latter's known strong negative correlation properties, which we leverage later.

For ODRS for unrestricted fractional matchings, we rule out any better than $2\sqrt{2}-2\approx 0.828$ rounding ratio.
On the positive side, we show that combining ODRS for star gaphs with the single-item \emph{offline} contention resolution scheme (CRS) of Feige and Vondr\'ak \cite{feige2006allocation}
yields a simple ODRS with rounding ratio of $1-1/e\approx 0.632$. Our main result breaks this ubiquitous bound.

\begin{wrapper}
\begin{thm}\label{thm:arbitrary}
    There exist bipartite matching ($b$-matching) ODRSes with rounding ratio $\ratio$ ($\ratiobmatching$).
    No bipartite matching ODRS has rounding ratio greater than $2\sqrt{2}-2\approx 0.828$.
\end{thm}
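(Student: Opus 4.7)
I would address the three parts separately: the impossibility bound, and then the positive results for simple matchings and $b$-matchings via a unifying framework built on the star-graph ODRSes established earlier.

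For the upper bound of $2\sqrt{2}-2$, my plan is to exhibit a specific parametric family of fractional matchings revealed online and derive a numerical constraint on the rounding ratio from the per-edge guarantees. A natural instance has two phases: an initial phase of online vertices distributing mass over a shared set of offline vertices, followed by a phase of online vertices whose edges concentrate on those same offline vertices. Since each offline vertex can be matched only once, combining the per-edge guarantee $\Pr[e\in\calM]\ge \alpha x_e$ with the feasibility constraint $\sum_{v}\Pr[(v,u)\in\calM]\le 1$ forces a relation of the form $f(\alpha,a)\le 1$ in a free parameter $a$ of the instance; optimizing over $a$ should yield the tight bound $\alpha\le 2(\sqrt{2}-1)$.

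For the positive results, I would leverage the star-graph ODRS (rounding ratio $1$) together with the three techniques highlighted in the introduction: batching, offline contention resolution, and \emph{group discount and individual markup}. The plan is to partition the online sequence into consecutive blocks of carefully chosen size, and within each block apply the star ODRS at each offline vertex $u$ to select at most one in-block edge to propose to $u$, with marginals matching a \emph{discounted} version of each block's fractional values (the group discount ensures the marginals fed into later stages remain well below the feasibility budget). Across blocks, for each $u$ I would process the resulting stream of candidate proposals using an offline CRS for the single-item polytope (or uniform matroid polytope of rank $b$, for the $b$-matching case); such a CRS enjoys a high acceptance probability precisely because the discount keeps its input marginals small. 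Finally, an \emph{individual markup} boosts each edge's in-block proposal probability by a factor that compensates for both the group discount and the CRS loss, yielding the final per-edge rounding ratios $\ratio$ and $\ratiobmatching$.

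The main technical obstacle is to balance the discount factor, the block size, and the markup so that (i) the marked-up in-block marginals still sum to a feasible fractional star-matching at each offline vertex (needed for the star ODRS to apply), (ii) the offline CRS remains applicable across blocks (requiring the across-block marginals to lie in the relevant matroid polytope), and (iii) the product of the in-block rounding ratio and the across-block acceptance probability is maximized. Extending from simple matchings to $b$-matchings introduces a further subtlety because the offline CRS must handle a uniform matroid of rank $b$ rather than a single item, which accounts for the slight degradation from $\ratio$ to $\ratiobmatching$.
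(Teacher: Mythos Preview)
Your proposal has substantive gaps in both directions.

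\textbf{The upper bound.} A straight feasibility-counting argument of the form ``$\sum_v \Pr[(v,u)\in\calM]\le 1$ combined with $\Pr[e\in\calM]\ge\alpha x_e$'' is too coarse to reach $2\sqrt{2}-2$; without further structure it yields only $\alpha\le 1$. The paper's instance (\Cref{lem:LB}) has $n$ online nodes arrive first, each with $x_e=\tfrac12$ on two \emph{distinct} offline nodes, and then one final online node with $x_e=\tfrac12$ on two offline nodes chosen \emph{adversarially} from earlier pairs. The crux is the Ramsey-type \Cref{pairwise-positive-correlation}: among many $\Ber(p)$ variables, some pair is nearly non-negatively correlated. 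Hence if each early online node is matched with probability at least $p$, two offline nodes $i,j$ from different pairs are \emph{simultaneously} matched with probability roughly $p^2/4$, so the final node placed at $\{i,j\}$ is matched with probability at most $1-p^2/4$. Solving $p=1-p^2/4$ gives $p=2\sqrt{2}-2$. Your two-phase plan never invokes any joint-distribution information and cannot force this quadratic tradeoff.

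\textbf{The positive results.} You have the architecture inverted relative to the paper, and the inversion breaks the argument. In the paper, each offline node $i$ runs the star ODRS on its own stream $x_{i,1},x_{i,2},\dots$ to decide when to \emph{bid}; at each arrival $t$ the bidder set $P_t$ is fully realized, and an \emph{offline} single-item CRS (\Cref{lem:CRS}) is applied \emph{at the online node $t$} to pick at most one winner. The improvement past $1-1/e$ comes from \emph{grouping offline nodes at each time $t$} via first-fit bin packing (\Cref{first-fit}, \Cref{proposal-prob-perturbed-rephrased}), and the ``group discount and individual markup'' rescales $x_{i,t}$ as a function of the offline node's cumulative degree $s_{i,t}$ (\Cref{eqn:x-hat}): low-$s_{i,t}$ nodes are discounted because they are likely to be binned, high-$s_{i,t}$ nodes are marked up.

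Your plan instead partitions the \emph{online} timeline into blocks, runs a star ODRS per offline vertex within each block, and then applies a CRS \emph{at the offline vertex across blocks}. Two things fail. First, within a block you only enforce ``at most one proposal per offline vertex,'' so a single online node may be proposed to by several offline vertices and you never resolve that contention. Second, and fatally, the across-block stream of proposals at an offline vertex is revealed online, block by block; an offline CRS cannot be applied there, and an oblivious single-item OCRS caps you at $1/e$ (see \Cref{sec:related}). The paper's use of the offline CRS is legitimate precisely because it is invoked at time $t$ on randomness that is already realized by time $t$. Your reading of ``group discount and individual markup'' as block-level scaling is accordingly also off: in the paper both are per-edge adjustments driven by $s_{i,t}$, not by block membership.
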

\end{wrapper}

\subsection{Extensions and Applications}\label{sec:applications}
Our ODRSes and the techniques underlying them have a number of applications. 
We briefly discuss our three main applications here, deferring detailed discussions to \Cref{sec:applications-body}, where we also present other applications of our ODRS to fairness in machine learning and to algorithms with predictions.

\paragraph{Online edge coloring.} 
Our ability to round fractional matchings has implications to the recent active line of work on online edge coloring \cite{cohen2019tight,saberi2021greedy,kulkarni2022online,bhattacharya2021online,aggarwal2003switch,bahmani2012online,blikstad2024online,blikstad2024simple,dudeja2025randomized,blikstad2025deterministic}.
Here, a graph is revealed online (either vertex-by-vertex or edge-by-edge), and we must decompose this graph into color classes that form matchings (i.e., no vertex has two edges of the same color). 
A reduction due to \cite{cohen2019tight} shows that online $(\alpha+o(1))\Delta$-edge-coloring can be reduced to online matching algorithms that match each edge with probability $1/(\alpha\Delta)$. Ignoring the $o(1)$ term, these problems are in fact equivalent, since sampling a color at random among the $\alpha\Delta$ color yields an online matching algorithm matching each edge with probability $1/(\alpha\Delta)$.
This reduction has motivated the interest in online rounding a particular structured fractional matching: the uniform solution, $\bf{x}\; $$ = \frac{1}{\Delta}\cdot \bf{1}$ \cite{kulkarni2022online,blikstad2024online,cohen2019tight,cohen2018randomized}, culminating in a competitive ratio (and rounding ratio for such ``spread out'' fractional matchings) of $(1+o(1))$ for general (simple) graphs under edge arrivals \cite{blikstad2024online}.
For \emph{multigraphs}, in contrast, positive results were only known under \emph{random-order arrivals} \cite{aggarwal2003switch}.
Addressing this gap between simple graphs and multigraphs, in \Cref{sec:app-edge-coloring} we extend the above reduction from $\alpha$-competitive edge coloring to computing online matchings that match each (in this case, parallel) edge with probability $\frac{1}{\alpha \Delta}$. We thus show that a rounding ratio of $1/\alpha$
implies a competitive ratio of $\alpha+o(1)$ for edge coloring multigraphs (and is, in fact, equivalent to it). Our positive and negative results for ODRS yield the first positive and negative results for online edge-coloring in multigraphs under adversarial arrivals. For example, we achieve a competitive factor of $e/(e-1)-\Omega(1)$ in bipartite multigraphs, answering a question of Schrijver \cite[acknowledgements]{cohen2019tight}.

\paragraph{Stochastic optimization.}
Another application of our techniques concerns the online stochastic bipartite weighted matching problem, much studied in the online Prophet-Inequality literature \cite{ezra2020online,feldman2015combinatorial,dutting2020prophet,krengel1978semiamarts}. (See \Cref{sec:stochastic} for formal definition.) 
Recently, Papadimitriou et al.~\cite{papadimitriou2021online} initiated the study of efficiently approximating the optimum \emph{online} algorithm for this problem. They showed that it is \textsc{pspace}-hard to approximate the optimum algorithm within some universal constant $\beta<1$, and provided a polynomial-time online algorithm that $0.51$-approximates this online optimum. (In contrast, the best competitive ratio, or approximation of the \emph{offline} optimum, is $0.5$ for this problem \cite{ezra2020online}.)
The bound of \cite{papadimitriou2021online} can be improved to $0.526$ using ideas in \cite{saberi2021greedy} and was significantly improved to $1-1/e\approx 0.632$ by \cite{braverman2022max}. All three results are achieved via online rounding. Extending our online rounding approach of \Cref{thm:arbitrary} to such stochastic settings, we improve on the recent bound of \cite{braverman2022max} and break the ubiquitous bound of $1-1/e$, providing an online polytime $\ratio$-approximation of the optimal  (computationally-unbounded) online algorithm.

\paragraph{Multi-Stage stochastic optimization.}
Our ODRS for star graphs and its strong negative correlations allows us to generalize a result of \cite{DBLP:conf/soda/Srinivasan07} for multi-stage stochastic optimization. 
Here, uncertain parameters are stochastic but their distributions get refined over time: actions in earlier stages are cheaper but perhaps less accurate due to the stochasticity, while actions in later stages are more expensive but more accurate. 
(See the survey \cite{swamy2006approximation}.) 
A basic example is the (hyper-)graph covering problem, where sets (hyperedges) are revealed in $k$ stages, and must be covered by bought elements (nodes).
Swamy and Shmoys \cite{DBLP:journals/siamcomp/SwamyS12} presented a $2k$-approximation for this problem, later improved to $2$, in \cite{DBLP:conf/soda/Srinivasan07}, matching the offline hardness of the non-stochastic single-stage problem.
Building on the framework of \cite{DBLP:journals/siamcomp/SwamyS12,DBLP:conf/soda/Srinivasan07}, in \Cref{sec:app-stoch-opt} we generalize this result to allow for sets with \emph{muti-}coverage demands. Our algorithms have approximation ratio of $2$ (and tending to \emph{one} in some settings), and these ratios hold w.h.p.

\subsection{Techniques}
In this section we highlight some of the key ideas that allow us to achieve our main results, namely Theorems \ref{thm:arbitrary} and \ref{alg:online-rounding-properties}, deferring detailed discussions of applications to later sections.

For level-set rounding, we provide two simple ODRS that achieves a rounding ratio of one.
Surprisingly, we show via a coupling argument 
that for this problem, our main online algorithm's distribution is \emph{identical} to that of (offline) pivotal sampling, a.k.a.~Srinivasan sampling \cite{srinivasan2001distributions}. 
This allows us to inherit this well-studied algorithm's concentration properties \cite{srinivasan2001distributions,branden2012negative,dubhashi2007positive}. 
In particular, this implies that our online algorithm's output distribution satisfies the \emph{Strong Rayleigh} property (see \Cref{sec:prelims}), which in turn implies a slew of negative correlation properties, useful for our applications in later sections.

For our impossibility result, we rely on a Ramsey-theoretic probabilistic lemma, whereby in any sufficiently large set of binary variables, some (near-)positive correlation is inevitable. 
An extension of this lemma (\Cref{almost-positive-cylinder-dependence}), of possible independent interest, rules out algorithmic approaches based on guaranteeing strict negative correlation between offline nodes' matched statuses.

\paragraph{ODRSes for $b$-matching.}
Our first ODRS for $b$-matchings more broadly is obtained by interleaving level-set ODRSes with repeated invocations of the \emph{offline} 
 single-item contention resolution scheme (CRS) of Feige and Vondr\'{a}k \cite{feige2006allocation}. (This is, to the best of our knowledge, the first online algorithm to use offline CRSes in such a black-box manner.) Specifically, we run independent copies of our level-set algorithm, one per offline node, thus having each offline node $i$ ``bid'' for arriving online node $t$ independently of other offline nodes with probability $x_{i,t}$, while preserving the $b$-matching constraints of the offline nodes. 
To preserve online nodes' matching constraints, we then use the single-item offline CRS of \cite{feige2006allocation} at each time $t$: this CRS guarantees for such product distributions that $t$ is allocated to at most one buyer, with every buyer $i$ being allocated item $t$ (i.e., $(i,t)$ being matched) with probability at least $(1-1/e)\cdot x_{i,t}$, thus yielding a rounding ratio of $1-1/e$.

To improve on the above, we first note that the $(1-1/e)\cdot x_{it}$ bound is loose if any of the $x_{it}$ fractions are bounded away from $0$. Indeed, this $1-1/e$ factor is of the form 
$$\bigg(1-\prod_{i} (1-x_{it})\bigg)\bigg/\sum_i x_{i,t}\geq 1-1/e,$$
where the inequality is loose if any $x_{it}$ is large.
To beat this $1-1/e$ bound, we therefore simulate the existence of a large $x_{it}$, by \emph{grouping} offline nodes via a bin-packing algorithm, and letting at most one node per group (bin) $B$ bid for $t$.
The obtained distribution over biding offline nodes is no longer a product distribution, but it is strongly negatively correlated, and even negatively associated (see \Cref{sec:prelims}). This allows us to achieve the same $1-1/e$ bound  using CRSes for negatively correlated distributions of Bansal and Cohen \cite{bansal2021contention}.
Moreover, this grouping allows us to beat the $1-1/e$ ratio in some cases: 
In a very concrete sense, it results in effectively a single aggregated offline node bidding with large probability $\sum_{i\in B} x_{i,t}$, thus allowing us to beat the bound of $1-1/e$ if much grouping occurs. Furthermore, thinking of bids as ``costing'' the offline node future matching opportunities, we can even beat the bound of $1-1/e$ after giving nodes in a group a ``group discount'', and having them bid with lower probability than $x_{i,t}$.
This discounting then leaves offline nodes with more budget (probability of not being matched), thus allowing us to charge these nodes more at later times: in particular, when these offline nodes are not grouped, we charge them an ``individual markup'', by having them bid with higher probability than $x_{i,t}$. This then increases the matching probability of those $t$ for which little grouping occurs to beyond $1-\frac{1}{e}$, and the improved rounding ratio follows.

\subsection{Further related work}\label{sec:related}

Online bipartite $b$-matching is a prototypical online problem, tracing its origin to the seminal work of Karp et al.~\cite{karp1990optimal}. Here, nodes on one side of a bipartite graph are revealed over time, and each node $v$ can be matched at most $b_v$ times, with arriving nodes matched immediately and irrevocably. This influential problem has been widely studied in various settings (e.g., fractional/integral, simple/capacitated,  unweighted/vertex-weighted/edge-weighted)  \cite{mehta2007adwords,kalyanasundaram2000optimal,karp1990optimal,aggarwal2011online,fahrbach2020edge,blanc2021multiway,devanur2013randomized,birnbaum2008line,eden2018economic,feldman2009online2} and has been a cornerstone of the online algorithms literature. (See the surveys \cite{huang2024online,devanur2022online,mehta2013online}.) 

A natural approach to round a fractional matching $\bf{x}$ is to \emph{activate} each edge $e$ independently w.p.~$x_e$ and then pick a feasible subset $\calI$ of active edges, such that $\Pr[e\in \calI]\geq \alpha \cdot x_e$. This is obtained by \emph{contention resolution schemes} (CRSes), first studied in offline settings for single-item problems (rank-one matroids) by Feige and Vondr\'{a}k \cite{feige2006allocation} (see \Cref{sec:prelims}), and later for arbitrary matroids by Chekuri et al.~\cite{chekuri2010dependent}.This was extended to online settings by Feldman et al.~\cite{feldman2016online}, and has since become a bedrock of stochastic online algorithms, for prophet inequalities  \cite{feldman2016online}, secretary problems \cite{dughmi2020outer,dughmi2022matroid}, posted-price mechanisms \cite{chawla2010multi,kleinberg2012matroid}, sequential pricing \cite{pollner2022improved}, algorithmic contract design \cite{bechtel2022delegated} and more. 
CRSes and OCRSes have been intensely studied for numerous constraints, inspired by \cite{kleinberg2012matroid}; most related to our work is the active line of work on (O)CRSes for matchings \cite{feldman2016online,guruganesh2018understanding,fu2021random,brubach2021improved}, where the optimal balance ratio (the counterpart of rounding ratio for ODRSes) is still not fully understood, despite recent exciting progress \cite{nuti2023towards,bruggmann2022optimal,pollner2022improved,macrury2023random,macrury2024random}. While this is an exciting and active area, such OCRS-based ODRSes cannot have rounding ratio better better than $0.5$, even for rank-one matroids, by connections to the prophet-inequality problem \cite{ezra2020online,krengel1978semiamarts}, or even $1/e\approx 0.362$, as we assume $\bf{x}$ is unknown upfront, and would therefore require \emph{oblivious} OCRS \cite{fu2022oblivious}.
Going beyond independent bids for each pair $(i,t)$ is therefore crucial in order to obtain matching ODRSes with high oblivious rounding ratios.

Another online matching rounding scheme is provided by the recent exciting literature on Online Correlated Selection (OCS), first introduced in the groundbreaking paper of Fahrbach et al.~\cite{fahrbach2020edge}.
Specifically, the OCSes of \cite{gao2021improved} underlie many rounding-based online algorithms for edge-weighted matching \cite{gao2021improved}, fair  matching \cite{hosseini2024class} and i.i.d matching \cite{tang2022fractional,huang2022power}. 
This powerful machinery can be interpreted as providing oblivious rounding ratios of one for some particular structured solutions (see \cite{buchbinder2023lossless} for discussion). For unrestricted fractional matching, in contrast, this machinery provides useful per-offline-vertex guarantees, but not per-edge oblivious rounding guarantees, as we require. 

\paragraph{Follow-up work.} Following the posting of our paper online, its techniques have found applications in subsequent work. Patel and Wajc \cite{patel2024combinatorial} and subsequently Aouad et al.~\cite{aouad2024adaptive} use our approach of combining proposals from one side of a bipartite graph with contention resolution schemes on the other, for an infinite-horizon Markovian setting. 
Blikstad et al.~\cite{blikstad2025deterministic} similarly used offline CRS and the derandomization approach of \cite{ben1994power} to obtain \emph{deterministic} online bipartite edge coloring algorithms.
Finally, \cite{braverman2025new} 
build on the discount and individual markup technique together with pivotal sampling \cite{srinivasan2001distributions} to improve on our results of \Cref{sec:stochastic} for approximating the optimal stochastic online bipartite matching algorithm (which they refer to as a \emph{philosopher}).
We anticipate further applications of our techniques, in particular the use of offline contention resolution schemes for online algorithm design.

	\vspace{-0.15cm}
\section{Preliminaries}\label{sec:prelims}


\textbf{Problem Definition.}
In the online $b$-matching rounding problem, an a priori unknown bipartite graph $G=(L,R,E)$ is revealed online, together with fractions on the edges incident to the arriving edge's vertices that satisfy the $b$-matching constraints. In more detail, 
at each time $t$ an online vertex $t\in L$ arrives, together with fractions $x_{i,t}$ assigned to its edges to \emph{offline neighbors} $i\in R = [n]$.
The fractions $x_{i,t}\in [0,1]$, are promised to satisfy the  (fractional) $b$-matching constraints, namely each vertex $v\in (L\cup R)$ has fractional degree $\sum_{e\ni v} x_e$ at most $b_v$ (the integer $b_v$ is revealed when $v$ is revealed). 
By splitting each online node $t$ with capacity $b_t$ into $b_t$ online unit-capacity nodes with identical edge sets and fractions $x_{it}/b_t$ for each copy of edge $(i,t)$, we can assume WLOG that all online nodes have unit capacity.
Following arrival $t$, we must decide, immediately and irrevocably, which edges of $t$ to add to the output $b$-matching $\calM$, while satisfying the (integral) $b$-matching constraints of matching each vertex $v$ no more than $b_v$ times, and striving for a large oblivious rounding ratio.

Our ODRSes for $b$-matchings repeatedly invoke offline
\emph{contention resolution schemes} (CRSes), whose guarantees for product distributions, due to Feige and Vondr\'{a}k \cite{feige2006allocation}, and arbitrary distributions, due to Bansal and Cohen \cite{bansal2021contention}, are given below. (For completeness, we provide a self-contained proof for the general case in \Cref{appendix:prelims}.) 
\begin{restatable}{lem}{CRS}\label{lem:CRS}
     Let $\calD:2^{[n]}\to \mathbb{R}_{\geq 0}$ be a distribution over subsets of $[n]$, and denote this distribution's support by $\textrm{supp}(\calD):=\{S\subseteq [n] \mid \Pr_{R\sim \calD}[R=S] \neq 0\}$. 
     Then, there exists a randomized algorithm CRS($R,\vec{v})$ which on input set $R\sim \calD$ and vector $\vec{v}\in \mathbb{R}^n$, outputs a subset $O \subseteq R$ of size $|O|\leq 1$ satisfying
     $$\Pr[i\in O] = v_i \cdot \min_{\substack{S\subseteq [n] \\ \sum_{i\in S} v_i\neq 0}}\frac{\Pr[R\cap S\neq \emptyset]}{\sum_{i\in S} v_i} \qquad\qquad  \forall i\in [n].$$
     The algorithm runs in time polynomial in $n$ if $\calD$ is a product distribution. Otherwise, it runs in time $\poly(n,T)$, where $T\geq |\textrm{supp}(\calD)|$ is the time to compute and write down the support of $\calD$.
\end{restatable}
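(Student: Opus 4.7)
My plan is to reduce existence of the CRS with the claimed rounding ratio to a bipartite max-flow feasibility question, in which the quantity $\alpha^* := \min_S \Pr[R \cap S \neq \emptyset]/\sum_{i \in S} v_i$ emerges directly as the min-cut condition. The product-distribution case is handled by the explicit allocation scheme of Feige and Vondr\'ak \cite{feige2006allocation}; the argument below sketches the self-contained proof for general $\calD$. First I would reformulate the task: a CRS is determined by, for each realization $S \in \mathrm{supp}(\calD)$, a conditional distribution $q_S$ on $S \cup \{\bot\}$ with $\sum_{i \in S} q_S(i) \le 1$. Setting $f_{S,i} := \Pr[R=S]\cdot q_S(i) \ge 0$, the requirement $\Pr[i \in O] = \alpha^* v_i$ becomes $\sum_{S \ni i} f_{S,i} = \alpha^* v_i$, and validity of each $q_S$ becomes $\sum_{i \in S} f_{S,i} \le \Pr[R=S]$. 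I would then view $f$ as a flow in the bipartite network with source $s$, sink $t$, nodes for each $i \in [n]$ and each $S \in \mathrm{supp}(\calD)$, capacities $\alpha^* v_i$ on $s \to i$, $+\infty$ on $i \to S$ whenever $i \in S$, and $\Pr[R=S]$ on $S \to t$. A valid CRS with the claimed marginals exists iff the max flow saturates every source edge, i.e.\ has value $\alpha^* \sum_i v_i$.

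Next, by max-flow/min-cut it suffices to show every $s$-$t$ cut has value at least $\alpha^* \sum_i v_i$. Any finite-capacity cut is characterized by some $A \subseteq [n]$ kept on the source side, which forces every $S$ hitting $A$ to the source side as well (else an infinite $i \to S$ edge crosses the cut); the cheapest such cut costs $\alpha^* \sum_{i \notin A} v_i + \Pr[R \cap A \neq \emptyset]$. The inequality to check therefore rearranges to $\Pr[R \cap A \neq \emptyset]/\sum_{i \in A} v_i \ge \alpha^*$, which is precisely the definition of $\alpha^*$ (and trivially holds when $\sum_{i \in A} v_i = 0$). The resulting flow implements the CRS directly: on realization $S$, output element $i \in S$ with probability $f_{S,i}/\Pr[R=S]$, and $\bot$ otherwise.

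For the runtime, writing down the support and solving a standard max-flow on the explicit network takes $\mathrm{poly}(n, T)$ time, giving the general bound. The main obstacle is the product case, where $|\mathrm{supp}(\calD)|$ can be exponential in $n$: there, solving this flow explicitly is infeasible, and I would instead invoke the Feige--Vondr\'ak scheme, which realizes the same per-element marginals using only the individual inclusion probabilities $p_i = \Pr[i \in R]$ and runs in $\mathrm{poly}(n)$ time. Since the structural content of the lemma is fully captured by the min-cut computation above, this last step is really the only non-mechanical ingredient of the proof.
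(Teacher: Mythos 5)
Your proposal is correct and follows essentially the same route as the paper's proof: both reduce the general case to a max-flow/min-cut argument on the bipartite network between elements (with capacities $\alpha\cdot v_i$) and support sets (with capacities $\Pr[R=S]$), observe that the min-cut condition is exactly the defining inequality of $\alpha$ (a scaled Hall-type condition, which you verify explicitly where the paper appeals to the Hall's-theorem argument), and recover the conditional selection probabilities $f_{S,i}/\Pr[R=S]$ from a saturating flow, invoking Feige--Vondr\'ak for the product case. No gaps; the only difference is the (immaterial) orientation of the flow network and your slightly more explicit cut analysis.
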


\subsection{Negative association and strong Rayleigh properties}
\begin{Def}\label{def1}
	Random real-valued variables $X_1,\dots,X_n$ are \emph{negatively associated (NA)} if for any two disjoint index sets $I,J\subseteq [n]$, $I\cap J=\emptyset$, and two functions $f:\mathbb{R}^{|I|}\to \mathbb{R}$ and $g:\mathbb{R}^{|J|}\to \mathbb{R}$ both non-decreasing, 
	$$\E[f(X_i : i\in I)\cdot g(X_j : j\in J)]\leq \E[f(X_i : i\in I)]\cdot \E[g(X_j : j\in J)].$$
\end{Def}

By a simple inductive argument, negative association implies negative cylinder dependence~\cite{joag1983negative}.

\begin{lem}\label{na-implies-neg-cylinder}
For any real NA r.v.s $X_1,\dots,X_n$ and reals $x_1,\dots,x_n$, it holds that
	$$\Pr\bigg[\bigwedge_i (X_i\geq x_i)\bigg] \leq \prod_i \Pr[X_i \geq x_i] \qquad \mbox{ \textbf{and} } \qquad \Pr\bigg[\bigwedge_i (X_i\leq x_i)\bigg] \leq \prod_i \Pr[X_i \leq x_i].$$
\end{lem}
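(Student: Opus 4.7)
The plan is to prove both inequalities simultaneously by induction on $n$, using indicator functions as the test functions $f,g$ in the NA definition.

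First I would handle the upper-tail inequality. The base case $n=1$ is trivial. For the inductive step, I would apply Definition 2.2 with $I = \{1,\dots,n-1\}$, $J = \{n\}$, and the test functions
\[
f(X_1,\dots,X_{n-1}) \;=\; \prod_{i=1}^{n-1}\mathbf{1}[X_i \geq x_i], \qquad g(X_n) \;=\; \mathbf{1}[X_n \geq x_n].
\]
Each single-variable indicator $\mathbf{1}[X_i \geq x_i]$ is non-decreasing in $X_i$ and non-negative, so $f$ (a product of such indicators over disjoint variables) and $g$ are both non-decreasing in their respective coordinates. Applying NA then yields
\[
\Pr\bigg[\bigwedge_{i=1}^n (X_i \geq x_i)\bigg] \;=\; \E[f\cdot g] \;\leq\; \E[f]\cdot \E[g] \;=\; \Pr\bigg[\bigwedge_{i=1}^{n-1}(X_i\geq x_i)\bigg]\cdot \Pr[X_n \geq x_n],
\]
and the induction hypothesis on the first factor closes the argument.

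For the lower-tail inequality, I would use the same strategy but choose the test functions to make the indicators of lower sets fit the non-decreasing requirement, namely
\[
f(X_1,\dots,X_{n-1}) \;=\; -\prod_{i=1}^{n-1}\mathbf{1}[X_i \leq x_i], \qquad g(X_n) \;=\; -\mathbf{1}[X_n \leq x_n].
\]
As each $X_i$ increases, $\mathbf{1}[X_i\leq x_i]$ can only drop from $1$ to $0$, so negating the product of these indicators gives functions that are non-decreasing in each variable. Then $\E[fg] = \Pr[\bigwedge_i (X_i\leq x_i)]$ and $\E[f]\E[g]$ factors as the corresponding product, so NA gives the one-step inequality and induction again finishes the proof.

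There is no real obstacle here; the only subtle point is ensuring monotonicity of the chosen $f,g$, which for the lower-tail version requires the sign flip so that both test functions are non-decreasing in the sense of Definition 2.2. (Alternatively, one could note that $-X_1,\dots,-X_n$ are NA and reduce the lower-tail claim directly to the upper-tail claim already established.)
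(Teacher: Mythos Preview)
Your proof is correct and is exactly the ``simple inductive argument'' the paper alludes to (the paper does not spell out a proof, merely citing \cite{joag1983negative}). The choice of indicator test functions, with the sign flip in the lower-tail case to restore monotonicity, is standard and complete.
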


The following family of NA distributions is due to Dubhashi and Ranjan \cite{dubhashi1996balls}.
\begin{lem}\label{lem:0-1}
	If $X_1,\dots,X_n$ are binary r.v.s with $\sum_i X_i\leq 1$ always, then they are NA.
\end{lem}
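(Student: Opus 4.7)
The key structural fact is that the support of $(X_1,\dots,X_n)$ is extraordinarily restricted: since $\sum_k X_k\leq 1$, every outcome has at most one coordinate equal to $1$. So there are at most $n+1$ possible outcomes, and the functions $f,g$ each take at most $|I|+1$ and $|J|+1$ distinct values, respectively. My plan is to compute $\E[f]$, $\E[g]$, and $\E[fg]$ directly and then spot an algebraic factorization that makes the inequality transparent.

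\textbf{Step 1: Reduce to $I\cup J=[n]$.} Variables $X_k$ with $k\notin I\cup J$ influence neither $f$ nor $g$; every outcome in which such an $X_k$ equals $1$ looks, from the point of view of $f$ and $g$, identical to the all-zero outcome. So I simply absorb those events into the ``all-zero'' probability mass, effectively reducing to the case $I\cup J=[n]$.

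\textbf{Step 2: Compute the expectations.} Write $p_k:=\Pr[X_k=1]$, $q_I:=\sum_{i\in I}p_i$, $q_J:=\sum_{j\in J}p_j$, $f_0:=f(\vec{0})$, $f_i:=f(\vec{e}_i)$ for $i\in I$, and analogously $g_0,g_j$. Direct enumeration over the support gives
\[
\E[f]=(1-q_I)f_0+\sum_{i\in I}p_i f_i,\qquad \E[g]=(1-q_J)g_0+\sum_{j\in J}p_j g_j.
\]
For $\E[fg]$ the constraint $\sum_k X_k\leq 1$ is the crucial ingredient: when $X_i=1$ for some $i\in I$, every $X_j$ with $j\in J$ must be $0$, so $fg=f_i g_0$; symmetrically on $\{X_j=1\}$ we get $fg=f_0 g_j$. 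Hence
\[
\E[fg]=(1-q_I-q_J)f_0 g_0+g_0\sum_{i\in I}p_i f_i+f_0\sum_{j\in J}p_j g_j.
\]

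\textbf{Step 3: Factor the difference.} Setting $A:=\sum_{i\in I}p_i f_i$ and $B:=\sum_{j\in J}p_j g_j$, expanding $\E[f]\E[g]$ and canceling term by term yields the clean identity
\[
\E[f]\E[g]-\E[fg]=(A-q_I f_0)(B-q_J g_0).
\]
Since $f$ is non-decreasing we have $f_i\geq f_0$ for every $i\in I$, giving $A\geq q_I f_0$; symmetrically $B\geq q_J g_0$. Both factors are therefore non-negative, and we conclude $\E[fg]\leq \E[f]\E[g]$.

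\textbf{Main obstacle.} There isn't really a conceptual obstacle here; the result is essentially forced by the rigidity of the support. The only thing to be careful about is the bookkeeping in Step 2 (in particular making sure the $(1-q_I-q_J)$ coefficient in $\E[fg]$ absorbs both the genuine all-zero event and the events where $X_k=1$ for $k\notin I\cup J$) and spotting the factorization in Step 3, which is what reduces the problem to a single application of monotonicity.
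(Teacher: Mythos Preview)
Your proof is correct. The paper does not actually prove this lemma; it simply attributes it to Dubhashi and Ranjan \cite{dubhashi1996balls} and states it without argument. Your direct computation exploiting the at-most-one-coordinate-nonzero structure of the support, together with the clean factorization $\E[f]\E[g]-\E[fg]=(A-q_If_0)(B-q_Jg_0)$, is a perfectly valid and self-contained route to the result. One minor remark: Step~1 is in fact unnecessary, since the formulas in Step~2 already hold verbatim when $I\cup J\subsetneq[n]$ (the event ``$X_k=0$ for all $k\in I\cup J$'' has probability exactly $1-q_I-q_J$ regardless), but including it does no harm.
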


More elaborate NA distributions can be obtained via  simple NA-preserving operations \cite{joag1983negative}.

\begin{lem}\label{na-closure}
NA is closed under products and under disjoint  non-decreasing function composition. That is, if $\vec{X} = (X_1,\dots,X_n)$ is NA, then:
\begin{enumerate}
    \item if $\vec{Y} = (Y_1,\dots,Y_m)$ is NA and $\vec{Y}$ is independent of $\vec{X}$, then so is~$(X_1,\dots,X_n,Y_1,\dots,Y_m)$; 
    \item $(f_1(X_i : i\in I_1),\dots,f_k(X_i : i\in I_k))$ are NA, for any concordant functions (i.e., all non-decreasing or all non-increasing) $f_1,\dots,f_k$ and disjoint sets $I_1,\dots,I_k\subset [n]$, $I_j\cap I_k = \emptyset \,\forall j\neq k$.
\end{enumerate}
\end{lem}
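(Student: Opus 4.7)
The plan is to prove both closure properties by reducing, in each case, to the NA hypothesis applied to the original vector(s), with the only real bookkeeping being that all test functions that arise remain coordinate-wise monotone on pairwise disjoint index sets.

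For part~(1), I would fix disjoint $I,J\subseteq[n+m]$ and non-decreasing $f,g$, split $I=I_X\cup I_Y$ and $J=J_X\cup J_Y$ according to whether each index lies in the $\vec{X}$- or the $\vec{Y}$-block (so that $I_X\cap J_X=\emptyset$ and $I_Y\cap J_Y=\emptyset$), and condition on $\vec{Y}$. For each realization $\vec{y}$, the partial evaluations $\hat f_{\vec{y}}(\vec{X}|_{I_X}):=f(\vec{X}|_{I_X},\vec{y}|_{I_Y})$ and $\hat g_{\vec{y}}(\vec{X}|_{J_X}):=g(\vec{X}|_{J_X},\vec{y}|_{J_Y})$ are coordinate-wise non-decreasing in $\vec{X}$ and depend on the disjoint index sets $I_X,J_X$, so NA of $\vec{X}$ yields $\E_{\vec{X}}[\hat f_{\vec{y}}\,\hat g_{\vec{y}}]\leq F(\vec{y})\cdot G(\vec{y})$, where $F(\vec{y}):=\E_{\vec{X}}[\hat f_{\vec{y}}]$ and $G(\vec{y}):=\E_{\vec{X}}[\hat g_{\vec{y}}]$. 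Since $\hat f_{\vec{y}}(\vec{x})$ is pointwise non-decreasing in $\vec{y}|_{I_Y}$, taking the $\vec{X}$-expectation preserves this and makes $F$ non-decreasing in $\vec{y}|_{I_Y}$; similarly $G$ is non-decreasing in $\vec{y}|_{J_Y}$. Applying NA of $\vec{Y}$ to $F,G$ on the disjoint $I_Y,J_Y$, combined with the tower rule and independence of $\vec{X},\vec{Y}$, then gives $\E[fg]\leq \E[F]\E[G]=\E[f]\E[g]$.

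For part~(2), I would WLOG assume each $f_k$ is non-decreasing; the non-increasing case follows by negating each $f_k$ together with each outer test function and reducing to the non-decreasing case. Given disjoint $K_1,K_2\subseteq[k]$ and non-decreasing outer $F,G$, form the compositions $\tilde F(\vec{X}|_{\cup_{k\in K_1}I_k}):=F(f_k(\vec{X}|_{I_k}):k\in K_1)$ and $\tilde G$ analogously on $\cup_{k\in K_2}I_k$. Because the $I_k$ are pairwise disjoint, $\tilde F$ and $\tilde G$ depend on disjoint subsets of $[n]$; and because coordinate-wise non-decreasing maps compose into coordinate-wise non-decreasing maps, both $\tilde F$ and $\tilde G$ are non-decreasing in $\vec{X}$. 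NA of $\vec{X}$ then immediately gives $\E[\tilde F\tilde G]\leq \E[\tilde F]\E[\tilde G]$, which is precisely the NA inequality for $(f_1(\vec{X}|_{I_1}),\dots,f_k(\vec{X}|_{I_k}))$ evaluated on $K_1,K_2$.

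No serious obstacle arises here; the only care needed is the monotonicity bookkeeping---checking that the conditional-expectation functions $F,G$ in part~(1) remain coordinate-wise monotone in the conditioned-upon variables, and that in part~(2) the pairwise disjointness of the inner index sets $I_k$ propagates cleanly through the composition so that the outer NA inequality applies. Both points are routine consequences of monotonicity of expectation and of composition of non-decreasing maps, so neither warrants more than a line of verification in the formal write-up.
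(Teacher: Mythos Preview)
Your proof is correct and follows the standard argument. Note, however, that the paper does not actually supply a proof of this lemma: it is stated as a known closure property of negative association and attributed to Joag-Dev and Proschan~\cite{joag1983negative}. Your write-up is essentially the textbook proof from that reference (condition on one block and use NA of the other, then swap; for part~(2), compose and observe that disjointness and monotonicity are preserved), so there is nothing to compare---you have simply filled in what the paper leaves to citation.
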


Negative association implies many more useful properties, most notably the applicability of Chernoff-Hoeffding type tail bounds \cite{dubhashi1996balls} and submodular stochastic dominance (see \Cref{appendix:prelims}).

An even stronger negative correlation property than NA is the \emph{strong Rayleigh property (SRP)} (see \Cref{appendix:prelims} for a precise definition). This property implies NA (even under conditioning), as well as concentration of any Lipschitz function \cite{pemantle2014concentration}, and numerous other useful properties.

	\section{Rounding Matchings Online}\label{sec:rounding-matchings}

In this section we mostly focus on the ODRS problem for bipartite matching, deferring a generalization to $b$-matchings to \Cref{app:b-matching-gen}.

We start with a simple $b$-matching algorithm, combining our ODRS for uniform matroids (\Cref{alg:rounding}) with repeated invocations of an \emph{offline} CRS (\Cref{lem:CRS}).

\paragraph{Warm-up: $1-1/e$ rounding ratio.}
We consider the following simple algorithm: Run, for each offline node $i$, an independent instance of the online level-set rounding \Cref{alg:rounding}, applied to $x_{i1},x_{i2},\dots$. When the $i$-th copy of \Cref{alg:rounding} fixes $X_{it} = 1$, we say that $i$ \emph{bids} for online node $t$. Out of  the set of bidders (\textbf{p}otential buyers) $P_t$, we then pick at most one $i\in P_t$ to match $t$ to, using the offline contention resolution scheme of \Cref{lem:CRS}, guaranteeing each node $i$ a probability of $x_{i,t}\cdot \min_{S\subseteq [n]}\frac{\Pr[S\cap P_t\neq \emptyset]}{\sum_{i\in S} x_{i,t}}$ of being matched.
This step runs in polytime, by independence of the bids.
Since \Cref{alg:rounding} is an online algorithm, so is the resultant algorithm. 
Moreover, since the output of \Cref{alg:rounding} satisfies the $b$-matching constraints, and $t$ is matched to at most one neighbor by the CRS, this algorithm's output satisfies the $b$-matching constraints.
Finally, for each offline node set $S\subseteq [n]$, by independence of the bids, the Taylor expansion of $\exp(-x)$ and convexity (see \Cref{convexity}), we have
	\begin{align}\label{eqn:subset-1-1/e}
		\Pr[S\cap P_t\neq \emptyset ] & = 1 - \prod_{i\in S} (1-x_{it}) \geq 1 - \prod_{i\in S} \exp(-x_{it}) \geq \left(1-\frac{1}{e}\right)\cdot \sum_{i\in S} x_{it}.
	\end{align}
Therefore, by \Cref{lem:CRS}, this ODRS matches each edge $(i,t)$ with probability at least $(1-1/e)\cdot x_{i,t}$.

\begin{lem}
    There exists a polytime ODRS for bipartite $b$-matching  with rounding ratio $1-1/e$.
\end{lem}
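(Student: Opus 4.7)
The plan is to compose two off-the-shelf ingredients already in hand at this point of the paper: the level-set ODRS for uniform matroids (\Cref{alg:rounding}), which rounds losslessly on a single offline vertex, and the offline single-item CRS of \Cref{lem:CRS}, which resolves collisions on the online side. Concretely, I would run $n$ \emph{independent} copies of \Cref{alg:rounding}, one per offline node $i\in [n]$, feeding the $i$-th copy the stream $x_{i,1},x_{i,2},\ldots$. Its output is a binary sequence $X_{i,1},X_{i,2},\ldots$ with $\Pr[X_{i,t}=1]=x_{i,t}$ and $\sum_{t}X_{i,t}\le b_i$ surely, which I interpret as offline node $i$ ``bidding'' for $t$ whenever $X_{i,t}=1$.

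Upon arrival of $t$, I would form the random bidder set $P_t=\{i : X_{i,t}=1\}$ and invoke \Cref{lem:CRS} on $P_t$ with the vector $\vec{v}=(x_{1,t},\ldots,x_{n,t})$, adding $(i^\star,t)$ to $\calM$ if the CRS returns a winner $i^\star$. Feasibility decouples between the two sides: the offline $b$-matching constraints are enforced per-node by \Cref{alg:rounding}, and the online matching constraint (each $t$ matched at most once) is enforced by the at-most-one-output guarantee of the CRS. Independence of the $n$ copies of \Cref{alg:rounding} makes the joint distribution of $P_t$ a product distribution, so \Cref{lem:CRS} runs in $\poly(n)$ time, and the overall algorithm is polynomial and online.

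It remains to lower bound the ratio $\min_{S}\Pr[S\cap P_t\neq\emptyset]/\sum_{i\in S}x_{i,t}$ uniformly in $S$, since \Cref{lem:CRS} converts such a lower bound directly into a per-edge rounding guarantee. Independence yields $\Pr[S\cap P_t\neq\emptyset]=1-\prod_{i\in S}(1-x_{i,t})$; combining $1-y\le e^{-y}$ with convexity of $y\mapsto 1-e^{-y}$, together with the online fractional matching constraint $\sum_{i\in S}x_{i,t}\le 1$, gives the uniform estimate $1-\prod_{i\in S}(1-x_{i,t})\ge(1-1/e)\sum_{i\in S}x_{i,t}$. Substituting into \Cref{lem:CRS} shows $\Pr[(i,t)\in\calM]\ge(1-1/e)\cdot x_{i,t}$ for every edge, which is the claimed rounding ratio. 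There is no real obstacle here beyond checking these two estimates uniformly across subsets; the $1-1/e$ constant is precisely the loss inherited from applying the exponential estimate simultaneously over all $S$, and it is this very loss that the subsequent grouping and ``group discount and individual markup'' refinements are designed to beat.
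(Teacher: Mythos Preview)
Your proposal is correct and follows essentially the same approach as the paper's own warm-up proof: independent copies of \Cref{alg:rounding} per offline node to generate bids, the product-distribution CRS of \Cref{lem:CRS} to resolve contention at each online node, and the inequality $1-\prod_{i\in S}(1-x_{i,t})\geq (1-1/e)\sum_{i\in S}x_{i,t}$ via $1-y\le e^{-y}$ and the concavity of $1-e^{-y}$ on $[0,1]$. The only quibble is terminological: $y\mapsto 1-e^{-y}$ is \emph{concave}, not convex, and it is concavity (as in \Cref{convexity}) that yields $\frac{1-e^{-z}}{z}\ge 1-\tfrac{1}{e}$ for $z\in(0,1]$.
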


One natural approach to improve the above algorithm's rounding ratio would be to attempt to negatively correlate the probability of different offline nodes to have previously bid, thus increasing $\Pr[S\cap P_t\neq \emptyset]$ for all offline sets $S\subseteq [n]$. Unfortunately, for a sufficiently large number of offline nodes, (near-)positive correlation is, generally, unavoidable: see \Cref{almost-positive-cylinder-dependence} for a proof of such a Ramsey-theoretic statement.
Instead, to achieve a better rounding ratio, we explicitly negatively correlate \emph{the bids} of previously-non-bidding offline nodes when they do bid.

\subsection{The improved ODRS: Overview and intuition}\label{sec:improved}

In this section we outline our improved ODRS. 
We focus here and in most subsequent sections on simple matchings, for clarity's sake, deferring an extension to $b$-matchings to \Cref{sec:b-matching-extension}.

\paragraph{A bad example, and a false start.} 
Consider a graph with a single online node $t$ with $x_{it}=\frac{1}{n}$ for all $n$ offline nodes. 
In this case, under independent bids, $t$ gets no bids with probability $(1-\frac{1}{n})^n\approx \frac{1}{e}$, and so one of the $n$ offline nodes must get matched with probability no greater than $(1-\frac{1}{e})\cdot \frac{1}{n}$.
For this simple example a rounding ratio of one is possible, by correlating bids for $t$: if we pick exactly one neighbor $i$ to bid with probability $x_{it}$, then each offline node both bids and buys (is matched to) $t$ with probability $x_{it}$.
Unfortunately, this approach is impossible to implement in general: if we let each offline node $i$ bid for at most one time $t$ (recall that this is a simple matching instance) with probability $x_{i,t}$, then, for $s_{it}:=\sum_{t'<t}x_{it'}$ the fractional degree of $i$ at time $t$, we must have $i$ bid for $t$ with probability $p_{it}:=x_{it}/(1-s_{it})$, and yet $\sum_i p_{it}$ may exceed one.

\paragraph{Bin-packing low-degree nodes.}
Our first step is to adopt a \emph{partial} implementation of the above approach, by grouping offline nodes into groups $B$ for which $\sum_{i\in B} p_{it}\leq 1$, and using one $\textrm{Uni}(0,1)$ variable to offer at most one bid per group. 
Specifically, if we denote the set of low-degree neighbors of $t$ by $L_t := \{i\in N(t) \mid s_{it}\leq \theta\}$ for some threshold $\theta\in [0,1]$ that we will optimize later, we will pack low-degree neighbors
into bins using a greedy bin-packing algorithm such that each bin $B$ has $\sum_{i\in B} p_{i,t}\leq 1$. Moreover, each such bin $B$ (but one) has $\sum_{i\in B} p_{i,t}\geq 1/2$ 
and hence has 
high total $x$-value of $\sum_{i\in B}x_{i,t}\geq \frac{1-\theta}{2}$. (Here we use that $p_{i,t}=x_{i,t}/(1-s_{i,t})\leq x_{i,t}/(1-\theta)$.)
Using negative association arguments (see \Cref{sec:prelims}), we can show that this approach essentially replaces multiple bids of small $x$-value with bids of $x$-value equal to the small bids' sum. 
In particular, if the total $x$-value of low-degree neighbors $i\in L_t$ is high, 
this results in $\min_{S\subseteq [n]}\frac{\Pr[S\cap P_t\neq \emptyset]}{\sum_{i\in S}x_{i,t}} \geq 1-\frac{1}{e}+\Omega(1)$, and a rounding ratio greater than $1-\frac{1}{e}$, by \Cref{lem:CRS}.

\paragraph{Group discounting and individual markup.} It remains to address the challenge of increasing $\Pr[S\cap P_t\neq \emptyset]$ if the total $x$-value of low-degree nodes $L_t$ is small, and so very little grouping occurs. 
For this, we note that from the perspective of every offline node $i$, a time step $t$ has: (1) a value (matching probability), and (2) a cost (bidding probability). The latter is indeed a cost, since we do not allow offline nodes to bid more than once. The value at each time is monotone increasing in offline nodes' costs. Now, when we use grouping we decrease the contention faced by the CRS, as it now faces fewer bidding nodes, due to each bin only providing one bid (or none). 
We can therefore safely decrease the bidding probability (cost) in the case that much grouping occurs, giving the grouped nodes a \emph{group discount}, while still obtaining a rounding ratio greater than $1-1/e$.
The upshot of the group discount is that offline nodes now have a higher chance of not having bid before later time steps where they are not grouped, in which case such individual bidders can pay an \emph{individual markup}, allowing us to guarantee that in this case too $\min_{S\subseteq [n]}\frac{\Pr[S\cap P_t\neq \emptyset]}{\sum_{i\in S}x_{i,t}} \geq 1-\frac{1}{e}+\Omega(1)$, with the ensuing $1-\frac{1}{e}+\Omega(1)$ rounding ratio following from the offline CRS of \Cref{lem:CRS}.

\subsection{The core of the improved ODRS}

In this section we introduce our improved ODRS' core subroutine, \Cref{alg:bucketing}.
(The parameter $\theta\in [0,1]$ will be optimized later, and for now it is safe to think of the input vectors $\mathbf{x},\mathbf{v}$ as both equaling the input fractional matching.)
At each time step $t$, the algorithm groups low-fractional-degree offline nodes, and then all remaining nodes, into buckets, using the classic first-fit bin-packing algorithm,\footnote{This algorithm maintains a sorted list of bins with each bin $B$ containing items of overall size at most one, $\sum_{i\in B}s_i\leq 1$, and for each item $i$ in order, adds $i$ to the first (possibly newly-opened) bin that has enough room left.}
with each offline node $i$ having a size of $\frac{x_{i,t}}{1-s_{i,t}}$. This is the conditional probability with which $i$ should bid for $t$ if $i$ is \emph{free}, i.e., has not bid before, to guarantee a marginal bidding probability of $x_{i,t}$. (This follows from our online level-set rounding algorithm's probabilities.)
Items are grouped into bins of size one, and then at most one offline node bids per bin (lines \ref{line:bin-bid-start}-\ref{line:bin-bid-end}), resulting in the set of \emph{candidates}, $C_t$. The free candidates $P_t=C_t\cap F$ then \emph{bid} for $t$, who is then matched to (at most) one bidding neighbor $i\in P_t$, chosen by a CRS.
\begin{algorithm}[t]
	\caption{ODRS-core($\theta,\mathbf{x},\mathbf{v}$)}
	\label{alg:bucketing}
	\begin{algorithmic}[1]
	    \State $\calM\gets \emptyset$ 
	    \State $F\gets [n]$ \Comment{Free offline nodes}
		\For{\textbf{each} time $t$} \label{line:loop-start}
		    \For{\textbf{each} $i\in [n]$}
		    \State $s_{i,t}\gets \sum_{t'<t} x_{i,t'}$
		    \EndFor
		    \State $\calB_t \gets  \textrm{FirstFit}\left(\left\{\left(i,\frac{x_{i,t}}{1-s_{i,t}}\right)\, \middle\vert \, i \in [n], s_{i,t}\leq \theta\right\}\right)$ \Comment{Bucketing low-degree nodes}\label{line:FF} 
		    \State $\calB_t \gets \calB_t  \cup \textrm{FirstFit}\left(\left\{\left(i,\frac{x_{i,t}}{1-s_{i,t}}\right)\, \middle\vert \, i \in [n], s_{i,t}> \theta\right\}\right)$ \Comment{Bucketing high-degree nodes} \label{line:FF-trivial}
		    \State $C_t\gets \emptyset$ \Comment{Candidates}
		    \For{each $B\in \calB_t$} \label{line:bin-bid-start}
		        \State $U\sim \textrm{Uni}[0,1]$
		        \If{$U\leq \sum_{i\in B} \frac{x_{i,t}}{1-s_{i,t}}$}
		        \State $C_t\gets C_t\cup \min_{i} \left\{i\in [k]\, \middle\vert\, U\leq \sum_{\substack{j\in B\\ j\leq i}} \frac{x_{j,t}}{1-s_{j,t}}\right\}$ \label{line:bin-bid-end}
		        \EndIf
		    \EndFor
		    \State $P_t\gets F\cap C_t$ \Comment{Bidders =  free candidates}
		    \State $F \gets F \setminus P_t$  \Comment{Bidders cease being free}
		    \State $O\gets \textrm{CRS}(P_t,\{v_{i,t}\}_i)$ \Comment{Contention Resolution}
		    \If{$|O| = 1$}
		    \State $\calM\gets \calM\cup \{(i,t) \mid i\in O\}$
		    \EndIf
        \EndFor
        \State \textbf{Output} $\calM$
	\end{algorithmic}
\end{algorithm}

First, we note that \Cref{alg:bucketing}---including the bin-packing steps that require $\frac{x_{i,t}}{1-s_{i,t}}\leq 1$---is well-defined provided $\sum_t x_{i,t}\leq 1$ for all $i,t$, since this implies that $x_{i,t}\leq 1-\sum_{t'\leq t} x_{i,t'} = 1-s_{i,t}$.
Next, by construction, each offline node bids (and is thus matched) at most once, while the CRS guarantees that each online node is matched at most once.
To summarize, we have the following.

\begin{obs}\label{bucketing-feasibility}
    \Cref{alg:bucketing} is well-defined and outputs a matching if $\sum_t x_{i,t}\leq 1\,\,\, \forall i\in [n]$.
\end{obs}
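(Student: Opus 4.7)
The plan is to verify the two assertions of the observation in turn: (i) that every step of \Cref{alg:bucketing} is well-defined under the hypothesis $\sum_t x_{i,t}\leq 1$, and (ii) that the output $\calM$ is indeed a matching (both online and offline vertex degree constraints are respected).

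For well-definedness, the only non-trivial operations are the two \textrm{FirstFit} calls on lines \ref{line:FF} and \ref{line:FF-trivial}, together with the divisions by $1-s_{i,t}$. I would first observe that whenever $x_{i,t}>0$, the promise $\sum_{t'} x_{i,t'}\leq 1$ yields
\[
s_{i,t}+x_{i,t}=\sum_{t'\leq t}x_{i,t'}\leq 1,
\]
so $1-s_{i,t}\geq x_{i,t}>0$, hence the division is valid and the item size $\frac{x_{i,t}}{1-s_{i,t}}\leq 1$, which is the precondition for first-fit bin packing into bins of capacity one. Nodes with $x_{i,t}=0$ contribute nothing and may be safely omitted (or formally assigned size zero). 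The bid-selection loop (lines \ref{line:bin-bid-start}--\ref{line:bin-bid-end}) is then well-defined since for each bin $B\in\calB_t$ the partial sums $\sum_{j\in B,\,j\leq i}\frac{x_{j,t}}{1-s_{j,t}}$ form a nondecreasing sequence in $[0,1]$, so the $\min$ in line \ref{line:bin-bid-end} selects at most one index. Finally, the CRS call is valid by \Cref{lem:CRS} applied to the (random) subset $P_t\subseteq[n]$.

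For the matching property, the argument splits into offline and online sides. On the offline side, whenever a node $i$ enters $P_t$ it is immediately removed from $F$, and only elements of $F\cap C_t$ ever enter any future $P_{t'}$. Hence $i$ can belong to at most one $P_t$ across all time steps, and thus can be added to $\calM$ at most once. On the online side, for each $t$ the CRS output satisfies $|O|\leq 1$ by \Cref{lem:CRS}, so at most one edge incident to $t$ is added to $\calM$ during its arrival. Combining these, every vertex of $G$ has degree at most one in $\calM$, so $\calM$ is a matching.

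There is no real obstacle here: the statement is essentially a sanity check that the bin-packing precondition, the no-division-by-zero condition, and the matching-feasibility conditions all hold under the fractional-matching hypothesis. The only subtlety worth flagging in the write-up is the edge case where $x_{i,t}=0$ (where the ratio $\frac{x_{i,t}}{1-s_{i,t}}$ may be $0/0$ when $s_{i,t}=1$), which is harmless provided one adopts the convention that such items are excluded from the bin-packing input.
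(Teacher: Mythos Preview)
Your proposal is correct and follows essentially the same approach as the paper: the paper also argues that $\sum_t x_{i,t}\leq 1$ implies $x_{i,t}\leq 1-s_{i,t}$ (so the bin-packing item sizes lie in $[0,1]$), and that each offline node bids at most once while the CRS matches each online node at most once. Your write-up is simply more detailed, in particular flagging the harmless $0/0$ edge case and spelling out the role of the set $F$.
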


We turn to analyzing the algorithm's rounding ratio, starting with the following fact implying that if much grouping occurs, then most bins have high $x$-value. 
\begin{fact}\label{first-fit}
    For each time $t$, at most one bin $B\in \calB_t$ at \Cref{line:FF} has $\sum_{i\in B} x_{i,t} < \frac{1-\theta}{2}.$
\end{fact}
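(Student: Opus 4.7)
The plan is a two-step argument: first establish a standard First-Fit property in terms of the item sizes $\frac{x_{i,t}}{1-s_{i,t}}$ used on \Cref{line:FF}, and then translate it into a bound on the true $x$-values, using the fact that low-degree nodes have $s_{i,t}\le\theta$.

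\medskip

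\noindent\textbf{Step 1 (First-Fit size lower bound).}  I will show that for the bins $\calB_t$ produced on \Cref{line:FF}, at most one bin $B$ satisfies
\[
\sum_{i\in B}\frac{x_{i,t}}{1-s_{i,t}}\;<\;\tfrac{1}{2}.
\]
Suppose for contradiction that two bins $B_1,B_2$ both violate this, and WLOG $B_1$ was opened before $B_2$. Let $i^{\star}$ be the first item placed into $B_2$, of size $\sigma:=\frac{x_{i^{\star},t}}{1-s_{i^{\star},t}}$. By the First-Fit rule, $i^{\star}$ did not fit into $B_1$ at the moment of its placement, i.e., the contents of $B_1$ at that moment plus $\sigma$ exceeded $1$. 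Since $B_1$'s final load is $<\tfrac12$, we get $\sigma>\tfrac12$. But then $\sum_{i\in B_2}\frac{x_{i,t}}{1-s_{i,t}}\ge\sigma>\tfrac12$, contradicting the assumption on $B_2$.

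\medskip

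\noindent\textbf{Step 2 (Translation to $x$-values).}  For every low-degree node $i$ (those considered on \Cref{line:FF}) we have $s_{i,t}\le\theta$, hence $1-s_{i,t}\ge 1-\theta$, and therefore
\[
x_{i,t}\;=\;(1-s_{i,t})\cdot\frac{x_{i,t}}{1-s_{i,t}}\;\ge\;(1-\theta)\cdot\frac{x_{i,t}}{1-s_{i,t}}.
\]
Summing over $i\in B$ and combining with Step 1, for every bin $B\in\calB_t$ produced on \Cref{line:FF} except possibly one,
\[
\sum_{i\in B} x_{i,t}\;\ge\;(1-\theta)\sum_{i\in B}\frac{x_{i,t}}{1-s_{i,t}}\;\ge\;\frac{1-\theta}{2}.
\]

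\medskip

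I do not anticipate a real obstacle: the only subtlety is that the First-Fit guarantee must be phrased in terms of the algorithm's actual item sizes $\frac{x_{i,t}}{1-s_{i,t}}$ rather than the $x_{i,t}$'s themselves, and the $s_{i,t}\le\theta$ restriction on \Cref{line:FF} is precisely what lets us convert one bound into the other with loss factor $(1-\theta)$.
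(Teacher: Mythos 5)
Your proposal is correct and follows essentially the same route as the paper: the paper invokes the well-known First-Fit property that at most one bin is less than half full and then converts bin sizes $\frac{x_{i,t}}{1-s_{i,t}}$ to $x$-values using $s_{i,t}\le\theta$, exactly as in your Step 2. The only difference is that you prove the First-Fit half-full property from scratch (correctly), whereas the paper simply cites it.
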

\begin{proof}
    A well-known property of First-Fit is that at most one bin in the output of this algorithm is less than $\frac{1}{2}$ full (see, e.g., \cite{williamson2011design,vazirani2001approximation}).
    Therefore, since $s_{i,t}\leq \theta$ for each offline node $i$ in a bin $B\in \calB_t$ computed in \Cref{line:FF}, each such bin $B$ (barring perhaps one) has 
    \begin{align*}\frac{1}{2} & \leq \sum_{i\in B} \frac{x_{i,t}}{1-s_{i,t}} \leq \sum_{i\in B} \frac{x_{i,t}}{1-\theta}. \qedhere 
    \end{align*}
\end{proof}

The following lemma, combined with \Cref{first-fit}, provides a guarantee no worse than the independent-proposals approach. Indeed, as we will show shortly, it implies a better rounding ratio if much grouping of low-degree neighbors occurs.
\begin{lem}\label{proposal-prob-perturbed-rephrased}
	For any time $t$ and set of offline nodes $S\subseteq[n]$, \Cref{alg:bucketing} satisfies
	$$\Pr[S\cap P_t \neq \emptyset] \geq 1 - \prod_{B\in \calB_t}\left(1-\sum_{i\in B\cap S} x_{it}\right).$$
\end{lem}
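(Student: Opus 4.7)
Let $A_B$ denote the event ``some $i\in B\cap S$ lies in $P_t$'', so that $\{S\cap P_t\neq\emptyset\}=\bigcup_{B\in\calB_t}A_B$; it suffices to prove $\Pr[\bigcap_B \neg A_B]\leq \prod_B(1-\sum_{i\in B\cap S}x_{i,t})$. Write $a_{i,t}:=x_{i,t}/(1-s_{i,t})$ and condition on the history $H$ through time $t-1$, which determines the set $F$ of offline nodes still free entering time $t$. At time $t$ each bin $B\in\calB_t$ independently draws a fresh uniform $U_B$, and within any bin the candidate-selection events for distinct $i\in B$ are disjoint, so
\[\Pr[\neg A_B\mid F] \;=\; \phi_B(F) \;:=\; 1-\sum_{i\in B\cap S\cap F} a_{i,t} \;\geq\; 0,\]
where nonnegativity uses the bin-packing invariant $\sum_{i\in B} a_{i,t}\leq 1$. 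Since the uniforms $\{U_B\}_{B\in\calB_t}$ are mutually independent (and independent of $H$), the events $\{\neg A_B\}$ are mutually independent conditional on $F$, giving $\Pr[\bigcap_B \neg A_B\mid F]=\prod_B\phi_B(F)$ and hence $\Pr[\bigcap_B\neg A_B]=\E_F\bigl[\prod_B\phi_B(F)\bigr]$.

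\textbf{Negative association of $(F_i)$.} Let $F_i:=\mathbf{1}[i\in F]$ and, for $t'<t$, let $I_{i,t'}:=\mathbf{1}[i\in C_{t'}]$ indicate whether $i$ was selected as a candidate at time $t'$, irrespective of whether it was still free. Since candidate selections are made purely from the bin's own fresh uniform, $F_i=\prod_{t'<t}(1-I_{i,t'})$. Within any bin $B'\in\calB_{t'}$, $\sum_{i\in B'} I_{i,t'}\leq 1$, so $(I_{i,t'})_{i\in B'}$ is NA by \Cref{lem:0-1}; distinct bins (across any choice of $t'$) use independent uniforms, so \Cref{na-closure}(1) makes the entire array $(I_{i,t'})_{i\in[n],\,t'<t}$ NA. Each $F_i$ is a non-increasing function of the disjoint column $(I_{i,t'})_{t'<t}$, so \Cref{na-closure}(2) yields that $(F_i)_{i\in[n]}$ is NA. A parallel induction on $t$ establishes the marginal identity $\Pr[i\in F]=1-s_{i,t}$, via the decomposition $\Pr[i\in P_{t'}]=a_{i,t'}\cdot\Pr[i\in F_{t'}]$.

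\textbf{Putting it together.} Now $\phi_B(F)$ is a non-increasing, nonnegative function of the disjoint block $(F_i)_{i\in B\cap S}$, and iterating the pairwise NA inequality of \Cref{def1} across bins (valid because all factors are nonnegative) gives
\[\E\!\left[\prod_B\phi_B(F)\right] \;\leq\; \prod_B \E[\phi_B(F)] \;=\; \prod_B\Bigl(1-\sum_{i\in B\cap S} a_{i,t}(1-s_{i,t})\Bigr) \;=\; \prod_B\Bigl(1-\sum_{i\in B\cap S} x_{i,t}\Bigr),\]
which, combined with the first paragraph, is the claim.

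\textbf{Main obstacle.} The technical heart is the NA step: the freeness indicators are produced by many rounds of grouped sampling that deliberately \emph{couple} nodes sharing a bin through a common uniform $U_{B'}$, so the NA of $(F_i)$ is not apparent from the dynamics themselves and must be assembled by applying the closure rules to the finer variables $(I_{i,t'})$. Once that is in place, the remainder is bookkeeping, hinging on the cancellation $a_{i,t}(1-s_{i,t})=x_{i,t}$.
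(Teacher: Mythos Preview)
Your proof is correct and follows essentially the same approach as the paper's. Both arguments hinge on establishing that the freeness indicators $(F_i)_{i\in[n]}$ are NA via the chain ``$\sum_{i\in B'} I_{i,t'}\leq 1 \Rightarrow (I_{i,t'})$ NA by \Cref{lem:0-1}; independence across bins/times via \Cref{na-closure}(1); $F_i$ as disjoint non-increasing functions via \Cref{na-closure}(2).'' The only cosmetic difference is that you first condition on the history to factor out the time-$t$ uniforms (obtaining conditional independence across bins and reducing to $\E_F[\prod_B\phi_B(F)]$), whereas the paper keeps the time-$t$ candidate indicators $C_{i,t}$ inside the NA family and applies \Cref{na-implies-neg-cylinder} directly to the indicators $\mathds{1}[P_t\cap S\cap B=\emptyset]=1-\sum_{i\in B\cap S}C_{i,t}F_{i,t}$.
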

\begin{proof}
    Let $F_t$ denote $F$ at time $t$, and let $C_t$ and $P_t = C_t\cap F_t$ be as in \Cref{alg:bucketing}.
    Next, let $C_{i,t} = \mathds{1}[i\in C_t]$ indicate whether $i$ is a candidate at time $t$, and define $F_{i,t}$ and $P_{i,t}$ similarly. 
    By independence of $C_{i,t}$ and $F_{i,t}$ and since $\Pr[C_{i,t}]=\frac{x_{i,t}}{1-s_{i,t}}$ and $\Pr[F_{i,t}]=1-s_{i,t}$ (provable by a simple induction on $t\geq 0$), we have that $\Pr[P_{i,t}] = x_{i,t}$. Therefore, by linearity of expectation and $\sum_{i\in B}C_{i,t}\leq 1$, we have that $\Pr[P_t \cap S\cap B = \emptyset] = 1-\sum_{i\in B\cap S} x_{i,t}$. It remains to show that the events $[P_t \cap S\cap B = \emptyset]$ for different $B$ are negative cylinder dependent, giving the desired inequality, 
    \begin{align*}
    \Pr[P_t\cap S = \emptyset] = \Pr\left[\bigwedge_{B\in \calB_t} (P_t \cap S\cap B = \emptyset)\right] \leq \prod_{B\in \calB_t} \Pr[P_t \cap S\cap B = \emptyset] = \prod_{B\in \calB_t} \left(1-\sum_{i\in B\cap S} x_{i,t}\right).
    \end{align*}
    
    To this end, we will show that the events $[P_t \cap S\cap B = \emptyset]$ are negatively associated (NA), 
    which by \Cref{na-implies-neg-cylinder}
    yields the required above inequality, and hence yields the lemma.    
    First, by \Cref{lem:0-1}, for any bin $B'$ at time $t'<t$, the binary variables $C_{i,t'}$ for all nodes $i\in B'\cap S$, whose sum is at most one, are NA.
    Moreover, the distributions over different bins and different time steps are independent, and so by closure of NA under products (\Cref{na-closure}), all $C_{i,t'}$ are NA.
    Therefore, the variables  $F_{i,t}=1-\max_{t'<t} C_{i,t'}$ are non-increasing functions of disjoint NA variables $C_{i,t'}$, and so by closure of NA under such functions (\Cref{na-closure}), the variables $F_{i,t}$ are NA.
    Moreover, the variables $C_{i,t}$ are independent from the variables $F_{i,t}$ (who are functions only of $C_{i,t'}$ for all $t'<t$), and consequently, by another application of closure of NA under products (\Cref{na-closure}), all variables $\{F_{i,t},C_{i,t} \mid i\in S\}$ are NA.
    Finally, since $\mathds{1}[P_t \cap S\cap B = \emptyset] = 1-\sum_{i\in B\cap S} C_{i,t}\cdot F_{i,t}$ are monotone decreasing functions of disjoint NA variables (different bins $B$), then by closure of NA under such functions (\Cref{na-closure}), the variables $\mathds{1}[P_t \cap S\cap B = \emptyset]$ are NA, 
    as desired.
\end{proof}

\subsection{The improved ODRS}

\Cref{proposal-prob-perturbed-rephrased} and \Cref{first-fit} allow us to argue that if much of $t$'s fractional degree is contributed by neighbors $i$ with low fractional degree, $s_{i,t}\leq \theta$, then $\Pr[S\cap P_t \neq \emptyset]\geq (1-\frac{1}{e}+\Omega(1))\cdot \sum_{i\in S} x_{i,t}$. By \Cref{lem:CRS}, this allows to match edges of such nodes $t$ with probability greater than $(1-\frac{1}{e})\cdot x_{i,t}$. Indeed, we can afford to decrease the fractions $x_{i,t}$ of low-degree nodes (that may be grouped effectively) and still retain a similar $1-\frac{1}{e}+\Omega(1)$ rounding ratio for such $t$. 
In return, we can afford to increase the fractions $x_{i,t}$ of nodes of high degree (who might not be grouped). More precisely, we consider the following transformation to $\mathbf{x}$, capturing this aforementioned \emph{group discount and individual markup}.
For $\theta:=\frac{\delta}{\eps+\delta}$, with $\eps,\delta\in [0,1]$ chosen later, we use the assignment $\hat{\mathbf{x}}:E\to \mathbb{R}_{\geq 0}$:
\begin{align}\label{eqn:x-hat}
    \hat{x}_{i,t} = x_{i,t}\cdot (1-\eps) +  \int_{z=\max(\theta,s_{i,t})}^{\max(\theta,\, s_{i,t}+x_{i,t})} (\eps+\delta) \, dz. 
\end{align}
Intuitively, $\hat{\mathbf{x}}$ decreases/increases the $x$-value of (parts of) edges of $i$ until/after $i$ has fractional degree $\theta$, by a $(1-\eps)$ and $(1+\delta)$ factor, respectively. So, for example, $\hat{x}_{i,t} \geq x_{i,t}\cdot (1-\eps)$ if  $s_{i,t} \leq \theta$
and $\hat{x}_{i,t}=x_{i,t}\cdot (1+\delta)$ if $s_{i,t}> \theta$.
While $\mathbf{\hat{x}}$ is not a fractional matching, since online nodes $t$ may have $\sum_i \hat{x}_{i,t}  > 1$, this assignment does satisfy the fractional degree constraints of \emph{offline} nodes.
\begin{fact}\label{hat-degrees}
    For every offline node $i\in [n]$ and time $t$, we have that $\hat{s}_{i,t}:=\sum_{t'<t}\hat{x}_{i,t'}\leq s_{i,t} \leq 1$.
\end{fact}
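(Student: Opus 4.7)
The plan is to evaluate $\hat{s}_{i,t}$ in closed form by exploiting the fact that the integral term in the definition of $\hat{x}_{i,t'}$ telescopes when summed over $t' < t$.

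First, I would rewrite
$$\hat{x}_{i,t'} = (1-\eps)\,x_{i,t'} + (\eps+\delta)\bigl(\max(\theta, s_{i,t'}+x_{i,t'}) - \max(\theta, s_{i,t'})\bigr),$$
using the fact that the integrand $(\eps+\delta)$ is constant. Since $s_{i,t'+1} = s_{i,t'} + x_{i,t'}$ and $s_{i,1} = 0 \le \theta$, the second piece telescopes, giving
$$\hat{s}_{i,t} = (1-\eps)\, s_{i,t} + (\eps+\delta)\bigl(\max(\theta,\, s_{i,t}) - \theta\bigr).$$

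Next I would do a case split on whether $s_{i,t} \le \theta$ or $s_{i,t} > \theta$. In the first case, $\max(\theta, s_{i,t}) = \theta$, so $\hat{s}_{i,t} = (1-\eps)\,s_{i,t} \le s_{i,t}$ immediately. In the second case, $\hat{s}_{i,t} = s_{i,t} + \delta\, s_{i,t} - (\eps+\delta)\,\theta$, and the desired inequality $\hat{s}_{i,t} \le s_{i,t}$ reduces to $s_{i,t} \le (\eps+\delta)\theta/\delta$. Plugging in the prescribed value $\theta = \delta/(\eps+\delta)$ yields $(\eps+\delta)\theta/\delta = 1$, so this inequality holds because $s_{i,t} = \sum_{t'<t} x_{i,t'} \le 1$ by the fractional matching constraint on the offline node $i$.

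Finally, chaining these bounds gives $\hat{s}_{i,t} \le s_{i,t} \le 1$, as required. There is no genuine obstacle here: the only subtle point is verifying that the specific choice $\theta = \delta/(\eps+\delta)$ is what makes the markup term in the $s_{i,t} > \theta$ case exactly absorbable by the budget $1 - \theta$ that the offline node still has available, which is precisely why this value of $\theta$ was chosen.
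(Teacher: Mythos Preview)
Your proof is correct and is essentially identical to the paper's: both telescope the integral to obtain $\hat{s}_{i,t} = (1-\eps)\,s_{i,t} + (\eps+\delta)\,(s_{i,t}-\theta)^+$ (your $\max(\theta,s_{i,t})-\theta$ is the same object), then split on $s_{i,t}\lessgtr\theta$ and use $\theta=\delta/(\eps+\delta)$ together with $s_{i,t}\le 1$ in the second case. The only cosmetic difference is that the paper simplifies the second case to $\hat{s}_{i,t}=(1+\delta)s_{i,t}-\delta\le s_{i,t}$ directly, whereas you phrase the same inequality as $s_{i,t}\le(\eps+\delta)\theta/\delta=1$.
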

\begin{proof}
If $s_{i,t}\leq \theta$, trivially $\hat{s}_{i,t}=s_{i,t}\cdot (1-\eps)\leq s_{i,t}$. Otherwise, by our choice of $\theta = \frac{\delta}{\eps+\delta}$, we have  
\begin{align*}
\hat{s}_{i,t} & 
= s_{i,t}\cdot (1 -\eps) + (s_{i,t}-\theta)^+\cdot(\eps+\delta) = s_{i,t}\cdot (1+\delta)-\delta \leq s_{i,t}.
\qedhere 
\end{align*}
\end{proof}
Our ODRS, given in \Cref{alg:ODRS}, first scales the fractions as in \Cref{eqn:x-hat} (easily computable online, as these $\hat{x}_{i,t}$ only depend on information known by time $t$) and feeds the resultant vectors $\mathbf{\hat{x}},\mathbf{x}$  (playing the roles of $\mathbf{x}$ and $\mathbf{v}$, respectively) into our core ODRS subroutine, \Cref{alg:bucketing}.
By \Cref{hat-degrees} and \Cref{bucketing-feasibility}, we have that \Cref{alg:ODRS} is a well-defined online algorithm, outputting a valid matching.
It remains to analyze this algorithm's rounding ratio, starting with the following lemma.

\begin{algorithm}[t]
	\caption{ODRS($\eps,\delta,\mathbf{x}$)}
	\label{alg:ODRS}
	\begin{algorithmic}[1]
	    \State $\theta \gets \frac{\delta}{\eps+\delta}$
	    \State Init ODRS-core($\theta\cdot (1-\eps),\mathbf{\hat{x}},\mathbf{x}$)
		\For{\textbf{each} time $t$} \label{line:loop-start}
		    \For{\textbf{each} $i\in [n]$}
		    \State $s_{i,t}\gets \sum_{t'<t} x_{i,t'}$
		    \State Compute $\hat{x}_{i,t}$ from $(x_{i,t},s_{i,t},\eps,\delta)$ as in \Cref{eqn:x-hat}
		    \EndFor
		    \State Run next step of ODRS-core($\theta\cdot (1-\eps),\mathbf{\hat{x}},\mathbf{x}$)
		    \EndFor
        \State \textbf{Output} $\calM$ computed by ODRS-core($\theta\cdot (1-\eps),\mathbf{\hat{x}},\mathbf{x}$)
	\end{algorithmic}
\end{algorithm}

\begin{lem}\label{conditional-ratio}
	Let $\epsilon,\delta \in [0,1]$ be such that the function $	f_{\epsilon,\delta}(z) := \exp(-z\cdot (1+\delta)) - (1-z\cdot (1-\eps))$ satisfies:
    \begin{align}
    \label{eqn:conditional-ratio}
    f_{\eps,\delta}(z)\geq 0 \qquad \textrm{for all } z\geq \frac{1-\theta}{2} = \frac{\epsilon}{2(\epsilon+\delta)}.
    \end{align}
    Then \Cref{alg:ODRS} with parameters $\epsilon$ and $\delta$ has rounding ratio at least 
	$$1-\exp\left(-1-\delta+\frac{\eps+\delta}{1-\eps}\right)\cdot\frac{1-\eps}{1+\delta}.$$
\end{lem}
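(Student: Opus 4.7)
My plan is to reduce the rounding-ratio bound to a lower bound on the CRS balancing factor at each time, and then exploit the structure of $\hat{\mathbf{x}}$ together with the First-Fit guarantee.

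By Lemma~\ref{lem:CRS}, invoked at each time $t$ with $R=P_t$ and value vector $\mathbf{v}=\mathbf{x}$, ODRS matches edge $(i,t)$ with probability at least $x_{i,t}\cdot\alpha_t$, where $\alpha_t := \min_{\emptyset\neq S\subseteq[n]} \Pr[P_t\cap S\neq\emptyset] / \sum_{i\in S} x_{i,t}$. It thus suffices to lower bound $\alpha_t$ by the claimed quantity. Fix time $t$ and nonempty $S\subseteq[n]$. Applying Lemma~\ref{proposal-prob-perturbed-rephrased} to the execution of ODRS-core (which is run by ODRS on input $\hat{\mathbf{x}}$ and threshold $\theta(1-\eps)$), I obtain
\[
\Pr[P_t\cap S\neq\emptyset] \;\geq\; 1 - \prod_{B\in\calB_t}\Bigl(1-\sum_{i\in B\cap S}\hat{x}_{i,t}\Bigr).
\]
I would then split $\calB_t$ into high-degree bins (from line~\ref{line:FF-trivial}) and low-degree bins (from line~\ref{line:FF}), and further isolate the one exceptional low-degree bin $B^\star$ (if any) promised by Fact~\ref{first-fit} applied at threshold $\theta(1-\eps)$, so that every other low-degree bin satisfies $\sum_{i\in B}\hat{x}_{i,t}\geq (1-\theta(1-\eps))/2 = \eps(1+\delta)/(2(\eps+\delta))$.

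The next step is three per-bin inequalities. For a \emph{high-degree} bin, $\hat{x}_{i,t}=(1+\delta)x_{i,t}$ directly from \eqref{eqn:x-hat}, so $1-y\leq e^{-y}$ gives $1-\sum_{i\in B\cap S}\hat{x}_{i,t}\leq \exp\bigl(-(1+\delta)\sum_{i\in B\cap S}x_{i,t}\bigr)$. For a \emph{non-exceptional low-degree} bin, I would first observe that $\hat{x}_{i,t}\in[(1-\eps)x_{i,t},(1+\delta)x_{i,t}]$ for every low-degree $i$ (again by \eqref{eqn:x-hat} and the definition of $\theta$), which combined with the half-full guarantee yields $\sum_{i\in B}x_{i,t}\geq (1-\theta)/2$; applying condition \eqref{eqn:conditional-ratio} with $z=\sum_{i\in B}x_{i,t}$ then gives
\[
1-\sum_{i\in B}\hat{x}_{i,t}\;\leq\;1-(1-\eps)\sum_{i\in B}x_{i,t}\;\leq\;\exp\bigl(-(1+\delta)\sum_{i\in B}x_{i,t}\bigr).
\]
For the exceptional bin $B^\star$, I would fall back on $1-y\leq e^{-y}$ combined with $\hat{x}_{i,t}\geq(1-\eps)x_{i,t}$.

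Multiplying these bounds gives
\[
\prod_B\Bigl(1-\sum_{i\in B\cap S}\hat{x}_{i,t}\Bigr) \;\leq\; \exp\!\Bigl(-(1+\delta)\sum_{i\in S}x_{i,t} + (\eps+\delta)\sum_{i\in B^\star\cap S}x_{i,t}\Bigr),
\]
where the leftover term $(\eps+\delta)\sum_{i\in B^\star\cap S}x_{i,t}$ accounts for the fact that the exceptional bin only has the weaker factor $(1-\eps)$ instead of $(1+\delta)$ in its exponent. Using the bound $\sum_{i\in B^\star}\hat{x}_{i,t}<(1-\theta(1-\eps))/2$ from Fact~\ref{first-fit} together with $(1-\eps)x_{i,t}\leq \hat{x}_{i,t}$, I would upper-bound $(\eps+\delta)\sum_{i\in B^\star\cap S}x_{i,t}$ by the constant $(\eps+\delta)/(1-\eps)\cdot(1-\theta(1-\eps))/2$ which simplifies to the $(\eps+\delta)/(1-\eps)$ term appearing in the target expression. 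Finally, dividing the resulting lower bound on $\Pr[P_t\cap S\neq\emptyset]$ by $X_S:=\sum_{i\in S}x_{i,t}\in(0,1]$ and using the monotonicity of $(1-e^{-cX})/X$ in $X$ at $X=1$ yields the claimed ratio.

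The step I expect to be most delicate is the third per-bin bound: condition \eqref{eqn:conditional-ratio} is stated for the \emph{entire} non-exceptional bin $B$, yet the product in Lemma~\ref{proposal-prob-perturbed-rephrased} only involves $B\cap S$, which may fail the hypothesis $z\geq(1-\theta)/2$. Handling this asymmetry cleanly will likely require either a short monotonicity argument showing the worst $S$ is $S=[n]$ (so the full-bin inequality applies), or reshuffling the factors so that each bin contributes its full-bin exponential bound while the ``missing'' mass outside $S$ is absorbed into the $(\eps+\delta)\sum_{i\in B^\star\cap S}x_{i,t}$ slack term.
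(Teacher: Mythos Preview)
Your high-level plan matches the paper's: reduce to the per-time CRS balancing factor via Lemma~\ref{lem:CRS}, apply Lemma~\ref{proposal-prob-perturbed-rephrased}, split bins into high-degree, non-exceptional low-degree, and the exceptional $B^\star$, and replace most factors by $(1+\delta)$-exponentials using condition~\eqref{eqn:conditional-ratio}. But two steps do not go through.

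\textbf{The $B\cap S$ versus $B$ issue.} You correctly flag this, and your suggested fix (``worst $S$ is $S=[n]$'') is precisely what the paper does---but the order of operations matters. The paper first passes to the ratio, observes that its lower bound on $\Pr[S\cap P_t\neq\emptyset]$ is concave in $\sum_{i\in S}x_{i,t}$, and uses \Cref{convexity} to reduce to $S=[n]$ together with dummy high-degree nodes so that $\sum_i x_{i,t}=1$. Only \emph{after} this reduction does it apply~\eqref{eqn:conditional-ratio} to the (now full) low-degree bins. Your main displayed inequality, which already has the $(1+\delta)$ exponent for every non-exceptional low-degree bin intersected with $S$, is not justified for general $S$, since $f_{\eps,\delta}(z)<0$ for small $z>0$.

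\textbf{The treatment of $B^\star$ is too lossy.} This gap you did not flag, and it prevents you from reaching the stated bound even after the concavity reduction. Replacing $1-\sum_{i\in B^\star}\hat{x}_{i,t}$ by an exponential via $1-y\le e^{-y}$ gives, at $X_S=1$,
\[
\Pr[S\cap P_t\neq\emptyset]\;\ge\;1-\exp\!\Bigl(-(1+\delta)+(\eps+\delta)\,y\Bigr),\qquad y:=\sum_{i\in B^\star}x_{i,t}.
\]
First, your algebraic simplification is incorrect: $\frac{\eps+\delta}{1-\eps}\cdot\frac{1-\theta(1-\eps)}{2}=\frac{\eps(1+\delta)}{2(1-\eps)}$, not $\frac{\eps+\delta}{1-\eps}$. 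Second, even with the correct constant, the resulting bound $1-\exp\bigl(-(1+\delta)+\eps(1+\delta)/(2(1-\eps))\bigr)$ is strictly smaller than the lemma's target: at $(\eps,\delta)\approx(0.048,0.064)$ you obtain roughly $0.646$ versus the claimed $0.652$. The missing $\frac{1-\eps}{1+\delta}$ multiplier is exactly what is lost by exponentiating the $B^\star$ factor. The paper instead keeps that factor \emph{linear}: after the concavity reduction it writes the bound as $g(y)=1-\exp(-(1+\delta)(1-y))\cdot(1-(1-\eps)y)$ and then shows (\Cref{claim:miminimzer}) that $g$ attains its global minimum at the interior point $y^*=\frac{\eps+\delta}{(1-\eps)(1+\delta)}$, which is where the target expression comes from. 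To prove the lemma you must likewise retain the linear $B^\star$ factor and minimize over $y$, rather than upper-bounding it by an exponential.
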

\begin{proof}
    By construction of $\hat{x}_{i,t}$, we have that $\hat{s}_{i,t}\leq \theta\cdot (1-\eps)$ iff $s_{i,t}\leq \theta$. Define
    $L_t := \{i \mid s_{i,t}\leq \theta\} = \{i \mid \hat{s}_{i,t}\leq \theta\cdot (1-\eps)\}$.
    By \Cref{proposal-prob-perturbed-rephrased} and the Taylor expansion of $1-\exp(-z)$, for any offline node set $S\subseteq [n]$ and time $t$, the set of bidders $P_t$ computed by ODRS-core($\theta\cdot (1-\eps),\mathbf{\hat{x}},\mathbf{x}$) satisfies
	\begin{align*}
	\Pr[S\cap P_t \neq \emptyset] & \geq 1 - \prod_{B\not\subseteq L_t} \left(1-\sum_{i\in B\cap S}\hat{x}_{it}\right)\cdot \prod_{B\subseteq L_t} \left(1-\sum_{i\in B\cap S} \hat{x}_{it}\right) \\
	& \geq 1 - \exp\left(-\sum_{i \in S \setminus L_t} x_{it}\cdot (1+\delta)\right)\cdot \prod_{B\subseteq L_t}\left(1-\sum_{i\in B\cap S} x_{it}\cdot (1-\eps)\right).
	\end{align*}
	We now lower bound the ratio $\frac{\Pr[S\cap P_t \neq \emptyset]}{\sum_{i\in S} x_{it}}$. To do so, we note that our lower bound for $\Pr[S\cap P_t \neq \emptyset]$ is a concave function in $\sum_{i\in S} x_{it}$ in the range $[0,1]$, and so our lower bound for the above ratio is minimized by virtually increasing $\sum_{i\in S} x_{it}$ until $\sum_{i\in S} x_{it} = 1$, by adding all $i\notin S$ to the set and possibly adding dummy nodes $J$ with $s_{j,t}=\theta$ for all $j\in J$ and $\sum_J x_{jt}=1-\sum_{i} x_{it}\geq 0$ (see \Cref{convexity}).\footnote{Concretely: adding any value to $\sum_{i\in S} x_{i,t}$ corresponding to adding (part of) another (dummy) element to $S$ decreases the value, so concavity of the resulting function together with \Cref{convexity} completes the proof of this claim.} 
    On the other hand, by \Cref{first-fit}, every bin $B\subseteq L_t$ but at most one bin $B^*$ has
    $\sum_{i\in B}\hat{x}_{i,t}\geq \frac{1-\theta(1-\eps)}{2} \geq \frac{(1-\theta)(1-\eps)}{2}$, and hence $\sum_{i\in B}{x}_{i,t}\geq \frac{1-\theta}{2}$.
    Combining the above, we get
	\begin{align} 
	\frac{\Pr[S\cap P_t \neq \emptyset] }{\sum_i x_{it}}
	& \geq 1 - \exp\left(-\left(1-\sum_{i\not\in L_t} x_{it}\right)\cdot (1+\delta)\right)\cdot \prod_{B\subseteq L_t}\left(1-\sum_{i\in B} x_{it}\cdot (1-\eps)\right) \nonumber \\
	& \geq 1 - \exp\left(-\left(1-\sum_{i\in B^*} x_{it}\right)\cdot (1+\delta)\right)\cdot \left(1-\sum_{i\in B^*} x_{it}\cdot (1-\eps)\right) \nonumber \\
    & \geq 1-\exp\left(-1-\delta+\frac{\eps+\delta}{1-\eps}\right)\cdot\frac{1-\eps}{1+\delta}. \label{eqn:halls-ratio}
	\end{align}
    Above, the  second inequality follows from  
    the lemma's hypothesis, in \Cref{eqn:conditional-ratio}, and the last inequality follows from the following claim, whose proof by inspection of $g$ is deferred to \Cref{appendix:b-matching}.

    \begin{restatable}{claim}{minimizer}\label{claim:miminimzer}
        If $\eps,\delta\in [0,1]$, the 
        function $g(y):=1 - \exp\left(-\left(1-y\right)\cdot (1+\delta)\right)\cdot \left(1-y\cdot (1-\eps)\right)$~satisfies
        $$\forall y\in \mathbb{R}: \qquad g(y)\geq g\left(\frac{\eps+\delta}{(1-\eps)(1+\delta)}\right)=1-\exp\left(-1-\delta+\frac{\eps+\delta}{1-\eps}\right)\cdot\frac{1-\eps}{1+\delta}.$$
    \end{restatable}
    Finally, the rounding ratio of the ODRS then follows from \Cref{lem:CRS} and \Cref{eqn:halls-ratio}.
\end{proof}

\Cref{conditional-ratio} provides a sufficient condition for $\eps,\delta$ to (possibly) allow for greater-than-$(1-1/e)$ rounding ratios, though this condition requires satisfying infinitely many non-linear constraints.
The following lemma, which follows from convexity of $f_{\eps,\delta}$ and is proven in \Cref{appendix:b-matching}, provides a simpler condition that is sufficient to guarantee all the above infinitely-many constraints.

\begin{restatable}{lem}{sufficient}\label{sufficient-condition}
	Let $f_{\epsilon,\delta}(z)$ be as in \Cref{conditional-ratio}. If $f_{\epsilon,\delta}(a)\geq 0$ and $f'_{\epsilon,\delta}(a)\geq 0$, then $f_{\epsilon,\delta}(b)\geq 0$ $\forall b\geq a$.
\end{restatable}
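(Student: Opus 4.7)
The plan is to exploit convexity of $f_{\epsilon,\delta}$ in $z$, which immediately makes the derivative monotone, so a sign assumption at $a$ propagates to $[a,\infty)$.

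First I would compute derivatives explicitly. Since $f_{\epsilon,\delta}(z) = \exp(-z(1+\delta)) - (1 - z(1-\epsilon))$, we have
\[
f'_{\epsilon,\delta}(z) = -(1+\delta)\exp(-z(1+\delta)) + (1-\epsilon),
\qquad
f''_{\epsilon,\delta}(z) = (1+\delta)^2 \exp(-z(1+\delta)) \geq 0.
\]
Thus $f'_{\epsilon,\delta}$ is non-decreasing on $\mathbb{R}$, i.e., $f_{\epsilon,\delta}$ is convex.

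Next, I would use the hypothesis $f'_{\epsilon,\delta}(a)\geq 0$ together with monotonicity of $f'_{\epsilon,\delta}$ to conclude that $f'_{\epsilon,\delta}(z)\geq f'_{\epsilon,\delta}(a)\geq 0$ for every $z\geq a$. Hence $f_{\epsilon,\delta}$ is non-decreasing on $[a,\infty)$, so for any $b\geq a$,
\[
f_{\epsilon,\delta}(b) \;\geq\; f_{\epsilon,\delta}(a) \;\geq\; 0,
\]
which is exactly the conclusion.

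There is essentially no obstacle here: the lemma is a one-line consequence of convexity once one writes down $f''_{\epsilon,\delta}$. The only thing to be a little careful about is that $\delta\in[0,1]$ (so $(1+\delta)^2>0$ and the exponential is positive), which is given by hypothesis and ensures $f''_{\epsilon,\delta}\geq 0$ everywhere; no further case analysis is needed.
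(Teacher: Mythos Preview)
Your proof is correct and follows essentially the same approach as the paper: both argue that $f_{\epsilon,\delta}$ is convex (you by computing $f''_{\epsilon,\delta}\geq 0$ explicitly, the paper by noting $f$ is a sum of convex functions), then use monotonicity of $f'$ to propagate the sign of $f'$ and hence of $f$ from $a$ to $[a,\infty)$. The only cosmetic difference is that the paper phrases the monotonicity steps via integrals of $f''$ and $f'$, while you state them directly.
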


Combining the above, we are finally ready to optimize the rounding ratio of \Cref{alg:ODRS}.

\begin{thm}\label{thm:ODRS}
    \Cref{alg:ODRS} run with $(\eps,\delta)\approx (0.0480,0.0643)$ achieves a rounding ratio of $\ratio$.
\end{thm}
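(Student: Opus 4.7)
The plan is to apply Lemma~\ref{conditional-ratio} directly to the stated values of $(\eps,\delta)$, using Lemma~\ref{sufficient-condition} to discharge the infinite family of pointwise constraints $f_{\eps,\delta}(z)\geq 0$ for $z\geq \frac{\eps}{2(\eps+\delta)}$. That is, writing $a := \frac{\eps}{2(\eps+\delta)}$ (the left endpoint of the constraint range in \Cref{conditional-ratio}), the task reduces to verifying the \emph{two} scalar inequalities
\[
f_{\eps,\delta}(a)\geq 0 \qquad \text{and} \qquad f'_{\eps,\delta}(a)\geq 0,
\]
at $(\eps,\delta)=(0.0480,0.0643)$, since by \Cref{sufficient-condition} these two imply $f_{\eps,\delta}(z)\geq 0$ for all $z\geq a$, which is precisely the hypothesis of \Cref{conditional-ratio}. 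Once this hypothesis holds, \Cref{conditional-ratio} yields the rounding ratio
\[
1-\exp\!\left(-1-\delta+\tfrac{\eps+\delta}{1-\eps}\right)\cdot\tfrac{1-\eps}{1+\delta},
\]
which I would simply evaluate numerically at $(\eps,\delta)\approx (0.0480,0.0643)$ and verify is at least $0.652$.

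For the verification of $f_{\eps,\delta}(a)\geq 0$ and $f'_{\eps,\delta}(a)\geq 0$, I would compute $a\approx \frac{0.0480}{2(0.1123)} \approx 0.2137$ and then plug into
$f_{\eps,\delta}(a)=\exp(-a(1+\delta))-(1-a(1-\eps))$ and $f'_{\eps,\delta}(a)=-(1+\delta)\exp(-a(1+\delta))+(1-\eps)$. Both quantities should evaluate to (essentially) zero at the stated values, since the optimal $(\eps,\delta)$ is exactly the pair making both constraints tight: intuitively, we would decrease $\eps$ (smaller discount, better for the rounding ratio) and increase $\delta$ (larger markup, also better for the rounding ratio) unless the constraints $f_{\eps,\delta}(a)=0$ and $f'_{\eps,\delta}(a)=0$ are saturated. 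So the optimal $(\eps,\delta)$ arises as the solution of this $2\times 2$ nonlinear system, and the stated decimals are just a numerical approximation to it.

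The ``proof'' is therefore purely a numerical certificate: one picks the distinguished $(\eps,\delta)$ by simultaneously solving the two tight constraints (using, e.g., Newton's method), verifies to sufficient decimal precision that both $f_{\eps,\delta}(a)$ and $f'_{\eps,\delta}(a)$ are nonnegative (say, to within a margin exceeding the precision of the printed decimals), and reports the resulting value of the rounding-ratio expression. The main ``obstacle'' is not mathematical but expository: the printed $(\eps,\delta)$ are rounded, and strictly speaking at the truncated values one must check that $f_{\eps,\delta}(a)\geq 0$ still holds (a slight perturbation from the true root); this is easily arranged by either reporting a rigorous outer/inner approximation of the true optimum or by nudging $\eps$ up and $\delta$ down by a tiny amount so that both scalar conditions hold with slack while the rounding-ratio formula still exceeds $0.652$.
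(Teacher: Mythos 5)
Your plan is the paper's own proof: combine \Cref{conditional-ratio} with \Cref{sufficient-condition} to reduce the infinitely many constraints to the two scalar conditions $f_{\eps,\delta}(a)\geq 0$ and $f'_{\eps,\delta}(a)\geq 0$ at $a=\frac{\eps}{2(\eps+\delta)}$, then numerically certify these and evaluate the closed-form ratio at $(\eps,\delta)\approx(0.0480,0.0643)$; the paper phrases this as a small nonlinear program solved by off-the-shelf solvers, but the content is identical, and indeed the formula evaluates to about $0.6528\geq 0.652$.

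One factual slip in your side reasoning: at the stated optimum only the \emph{first} constraint is (essentially) tight. Numerically, $a\approx 0.2137$, $f_{\eps,\delta}(a)\approx 0$, but $f'_{\eps,\delta}(a)=-(1+\delta)e^{-a(1+\delta)}+(1-\eps)\approx 0.104>0$, so the derivative constraint has comfortable slack. Consequently, solving the $2\times 2$ system $f_{\eps,\delta}(a)=f'_{\eps,\delta}(a)=0$ by Newton's method would \emph{not} recover $(0.0480,0.0643)$; the optimum is a KKT point of the maximization with a single active constraint (the improving direction ``decrease $\eps$, increase $\delta$'' is blocked by $f_{\eps,\delta}(a)\geq 0$ alone). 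This does not damage your verification plan, since all that is needed is nonnegativity of both quantities at the given parameters (with the rounding-of-decimals caveat you already handle for the tight constraint), but the stated characterization of how the parameters arise is incorrect.
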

\begin{proof}
 By lemmas \ref{conditional-ratio} and \ref{sufficient-condition}, the rounding ratio of \Cref{alg:bucketing} with parameters $(\eps,\delta)$ is at least as high as $1-\exp\left(-1-\delta+\frac{\eps+\delta}{1-\eps}\right)\cdot\frac{1-\eps}{1+\delta}$, provided that $f_{\eps,\delta}\left(\frac{1-\theta}{2}\right)\geq 0$ and $f'_{\eps,\delta}\left(\frac{1-\theta}{2}\right)\geq 0$, where $\frac{1-\theta}{2}=\frac{\eps}{2(\eps+\delta)}$ and $f_{\eps,\delta}$ is as in \Cref{conditional-ratio}.
 Thus, the algorithm's rounding ratio is at least as high as 
 the value of the following program, when run with parameters given by its optimal solution $(\eps,\delta)$.
 Off-the-shelf numerical solvers yield a maximum value of $\alpha=\ratio$ at $(\eps,\delta)\approx (0.0480,0.0643)$.
\begin{align*}
	\max \quad & \alpha = 1-\exp\left(-1-\delta+\frac{\eps+\delta}{1-\eps}\right)\cdot\frac{1-\eps}{1+\delta} \\
	\textrm{s.t.} \quad & \exp\left(-\frac{\epsilon}{2(\epsilon+\delta)}\cdot (1+\delta)\right) - 1 + \frac{\eps}{2(\eps+\delta)}\cdot (1-\epsilon) \geq 0 \\
	& -(1+\delta)\cdot \exp\left(-\frac{\epsilon}{2(\epsilon+\delta)}\cdot (1+\delta)\right) + 1-\epsilon \geq 0 \\
	& \epsilon,\delta\in [0,1]. \qedhere
\end{align*}
\end{proof}


 

\noindent\textbf{Computational considerations.} The algorithm implied by \Cref{thm:ODRS}, as stated, takes exponential time, due to its reliance on the CRS of \Cref{lem:CRS} for non-independent bid distributions. 
As we show in \Cref{sec:polytime}, a slight downscaling of the $x$-values allows us to achieve essentially the same rounding ratio, 
while also guaranteeing that the number of bidders at each time step is small. This allows us to compute the support of the distribution of $P_t$ in polynomial time, which by \Cref{lem:CRS} then gives us the following polytime counterpart to \Cref{thm:ODRS}.

\begin{restatable}{thm}{polymatching}\label{1-1/e+delta-rounding-efficient}
	For any $\gamma>0$, there exists an $n^{O(1/\gamma)}$-time bipartite matching ODRS with rounding ratio of $\ratio-\gamma$.
\end{restatable}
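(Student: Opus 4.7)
The plan is to run \Cref{alg:ODRS} on a slightly downscaled input. Concretely, set $\gamma':=\gamma/\ratio$ and feed $\mathbf{x}':=(1-\gamma')\mathbf{x}$ (still a valid fractional matching) into the algorithm. By \Cref{thm:ODRS}, the output matching satisfies
$$\Pr[e\in \mathcal{M}]\geq \ratio\cdot x'_e = \ratio(1-\gamma')\cdot x_e \geq (\ratio - \gamma)\cdot x_e$$
for every edge $e$, yielding the claimed rounding ratio.

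The downscaling simultaneously forces a deterministic bound on the number of bins per time step. Since $s'_{i,t}\leq 1-\gamma'$ at all times, \Cref{hat-degrees} gives $1-\hat{s}'_{i,t}\geq 1-s'_{i,t}\geq \gamma'$, so the items fed to FirstFit have total size $\sum_i \hat{x}'_{i,t}/(1-\hat{s}'_{i,t})\leq (1+\delta)\sum_i x'_{i,t}/\gamma'\leq (1+\delta)/\gamma'$. By the classical First-Fit analysis (all but one bin is more than half full), each of the two FirstFit invocations produces at most $O(1/\gamma')=O(1/\gamma)$ bins, so the total number $K$ of bins at time $t$ is $O(1/\gamma)$; since each bin contributes at most one candidate to $C_t\supseteq P_t$, we also have $|P_t|\leq K$ deterministically.

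The bin choices are made via mutually independent uniforms $U_B$, so conditional on the history---which fixes the free set $F_t$ and the (deterministic) bucketing $\calB_t$---the distribution of $P_t=C_t\cap F_t$ is a product across bins, each bin $B$ contributing either a single $i\in B\cap F_t$ with probability $\hat{x}'_{i,t}/(1-\hat{s}'_{i,t})$, or $\emptyset$. Its support has size at most $\prod_B(|B|+1)$, which by Jensen's inequality applied to the concave $\log(\cdot+1)$ together with $\sum_B |B|\leq n$ is at most $(n/K+1)^K\leq n^{O(1/\gamma)}$, and can be enumerated in $n^{O(1/\gamma)}$ time. Applying \Cref{lem:CRS} with this conditional distribution per step therefore yields a total runtime of $n^{O(1/\gamma)}$.

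The main obstacle is verifying that the per-step conditional application of the CRS preserves the rounding guarantee of \Cref{thm:ODRS}. I would handle this by rerunning the proofs of \Cref{proposal-prob-perturbed-rephrased} and \Cref{conditional-ratio} conditioned on history: the $U_B$'s are independent of history, so bin bids remain NA conditionally, the bucketing is deterministic, and the key ratios $\hat{x}'_{i,t}/x'_{i,t}\in\{1-\eps,1+\delta\}$ driving the algebra are unchanged. Using CRS targets $v_{i,t}=\frac{x'_{i,t}}{1-\hat{s}'_{i,t}}\cdot \mathds{1}[i\in F_t]$, which satisfy $\mathbb{E}[v_{i,t}]=x'_{i,t}$, the conditional analog of \Cref{conditional-ratio} yields a \emph{pointwise} lower bound of $\ratio$ on the CRS's conditional factor. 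Taking expectation over history then recovers the marginal bound $\Pr[e\in\mathcal{M}]\geq \ratio\cdot x'_e$ and completes the proof.
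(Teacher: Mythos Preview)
Your downscaling-to-bound-bins step is fine and matches the paper. The gap is in the last paragraph: the claimed \emph{pointwise} lower bound of $\ratio$ on the conditional CRS factor is false. Conditioning on a history $h$ that leaves many high-$\hat{s}'$ nodes free, your targets satisfy $\sum_{j\in F_t} v_{j,t}(h)=\sum_{j\in F_t} x'_{j,t}/(1-\hat{s}'_{j,t})$, which is \emph{not} bounded by $1$; after the $(1-\gamma')$ downscaling it can be as large as $\Theta(1/\gamma')$. Since $\Pr[P_t\cap S\neq\emptyset\mid h]\leq 1$ always, the conditional ratio $\alpha(h)=\min_S \Pr[P_t\cap S\neq\emptyset\mid h]/\sum_{j\in S} v_{j,t}(h)$ can be $\Theta(\gamma')\ll\ratio$. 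More structurally, the proof of \Cref{conditional-ratio} hinges on the concavity step that pads $S$ until $\sum_i x_{i,t}=1$, using $\sum_i x_{i,t}\leq 1$; your conditional denominator has no such cap, so that step cannot be ``rerun conditioned on history.'' Averaging over $h$ does not rescue you either: $\E_h[v_{i,t}(h)\cdot\alpha(h)]$ need not be at least $\ratio\cdot\E_h[v_{i,t}(h)]$, and with unconditional targets $v_{i,t}=x'_{i,t}$ you would need $\E_h[\alpha(h)]\geq\ratio$, but $\E_h[\min_S(\cdot)]\leq\min_S\E_h[(\cdot)]$ goes the wrong way.

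The paper instead keeps the CRS call \emph{unconditional}: it feeds the CRS the exact unconditional law of $P_t$, so that the already-proved bound $\min_S \Pr[P_t\cap S\neq\emptyset]/\sum_{j\in S} x'_{j,t}\geq\ratio$ from \Cref{thm:ODRS} applies verbatim. The downscaling is used only to cap $|P_t|\leq O(1/\gamma)$, whence the support of $P_t$ has size $\leq n^{O(1/\gamma)}$; for each candidate set $S$ one computes $\Pr[P_t=S]$ in $n^{O(1/\gamma)}$ time by enumerating the first-bid times of its $O(1/\gamma)$ members. \Cref{lem:CRS} then runs in $\poly(n^{O(1/\gamma)})$ per step.
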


    \section{Online Level-Set Rounding}\label{sec:level-sets}

In this section we present two online level-set rounding algorithms.
The guarantees of such algorithms, first obtained by \cite{srinivasan2001distributions,DBLP:conf/soda/Srinivasan07} in \emph{offline settings}, are as follows.

On input $\vec{x} \in [0,1]^n$ with integer sum $\sum_i x_i \in \mathbb{Z}$ (possibly by adding a dummy value) and partial sums $s_j := \sum_{i\leq j}x_j$, a level-set rounding algorithm must output a set $\calS \subseteq [n]$ with characteristic vector $\vec{X}$ given by $X_i:=\mathds{1}[i\in \calS]$ with partial sums $S_j := \sum_{i\leq j} X_i = |\calS \cap [j]|$ satisfying the following properties for all $i$. 
\begin{enumerate}[label=(P{{\arabic*}})]
	\item \textbf{(Marginals)} $\E[X_i] = x_i.$ \label{prop:marginals}
	\item \textbf{(Rounding)} $S_i \in \{\lfloor s_i \rfloor,\lceil s_i \rceil\}$. \label{prop:floor/ceil}
\end{enumerate}
Property \ref{prop:marginals} and linearity of expectation imply that $\E[S_i] = s_i$. Property \ref{prop:floor/ceil} requires that $S_i$ always equals this expectation, ``up to rounding''. 

In this work, we require \emph{online} counterparts to the above. Here, 
the input is again a sequence of reals, $x_1,x_2,\dots \in [0,1]$, but now they are revealed in an online fashion. An online level-set rounding algorithm maintains a monotonically increasing set $\calS \subseteq \mathbb{N}$ (initially empty), and decides at time $t$ (when $x_t$ is revealed) whether to add $t$ to its output set $\calS$, immediately and irrevocably. 

As we shall see shortly, obtaining these first two properties is easy to achieve online. 
However, for several of our applications in \Cref{sec:applications-body} it will be more useful to additionally guarantee strong negative correlation between the elements of $\vec{X}$. Specifically, we require the following.
\begin{enumerate}[label=(P{{\arabic*}})]
	  \setcounter{enumi}{2}
	\item \textbf{(Strongly Rayleigh)} The joint distribution $\vec{X}$ is strongly Rayleigh (and thus, also NA). \label{prop:sr}
\end{enumerate}

\paragraph{A warm-up:} We start with a simple algorithm that satisfies Properties \ref{prop:marginals} and \ref{prop:floor/ceil}, but not Property \ref{prop:sr}.\footnote{We thank an anonymous reviewer for this suggestion.}
Upon initialization, we draw a uniform threshold $\tau\sim\textrm{Uni}[0,1]$. Next, for every time step $j$, for $s_j:=\sum_{i\leq j} x_{i}$ the running sum until the $j^{th}$ arrival (inclusive), we add element $j$ to the output set $\calS$ if the interval $(s_{j-1},s_j]$ contains a point $n+\tau$ for some integer $n$.
Thus, $\Pr[i\in \calS] = \Pr[(s_{j-1},s_j] \cap (\mathbb{N}+\tau)] = s_j - s_{j-1} = x_i$, and this algorithm satisfies Property \ref{prop:marginals}.
Moreover, as we output one element for any sum interval $(n,n+1]$, it is easy to see that the output by time $j$ contains either $\lfloor s_j\rfloor$ or $\lceil s_j\rceil$ many elements, and so this algorithm satisfies Property \ref{prop:floor/ceil}.
To see that this algorithm does not satisfy \ref{prop:sr}, which implies pairwise negative correlation, we note that for an input $\vec{x}=\frac{1}{2}\cdot \vec{1}$ (i.e., $x_i=1/2$ for all $i$), negative correlation is (very much) violated, since $X_i=X_{i+2}$ for all $i$. 
\medskip

The main contribution of this section is an online level-set rounding algorithm satisfying both basic properties of an online level-set algorithm, and the strong negative correlation Property~\ref{prop:sr} (necessary for some of our applications in \Cref{sec:applications-body}). As this algorithm's analysis involves coupling with the \emph{offline} level-set rounding algorithm of \citet{srinivasan2001distributions,DBLP:conf/soda/Srinivasan07}, we first start by revisiting this offline algorithm and its properties.

\subsection{Offline level-set rounding}
Here we consider an \emph{offline} procedure due to \cite{srinivasan2001distributions}. 
Recall that our objective will be to design an \emph{online} algorithm satisfying the aforementioned properties, but for now we revisit the offline algorithm of \cite{srinivasan2001distributions} and prove that when run with a certain specific way of choosing the key indices $i_1$ and $i_2$ (see \Cref{alg:linear-srinivasan-rounding}), then it satisfies the above three properties.  

As usual, ``$\min(S)$" in Algorithm~\ref{alg:linear-srinivasan-rounding} refers to the minimum element of a nonempty set $S$; the index $i_2$ in Algorithm~\ref{alg:linear-srinivasan-rounding} is always well-defined since $\sum_i x_i \in \mathbb{Z}$ and, as we shall see, $\sum_i y_i = \sum_i x_i$ always. We emphasize that the 
algorithm of \cite{srinivasan2001distributions} allows much liberty in the choice of $i_1$ and $i_2$, and that our specific choice of $i_1$ and $i_2$ is required to satisfy Property~\ref{prop:floor/ceil}. Also, the random choice in each call to STEP is independent of the random choices of the past.

\noindent\begin{minipage}[t]{0.46\textwidth}
	\vspace{-0.5cm}
	\begin{algorithm}[H]
		\caption{Offline Level-Set Rounding}
		\label{alg:linear-srinivasan-rounding}
		\begin{algorithmic}[1]
			\State Initialize $\vec{y} \gets \vec{x}$
			\While{$frac(y) := \{i \mid y_i \not\in\{0,1\}\}\neq \emptyset$}
			\State Let $i_1\gets  \min (frac(y))$
			\State Let $i_2\gets \min (frac(y)\setminus \{i_1\})$
			\State STEP$(y_{i_1},y_{i_2})$
			\EndWhile
			\State Output $\calS:= \{i \mid y_i = 1\}$
		\end{algorithmic}
	\end{algorithm}	
\end{minipage}
\hfill
\begin{minipage}[t]{0.5\textwidth}
	\vspace{-0.5cm}	
	\begin{algorithm}[H]
		\caption{STEP$(A,B) \quad \vartriangleright$ modifies inputs}
		\label{alg:offline-rounding-step}
		\begin{algorithmic}[1]
			\If{$A+B < 1$}
			\State $(A,B) \gets \begin{cases}(A+B,0) & \textrm{w.p. } \frac{A}{A+B} \\
			(0,A+B) & \textrm{w.p. } \frac{B}{A+B}
			\end{cases}$
			\Else \Comment{$A+B \in [1,2)$}
			\State $(A,B) \gets \begin{cases}(1,A+B-1) & \textrm{w.p. } \frac{1-B}{2-A-B} \\
			(A+B-1,1) & \textrm{w.p. } \frac{1-A}{2-A-B}
			\end{cases}$
			\EndIf 
		\end{algorithmic}
	\end{algorithm}	
\end{minipage}

\vspace{0.2cm}
That \Cref{alg:linear-srinivasan-rounding} terminates and outputs a random set satisfying Property \ref{prop:marginals} is known \cite{srinivasan2001distributions}. Nonetheless, for completeness, we provide a proof of this fact in \Cref{appendix:level-sets}.
In the same appendix we prove by induction on the number of invocations of STEP() that every prefix sum is preserved with probability one, up to rounding, and so \Cref{alg:linear-srinivasan-rounding} satisfies Property \ref{prop:floor/ceil}.

\begin{restatable}{lem}{offlineprefix}
	Let $s_j := \sum_{i\leq j} x_i$ and $Y^t_j := \sum_{i\leq j} y^t_i$, where $y^t_i$ is the value $y_i$ after $t$ applications of STEP in 	\Cref{alg:linear-srinivasan-rounding}. Then, $Y^t_j \in \left[\lfloor s_j \rfloor, \lceil s_j \rceil \right]$ for all $j$ and $t$.
\end{restatable}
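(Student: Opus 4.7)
The plan is to prove the claim by induction on $t$. The base case $t=0$ is immediate: $Y^0_j = s_j \in [\lfloor s_j \rfloor, \lceil s_j \rceil]$.

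For the inductive step, let $i_1 < i_2$ be the two smallest fractional indices selected at the $(t+1)^{st}$ invocation of STEP, and set $A := y^t_{i_1}$ and $B := y^t_{i_2}$, both in $(0,1)$. By the choice of $i_1,i_2$, every $y^t_j$ for $j < i_1$ or $i_1 < j < i_2$ is already integer (i.e., in $\{0,1\}$). This suggests a three-way split on $j$:
\begin{itemize}
    \item For $j < i_1$, the prefix sum $Y^{t+1}_j = Y^t_j$ is untouched, so the invariant is inherited from the induction hypothesis.
    \item For $j \geq i_2$, STEP preserves the sum $y_{i_1}+y_{i_2}$, so again $Y^{t+1}_j = Y^t_j$ and the invariant is inherited.
    \item For $i_1 \leq j < i_2$, we have $Y^{t+1}_j = Y^t_j + (y^{t+1}_{i_1} - A)$, and this is where the work lies.
\end{itemize}

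For this last case, the key structural observation is that $Y^t_{i_1-1}$ is a sum of integers, hence integer, and every $y^t_k$ for $i_1 < k < i_2$ is integer; thus $Y^t_j - A \in \mathbb{Z}$ for every $i_1 \leq j < i_2$. Combined with $A \in (0,1)$ and the inductive bound $Y^t_j \in [\lfloor s_j \rfloor, \lceil s_j \rceil]$ (an interval of length at most one), this forces $Y^t_j = \lfloor s_j \rfloor + A$ (and in particular $s_j \notin \mathbb{Z}$, else $Y^t_j$ would be integer, contradicting $A \in (0,1)$).

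With this identity, the inductive step reduces to a four-case check on STEP's output rules. When $A+B<1$, the two outcomes give $Y^{t+1}_j \in \{\lfloor s_j\rfloor + (A+B),\, \lfloor s_j\rfloor\}$, both in $[\lfloor s_j\rfloor,\lceil s_j\rceil]$ since $A+B\in(0,1)$. When $A+B \geq 1$, the two outcomes give $Y^{t+1}_j \in \{\lfloor s_j\rfloor + 1,\, \lfloor s_j\rfloor + (A+B-1)\}$, both in $[\lfloor s_j\rfloor,\lceil s_j\rceil]$ since $A+B-1\in[0,1)$ and $\lfloor s_j\rfloor+1=\lceil s_j\rceil$ (using $s_j\notin\mathbb{Z}$). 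This closes the induction.

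The only real obstacle is spotting the structural identity $Y^t_j = \lfloor s_j \rfloor + A$ for $i_1 \leq j < i_2$; once this is in place, the casework on STEP is mechanical arithmetic. The rest is bookkeeping about which prefix sums STEP can possibly move.
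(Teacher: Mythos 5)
Your proof is correct and follows essentially the same route as the paper's: induction on $t$, a case split on where $j$ sits relative to $i_1,i_2$, and in the interesting case $i_1 \leq j < i_2$ the observation that all other entries in the prefix are integral, which together with the inductive window of length at most one pins $Y^t_j = \lfloor s_j\rfloor + y^t_{i_1}$ and forces $s_j \notin \mathbb{Z}$. If anything you are slightly more careful than the paper (which phrases this case as $i_1 = j$), and your four-way check of STEP's outcomes can be compressed to the single remark that $y^{t+1}_{i_1}\in[0,1]$, so $Y^{t+1}_j = \lfloor s_j\rfloor + y^{t+1}_{i_1} \in [\lfloor s_j\rfloor, \lceil s_j\rceil]$.
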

Strong negative correlation (Property \ref{prop:sr}) was proven in \cite{branden2012negative} for any instantiation of \cite{srinivasan2001distributions}, as long as we use some fixed choice of $i_1$ and $i_2$, as we do in \Cref{alg:linear-srinivasan-rounding}: 

\begin{lem}
	Let $X_i := \mathds{1}[i\in \calS]$ be the indicators for $i$ being in the output set of \Cref{alg:linear-srinivasan-rounding}. Then, $\vec{X}=(X_1,X_2,\dots,X_n)$ is a family of strongly Rayleigh random variables.
\end{lem}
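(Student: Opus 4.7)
The plan is to invoke the theorem of Brändén and Jonasson in \cite{branden2012negative}, which shows that any pivotal-sampling procedure in the sense of \cite{srinivasan2001distributions} produces a strongly Rayleigh distribution. My first task would therefore be to verify that \Cref{alg:linear-srinivasan-rounding} is a valid instance of pivotal sampling: at each iteration a pair of fractional indices $(i_1,i_2)$ is chosen by a rule that is deterministic given the current state $\vec{y}$ (here, the two smallest indices in $frac(\vec{y})$), and the STEP subroutine performs the standard two-coordinate pivotal update, i.e., it couples $(y_{i_1},y_{i_2})$ to a random pair preserving both marginals and the sum $y_{i_1}+y_{i_2}$, while reducing the number of fractional coordinates by at least one.

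Next I would recall the standard characterization: a Bernoulli vector is strongly Rayleigh iff its multi-affine generating polynomial
\[
p_{\vec X}(z_1,\dots,z_n) \;:=\; \sum_{S \subseteq [n]} \Pr[\vec X = \mathbf 1_S]\prod_{i\in S} z_i
\]
is real-stable (has no zeros in the open upper half-plane in every coordinate). The initial ``distribution'' in \Cref{alg:linear-srinivasan-rounding} is the product of independent Bernoullis $\mathrm{Ber}(x_i)$ (once we interpret the deterministic vector $\vec y=\vec x$ as the marginals of a product distribution), whose generating polynomial $\prod_i \bigl((1-x_i)+x_i z_i\bigr)$ is a product of affine polynomials and hence trivially real-stable.

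The core of the proof is then to show that each call to STEP corresponds to a real-stability-preserving operation on the current generating polynomial, restricted to the two active variables $z_{i_1},z_{i_2}$. This is exactly the content of the Brändén--Jonasson argument: they identify the randomized STEP update with a convex combination / projection that can be written as a specific stability-preserving operator in the Borcea--Brändén calculus on the pair $(z_{i_1},z_{i_2})$. Combined with the fact that freezing one variable to $0$ or $1$ (as happens when $y_{i_1}$ or $y_{i_2}$ becomes integral) preserves real-stability, we can inductively conclude that the generating polynomial of the vector $\vec y^{\,t}$ after $t$ applications of STEP remains real-stable. Terminating at $t=T$ when every coordinate is integral yields a real-stable polynomial whose coefficients are exactly $\Pr[\vec X=\mathbf 1_S]$, proving that $\vec X$ is strongly Rayleigh.

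The main obstacle is the verification that STEP is stability-preserving, which requires the Borcea--Brändén machinery and some care with the two cases $A+B<1$ and $A+B\in[1,2)$ in \Cref{alg:offline-rounding-step}; because \cite{branden2012negative} carries out this verification in full generality for any deterministic pivot rule, and our algorithm uses such a rule, my proof would invoke their theorem directly rather than reprove it. A minor secondary point I would address is that their result is stated for schemes where $\sum_i x_i$ is a positive integer and STEP is applied until no fractional coordinates remain; both hypotheses are satisfied by our algorithm after the dummy coordinate (if any) is added.
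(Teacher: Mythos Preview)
Your proposal is correct and matches the paper's approach: the paper does not give its own proof of this lemma but simply cites \cite{branden2012negative}, noting that their result applies to any instantiation of the pivotal-sampling scheme of \cite{srinivasan2001distributions} with a fixed (deterministic) rule for selecting $i_1,i_2$, which \Cref{alg:linear-srinivasan-rounding} is. Your additional sketch of the Br\"and\'en--Jonasson stability-preservation argument is more detailed than what the paper provides, but the core invocation is identical.
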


To conclude, we have the following.
\begin{lem}\label{offline-props}
	\Cref{alg:linear-srinivasan-rounding} satisfies properties \ref{prop:marginals} \ref{prop:floor/ceil} and \ref{prop:sr}.
\end{lem}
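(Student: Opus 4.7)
The plan is to observe that this lemma is essentially a consolidation of three claims, each either established or cited in the preceding discussion. Property~\ref{prop:marginals} follows from a martingale argument about STEP, Property~\ref{prop:floor/ceil} is immediate from the restated prefix-sum preservation lemma just above, and Property~\ref{prop:sr} follows from the cited result of Br\"and\'en--Jonasson \cite{branden2012negative}.

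For Property~\ref{prop:marginals}, I would argue by induction on the number of STEP invocations that $\E[y_i^t] = x_i$ for every $i$ and iteration $t$. The base case is immediate since $y_i^0 = x_i$. For the inductive step, it suffices to check that each call STEP$(A,B)$ preserves $\E[A]$ and $\E[B]$; this is a direct calculation on the two cases of \Cref{alg:offline-rounding-step}. In the first case, $\E[A'] = (A+B)\cdot \tfrac{A}{A+B} + 0\cdot \tfrac{B}{A+B} = A$, and symmetrically for $B$. In the second case, $\E[A'] = 1\cdot\tfrac{1-B}{2-A-B} + (A+B-1)\cdot\tfrac{1-A}{2-A-B}$, which simplifies to $A(2-A-B)/(2-A-B) = A$, and symmetrically for $B$. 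Combined with the appendix-deferred termination argument, which ensures $y_i^{\mathrm{end}} \in \{0,1\}$, we conclude $\E[X_i] = \Pr[y_i^{\mathrm{end}} = 1] = x_i$.

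Property~\ref{prop:floor/ceil} then follows by specializing the restated prefix-sum preservation lemma to the terminal iteration $T$: since $y_i^T \in \{0,1\}$ for all $i$, we have $Y_j^T = \sum_{i \leq j} X_i = S_j$, and the conclusion $Y_j^T \in [\lfloor s_j \rfloor, \lceil s_j \rceil]$ of that lemma, together with integrality of $S_j$, gives $S_j \in \{\lfloor s_j \rfloor, \lceil s_j \rceil\}$. Property~\ref{prop:sr} is immediate from \cite{branden2012negative}, whose main theorem applies to any instantiation of the Srinivasan rounding framework using a deterministic index-selection rule; our rule of choosing $i_1 = \min(\mathrm{frac}(y))$ and $i_2 = \min(\mathrm{frac}(y)\setminus\{i_1\})$ clearly qualifies.

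There is essentially no obstacle beyond assembling these pieces, since each component is packaged as a prior (restated or deferred) lemma or citation. The only subtle point — which makes Property~\ref{prop:floor/ceil} non-trivial in the first place — is that the restated prefix-sum lemma crucially exploits the choice of $i_1, i_2$ as the two smallest fractional indices: if these were chosen arbitrarily, STEP could shift mass across a distant index pair in a way that breaks the partial-sum bound at an intermediate coordinate. This delicate point, however, is absorbed into the restated lemma's own inductive proof.
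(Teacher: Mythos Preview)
Your proposal is correct and matches the paper's approach exactly: the paper treats this lemma as a consolidation statement with no separate proof, deriving \ref{prop:marginals} from the termination/marginal-preservation fact (proved in the appendix via the same STEP-preserves-expectations calculation you give), \ref{prop:floor/ceil} from the restated prefix-sum lemma, and \ref{prop:sr} by citation to \cite{branden2012negative}. Your elaboration is slightly more detailed than the paper's own one-line pointers, but the logical structure is identical.
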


\subsection{The online level-set rounding algorithm}\label{sec:main-round}

We now turn to providing an online counterpart to \Cref{alg:linear-srinivasan-rounding}. Here, 
the input is a sequence of reals, $x_1,x_2,\dots \in [0,1]$, revealed in an online fashion. \Cref{alg:rounding} maintains a set $\calS \subseteq \mathbb{N}$ (initially empty), and decides at time $t$ (when $x_t$ is revealed) whether to add $t$ to its output set $\calS$, immediately and irrevocably. 
Let $X_t := \mathds{1}[t\in \calS]$, and let $S_t := \sum_{t'\leq t} X_{t'}$ denote the cardinality of $\calS$ right after time $t$, i.e., $S_t=|\calS\cap [t]|$. Finally, let $s_t := \sum_{t'\leq t} x_{t'}$. We will show that \Cref{alg:rounding} satisfies properties \ref{prop:marginals}, \ref{prop:floor/ceil} and \ref{prop:sr} given by its offline counterpart, \Cref{alg:linear-srinivasan-rounding}.

\begin{algorithm}[h]
	\caption{Online Level-Set Rounding}
	\label{alg:rounding}
	\begin{algorithmic}[1]
		\State Initialize $\calS\gets \emptyset$
		\For{arrival $x_t$ (at time $t\geq 1$)}
		\State add $t$ to $\calS$ with probability 
		$p_t := \begin{cases}
			0 & |\calS| = \lceil s_t \rceil \\ 
			1 & |\calS| < \lfloor s_t \rfloor \\
			\frac{x_t}{\lfloor s_{t-1} \rfloor + 1 - s_{t-1}}  & |\calS| = \lfloor s_{t} \rfloor = \lfloor s_{t-1} \rfloor \\
			\frac{s_t - \lfloor s_t \rfloor}{s_{t-1} - \lfloor s_{t-1} \rfloor}  & |\calS| = \lfloor s_{t} \rfloor > \lfloor s_{t-1} \rfloor \textrm{ and } s_{t-1}\neq \lfloor s_{t-1}\rfloor \\
			0 & \textrm{else.}
		\end{cases}$
		\EndFor
		\State Output $\calS$
	\end{algorithmic}
\end{algorithm}	
The random bits used to make the random choice in each step of \Cref{alg:rounding} are independent of the random bits used in the past: i.e., while $p_t$ depends on $S_{t-1}$, the random bits used to generate the event of probability $p_t$ in step $t$, are independent of the random bits of past steps. 

\subsection{Coupling the online and offline algorithms}
Here we prove that \Cref{alg:rounding} satisfies properties \ref{prop:marginals}, \ref{prop:floor/ceil} and \ref{prop:sr}.
Proving properties \ref{prop:marginals}, \ref{prop:floor/ceil} directly is not too hard (see \Cref{level-set:simple}).\footnote{
In fact, \Cref{alg:rounding} can be shown to be the \emph{only} online algorithm satisfying both these simple properties if we further restrict it to only storing in memory the partial sums $s_{t-1}$ and $S_{t-1}$ at the beginning of step $t$.}
Proving Property \ref{prop:sr}, on the other hand, is slightly more involved; we do so now, using that the much-simpler properties
\ref{prop:marginals} and \ref{prop:floor/ceil}
hold for both the offline and online algorithms. 
To prove that the last property holds, we show that both algorithms induce the \emph{same distribution} over outputs. We outline a proof here, deferring a complete proof to \Cref{appendix:level-sets}.

\begin{restatable}{thm}{equivalence}\label{lem:reduction}
Fix a vector $\vec{x}\in [0,1]^n$. Let $\calD$ and $\calD'$ denote the probability distributions on $\{0,1\}^n$ induced by the online and offline algorithms run on $\vec{x}$, respectively.
Then, $\calD=\calD'$.
\end{restatable}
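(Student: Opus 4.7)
The plan is to show that both algorithms induce the same joint distribution on the prefix-sum trajectory $(S_0, S_1, \dots, S_n)$, where $S_t := \sum_{t' \le t} X_{t'}$. Since $X_t = S_t - S_{t-1}$, equality of these trajectory distributions immediately gives $\calD = \calD'$.

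I would proceed in three steps. First, the one-dimensional marginal of $S_t$ is pinned down by Properties (P1) and (P2): (P2) forces $S_t \in \{\lfloor s_t\rfloor, \lceil s_t\rceil\}$ under both algorithms, and (P1) combined with linearity of expectation gives $\E[S_t]=s_t$, so the marginal is $\Pr[S_t = \lceil s_t\rceil] = s_t-\lfloor s_t\rfloor$ under both. Second, I would establish that $(S_t)_t$ is a Markov chain under each algorithm. For the online algorithm this is immediate from inspecting \Cref{alg:rounding}, whose transition $p_t$ depends on the history only through $S_{t-1}$. Third, assuming both processes are Markov, the transitions $\Pr[S_t=a\mid S_{t-1}=b]$ are forced by a short case analysis: since $S_t - S_{t-1} \in \{0,1\}$ and each of $S_{t-1}, S_t$ lives in a set of at most two values, several entries of the transition table are forced to be $0$ or $1$ by the support constraints, and matching the marginal of $S_t$ (Step 1) pins down the single remaining free parameter in each case. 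A routine comparison with the four cases in \Cref{alg:rounding}'s definition of $p_t$ confirms that these forced transitions coincide with $p_t$ exactly.

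The main obstacle is proving the Markov property for the offline algorithm. The approach would be an induction on $t$ exploiting the structural fact that, under the rule $i_1 = \min(\mathrm{frac}(y))$, coordinates are finalized in a nearly prefix-monotone order: at every moment during execution, at most one index in $\{1,\dots,t-1\}$ remains fractional (the ``carry''), and its value is a deterministic function of the partial-sum trajectory produced so far. Consequently, at the first moment when $y_t$ and all earlier coordinates have become integer, the conditional law of the remaining random process given the past depends only on $S_{t-1}$. The difficulty is that a single coordinate may be touched by many STEP operations before being finalized, so the inductive invariant must be strong enough to show that all intermediate state information is itself a function of $(S_1, \dots, S_{t-1})$, so that upon finalizing $y_t$ the past has been ``forgotten'' in the Markov sense.
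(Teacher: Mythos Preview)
Your overall strategy is sound and closely parallels the paper's: both arguments ultimately show that, for the offline algorithm, the conditional law of $X_t$ given $(X_1,\dots,X_{t-1})$ depends only on $S_{t-1}$. Your Step~3 observation---that Markovianity plus matching marginals forces the transitions---is a clean shortcut that lets you skip the explicit computation of $\Pr[U_2=1\mid U_1=b]$ that the paper carries out via the ``at most/at least one of $U_1,U_2$ is rounded to one'' trick.

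However, your sketch of how to establish the Markov property for the offline algorithm contains two confusions that, taken literally, would derail the argument.
\begin{enumerate}
\item You focus on ``the first moment when $y_t$ and all earlier coordinates have become integer.'' This moment is too late: by then many $y_j$ with $j>t$ may already be finalized, and there is no clean recursive structure left. The right moment to freeze (and the one the paper uses) is just \emph{before} $y_t$ is first touched by \textsc{STEP}: at that instant exactly one index $i^*\in[t-1]$ still carries a fractional value, and that value is the fixed constant $z=s_{t-1}-\lfloor s_{t-1}\rfloor$ (not merely ``a function of the trajectory'').
\item You propose to show that ``all intermediate state information is itself a function of $(S_1,\dots,S_{t-1})$.'' This is false: the carry \emph{index} $i^*$ depends on the full realization $(b_1,\dots,b_{t-1})$, not on the prefix sums. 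What actually makes the argument work is that the only piece of state that affects the \emph{future values} $(X_{i^*},X_t,X_{t+1},\dots)$ is the carry \emph{value}, which is the constant $z$ regardless of $i^*$. The subtle point---which the paper handles by passing to $\Pr[U_2=1\mid U_1=b_{i^*}]$---is that conditioning on the past event $\mathcal{E}=\{X_j=b_j:\,j<t\}$ includes conditioning on a \emph{future} event, namely that the carry eventually rounds to $b_{i^*}$; since $b_{i^*}=S_{t-1}-\lfloor s_{t-1}\rfloor$ irrespective of which index $i^*$ carries it, this conditioning depends only on $S_{t-1}$.
\end{enumerate}
With these two repairs your route goes through and is essentially the paper's argument, with the pleasant simplification in Step~3.
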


\begin{proof}[Proof (Sketch)]
We prove by induction on $t\in [n]$ that the following holds:
\begin{equation}
    \label{eqn:online-equals-offline-body}
    \forall (b_1, \ldots, b_{t-1}) \in \{0,1\}^{t-1}, ~
\Pr_\calD[X_t = 1 \bigm| (\forall i < t, ~X_i = b_i)] =
\Pr_{\calD'}[X_t = 1 \bigm| (\forall i < t, ~X_i = b_i)]. 
\end{equation}
The (strong) inductive hypothesis, whereby \Cref{eqn:online-equals-offline-body} holds for all $t'< t$, clearly implies that for all $(b_1,\dots,b_t)\in \{0,1\}^{t-1}$ we have that $\Pr_\calD[(\forall i<t, ~X_i=b_i)]=\Pr_{\calD'}[(\forall i<t, ~X_i=b_i)]$.
Similarly, \Cref{eqn:online-equals-offline-body} holds for all $t\leq n$ iff the two distributions are equal. The proof then proceeds by considering the different possibilities for the prefix sums $s_{t-1}$ and $s_t$, which for the online algorithm directly determine the above conditional probability. These conditions also determine the offline algorithm's performance up to the $(t-1)^{st}$ invocation of STEP(), from which a careful analysis implies that the conditional probabilities of the offline and online algorithms are equal.
\end{proof}

\Cref{lem:reduction} combined with \Cref{offline-props} immediately implies the desired properties of our \Cref{alg:rounding}, as these properties are determined by the distribution over output sets.

\begin{thm}\label{alg:online-rounding-properties}
	\Cref{alg:rounding} is an online algorithm satisfying properties \ref{prop:marginals}, \ref{prop:floor/ceil} and \ref{prop:sr}.
\end{thm}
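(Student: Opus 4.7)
The plan is to derive the theorem as an almost immediate consequence of the two preceding results. The key observation is that each of the three properties \ref{prop:marginals}, \ref{prop:floor/ceil}, and \ref{prop:sr} is a statement purely about the joint distribution of the output indicator vector $\vec{X}=(X_1,\dots,X_n)$: property \ref{prop:marginals} concerns its one-dimensional marginals, property \ref{prop:floor/ceil} is an almost-sure condition on its support, and property \ref{prop:sr} is a condition on the generating polynomial of its law. None of these properties depends on the particular sampling procedure used to generate $\vec{X}$; they are determined entirely by the distribution.

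Given this observation, the plan reduces to two steps. First, invoke Lemma \ref{lem:reduction}, which establishes that the online Algorithm \ref{alg:rounding} and the offline Algorithm \ref{alg:linear-srinivasan-rounding} induce the \emph{same} distribution on $\{0,1\}^n$. Second, invoke Lemma \ref{offline-props}, which says that the offline algorithm's output distribution satisfies all three properties. Combining the two yields the theorem, with no additional calculation required.

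The only substantive work is packaged inside Lemma \ref{lem:reduction}, whose proof sketch already appears in the excerpt and proceeds by induction on $t$, showing equality of the conditional probabilities $\Pr[X_t=1 \mid X_1=b_1,\dots,X_{t-1}=b_{t-1}]$ under the two laws. I expect the main obstacle to lie in this case analysis: the online rule reads $p_t$ off a simple piecewise formula in $(s_{t-1}, s_t, S_{t-1})$, and one must argue that the offline algorithm's sequence of STEP calls, after processing the first $t-1$ indices, leaves $y_t$ with exactly the matching conditional distribution. The boundary cases—when $s_t$ crosses an integer, when $s_{t-1}$ is already integral, and when the inductive prefix forces $y_t$ to be $0$ or $1$—must be aligned term-by-term with the branches of the definition of $p_t$. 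The inductive hypothesis is needed to guarantee that the offline prefix sums $Y^{t-1}_j$ agree in joint distribution with the online cumulative sums $S_j$ at the time of comparison, so that the two conditional rules can be matched directly; Property \ref{prop:floor/ceil} of the offline algorithm (already recorded above) is what makes this prefix-agreement tractable.
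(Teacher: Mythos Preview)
Your proposal is correct and matches the paper's approach exactly: the paper derives the theorem in one sentence from \Cref{lem:reduction} and \Cref{offline-props}, observing that properties \ref{prop:marginals}, \ref{prop:floor/ceil}, and \ref{prop:sr} are all determined by the output distribution. Your additional commentary on the internals of \Cref{lem:reduction} is broadly accurate, though note that the paper's proof of that lemma also relies on Property~\ref{prop:floor/ceil} holding for the \emph{online} algorithm (established separately in \Cref{level-set:simple}), so that the conditioning event $S_{t-1}\in\{\lfloor s_{t-1}\rfloor,\lceil s_{t-1}\rceil\}$ is guaranteed under both laws before the case analysis begins.
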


	\section{Applications}\label{sec:applications-body}

In this section we provide applications and extensions of our online rounding algorithms. Mirroring the exposition in \Cref{sec:applications}, we start with the application of our online ($b$-)matching algorithms, omitting the edge-weighted matching application with ML predictions, which follows directly from our main result.

\subsection{Applications and extensions of the matching ODRS}

\subsubsection{Online edge coloring (multigraphs)}\label{sec:app-edge-coloring}

We show here how to obtain the first online algorithm that achieves a better-than-$2$ competitive factor for online edge-coloring in bipartite {\em multigraphs} under adversarial arrivals, specifically under one-sided node arrivals, as studied in \cite{cohen2019tight}. We assume that upon arrival of each node, its neighboring edges are revealed and must be colored.
As already mentioned, for online edge coloring, it is known that the greedy algorithm's competitive ratio of $2$ is tight for graphs of low maximum degree $\Delta=O(\log n)$ \cite{bar1992greedy}, while better competitive ratios are known for high-degree graphs under various arrival models, all for simple graphs
\cite{cohen2019tight,saberi2021greedy,kulkarni2022online,bhattacharya2021online,bahmani2012online,blikstad2024simple,blikstad2024online}, with the exception of the random-order result of \cite{aggarwal2003switch}.

We obtain our improved bounds for multigraphs by extending a reduction from online edge coloring to ``fair'' matching, suggested by  \cite{cohen2019tight} for simple graphs (see also \cite[Appendix B]{saberi2021greedy}), given in \Cref{reduction}.

\begin{Def}
We say that a random matching $\calM$ in a multigraph with maximum degree $\Delta$ is \emph{$\alpha$-fair} if for each (parallel) edge $e$ we have that $\Pr[e\in \calM] \geq \frac{1}{\alpha\Delta}$.
\end{Def}
An $\alpha$-fair algorithm with $\alpha=1$ gives the highest uniform lower bound on edges' matching probability. An $\alpha$-fair algorithm with other $\alpha$ ($\alpha > 1$) therefore $\alpha$-approximates this maximally-fair algorithm.

With the above terminology, we can now state the reduction.

\begin{lem}\label{reduction}
An online $\alpha$-fair matching algorithm yields an online $(\alpha+o(1))\Delta$-edge-coloring algorithm for $n$-node bipartite multigraphs with maximum degree $\Delta=\omega(\log n)$.
\end{lem}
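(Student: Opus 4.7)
The plan is to adapt the reduction of Cohen, Peng, and Wajc~\cite{cohen2019tight} from simple graphs to bipartite multigraphs. In that reduction, one uses $K = (\alpha + \epsilon)\Delta$ colors with $\epsilon = o(1)$ (valid since $\Delta = \omega(\log n)$) and combines the $\alpha$-fair matching oracle, applied to randomly-partitioned subgraphs, with concentration bounds to produce a proper edge coloring with probability $1 - o(1)$.

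Concretely: upon each online vertex $v$'s arrival with (parallel) edge set $E_v$, each edge $e \in E_v$ is independently assigned to one of $L$ batches uniformly at random, with $L$ chosen so that $\Delta/L = \omega(\log n)$ (e.g., $L = \Delta / \log^2 n$). By Chernoff and a union bound over all $n$ vertices, each batch subgraph has max degree concentrated around $(1+o(1))\Delta/L$ with high probability. Within each batch, the induced subgraph is then colored using the batch's $K/L = (\alpha + o(1)) \cdot \Delta/L$ colors, either by recursively invoking the same reduction (with bounded recursion depth $O(\log\log \Delta)$) or by directly applying the fair matching oracle at the base case. Across recursion levels the compound multiplicative overhead is $(1+o(1))$, yielding $(\alpha + o(1))\Delta$ total colors as desired.

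The new ingredient for multigraphs is that parallel edges between the same pair of vertices may be assigned to the same batch and proposed for the same color; such conflicts are naturally handled by the $\alpha$-fair matching oracle, which by hypothesis operates on multigraphs and matches each (parallel) edge with probability at least $1/(\alpha\Delta)$. I expect the main obstacle to be controlling the compound overhead $(1+o(1))^{O(\log \log \Delta)}$ across recursion levels while simultaneously ensuring the Chernoff concentration conditions hold at every level; this requires carefully balancing the batching parameter $L$ and the per-level slack $\epsilon$, and handling the base case of the recursion where direct invocations of the fair matching oracle (together with a union bound over edges) close the analysis.
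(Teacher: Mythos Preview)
Your proposal has a genuine gap: the random-batching recursion you describe never actually explains how the $\alpha$-fair matching oracle produces color classes. Partitioning edges into $L$ batches reduces coloring a degree-$\Delta$ multigraph to coloring $L$ degree-$(\approx\!\Delta/L)$ multigraphs, but this reduction does not invoke the fair-matching oracle at all; it only shrinks the instance. At the base case you still face exactly the problem the lemma is about---turning a fair-matching oracle on a graph of maximum degree $\Delta'$ into an $(\alpha+o(1))\Delta'$-edge-coloring---and your sentence ``directly applying the fair matching oracle at the base case \ldots\ together with a union bound over edges'' is where the entire content of the proof must live. A naive edge-level argument fails: applying $K' = (\alpha+\epsilon)\Delta'$ fair matchings and bounding the survival probability of a fixed edge by $(1-1/(\alpha\Delta'))^{K'}$ gives a constant, not $o(1)$, so no union bound closes.

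The paper's argument supplies precisely this missing piece, and it is vertex-level rather than edge-level. One directly takes fair matchings as color classes: for some $C$ with $\omega(\log n)=C=o(\Delta)$, run $\Delta/C$ rounds, each producing $\alpha C$ fair matchings on the current uncolored subgraph. Within a round the maximum degree drops by at most $\alpha C = o(\Delta)$, so the fair-matching guarantee matches each edge with probability $\gtrsim 1/(\alpha\Delta)$ throughout the round; hence any near-max-degree vertex has, in expectation, $\approx C$ of its edges colored in the round, and Chernoff (over $\alpha C = \omega(\log n)$ matchings) makes this $C(1-o(1))$ w.h.p. After $\Delta/C$ rounds and $\alpha\Delta$ colors, the uncolored maximum degree is $o(\Delta)$, and greedy finishes with $o(\Delta)$ more colors. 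Your batching layer is not needed, and without this peeling-plus-concentration step your recursion has no base case.
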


We defer a brief proof sketch of the above to the end of the section, and turn to discussing its consequences and converses.

\begin{cor}\label{reduction-ODRS}
An ODRS with rounding ratio $1/\alpha$ yields an online $(\alpha+o(1))\Delta$-edge-coloring algorithm for $n$-node multigraphs with maximum degree $\Delta=\omega(\log n)$.
\end{cor}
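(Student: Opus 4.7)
The plan is straightforward: apply the given ODRS to the uniform fractional matching and then invoke \Cref{reduction}. Given a bipartite multigraph $G$ with maximum degree $\Delta$ revealed under one-sided online node arrivals, I would define the uniform fractional assignment $\mathbf{x}$ by $x_e = 1/\Delta$ for every edge $e$ (including every parallel copy separately). Because every vertex has degree at most $\Delta$, counting multiplicities, we get $\sum_{e \ni v} x_e \leq 1$, so $\mathbf{x}$ is a valid fractional matching. The online algorithm then simply reveals these fractions to the given ODRS as each online node $t$ arrives, obtaining an output matching $\calM$.

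By the assumed rounding-ratio guarantee of $1/\alpha$, every edge $e$ satisfies $\Pr[e \in \calM] \geq (1/\alpha) \cdot x_e = 1/(\alpha \Delta)$, which is exactly the $\alpha$-fair condition. Applying \Cref{reduction} to this $\alpha$-fair online matching algorithm then immediately yields an online $(\alpha + o(1))\Delta$-edge-coloring algorithm, provided $\Delta = \omega(\log n)$.

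The only substantive point worth verifying, and the one place where a subtle obstacle could hide, is that the ODRS is actually applicable to multigraphs rather than only to simple graphs. Inspecting the ODRS definition, the scheme operates on an edge set $E$ with fractional values $x_e$ subject to $x(\delta(v)) \leq 1$; nothing in this specification precludes parallel edges, so each parallel copy can be treated as a distinct element of $E$ with its own fraction $1/\Delta$. The matching constraints $x(\delta(v)) \leq 1$ remain valid since $\Delta$ is defined with multiplicities, and the per-edge rounding guarantee applies to every parallel copy individually. Hence no genuine obstacle arises, and the corollary follows from \Cref{reduction} combined with the $\alpha$-fair guarantee established above.
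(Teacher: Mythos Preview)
Your proof is correct and follows essentially the same approach as the paper: feed the uniform fractional matching to the ODRS to obtain an $\alpha$-fair matching, then invoke \Cref{reduction}. The only minor difference is that the paper aggregates the $\kappa(e)$ parallel copies of each simple edge $e$ into a single edge with fraction $\kappa(e)/\Delta$ (applying the ODRS on the underlying simple graph and then implicitly selecting a random copy), whereas you keep each parallel copy as a separate edge with fraction $1/\Delta$; both variants yield the same per-parallel-edge guarantee of $1/(\alpha\Delta)$.
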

\begin{proof}
Consider the fractional matching  $\mathbf{x}$ assigning value $x_e = \frac{\kappa(e)}{\Delta}$ to the simple edge $e$ with $\kappa(e)$ parallel copies in the multigraph. (Note that $\sum_{e\ni v} \kappa(e)\leq \Delta$ by definition, so this is indeed a fractional matching.) Applying an ODRS with rounding ratio $1/\alpha$ to this fractional matching $\mathbf{x}$ yields a randomized matching $\calM$ that matches every (parallel) edge in the graph with probability $\frac{1}{\alpha \Delta}$, i.e., it yields an $\alpha$-fair matching.
\end{proof}

We note that there is in fact an equivalence between $\alpha$-fair matching algorithms and $\alpha$-competitive edge-coloring algorithms, as ``the converse'' also holds.
\begin{obs}
An online $\alpha\Delta$-edge-coloring algorithm $\calA$ for $n$-node multigraphs with maximum degree $\Delta$ yields an ODRS with rounding ratio $1/\alpha$ for fractional matchings $\mathbf{x}$ with $x_e\cdot \Delta$ integral for all $e$.
\end{obs}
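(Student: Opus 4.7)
The plan is to give a direct black-box reduction: given the online $\alpha\Delta$-edge-coloring algorithm $\calA$, build an ODRS by simulating $\calA$ on a multigraph derived from $\mathbf{x}$, and then outputting a uniformly random color class.

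First I would sample a color $c$ uniformly at random from $[\alpha\Delta]$ before any arrivals. Then, when an online node $t$ arrives together with its fractions $\{x_{i,t}\}$, I construct a multigraph on the same bipartition in which each simple edge $(i,t)$ is replaced with $\kappa(i,t) := x_{i,t}\cdot \Delta$ parallel copies (an integer, by hypothesis). I feed these parallel copies, in arbitrary order, to $\calA$ as the edges incident to $t$ in the multigraph arrival. Since $\sum_i \kappa(i,t) = \Delta \cdot \sum_i x_{i,t} \leq \Delta$, and analogously $\sum_{t' \leq t} \kappa(i,t') \leq \Delta$ for each offline node $i$, the multigraph's maximum degree never exceeds $\Delta$, so $\calA$ validly produces a proper $\alpha\Delta$-edge coloring online. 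I then add $(i,t)$ to the output matching $\calM$ if and only if at least one of the $\kappa(i,t)$ parallel copies of $(i,t)$ received color $c$.

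The key observation driving the rounding ratio is that any two parallel copies of the same simple edge $(i,t)$ share both endpoints, so in every proper edge coloring they must receive pairwise distinct colors. Hence exactly $\kappa(i,t)$ distinct colors (out of $\alpha\Delta$) are used by copies of $(i,t)$, and the uniformly random color $c$ lands in this set with probability
\[
\Pr[(i,t) \in \calM] \;=\; \frac{\kappa(i,t)}{\alpha\Delta} \;=\; \frac{x_{i,t}}{\alpha},
\]
giving rounding ratio $1/\alpha$. Feasibility of $\calM$ as a matching is immediate: the color-$c$ class is a matching in the multigraph, so at most one edge incident to any vertex is retained, and this remains true after collapsing parallel copies back to simple edges.

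There is no real obstacle here---the argument is essentially bookkeeping. The only points needing care are that $c$ must be sampled once and for all at the start so that decisions on each arriving node are immediate and irrevocable (matching $\calA$'s online protocol), and that the multigraph is revealed in the same vertex-by-vertex order as $\mathbf{x}$, which preserves onlineness of the overall scheme.
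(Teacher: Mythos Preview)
Your proof is correct and follows essentially the same approach as the paper: construct the multigraph with multiplicity $\kappa(e)=x_e\cdot\Delta$, run $\calA$ on it, and output a uniformly random color class. You are somewhat more explicit than the paper about why the probability computation works (parallel copies must receive distinct colors) and about preserving onlineness by fixing $c$ up front, but the argument is the same.
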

\begin{proof}
Given a fractional matching $\mathbf{x}\in \mathbb{Q}^E$ in a simple graph $G=(V,E)$, using that $x_e\cdot \Delta$ is integral for each edge $e$, we provide to $\calA$ a multigraph $\calH = (V,F)$ with each edge $e$ having multiplicity $\kappa(e) = x_e\cdot \Delta$ in $\calH$, 
and then randomly sample one of the $\alpha\Delta$ edge colors (matchings) $\calM$ computed by $\calA$.
This yields an ODRS with rounding ratio of $1/\alpha$, since for each edge $e\in E$ we have that 
\begin{align*}
\Pr[e\in \calM] & = \kappa(e)\cdot \frac{1}{\alpha\Delta} = \frac{x_{e}}{\alpha}. \qedhere 
\end{align*}
\end{proof}

By the equivalence between edge coloring multigraphs and ODRSes for matchings, together with our (upper and lower) bounds on the rounding ratio of such ODRSes, we obtain the following bounds on the number of colors needed to edge-color multigraphs online.

\begin{thm}
In $n$-node bipartite multigraphs with maximum degree $\Delta$ under adversarial one-sided vertex arrivals, there exists an online $1.533\Delta$-edge-coloring algorithm, assuming $\Delta=\omega(\log n)$. In contrast, no $(1/(2\sqrt{2}-2+\eps))\Delta\approx (1.207-O(\eps)) \Delta$-edge-coloring exists even for $\Delta=2$.
\end{thm}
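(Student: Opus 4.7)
For the upper bound, the plan is a direct application of our main ODRS result together with the reduction to edge coloring. \Cref{thm:arbitrary} provides a bipartite matching ODRS with rounding ratio $\ratio$, and \Cref{reduction-ODRS} converts any ODRS with rounding ratio $1/\alpha$ into an online $(\alpha + o(1))\Delta$-edge-coloring algorithm for bipartite multigraphs with $\Delta = \omega(\log n)$. Plugging in $\alpha = 1/\ratio < 1.534$ and absorbing the $o(1)$ term for sufficiently large $\Delta$ immediately yields the claimed $1.533\Delta$-edge-coloring algorithm.

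For the lower bound, the plan is to reverse the reduction using the observation preceding the theorem: any online $\alpha\Delta$-edge-coloring algorithm for bipartite multigraphs of maximum degree $\Delta$ yields an ODRS with rounding ratio $1/\alpha$ on the class of fractional matchings whose edge values are integer multiples of $1/\Delta$. Specializing to $\Delta = 2$, this class is precisely the half-integral fractional matchings, i.e., every $x_e \in \{0, 1/2, 1\}$. A hypothetical online $(1/(2\sqrt{2}-2+\eps))\Delta$-edge-coloring algorithm for $\Delta = 2$ would therefore translate, via this observation, into an ODRS achieving rounding ratio $2\sqrt{2}-2 + \eps$ on half-integral instances, contradicting the $2\sqrt{2}-2$ impossibility bound of \Cref{thm:arbitrary}.

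The main obstacle in making this argument rigorous is ensuring that the impossibility direction of \Cref{thm:arbitrary} is witnessed by an adversarial instance that lies in this restricted half-integral subclass. I expect the hard instance constructed in the proof of that impossibility to naturally use precisely such half-integral values---e.g., online vertices of degree two with $x_e = 1/2$ on both incident edges, in the spirit of classic $\Delta = 2$ barrier constructions for online bipartite matching. Granted this structural property, the contradiction goes through and the lower bound follows. If instead the original impossibility proof used larger denominators, one would need to rework that construction (or scale $\Delta$) to recover the $\Delta = 2$ version of the bound, but the half-integral case is the most natural regime in which such an upper bound is proved, so no substantial additional difficulty is expected.
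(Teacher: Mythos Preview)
Your proposal is correct and follows the paper's approach, which simply invokes the equivalence between ODRSes and multigraph edge coloring together with the upper and lower bounds on rounding ratios. Your stated concern about the lower bound is resolved by \Cref{lem:LB}: the hard instance there takes $x_e = 1/2$ on every edge, so it is half-integral and the reduction to $\Delta = 2$ goes through directly.
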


We note in passing that all known positive results for online edge coloring under adversarial arrivals follow from \Cref{reduction}, and indeed from \Cref{reduction-ODRS} \cite{cohen2019tight,saberi2021greedy,kulkarni2022online}. The reason most of these results do not extend to multigraphs is that their ODRSes required their input fractional matchings $\mathbf{x}$ to be of the form $x_e = \frac{1}{\Delta}$, or more generally for $x_e = o(1)$ for each edge $e$ (including its multiplicities). 
The work of \citet{saberi2021greedy}
is a lone exception to this rule, and combining its ODRS with \Cref{reduction} also yields an algorithm for multigraphs, but using as many as $1.9\Delta$ colors.

We conclude the section with a brief proof sketch of \Cref{reduction}.

\begin{proof}[Proof sketch of \Cref{reduction}]
We describe the algorithm in an offline setting first.
Let $C$ be such that $C=\omega(\log n)$ and $C=o(\Delta)$. (Such $C$ exist, by our necessary assumption that $\Delta=\omega(\log n)$.)
Initially, the entire multigraph is uncolored.
For some $\Delta/C$ many rounds, compute $\alpha \cdot C$ colors as follows:
In the uncolored subgraph $U$, compute $\alpha\cdot C$ many $\alpha$-fair matchings $\calM_1,\dots,\calM_{\alpha C}$, and let these occupy the next $\alpha C$ colors.
Since every vertex has at most $\alpha C = o(\Delta)$ many edges colored during a round, any vertex with degree close to $\Delta$ at the beginning of the round has degree $\Delta-o(\Delta)$ by the end of the round.
Therefore, since these $\calM_i$ are $\alpha$-fair matchings, and so match each (parallel) edge with probability at least $\frac{1}{\alpha\cdot\Delta}$, by standard concentration bounds, all vertices that have degree close to $\Delta$ have some $C \cdot (1-o(1))$ many edges colored during a round (with high probability).  This allows us to compute a high-probability upper bound of on the resultant uncolored multigraph at the end of the round (possibly useful to compute $\alpha$-fair matchings), and moreover implies that after $\Delta/C$ many rounds (and so with $(\Delta/C)\cdot \alpha\cdot C = \alpha\cdot \Delta$ colors used), the uncolored multigraph has maximum degree $\Delta - (\Delta/C)\cdot C \cdot (1-o(1)) = o(\Delta)$, and can be colored greedily using a further $2\cdot o(\Delta)=o(\Delta)$ many colors, for $(\alpha+o(1))\Delta$ colors overall.

While the above algorithm was described in an offline setting, if one can compute an $\alpha$-fair algorithm online (possibly with information regarding the high-probability upper bound on $\Delta$ in each round), then the entire algorithm can be made to run online: 
for each (vertex/edge) arrival, perform the next steps of the $\alpha \Delta$ many $\alpha$-fair matching algorithms outputting color classes $\calM_1,\dots,\calM_{\alpha\Delta}$, where for the $i$-th such algorithm, we simulate only the arrival of the parts of the subgraph that are not contained $\calM_1,\dots,\calM_{i-1}$. The same approach is used to simulate the final greedy coloring steps, using the fact that the greedy $(2\Delta-1)$-edge-coloring algorithm is an online algorithm.
\end{proof}

\color{black}

\subsubsection{Stochastic extension}\label{sec:stochastic}
In this section, we generalize our matching ODRS and its analysis to the following stochastic online bipartite matching problem. 
At each time $t$, an online node arrives with weight vector $\vec{w}_t\sim \calD_t$, where $\calD_1,\calD_2,\dots$ are a priori known independent distributions. 
A simple example which we focus on for notational simplicity (though our approach generalizes to this problem in full generality) is as follows: each online node $t$ arrives according to a Bernoulli event $A_t\sim \Ber(p_t)$, each edge $(i,t)$ having some intrinsic value $w_{it}$, but realized weight $v_{it} = w_{it}\cdot A_t$.
This problem (in its full generality) is an online Bayesian selection problem, and a number of algorithms with competitive ratio of $\frac{1}{2}$ are known for it \cite{ezra2020online,dutting2020prophet,feldman2015combinatorial}. By the classic lower bound of Krengel and Sucheston for the single-item prophet inequality problem \cite{krengel1978semiamarts}, the above ratio is best-possible when comparing with the offline optimum algorithm.

In \cite{papadimitriou2021online}, Papadimitriou et al.~initiate the study of this problem in terms of the (polytime) approximability of the optimum \emph{online} algorithm.
Perhaps surprisingly, they show that while for simple cases of the problem (e.g., the single-offline-node problem), the optimum online algorithm is computable using a simple poly-sized dynamic program, in general it is \textsc{pspace}-hard to approximate the optimum online algorithm within some absolute constant $\alpha<1$ (say, $\alpha\approx 0.999999$), and showed that a better-than-$0.5$ approximation of the optimum online algorithm can be obtained by polytime online algorithms. This positive result was subsequently improved to $0.526$ implicitly \cite{saberi2021greedy} and to $1-1/e\approx 0.632$ explicitly \cite{braverman2022max}. 
All these results are achieved by rounding the following LP, which can be shown to upper bound the optimum online algorithm's value \cite{papadimitriou2021online,torrico2022dynamic}.
\begin{alignat}{5}
	\max\quad  & \sum_{i,t} w_{i,t} \cdot x_{i,t} \tag{LP $OPT_{on}$}\label{LP} \\
	\textrm{s.t.} \quad & \sum_{t,r} x_{i,t} \leq 1  & \forall i\in [n] \\
	& \sum_{i} x_{i,t} \leq p_{t} & \forall t \label{LP:flow-for-t}\\
	& x_{i,t} \leq p_{t} \cdot \left(1-\sum_{t'<t} x_{i,t'}\right) & \forall i,t \label{OPTon-constraint} \\
	& x_{i,t} \geq 0 & \forall i,t.
\end{alignat}

\begin{lem}\label{lem:LP-bound}
	The optimum of the above LP upper bounds the expected gain of any online algorithm.
\end{lem}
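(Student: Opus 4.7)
The plan is to exhibit, for any online algorithm $\mathrm{ALG}$ (possibly randomized and computationally unbounded), a feasible solution $\mathbf{x}$ to the LP whose objective value equals $\mathrm{ALG}$'s expected gain. Concretely, I define
\[ x_{i,t} := \Pr[\mathrm{ALG}\text{ matches edge }(i,t)], \]
where the probability is over both the arrival randomness $A_1,\dots,A_T$ and the algorithm's internal randomness. By linearity of expectation, the expected value of $\mathrm{ALG}$ is exactly $\sum_{i,t} w_{i,t}\cdot x_{i,t}$, which matches the LP objective. It then remains to verify the three nontrivial constraints.

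The first two constraints are immediate: since $\mathrm{ALG}$ produces a matching, each offline node $i$ is matched at most once, so $\sum_t x_{i,t}=\E[|\{t:(i,t)\in\mathcal{M}\}|]\leq 1$; and each online node $t$ can only be matched when it arrives, which happens with probability $p_t$, giving $\sum_i x_{i,t}\leq \Pr[A_t=1]=p_t$. These need only a one-line argument each.

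The heart of the proof is constraint \eqref{OPTon-constraint}. For this I would write
\[ x_{i,t} \;=\; \Pr[(i,t)\in\mathcal{M}] \;\leq\; \Pr[\,A_t=1 \text{ and } i \text{ is unmatched at the start of step }t\,]. \]
The key observation is that whether $i$ is unmatched at the start of step $t$ is determined by the arrivals $A_1,\dots,A_{t-1}$ and the internal randomness of $\mathrm{ALG}$ used up through step $t-1$; in particular it is a function of variables independent of $A_t$ (since the $A_t$'s are mutually independent and $\mathrm{ALG}$ has no clairvoyant access to $A_t$ before step $t$). Hence the two events factor:
\[ \Pr[(i,t)\in\mathcal{M}] \;\leq\; p_t\cdot \Pr[\,i\text{ unmatched before step }t\,] \;=\; p_t\cdot\Bigl(1-\sum_{t'<t} x_{i,t'}\Bigr), \]
where the last equality uses that the events $\{(i,t')\in\mathcal{M}\}$ for $t'<t$ are disjoint (a matching includes $i$ at most once), so their probabilities add to $\Pr[i\text{ matched before }t]=\sum_{t'<t}x_{i,t'}$. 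This yields exactly \eqref{OPTon-constraint}.

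The only subtle point—and what I would write out most carefully—is the independence step: one must justify that $\mathrm{ALG}$'s decision of whether to match $i$ at times $t'<t$ truly depends only on information from times $\leq t-1$, so that ``$i$ is free at time $t$'' is measurable with respect to $\sigma(A_1,\dots,A_{t-1},\text{internal coins used by step }t{-}1)$ and hence independent of $A_t$. This is simply the definition of an online algorithm, but it is the crucial structural fact that makes the LP a valid relaxation of online (rather than offline) optima. Nonnegativity of $x_{i,t}$ is trivial. Combining all four verifications shows $\mathbf{x}$ is LP-feasible and has objective value $\E[\text{value of }\mathrm{ALG}]$, completing the proof.
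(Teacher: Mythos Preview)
Your proposal is correct and follows exactly the approach the paper sketches: the paper does not give a formal proof but only notes that the first two constraints are ``evident'' matching constraints, while Constraint~\eqref{OPTon-constraint} holds because ``$t$ has already arrived'' and ``$i$ is unmatched before time $t$'' are independent events for online algorithms. Your write-up simply fleshes out this sketch with the explicit definition $x_{i,t}:=\Pr[(i,t)\in\mathcal{M}]$ and the measurability argument for the independence step.
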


Most of these constraints are evident, and follow from (expected) matching constraints, which apply also to offline algorithms.
The one constraint which applies to online algorithms but not to offline ones is Constraint \eqref{OPTon-constraint}. This constraint follows from the fact that if offline node $i$ is matched to $t$, then  $t$ has already arrived, and that $i$ is unmatched before time $t$ --- two events which are independent for online algorithms.

We note that in their paper's full version, \citet{papadimitriou2021online} (private communication) show that the above LP has a ``gap'' of $1-\frac{1}{2e}$, in the sense that no online algorithm (even computationally-unbounded ones) can get a better-than-$(1-\frac{1}{2e})$ fraction of this LP's value. So, an approximation ratio greater than $1-\frac{1}{2e}$ is unachievable using this LP, and a ratio of $1-\frac{1}{e}$ is achievable \cite{braverman2022max}. Is the latter bound tight? In what follows, who show how the approach underlying our ODRSes, combined with (and simplifying) the high-level analysis of \cite{braverman2022max}, allows one to break this bound.

\paragraph{ODRS for the stochastic setting.}
As in the non-stochastic setting of \Cref{sec:rounding-matchings}, specifically \Cref{alg:bucketing}, we group offline nodes and have them coordinate a single bid per group (bin). Similarly, our stochastic counterpart to \Cref{alg:ODRS} applies group discounting and individual markup to the target marginal bidding probability, and then calls the (soon-to-be) modified \Cref{alg:bucketing} 
with parameter $\theta\cdot (1-\eps)$ and vector $\hat{x}_{i,t}$ given by \Cref{eqn:x-hat}, restated below for ease of reference.
\begin{align*}
    \hat{x}_{i,t} = x_{i,t}\cdot (1-\eps) +  \int_{z=\max(\theta,s_{i,t})}^{\max(\theta,\, s_{i,t}+x_{i,t})} (\eps+\delta) \, dz. 
\end{align*}
The three main differences compared to \Cref{sec:rounding-matchings} are as follows: 
First, to account for the probability $p_t$ of $t$ to arrive, the conditional probability for $i$ to bid if $i$ is free and $t$ arrives used in \Cref{line:FF}-\Cref{line:bin-bid-end} is now $\frac{\hat{x}_{i,t}}{p_t\cdot (1-\hat{s}_{i,t})}$, for $\hat{s}_{i,t}:=\sum_{t'<t}\hat{x}_{i,t}$. 
Second, we allow offline nodes to bid more than once, and denote them as \emph{free} (i.e., belonging to $F$) as long as they are not matched.
This avoids offline nodes becoming positively correlated due to an unlikely online node's arrival.
The offline nodes' matching constraints are trivially satisfied.
Finally, satisfying the online nodes' matching constraints is even easier than in \Cref{alg:bucketing} for the non-stochastic setting. Rather than using a CRS, we have an arriving online node $t$ that receives bids simply greedily match to the bidding node $i$ maximizing the value $w_{i,t}$.
See the pseudocode in \Cref{alg:stochastic}.

\begin{algorithm}[h]
	\caption{Stochastic Matching Algorithm}
	\label{alg:stochastic}
	\begin{algorithmic}[1]
	    \State $\calM\gets \emptyset$ 
	    \State $\mathbf{x}\gets$ solution to \eqref{LP}
  \For{\textbf{each} time $t$} \label{line:loop-start-stochastic}
		    \For{\textbf{each} $i\in [n]$}
		    \State compute $\hat{x}_{i,t}$ as in \Cref{eqn:x-hat}, and $\hat{s}_{i,t}=\sum_{t'<t} \hat{x}_{i,t}$. 
		    \EndFor
		    \State $\calB_t \gets \textrm{FirstFit}\left(\left\{\left(i,\frac{\hat{x}_{i,t}}{p_t\cdot (1-\hat{s}_{i,t})}\right)\, \middle\vert \, i \in [n], s_{i,t}\leq  \theta\right\}\right)$ \Comment{Bucketing low-degree nodes} \label{line:FF-stochastic}
            \State $\calB_t \gets \calB_t\cup \left\{\{i\}\, \middle\vert \, i \in [n], s_{i,t}> \theta\right\}$ \Comment{Trivially bucketing high-degree nodes}\label{line:FF-trivial-stochastic} 
		    \State $C_t\gets \emptyset$ \Comment{Candidates}
		    \For{each $B\in \calB_t$} \label{line:bin-bid-start}
		        \State $U\sim \textrm{Uni}[0,1]$
		        \If{$U\leq \sum_{i\in B} \frac{\hat{x}_{i,t}}{p_t\cdot (1-\hat{s}_{i,t})}$}
		        \State $C_t\gets C_t\cup \min_{i} \left\{i\in [k]\, \middle\vert\, U\leq \sum_{j\in B,\, j\leq i} \frac{\hat{x}_{j,t}}{p_t\cdot (1-\hat{s}_{j,t})}\right\}$
		        \EndIf
		    \EndFor
		    \State $P_t\gets C_t\setminus V(\calM)$ \Comment{Bidders =  free candidates}
		    \If{$P_t\neq \emptyset$ \textbf{ and } $t$ arrived}
		    \State pick some $i\in \arg\max\{w_{i,t} \mid i\in P_t\}$
            \State $\calM\gets \calM\cup \{(i,t)\}$
		    \EndIf
        \EndFor
        \State \textbf{Output} $\calM$
	\end{algorithmic}
\end{algorithm}

\begin{fact}
\Cref{alg:stochastic} is well-defined. In particular, $\frac{\hat{x}_{i,t}}{p_t\cdot (1-\hat{s}_{i,t})} \leq 1$ for every pair $(i,t)$.
\end{fact}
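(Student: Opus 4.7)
The plan is to recognize the numerator and denominator as values (and differences of values) of a single piecewise-linear function, then apply convexity in combination with LP constraint~\eqref{OPTon-constraint}.

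First, I would define the auxiliary function $\hat{s}:[0,1]\to \mathbb{R}_{\geq 0}$ by
\[
\hat{s}(z) \;=\; \begin{cases} z\,(1-\eps) & z\leq \theta,\\ z\,(1+\delta)-\delta & z\geq \theta.\end{cases}
\]
Using $\theta(\eps+\delta)=\delta$, which is exactly the choice $\theta=\delta/(\eps+\delta)$, the two pieces agree at $z=\theta$, so $\hat{s}$ is continuous, and in particular $\hat{s}$ is non-decreasing and convex (its slope jumps from $1-\eps$ up to $1+\delta$). By a direct case analysis on whether $s_{i,t}$ and $s_{i,t}+x_{i,t}$ lie below or above $\theta$, the definition in \Cref{eqn:x-hat} yields the telescoping identity
\[
\hat{x}_{i,t} \;=\; \hat{s}(s_{i,t}+x_{i,t})\;-\;\hat{s}(s_{i,t}),\qquad \hat{s}_{i,t}\;=\;\hat{s}(s_{i,t}).
\]
Moreover $\hat{s}(1)=1$, again by the choice of $\theta$.

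With this reformulation, the claim $\hat{x}_{i,t}\leq p_t\cdot (1-\hat{s}_{i,t})$ is equivalent to
\[
\hat{s}(s_{i,t}+x_{i,t}) \;\leq\; p_t\cdot 1 \;+\; (1-p_t)\,\hat{s}(s_{i,t}).
\]
To establish this, I would invoke LP constraint~\eqref{OPTon-constraint}, which asserts $x_{i,t}\leq p_t(1-s_{i,t})$ and so $s_{i,t}+x_{i,t}\leq p_t\cdot 1+(1-p_t)\,s_{i,t}$. Since $\hat{s}$ is non-decreasing,
\[
\hat{s}(s_{i,t}+x_{i,t}) \;\leq\; \hat{s}\bigl(p_t\cdot 1+(1-p_t)\,s_{i,t}\bigr),
\]
and since $\hat{s}$ is convex with $\hat{s}(1)=1$,
\[
\hat{s}\bigl(p_t\cdot 1+(1-p_t)\,s_{i,t}\bigr) \;\leq\; p_t\,\hat{s}(1)+(1-p_t)\,\hat{s}(s_{i,t}) \;=\; p_t+(1-p_t)\,\hat{s}(s_{i,t}).
\]
Chaining these two bounds yields the desired inequality.

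There is really no obstacle here beyond bookkeeping: the only non-routine observation is that the group-discount/individual-markup transformation is engineered so that its antiderivative $\hat{s}$ is convex and fixes the endpoint $1$, which are precisely the two properties that allow the LP's online constraint to survive the re-scaling. A short verification that $p_t>0$ whenever $x_{i,t}>0$ (otherwise the numerator and denominator both vanish) handles the degenerate case.
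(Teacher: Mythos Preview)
Your proof is correct, and it takes a genuinely different route from the paper. The paper argues by a two-case split on whether $s_{i,t}\leq\theta$ or $s_{i,t}>\theta$, bounding numerator and denominator separately via \Cref{hat-degrees} and the LP constraint. Your reformulation in terms of the piecewise-linear convex function $\hat{s}$ (with $\hat{s}(0)=0$ and $\hat{s}(1)=1$) is more uniform: it treats all three regimes---including the boundary case $s_{i,t}\leq\theta<s_{i,t}+x_{i,t}$, where discount and markup both apply to the same edge---in one stroke. In fact the paper's written case analysis is slightly loose exactly there: when $s_{i,t}\leq\theta$ its displayed intermediate bound uses $\hat{x}_{i,t}\leq x_{i,t}(1-\eps)$, which can fail in that boundary regime (take $s_{i,t}=0$, $x_{i,t}=p_t=1$), even though the final conclusion still holds. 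Your convexity argument sidesteps this and, more conceptually, explains \emph{why} the rescaling preserves the online LP constraint: any convex reparametrization of $[0,1]$ fixing the endpoints sends the chord inequality $s_{i,t}+x_{i,t}\leq p_t\cdot 1+(1-p_t)s_{i,t}$ to the corresponding chord inequality in the $\hat{s}$ coordinates.
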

\begin{proof}
By \Cref{hat-degrees}, we have that $\hat{s}_{i,t} = s_{i,t}\cdot (1-\eps) + (s_{i,t}-\theta)^+\cdot (\eps+\delta) \leq s_{i,t}$. Therefore, by our choice of $\theta=\frac{\delta}{\eps+\delta}$ and LP Constraint \eqref{OPTon-constraint}, we have that, indeed
\begin{align*}
\frac{\hat{x}_{i,t}}{p_t\cdot (1-\hat{s}_{i,t})} & \leq  \begin{cases} 
\frac{{x}_{i,t}\cdot (1-\eps)}{p_t\cdot (1-{s}_{i,t})} \leq \frac{{x}_{i,t}}{p_t\cdot (1-{s}_{i,t})} \leq 1 & s_{i,t}\leq \theta
\\ 
\frac{{x}_{i,t}\cdot (1+\delta)}{p_t\cdot (1-{s}_{i,t}(1+\delta) + \delta)} = \frac{{x}_{i,t}}{p_t\cdot (1-{s}_{i,t})} \leq 1 & s_{i,t}>\theta.
\end{cases} \qedhere 
\end{align*}
\end{proof}

\Cref{alg:stochastic} is clearly a polytime algorithm. It remains to analyze its approximation ratio.

The following lemma makes explicit a fact implicit in the proof of the main theorem of \cite{braverman2022max}: specifically, that a per-online-node ``approximation ratio'', relating  $w(\calM(t),t)$---the weight obtained by matching $t$---to the LP solution's value from $t$, implies a global approximation ratio of the same value. We provide a short self-contained proof of this fact for completeness.
\begin{lem}\label{per-online-to-global}
	If for each online node $t$ and $z\geq 0$ we have that $\Pr[w(\calM(t),t)\geq z]\geq \sum_{i: w_{i,t}\geq z} \alpha \cdot x_{i,t}$, then the algorithm is an $\alpha$-approximation of the optimum online algorithm.
\end{lem}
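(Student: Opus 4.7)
The plan is to use the layer-cake (tail integral) representation of expectation. For any nonnegative random variable $Y$, we have $\E[Y]=\int_0^\infty \Pr[Y\geq z]\,dz$. Applying this to $Y=w(\calM(t),t)$ and summing over $t$ yields
\begin{align*}
\E[w(\calM)]=\sum_t \E[w(\calM(t),t)] = \sum_t \int_0^\infty \Pr[w(\calM(t),t)\geq z]\,dz.
\end{align*}

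Next I would plug in the per-online-node hypothesis $\Pr[w(\calM(t),t)\geq z]\geq \alpha\sum_{i:w_{i,t}\geq z} x_{i,t}$, swap the sum and the integral (Fubini, all summands nonnegative), and recognize that $\int_0^\infty \mathds{1}[w_{i,t}\geq z]\,dz = w_{i,t}$. This gives
\begin{align*}
\E[w(\calM)] \geq \alpha \sum_t \int_0^\infty \sum_{i:w_{i,t}\geq z} x_{i,t}\,dz = \alpha \sum_{i,t} x_{i,t}\int_0^{w_{i,t}} dz = \alpha\sum_{i,t} w_{i,t}\cdot x_{i,t}.
\end{align*}

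Finally, since $\mathbf{x}$ is an optimal solution to \eqref{LP}, the right-hand side equals $\alpha$ times the LP optimum, which by \Cref{lem:LP-bound} upper bounds the expected value of any online algorithm (including the optimal one). Hence \Cref{alg:stochastic} is an $\alpha$-approximation of the optimum online algorithm.

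There is no real obstacle: the statement is essentially a reformulation of the layer-cake identity combined with the LP upper bound. The only subtlety is ensuring the swap of sum and integral is justified, which is immediate since all terms are nonnegative, and that the hypothesis $z\geq 0$ covers the entire range needed for integration of the nonnegative random variable $w(\calM(t),t)$.
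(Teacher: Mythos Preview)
Your proof is correct and follows essentially the same approach as the paper's own proof: apply the tail-integral identity to each $w(\calM(t),t)$, plug in the per-node hypothesis, integrate $\mathds{1}[w_{i,t}\geq z]$ to recover $w_{i,t}$, and invoke \Cref{lem:LP-bound}. Your explicit mention of Fubini and nonnegativity is a minor addition, but otherwise the arguments coincide.
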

\begin{proof}
Denote by $w_t:=w(\calM(t),t)$ the weight of the matched edge of $t$, with $w_t = 0$ if $t$ is not matched (possibly due to non-arrival). Then, the weight of the output matching $\calM$ satisfies
\begin{align*}
\sum_t \E[w_t] & = \sum_t \int_{z=0}^{\infty} \Pr[w_t\geq z]\, dz \geq \sum_t \int_{z=0}^{\infty} \alpha\cdot \sum_{i: w_{i,t}\geq z} x_{i,t} \, dz = \alpha\cdot \sum_{i,t} w_{i,t}\cdot x_{i,t}.
\end{align*}
The proof is then completed by  \Cref{lem:LP-bound}, implying that the RHS upper bounds $\alpha$ times the value of the optimum online algorithm's gain.
\end{proof}

The objective of our analysis will therefore be to lower bound $\Pr[w(\calM(t))\geq z]$ for all $z\geq 0$.\\

In our analysis, we let $F_t=[n]\setminus V(\calM)$ be the set of free offline nodes at time $t$, and let
$F_{i,t} = \mathds{1}[i\in F_t] = 1-\sum_{t'<t}\mathds{1}[(i,t')\in \calM]$ indicate whether $i$ is free by time $t$. 
Similarly, we let $E_{i,t}=1-F_{i,t}$ indicate whether $i$ was matched (``\emph{engaged}'') before time $t$.

Our first observation is that our modification to the notion of freedom results in the following lower bound on the probability of a node being free at time $t$.
\begin{obs}\label{lem:basic-bucketing-props-stochastic} 
For each time $t$ and offline $i\in[n]$, we have that
$\Pr[F_{i,t}]\geq 1-\hat{s}_{i,t}$.
\end{obs}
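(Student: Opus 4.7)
The plan is induction on $t \geq 1$, with the inductive hypothesis being the statement itself, $\Pr[F_{i,t}] \geq 1 - \hat{s}_{i,t}$. The base case $t=1$ is immediate: $\hat{s}_{i,1}=0$ and every offline node starts free. For the inductive step, the task reduces to bounding the one-step match probability $\Pr[i \text{ matched at } t]$ and subtracting it from $\Pr[F_{i,t}]$.

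The key observation is that $i$ becomes matched at time $t$ only if (a) $t$ arrives, (b) $i$ is selected as a candidate, i.e.\ $C_{i,t}=1$, (c) $i$ is free, i.e.\ $F_{i,t}=1$, and (d) $i$ is chosen as the argmax among bidders. Hence
\[
\Pr[i \text{ matched at } t] \;\leq\; p_t \cdot \Pr[C_{i,t}=1,\ F_{i,t}=1].
\]
Now $C_{i,t}$ is a function only of the fresh $U \sim \textrm{Uni}[0,1]$ drawn at step $t$ for $i$'s bin (together with the deterministic quantities $\hat{x}_{j,t}, \hat{s}_{j,t}$), while $F_{i,t}$ is measurable with respect to arrivals and random choices made strictly before $t$. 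So $C_{i,t}$ and $F_{i,t}$ are independent. Inspecting the bucketing rules in lines \ref{line:FF-stochastic}--\ref{line:bin-bid-start} gives $\Pr[C_{i,t}=1] = \frac{\hat{x}_{i,t}}{p_t (1-\hat{s}_{i,t})}$, and therefore
\[
\Pr[i \text{ matched at } t] \;\leq\; \Pr[F_{i,t}] \cdot \frac{\hat{x}_{i,t}}{1-\hat{s}_{i,t}}.
\]

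Using $\hat{s}_{i,t+1} = \hat{s}_{i,t} + \hat{x}_{i,t}$ and subtracting from $\Pr[F_{i,t}]$ yields the recursion
\[
\Pr[F_{i,t+1}] \;\geq\; \Pr[F_{i,t}] \cdot \left(1 - \frac{\hat{x}_{i,t}}{1-\hat{s}_{i,t}}\right) \;=\; \Pr[F_{i,t}] \cdot \frac{1-\hat{s}_{i,t+1}}{1-\hat{s}_{i,t}}.
\]
Multiplying the inductive hypothesis $\Pr[F_{i,t}] \geq 1 - \hat{s}_{i,t}$ by the nonnegative factor $(1-\hat{s}_{i,t+1})/(1-\hat{s}_{i,t})$ then gives $\Pr[F_{i,t+1}] \geq 1 - \hat{s}_{i,t+1}$, closing the induction. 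The only step that requires any care is the independence claim between $C_{i,t}$ and $F_{i,t}$, which relies on $\hat{s}_{i,t}$ being a deterministic function of the fixed LP solution (so that the rule defining $C_{i,t}$ introduces no coupling with the past); once this is noted, the recursion and telescoping are routine.
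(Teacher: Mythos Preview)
Your proof is correct and follows essentially the same approach as the paper's. Both arguments are inductions on $t$ that exploit the independence of the fresh candidate-selection randomness $C_{i,t}$ (and the arrival $A_t$) from the history determining $F_{i,t}$. The only cosmetic difference is that the paper introduces an auxiliary indicator $B^1_{i,t}$ for ``$i$ has been a candidate at some arrived $t'<t$,'' proves the exact equality $\E[B^1_{i,t}]=\hat{s}_{i,t}$ by induction, and then invokes $E_{i,t}\leq B^1_{i,t}$; you instead bound the one-step match probability directly and carry the inequality $\Pr[F_{i,t}]\geq 1-\hat{s}_{i,t}$ through the induction via the telescoping factor $(1-\hat{s}_{i,t+1})/(1-\hat{s}_{i,t})$.
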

\begin{proof}
Let $B^1_{i,t}$ indicate that $i$ bid for the first time before time $t$. 
We prove that $\E[B^1_{i,t}]=\hat{s}_{i,t}$, by induction on $t\geq 1$. The base case, is trivial, as $F_{i,1}\equiv 1$ and $\hat{s}_{i,1}\equiv 0$.
Now, by induction, we have 
$$\E[B^1_{i,t+1}-B^1_{i,t}] = p_t\cdot \frac{\hat{x}_{i,t}}{p_t\cdot(1-\hat{s}_{i,t})}\cdot (1-\E[B^1_{i,t}]) = \hat{x}_{i,t},$$
where the first equality relies on independence of $t$'s arrival, $i$ becoming a candidate at time $t$, and $i$ not having yet bid, and the second equality follows from the inductive hypothesis.
By linearity, we then have that $\E[B^1_{i,t+1}]=\hat{s}_{i,t+1}$.
Since $1-F_{i,t} = E_{i,t}\leq B^1_{i,t}$, the claim follows.
\end{proof}

Next, for any time $t$ and offline node set $S\subseteq [n]$, we let $F_{S,t} = \bigwedge_{i\in S} F_{i,t}$ and $E_{S,t} = \bigwedge_{i\in S} E_{i,t}$ 
indicate whether all of $S$ is free (respectively, engaged) by time $t$.
So far, we have established that $\Pr[E_{i,t}]\leq \hat{s}_{i,t}$. The following lemma, whose statement and proof mirror that of \cite[Lemma 3.3]{braverman2022max} (who use $x_{i,t}$ and $s_{i,t}$, as opposed to $\hat{x}_{i,t}$ and $\hat{s}_{i,t}$), asserts that this upper bound is ``sub-multiplicative''.

\begin{lem}\label{stochastic-cylinder-bound}
	For each time $t$ and set of offline nodes $S\subseteq[n]$, we have that
	$$\Pr[E_{S,t}]\leq \prod_{i\in S} \hat{s}_{i,t}.$$
\end{lem}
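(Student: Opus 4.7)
The plan is to reduce the statement to a negative-association argument about bidding events. Define
$$X_{i,t'} := \mathds{1}[i \in C_{t'} \wedge t' \text{ arrives}],$$
the indicator that $i$ actually places a bid at time $t'$. Since being engaged at some time $t' < t$ requires $i \in P_{t'} \subseteq C_{t'}$ together with the arrival of $t'$, the event $E_{i,t}$ implies $\max_{t' < t} X_{i,t'} = 1$. Hence
$$\Pr[E_{S,t}] \leq \Pr\Bigl[\bigwedge_{i\in S}\bigl(\max_{t' < t} X_{i,t'} = 1\bigr)\Bigr].$$

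Next I would establish that the family $\{X_{i,t'}\}_{i,t'}$ is negatively associated. The key observation is that the bucketing $\calB_{t'}$ is \emph{deterministic}: it depends only on the LP-derived quantities $s_{i,t'}$ and $\hat{s}_{i,t'}$, not on any random past matching decisions. Within each bucket $B \in \calB_{t'}$, the candidate indicators $\{\mathds{1}[i \in C_{t'}]\}_{i\in B}$ sum to at most one (at most one node per bucket is selected by its $U \sim \mathrm{Uni}[0,1]$), so by \Cref{lem:0-1} they are NA. Across distinct buckets and distinct time steps all the uniform draws and arrival Bernoullis are independent, so closure of NA under products (\Cref{na-closure}) yields NA of the whole family $\{X_{i,t'}\}$.

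Each $Y_i := \max_{t' < t} X_{i,t'}$ is then a non-decreasing function of $\{X_{i,t'}\}_{t' < t}$, and for different $i \in S$ these functions act on pairwise disjoint index sets. Closure of NA under disjoint monotone functions (\Cref{na-closure}, part~2) gives that $\{Y_i\}_{i \in S}$ is itself NA, so by \Cref{na-implies-neg-cylinder},
$$\Pr\Bigl[\bigwedge_{i\in S}(Y_i \geq 1)\Bigr] \leq \prod_{i\in S}\Pr[Y_i \geq 1].$$
The remaining ingredient is the marginal identity $\Pr[Y_i = 1] = \hat{s}_{i,t}$, which is exactly what the inductive computation in the proof of \Cref{lem:basic-bucketing-props-stochastic} establishes (the event $\{Y_i = 1\}$ coincides with the indicator $B^1_{i,t}$ used there, and the telescoping $\prod_{t' < t}\bigl(1 - \tfrac{\hat{x}_{i,t'}}{1-\hat{s}_{i,t'}}\bigr) = 1 - \hat{s}_{i,t}$ gives the equality). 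Chaining the three bounds yields the lemma.

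The main point requiring care is the deterministic nature of the bucketing, which is what enables the clean within-bucket application of \Cref{lem:0-1}: if $\calB_{t'}$ depended on random outcomes at earlier times, the bucket-wise NA structure would not cleanly propagate to a global NA statement via \Cref{na-closure}. Fortunately, the algorithm partitions nodes based only on the LP-derived quantities $s_{i,t'} \lessgtr \theta$ and uses the LP-derived bid sizes $\hat{x}_{i,t'}/(p_{t'}(1-\hat{s}_{i,t'}))$, so this independence structure is preserved.
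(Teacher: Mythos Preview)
Your argument has a genuine gap at the NA step. The variables $X_{i,t'}=\mathds{1}[i\in C_{t'}]\cdot A_{t'}$ are \emph{not} negatively associated across different buckets at the same time $t'$, because they all share the arrival indicator $A_{t'}$. Closure of NA under products requires the families being combined to be independent, but here the functions defining the $X_{i,t'}$ are not disjoint: every one of them uses $A_{t'}$. Concretely, take two offline nodes $1,2$ in distinct buckets at time $t'=1$ with $\Pr[1\in C_1]=\Pr[2\in C_1]=1$ and $p_1=\tfrac12$. Then $X_{1,1}=X_{2,1}=A_1$, so $\Pr[X_{1,1}=X_{2,1}=1]=\tfrac12>\tfrac14=\Pr[X_{1,1}=1]\Pr[X_{2,1}=1]$. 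Consequently your intermediate inequality $\Pr[\bigwedge_i(Y_i=1)]\le\prod_i\Pr[Y_i=1]$ fails in this instance ($\tfrac12$ versus $\tfrac14$), even though the lemma itself holds (here $\Pr[E_{\{1,2\},2}]=0$ since only one node can be matched at time $1$).

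This example also shows why the paper's approach is essentially forced: the negative dependence between the engagement events $E_{i,t}$ does \emph{not} come from negative dependence of bids, but from the hard constraint that at most one offline node is matched per online arrival. By relaxing $E_{i,t}$ to the event ``$i$ ever bid'', you discard precisely this structure. The paper instead proceeds by induction on $t$, using the fact that $E_{S,t+1}$ requires all but at most one node of $S$ to already be engaged at time $t$, together with the algebraic identity of \Cref{BDM-algebraic-lemma} to rewrite the resulting sum over partial engagements as a product via the binomial expansion, and the recurrence $\hat s_{i,t+1}=\hat s_{i,t}(1-m_{i,t})+m_{i,t}$. Your observations about the deterministic bucketing and the marginal $\Pr[Y_i=1]=\hat s_{i,t}$ are correct and useful elsewhere, but they are not enough here.
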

To prove the above, we need the following claim from  \cite{braverman2022max}. For completeness, we provide a short probabilistic proof of this lemma (see  \cite[Lemma 3.2]{braverman2022max} for a longer algebraic proof).

\begin{lem}\label{BDM-algebraic-lemma}
    For all $i\in [n]$, let $q_i\in [0,1]$. Then,
    $$\sum_{A\subseteq S} \Pr[E_{A,t},F_{S\setminus A,t}]\cdot \prod_{i\in S\setminus A} q_i = \sum_{A\subseteq S} \Pr[E_{A,t}]\cdot \prod_{i\in A} (1-q_i) \prod_{i\in S\setminus A} q_i.$$
\end{lem}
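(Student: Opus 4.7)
The plan is to prove the identity by introducing auxiliary independent random variables and showing that both sides equal the same natural probability. Specifically, I would introduce, for each $i\in S$, an independent Bernoulli random variable $Q_i$ with $\Pr[Q_i=1]=q_i$ and $\Pr[Q_i=0]=1-q_i$, jointly independent of each other and of the algorithm's random choices. Intuitively, each $i\in S$ is ``marked'' (set $Q_i=0$) independently with probability $1-q_i$. Both sides will be shown to equal the probability that every marked node in $S$ is engaged by time $t$.

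For the left-hand side, I would use that the events $\{E_{A,t}\wedge F_{S\setminus A,t}\}_{A\subseteq S}$ form a partition of the sample space according to which subset $A\subseteq S$ is exactly the set of engaged nodes in $S$ at time $t$. By independence of the $Q_i$ from the algorithm, for any fixed $A$,
\begin{equation*}
\prod_{i\in S\setminus A} q_i = \Pr\Bigl[\bigwedge_{i\in S\setminus A} (Q_i=1)\Bigr],
\end{equation*}
so the LHS rewrites as
\begin{equation*}
\sum_{A\subseteq S} \Pr\Bigl[E_{A,t}\wedge F_{S\setminus A,t}\wedge \bigwedge_{i\in S\setminus A}(Q_i=1)\Bigr] = \Pr[\text{every free node in } S \text{ has } Q_i=1],
\end{equation*}
which is precisely the probability that $\{i\in S : Q_i=0\}\subseteq \{i\in S: E_{i,t}\}$.

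For the right-hand side, I would observe that $\prod_{i\in A}(1-q_i)\prod_{i\in S\setminus A}q_i = \Pr[\{i\in S: Q_i=0\}=A]$. Since the events $\{\{i\in S:Q_i=0\}=A\}$ are disjoint across $A$, and since $E_{A,t}$ (all of $A$ engaged) combined with $\{i\in S: Q_i=0\}=A$ is exactly the event that all marked nodes equal $A$ and all of them are engaged, summing yields
\begin{equation*}
\sum_{A\subseteq S}\Pr[E_{A,t}]\cdot \Pr[\{i\in S: Q_i=0\}=A] = \Pr\bigl[\{i\in S: Q_i=0\}\subseteq \{i\in S: E_{i,t}\}\bigr],
\end{equation*}
using here that $Q_i$ are independent of the algorithm, so the event $E_{A,t}$ factorizes with the event pinning down the marked set. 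This matches the LHS exactly.

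The only mildly delicate step is justifying the factorization in the RHS derivation, namely $\Pr[E_{A,t}\wedge\{i\in S: Q_i=0\}=A]=\Pr[E_{A,t}]\cdot \Pr[\{i\in S:Q_i=0\}=A]$; this is immediate from the independence of the auxiliary coins $Q_i$ from the randomness of the algorithm (since the $Q_i$ are introduced purely for the proof and do not influence the algorithm's behavior). With that observation in place the proof is a clean two-line computation for each side.
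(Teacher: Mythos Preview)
Your proposal is correct and follows essentially the same approach as the paper: introduce independent Bernoulli coins $Q_i$ with $\Pr[Q_i=1]=q_i$ (independent of the algorithm), and observe that both sides equal the probability that every free node in $S$ has $Q_i=1$, equivalently that every node with $Q_i=0$ is engaged. The paper's proof is just a terser statement of the same probabilistic interpretation; your version spells out the partition arguments and the factorization step more carefully.
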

\begin{proof}
    Consider a probability space where at time $t$ each offline node $i$ tosses a coin with probability of heads $q_i$, independently.
    The LHS corresponds to all free nodes having their coin come up heads.
    The RHS corresponds to the same event, i.e., all nodes whose coin came up tails were engaged. 
\end{proof}

\begin{proof}[Proof of \Cref{stochastic-cylinder-bound}]
	The proof is by induction on $t\geq 1$ for all sets $S$. The base case $t=1$, for which $s_{it}\equiv 0$ for all $i$, is trivial. For the inductive step, 
	denote the marginal probability of $i$ bidding at time $t$ (requiring $t$'s arrival) by $$m_{i,t} := \Pr[i\in P_t] = \frac{\hat{x}_{i,t}}{1-\hat{s}_{i,t}}.$$
	We note that with this terminology, we have that
    \begin{align}\label{U_it-recurrence}
	    \hat{s}_{i,t+1} & = \hat{s}_{i,t} + \hat{x}_{i,t} = \hat{s}_{i,t} + m_{i,t}\cdot (1-\hat{s}_{i,t}) = \hat{s}_{i,t}\cdot (1-m_{it}) + m_{it}. 
	\end{align}
	The inductive step follows by noting that at most one of the nodes $i\in S$ may be matched at time $t$, and this requires that the single free node $i\in S$ bid for $t$, which happens with probability $m_{i,t}$ (conditioned on any event determined by randomness  up to time $t$ that implies that the node is free), as follows.
	\begin{align*}
		\Pr[E_{S,t+1}] & \leq \sum_{\substack{A\subseteq S \\ |S\setminus A|\leq 1}} \Pr[E_{A,t}, F_{S\setminus A,t}] \cdot \prod_{i\in S\setminus A} m_{i,t}\\
		& = \sum_{\substack{A\subseteq S \\ |S\setminus A|\leq 1}} \Pr[E_{A,t}] \cdot \prod_{i\in A} \left(1-m_{i,t}\right) \prod_{i\in S\setminus A} m_{i,t} & \textrm{\Cref{BDM-algebraic-lemma}}\\
		& \leq \sum_{A\subseteq S} \Pr[E_{A,t}] \cdot \prod_{i\in A} \left(1-m_{i,t}\right) \prod_{i\in S\setminus A} m_{i,t} \\
		& \leq \sum_{A\subseteq S} \prod_{i\in A} U_{i,t} \cdot \prod_{i\in A} \left(1-m_{i,t}\right) \prod_{i\in S\setminus A} m_{i,t} & \textrm{I.H.} \\
		& = \prod_{i\in S}\left(\hat{s}_{i,t} \cdot (1-m_{it}) + m_{it}\right) & \textrm{binomial expansion} \\
		& \leq \prod_{i\in S} \hat{s}_{i,t+1} & \textrm{Equation \eqref{U_it-recurrence}}. &  \qedhere
	\end{align*}
\end{proof}

The preceding lemma allows us to prove the following counterpart to \Cref{proposal-prob-perturbed-rephrased}, key to our improved approximation ratios.

\begin{lem}\label{proposal-prob-perturbed-rephrased-stochastic}
	For any time $t$ and set of offline nodes $S\subseteq[n]$, \Cref{alg:stochastic} satisfies
	$$\Pr[S\cap P_t \neq \emptyset] \geq 1 - \prod_{B\in \calB_t}\left(1-\sum_{i\in B\cap S} \hat{x}_{it}/p_t\right).$$
\end{lem}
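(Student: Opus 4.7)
The plan is to mirror the non-stochastic proof of \Cref{proposal-prob-perturbed-rephrased}, but replace the NA-based factorization (which is hard to establish directly here, since ``free'' now means ``not matched'' and depends on the greedy arg-max selection and on arrivals) with an algebraic decomposition powered by \Cref{stochastic-cylinder-bound}.

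First, I would condition on the $\sigma$-algebra generated by all randomness strictly before time $t$; this determines the freedom indicators $F_{i,t}$, while the bin variables $\{U_B\}_{B\in\calB_t}$ and the arrival $A_t$ at time $t$ are fresh and mutually independent. For each bin $B$, at most one node becomes a candidate, with $\Pr[i\in C_t]=c_i := \hat{x}_{i,t}/(p_t(1-\hat{s}_{i,t}))$, and candidates across bins are independent. Since a node in $B\cap S$ is a bidder iff it is both the (unique) candidate of $B$ and free, I get
\[
\Pr[S\cap P_t=\emptyset \mid \mathcal{F}_{t-1}] \;=\; \prod_{B\in\calB_t}\Bigl(1-\sum_{i\in B\cap S} c_i\,F_{i,t}\Bigr).
\]

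Next, the key trick is to rewrite each factor using $E_{i,t}=1-F_{i,t}$ as
\[
1-\sum_{i\in B\cap S} c_i F_{i,t} \;=\; \Bigl(1-\sum_{i\in B\cap S} c_i\Bigr) + \sum_{i\in B\cap S} c_i\,E_{i,t},
\]
which is a sum of non-negative terms since bucketing guarantees $\sum_{i\in B\cap S} c_i\leq 1$. Expanding the product over bins and collecting terms, I obtain
\[
\prod_{B\in\calB_t}\Bigl(1-\sum_{i\in B\cap S} c_i F_{i,t}\Bigr) \;=\; \sum_{K\text{ transversal}} w(K)\prod_{i\in K} E_{i,t},
\]
where the sum is over sets $K\subseteq S$ with $|K\cap B|\leq 1$ for every bin $B$, and the non-negative weights are $w(K) = \prod_{i\in K} c_i \cdot \prod_{B:K\cap B=\emptyset}(1-\sum_{i\in B\cap S} c_i)$. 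Taking expectations and noting that $\prod_{i\in K} E_{i,t}=\mathds{1}[E_{K,t}]$ yields $\Pr[S\cap P_t=\emptyset] = \sum_K w(K)\Pr[E_{K,t}]$.

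Finally, I apply \Cref{stochastic-cylinder-bound}, which gives $\Pr[E_{K,t}]\leq \prod_{i\in K}\hat{s}_{i,t}$, and use $w(K)\geq 0$ to preserve the inequality. Reversing the same algebraic expansion (with $\hat{s}_{i,t}$ playing the role of $E_{i,t}$) yields
\[
\Pr[S\cap P_t=\emptyset] \;\leq\; \prod_{B\in\calB_t}\Bigl(1-\sum_{i\in B\cap S} c_i(1-\hat{s}_{i,t})\Bigr) \;=\; \prod_{B\in\calB_t}\Bigl(1-\sum_{i\in B\cap S}\hat{x}_{i,t}/p_t\Bigr),
\]
since $c_i(1-\hat{s}_{i,t})=\hat{x}_{i,t}/p_t$, which is the claim after taking complements. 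The main obstacle is step three: the cylinder bound only controls joint engagement probabilities, not joint freedom ones, so it is essential to rearrange the per-bin factors so that only $E$'s (not $F$'s) appear multiplicatively, with non-negative coefficients, before invoking \Cref{stochastic-cylinder-bound}.
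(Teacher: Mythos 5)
Your proof is correct and follows essentially the same route as the paper's: your weights $w(K)$ are exactly $\Pr[C_t\cap S=K]$, so your conditioning-plus-expansion over bin transversals coincides with the paper's total-probability decomposition over the candidate set, and both arguments then invoke \Cref{stochastic-cylinder-bound} (with non-negative coefficients) and refactor per bin using $c_i(1-\hat{s}_{i,t})=\hat{x}_{i,t}/p_t$. Your explicit remark that only the $E$'s may appear multiplicatively matches the paper's reason for routing the argument through the engagement-cylinder bound rather than negative association.
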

\begin{proof}
    Let $C_t$ and $P_t$ be as in \Cref{alg:stochastic}.
    Since $P_t = C_t \setminus \{i\mid E_{i,t}=1\}$, there are no bids from set $S$ ($S\cap P_t=\emptyset$) iff all $i\in A=C_t\cap S$ are no longer free, i.e., iff $E_{A,t}$ holds.
    So, by total probability over $C_t\cap S$, we have that  $(\star):=\Pr[t \textrm{ arrives and } S\cap P_t = \emptyset] $ satisfies
    \begin{align*}
       (\star) &  = \sum_{A\subseteq S} \Pr[C_t\cap S = A] \cdot \Pr[E_{A,t}]\\
        & = \sum_{A\subseteq S} \prod_{B: |B\cap A|=0} \Pr[C_t \cap B \cap A = \emptyset] \prod_{B: |B\cap A|=1}\sum_{i\in B\cap A} \Pr[C_t\cap B\cap A = \{i\}] \cdot \Pr[E_{A,t}] \\
        & \leq \sum_{A\subseteq S} \prod_{B: |B\cap A|=0} \Pr[C_t \cap B \cap A = \emptyset] \prod_{B: |B\cap A|=1}\sum_{i\in B\cap A} \Pr[C_t\cap B\cap A = \{i\}] \cdot \Pr[E_{i,t}] \\
        & = \prod_{B\in \calB_t} \left(\Pr[C_t\cap B\cap S = \emptyset] + \sum_{i\in B\cap S} \Pr[C_t\cap B=\{i\}]\cdot \Pr[E_{i,t}]\right) \\
        & = \prod_{B\in \calB_t} \left(1-\sum_{i\in B\cap S} \Pr[C_t\cap B=\{i\}]\cdot \Pr[F_{i,t}]\right) \\
        & \leq \prod_{B\in \calB} \left(1-\sum_{i\in B\cap S} \hat{x}_{i,t}/p_t \right).
    \end{align*}
    Above, the second and third equality follow from independence of the sets $\{C_t\cap B \mid B\in \calB_t\}$ and from $|C_t\cap B|\leq 1$ always. The first inequality follows from \Cref{stochastic-cylinder-bound}. The fourth equality follows  from $|C_t\cap B|\leq 1$, and $F_{i,t} = 1-E_{i,t}$.
    Finally, the last inequality follows from $\Pr[C_t\cap B=\{i\}]=\frac{\hat{x}_{i,t}}{p_t\cdot (1-\hat{s}_{i,t})}$ and \Cref{lem:basic-bucketing-props-stochastic}, implying that $\Pr[C_t\cap B=\{i\}]\cdot \Pr[F_{i,t}]\geq \frac{\hat{x}_{i,t}}{p_t\cdot (1-\hat{s}_{i,t})}\cdot (1-\hat{s}_{i,t}) \geq \hat{x}_{i,t}/p_t$.
\end{proof}

The following lemma, which is a counterpart to \Cref{conditional-ratio} for the non-stochastic setting, allows us to obtain a concrete lower bound from \Cref{proposal-prob-perturbed-rephrased-stochastic}.

\begin{lem}\label{conditional-ratio-stochastic}
	If $\epsilon,\delta \in [0,1]$ guarantee that for all $z\geq \frac{1-\theta}{2} = \frac{\epsilon}{2(\epsilon+\delta)}$,
	\begin{align}\label{eqn:conditional-ratio-stochastic}
		f_{\epsilon,\delta}(z) := \exp(-z\cdot (1+\delta)) - (1-z\cdot (1-\eps)) \geq 0,
	\end{align}
	then \Cref{alg:stochastic} with parameters $\epsilon$ and $\delta$ satisfies for each time $t$ and set $S\subseteq [n]$
	$$\Pr[S\cap P_t\neq \emptyset]\geq \left(1-\exp\left(-1-\delta+\frac{\eps+\delta}{1-\eps}\right)\cdot\frac{1-\eps}{1+\delta}\right)\cdot \sum_{i\in S} x_{i,t}/p_t.$$
\end{lem}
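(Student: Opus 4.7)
The plan is to mirror, almost line-by-line, the proof of the non-stochastic Lemma \ref{conditional-ratio}, with the uniform substitution $x_{i,t}\mapsto x_{i,t}/p_t$ and $\hat{x}_{i,t}\mapsto \hat{x}_{i,t}/p_t$ wherever these appear inside the product from the bin structure. The key enabler is that Lemma \ref{proposal-prob-perturbed-rephrased-stochastic} produces a bound of \emph{exactly} the same shape as Lemma \ref{proposal-prob-perturbed-rephrased}, only rescaled by $1/p_t$, so the algebraic analysis downstream carries over unchanged.

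First I would define $L_t:=\{i\in[n] : s_{i,t}\leq \theta\}$ and apply Lemma \ref{proposal-prob-perturbed-rephrased-stochastic} to write
\[
\Pr[S\cap P_t\neq\emptyset] \;\geq\; 1 - \prod_{B\in\calB_t}\!\left(1-\sum_{i\in B\cap S}\hat{x}_{i,t}/p_t\right).
\]
Next I would split $\calB_t$ according to Lines \ref{line:FF-stochastic}--\ref{line:FF-trivial-stochastic}: the high-degree part consists of singletons $\{i\}$ with $s_{i,t}>\theta$ and therefore $\hat{x}_{i,t}=x_{i,t}(1+\delta)$, while the low-degree part consists of bins $B\subseteq L_t$ for which $\hat{x}_{i,t}\geq x_{i,t}(1-\eps)$ by definition of $\hat{x}$ (Equation \eqref{eqn:x-hat}). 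Applying $1-y\leq e^{-y}$ on the singleton factors gives
\[
\Pr[S\cap P_t\neq\emptyset]\;\geq\; 1 - \exp\!\left(-\!\!\sum_{i\in S\setminus L_t}\!x_{i,t}(1+\delta)/p_t\right)\cdot \prod_{B\subseteq L_t}\!\left(1-\sum_{i\in B\cap S}x_{i,t}(1-\eps)/p_t\right).
\]

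Now I divide by $\sum_{i\in S}x_{i,t}/p_t$, which by LP constraint \eqref{LP:flow-for-t} is at most $1$. The above RHS is concave in the quantities $\{x_{i,t}/p_t\}_{i\in S}$, so by Fact \ref{convexity} the ratio $\Pr[S\cap P_t\neq\emptyset]/\sum_{i\in S}x_{i,t}/p_t$ is only decreased if I virtually augment $S$ with dummy nodes (some having $s_{j,t}=\theta$, as in the non-stochastic proof) so as to raise $\sum_{i\in S}x_{i,t}/p_t$ to exactly $1$. Then I invoke Fact \ref{first-fit}, applied to the first-fit call on Line \ref{line:FF-stochastic} whose item sizes are $\hat{x}_{i,t}/(p_t(1-\hat{s}_{i,t}))$ and whose low-degree threshold in the $\hat{s}$-scale is $\theta(1-\eps)$ (since $\hat{s}_{i,t}\leq \theta(1-\eps)$ iff $s_{i,t}\leq\theta$, by Fact \ref{hat-degrees}). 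This yields that every low-degree bin except at most one $B^{*}$ has $\sum_{i\in B}\hat{x}_{i,t}/p_t\geq (1-\theta(1-\eps))/2\geq (1-\theta)(1-\eps)/2$, i.e., $z:=\sum_{i\in B}x_{i,t}/p_t$ satisfies $z\cdot(1-\eps)\geq (1-\theta)(1-\eps)/2$, so $z\geq (1-\theta)/2=\eps/(2(\eps+\delta))$.

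At this point the hypothesis $f_{\eps,\delta}(z)\geq 0$ for all $z\geq (1-\theta)/2$ is exactly what allows one to absorb each of the "large" low-degree bins into the singleton-style exponential factor (the same monotonic merging step as in the proof of Lemma \ref{conditional-ratio}), leaving a lower bound of
\[
\frac{\Pr[S\cap P_t\neq\emptyset]}{\sum_{i\in S}x_{i,t}/p_t}\;\geq\; 1 - \exp\!\left(-(1-y^{*})(1+\delta)\right)\cdot \bigl(1-y^{*}(1-\eps)\bigr),
\]
where $y^{*}:=\sum_{i\in B^{*}}x_{i,t}/p_t$. A direct invocation of Claim \ref{claim:miminimzer} (its statement is $\eps,\delta$-dependent and $y$-independent) minimizes this expression over $y^{*}\in\mathbb{R}$, giving precisely $1-\exp(-1-\delta+(\eps+\delta)/(1-\eps))\cdot (1-\eps)/(1+\delta)$, as required. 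The only step where extra care is needed compared to the deterministic proof is checking that the first-fit lower bound applied to $\hat{x}/p_t$ still implies the required lower bound on $x/p_t\cdot(1-\eps)$ inside the product; this is immediate from $\hat{x}_{i,t}\geq x_{i,t}(1-\eps)$ for $i\in L_t$, and is expected to be the only technical step that would differ in wording from the non-stochastic analysis.
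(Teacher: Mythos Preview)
Your proposal is correct and takes essentially the same approach as the paper. The paper explicitly states that the proof is ``near identical'' to that of \Cref{conditional-ratio}, noting only (i) that the syntactic generalization of \Cref{first-fit} now gives $\sum_{i\in B}\hat{x}_{i,t}/p_t\geq (1-\theta)/2$ for all but one low-degree bin, and (ii) that the convexity/dummy-node step uses LP constraint~\eqref{LP:flow-for-t} to ensure $\sum_{j\in J}x_{j,t}=p_t-\sum_i x_{i,t}\geq 0$; you have identified both of these modifications and carried through the rest of the argument exactly as in the non-stochastic case.
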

The proof is near identical to that of \Cref{conditional-ratio}, and is therefore omitted.\footnote{The only changes to the proof are (1) the syntactic generalization of \Cref{first-fit} implies that all but one bin $B\in \calB_t$ computed in \Cref{line:FF-stochastic} has $\sum_{i} \hat{x}_{i,t}/\mathbf{p_t}\geq \frac{1-\theta}{2}$, and (2) the convexity argument requires that we add dummy nodes $J$ with $s_{j,t}=\theta$ for all $j\in J$ and $\sum_{j\in J} x_{j,t} = p_t - \sum_{i} x_{i,t}$, where this sum is non-negative, by LP Constraint \ref{LP:flow-for-t}.}

Finally, the same optimization over $\eps,\delta$ as in \Cref{sec:rounding-matchings} yields our main result of this section.

\begin{thm}
	\Cref{alg:stochastic} with parameters $(\eps,\delta) \approx (0.0480,0.0643)$ is a polytime online stochastic weighted bipartite matching algorithm that $\ratio$-approximates the optimum online algorithm.
\end{thm}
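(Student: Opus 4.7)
The plan is to reduce the global approximation guarantee to a per-online-node guarantee via \Cref{per-online-to-global}, and then to read off the per-online-node guarantee from \Cref{conditional-ratio-stochastic}. Concretely, fix an arbitrary online node $t$ and threshold $z\geq 0$, and let $S_z:=\{i\mid w_{i,t}\geq z\}$. We want to lower bound $\Pr[w(\calM(t),t)\geq z]$ by $\alpha\cdot\sum_{i\in S_z} x_{i,t}$, for $\alpha=\ratio$.

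The first step is to exploit the greedy tie-breaking in \Cref{alg:stochastic}: upon arrival, $t$ matches to the highest-weight bidder in $P_t$, so $w(\calM(t),t)\geq z$ if and only if $t$ arrives and $P_t\cap S_z\neq \emptyset$. Crucially, the bids $P_t$ are determined solely by the uniform random variables drawn inside the bin loop, the history of prior matches, and the values $\hat{x}_{i,t}/(p_t(1-\hat{s}_{i,t}))$; the algorithm does not consult the realization of $t$'s arrival before forming $P_t$. Hence $\{P_t\cap S_z\neq\emptyset\}$ is independent of the arrival event of $t$, and so
\begin{equation*}
\Pr[w(\calM(t),t)\geq z] \;=\; p_t\cdot \Pr[P_t\cap S_z\neq \emptyset].
\end{equation*}

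The second step is to apply \Cref{conditional-ratio-stochastic} to $S=S_z$, which yields
\begin{equation*}
\Pr[P_t\cap S_z\neq \emptyset] \;\geq\; \alpha\cdot \sum_{i\in S_z} x_{i,t}/p_t,
\end{equation*}
where $\alpha=1-\exp(-1-\delta+\tfrac{\eps+\delta}{1-\eps})\cdot \tfrac{1-\eps}{1+\delta}$. Canceling the $p_t$ factors, this gives precisely the hypothesis of \Cref{per-online-to-global}, which then implies that the expected total weight of $\calM$ is at least $\alpha$ times the LP value, and therefore (by \Cref{lem:LP-bound}) at least $\alpha$ times the value of the optimal online algorithm.

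All that remains is to choose $(\eps,\delta)\in[0,1]^2$ that maximize $\alpha$ subject to the hypothesis of \Cref{conditional-ratio-stochastic}, namely $f_{\eps,\delta}(z)\geq 0$ for all $z\geq \tfrac{\eps}{2(\eps+\delta)}$. Using \Cref{sufficient-condition}, this family of constraints is implied by the two conditions $f_{\eps,\delta}(\tfrac{1-\theta}{2})\geq 0$ and $f'_{\eps,\delta}(\tfrac{1-\theta}{2})\geq 0$, so we arrive at the exact same two-constraint optimization program as in the proof of \Cref{thm:ODRS}. Off-the-shelf numerical solvers produce the optimum $(\eps,\delta)\approx(0.0480,0.0643)$ with value $\alpha=\ratio$, completing the proof. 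No new analytic obstacle arises beyond what was handled for \Cref{thm:ODRS}; the main substantive step, establishing the sub-multiplicative engagement bound of \Cref{stochastic-cylinder-bound} and hence \Cref{proposal-prob-perturbed-rephrased-stochastic}, has already been carried out. Finally, polynomial runtime is immediate: \Cref{alg:stochastic} only performs First-Fit bin packing, independent coin tosses per bin, and a greedy argmax per arrival.
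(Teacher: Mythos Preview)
Your proposal is correct and follows essentially the same route as the paper: identify $\{w(\calM(t),t)\geq z\}$ with $\{t\text{ arrives}\}\cap\{P_t\cap S_z\neq\emptyset\}$, use independence of $P_t$ from $t$'s arrival to factor out $p_t$, invoke \Cref{conditional-ratio-stochastic}, and conclude via \Cref{per-online-to-global}; the optimization over $(\eps,\delta)$ is deferred to the proof of \Cref{thm:ODRS}, exactly as you do. If anything, you are slightly more explicit than the paper in spelling out why $P_t$ is independent of $t$'s arrival.
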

\begin{proof}
    Let $\alpha=1-\exp\left(-1-\delta+\frac{\eps+\delta}{1-\eps}\right)\cdot\frac{1-\eps}{1+\delta} \geq \ratio$. 
    By our algorithm's greedy matching choice, $t$ gets matched to a node $i$ with $w_{i,t}\geq w$ iff $t$ arrives and gets a bid from the set $S_{t,w}:=\{i \mid w_{i,t}\geq w\}$. So, by \Cref{conditional-ratio-stochastic}, as $(\eps,\delta)$ above satisfy \Cref{eqn:conditional-ratio-stochastic} for all $z\geq 0$ (see the proof of \Cref{thm:ODRS}), 
    $$\Pr[w(\calM(t),t)\geq w] = \Pr[t \textrm{ arrives} \wedge P_t\cap S_{t,w} \neq \emptyset] \geq p_t\cdot \alpha\cdot \sum_{i\mid w_{i,t}\geq w}x_{i,t}/p_t = \alpha\cdot \sum_{i\mid w_{i,t}\geq w} x_{i,t}.$$
    The approximation ratio of \Cref{alg:stochastic} then follows from \Cref{per-online-to-global}, while the algorithm's polynomial running time is immediate.
\end{proof}
\color{black}

\subsubsection{A simple direct application: Algorithms with predictions}\label{sec:predictions}
Our ODRS of \Cref{thm:arbitrary} finds a direct application in the burgeoning area of \emph{online algorithms with advice} (see the survey \cite{MitzVas20} and site \cite{site:ALPS}). Here, an online algorithm is equipped with machine-learned (ML) predictions concerning the input.
For example, many works show that for related problems (including bipartite matching), high-quality fractional solutions are efficiently learnable  
and can guide integral algorithms, provided an effective online rounding scheme is given \cite{lavastida2021learnable,lattanzi2020online,li2021online}.
We add to this line of work, as follows: By linearity of expectation, our ODRS can round any fractional \emph{edge-weighted} bipartite matching while preserving its value up to a ratio of $\ratio$.
This immediately yields the first online \emph{edge-weighted} bipartite matching algorithm with predictions, getting a competitive ratio of $(1-1/e+\Omega(1))$
with sufficiently good predictions.
In contrast, with no predictions, only recently was it shown how to break the barrier of $1/2$ for this problem (using \emph{free disposal}) \cite{fahrbach2020edge,blanc2021multiway,gao2021improved}, and $1-1/e+\Omega(1)$ is impossible to achieve even in unweighted graphs \cite{karp1990optimal}.
\color{black}

\subsection{Application of the level-set ODRS}
We now turn to discussing applications of our main level-set algorithm, and its strong negative correlation properties.

\subsubsection{Multi-stage stochastic optimization} 
\label{sec:app-stoch-opt} 
We consider an application to multi-stage stochastic optimization, building on the framework of \cite{DBLP:conf/soda/Srinivasan07}. We recall briefly that in such multi-stage problems, various parameters of an optimization problem materialize stochastically over some number $k$ of stages, where each stage contains stochastic information about the following stage(s) and wherein we can take actions---with actions cheaper in earlier stages, but possibly inaccurate since we only know the future stochastically. 

Motivated by the results of \cite{DBLP:journals/siamcomp/SwamyS12}, a broad class of multi-stage stochastic covering problems can be cast as the following family of \emph{online} problems: see Section 2 of  \cite{DBLP:conf/soda/Srinivasan07}, from which we quote the framework almost verbatim next. There is a hidden covering
problem ``minimize $c^T \cdot x$ subject to $Ax \geq b$ and with all
variables in $x$ being non-negative integers'' that is revealed online as follows.
We let $m$ denote 
the number of rows of $A$; also, the variables in $x$
are indexed as $x_{j,\ell}$, where $1 \leq j \leq n$ and $1 \leq \ell \leq k$. (We interchange $m$ and $n$ from 
\cite{DBLP:conf/soda/Srinivasan07}.)
Such a covering problem, as well as a feasible fractional solution $x^*$ for it (where each $x^*_{j,\ell}$ is allowed to be a non-negative \emph{real}), are revealed to us in $k$ stages as follows. In each stage $\ell$ where $1 \leq \ell \leq k$, we are given the $\ell^{th}$-stage fractional values $(x^*_{j,\ell}:~1 \leq j \leq n)$ of the variables along with their respective columns in the coefficient matrix $A$, and their respective coefficients in the objective-function vector $c$. 
Given this setup, we need to round the variables 
$(x^*_{j,\ell}:~1 \leq j \leq n)$ immediately and irrevocably at stage $\ell$, using randomization if necessary. (Note the direct 
connection to online computation.) The goal is to develop such a rounded vector $(s_{j,\ell}:~1 \leq j \leq n, 1 \leq \ell \leq k)$ that satisfies the 
constraints $Ay \geq b$, and whose (expected) approximation ratio 
$\E[c^T \cdot y] / c^T \cdot x^*$ is small, where the expectation is only over any random choices made by our online algorithm in this arrival model. This completes the
description of the framework from \cite{DBLP:conf/soda/Srinivasan07}.

We make two contributions in this framework. A motivating family of problems to consider is vertex-(multi-)cover on graphs and hypergraphs. The multi-stage stochastic version of the classical vertex-cover problem on a given graph $G = (V, E)$ is basically as follows: the coverage constraint corresponding to an edge $e = (u,v) \in E$ is now
\begin{equation}
\label{eqn:stoch-vc} 
    \left(\sum_{\ell = 1}^k x_{u,\ell}\right) + 
\left(\sum_{\ell = 1}^k x_{v,\ell}\right)
\geq 1.
\end{equation}
 Improving on the $2k$-approximation of 
\citet{DBLP:journals/siamcomp/SwamyS12} for this problem---in the ``online rounding'' model mentioned in the previous paragraph---a $2$-approximation is developed in \cite{DBLP:conf/soda/Srinivasan07}. 

Our \emph{first contribution} is the following generalization of the $2$-approximation for multi-stage stochastic vertex cover from  \cite{DBLP:conf/soda/Srinivasan07}. Consider the following more-general problem in $d$-regular hypergraphs $H = (V,E)$ (or hypergraphs with all edge-sizes lower bounded by $d$): for some integer parameter $t$, we want the vertices in each edge $e \in E$ to be covered in total to an extent of at least $t$, where vertices can be multi-covered. That is, generalizing \eqref{eqn:stoch-vc}, the coverage constraint corresponding to an edge $e = \{v_1, v_2, \ldots, v_d\} \in E$ is 
\begin{equation}
\label{eqn:stoch-hyp-multi-vc} 
 \sum_{i=1}^d  \sum_{\ell = 1}^k x_{v_i,\ell} 
\geq t,  
\end{equation}
where each $x_{v_i,\ell}$ is allowed to be any non-negative integer; note that the above-seen stochastic vertex cover is the special case where $d = 2$ and $t=1$. 
Recalling our online arrival model, we need to round the LP solution  $(x^*_{v,\ell}:~v \in V)$ right away at stage $\ell$, for each $\ell \in [k]$. We proceed as follows to obtain an $\alpha := (d + t-1) / t$--approximation in this model, generalizing the $2$-approximation for $d = 2$ and $t=1$. In particular, if $t$ is large, this is essentially a $1$-approximation.

Our algorithm works as follows.
For each vertex $v \in V$, consider the scaled vector $z(v) := (\alpha \cdot x^*_{v,\ell}:~\ell \in [k])$ and independently for all $v$, run \Cref{alg:rounding} on $z(v)$ to obtain the final rounded values 
$X_{v,\ell}:~\ell \in [k])$. (As usual, we imagine padding this vector with a dummy final element so that the sum of the entries is an integer.) We first verify that \eqref{eqn:stoch-hyp-multi-vc} is satisfied with probability one for each $e = \{v_1, v_2, \ldots, v_d\} \in E$: 
\begin{align}
\sum_{i=1}^d  \sum_{\ell = 1}^k X_{v_i,\ell} & \geq
\sum_{i=1}^d  \left\lfloor \sum_{\ell = 1}^k \alpha \cdot x^*_{v_i,\ell} \right\rfloor & \mbox{(by Property \ref{prop:floor/ceil})} 
\label{eqn:hypvc1} \\
& > 
\sum_{i=1}^d  \left( \left(\sum_{\ell = 1}^k \alpha \cdot x^*_{v_i,\ell}\right)  - 1 \right) & \mbox{($\lfloor \beta \rfloor > \beta - 1$ for all $\beta$)} \label{eqn:hypvc2} \\
& \geq \alpha \cdot t - d & \mbox{(since $x^*$ satisfies \eqref{eqn:stoch-hyp-multi-vc})} \nonumber \\
& = t-1, \nonumber 
\end{align} 
which implies that 
$\sum_{i=1}^d  \sum_{\ell = 1}^k X_{v_i,\ell} \geq t$, as required, since the LHS here is an integer and because of the strict inequality connecting \eqref{eqn:hypvc1} and \eqref{eqn:hypvc2}. We thus obtain an online solution to the rounding problem which satisfies all constraints with probability one, and whose expected objective function value is at most $\alpha (c^T \cdot x^*)$. We observe that the work of \cite{DBLP:conf/soda/Srinivasan07} also runs an online rounding algorithm, but that the problem considered there is much simpler since all we require there (in place of Property~\ref{prop:floor/ceil}) is that if the entries of the given input vector sum to at least one, then at least one entry is rounded to one with probability one. In particular, this does not address the multi-coverage constraint we satisfy here.

Our \emph{second contribution} is that because of Property~\ref{prop:sr}, the objective function is strongly concentrated around its mean (given by the Chernoff bound). This is an especially useful property in areas like stochastic optimization, where \emph{the algorithm can be run only once}. This is in comparison to \emph{offline} minimization problems with a non-negative objective where we can repeat the algorithm $O(1/\epsilon)$ times and choose the best solution obtained, and apply Markov's inequality to each iteration to show that the best solution is at most $(1 + \epsilon)$ times the expected value with high probability. Thus, the fact that \Cref{alg:rounding} yields sharp concentration as implied by Property~\ref{prop:sr}---and does not just guarantee Properties \ref{prop:marginals} and \ref{prop:floor/ceil}---is very useful in the stochastic-optimization context.

Summarizing the preceding discussion (and formalizing the preceding paragraph), we obtain the following result.
\begin{thm}
For any $k\geq 1$, there exists a $k$-stage stochastic hypergraph vertex cover algorithm for $d$-regular hypergraphs where each hyper-edge must be covered $t$ times, with approximation ratio that is  $\alpha=\frac{d+t-1}{t}$ in expectation and is $\alpha(1+o(1))$ w.h.p.~if costs are polynomially bounded and $OPT= \omega(\log n)$.
\end{thm}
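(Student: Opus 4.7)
The plan is to restate the algorithm and analysis sketched in the preceding paragraphs in a unified form, treating the three claims (feasibility, expected ratio, high-probability ratio) in turn. Since the statement is essentially a consolidation, the proof would proceed by direct verification using the three guaranteed properties of \Cref{alg:rounding} established in \Cref{alg:online-rounding-properties}.

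First, I would describe the algorithm: for each vertex $v\in V$, independently of the other vertices, scale its stage-wise fractional values to $z(v):=(\alpha\cdot x^*_{v,\ell}:\ell\in[k])$ (padding with a dummy entry so that the sum is an integer) and feed $z(v)$ to \Cref{alg:rounding}, obtaining binary outputs $(X_{v,\ell}:\ell\in[k])$. This is an online procedure because \Cref{alg:rounding} is online, and the independence across vertices is crucial for later concentration.

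Next, I would verify feasibility exactly as in the display chain containing equations \eqref{eqn:hypvc1} and \eqref{eqn:hypvc2}: by Property~\ref{prop:floor/ceil}, the stage-$k$ prefix sum $\sum_\ell X_{v_i,\ell}$ is at least $\lfloor \alpha\sum_\ell x^*_{v_i,\ell}\rfloor$, so summing over $v_i\in e$ and using the LP constraint \eqref{eqn:stoch-hyp-multi-vc} scaled by $\alpha$ gives $\sum_{i,\ell} X_{v_i,\ell}>\alpha t-d=t-1$; integrality forces this to be at least $t$. This establishes feasibility with probability one. The expected-ratio bound is then immediate from Property~\ref{prop:marginals} and linearity of expectation: $\E[c^T X]=\sum_{v,\ell}c_{v,\ell}\cdot\alpha x^*_{v,\ell}=\alpha\cdot c^T x^*$.

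The one step that requires a little extra care, and which I expect to be the main obstacle, is the high-probability bound. Here I would argue as follows. Within a single vertex $v$, Property~\ref{prop:sr} says $(X_{v,\ell})_\ell$ is strongly Rayleigh, hence negatively associated. Since the rounds across different vertices are run independently, by closure of NA under products (\Cref{na-closure}, part~1), the full collection $(X_{v,\ell}:v\in V,\,\ell\in[k])$ is NA. The objective $c^T X=\sum_{v,\ell}c_{v,\ell}X_{v,\ell}$ is a non-negative linear combination of NA binary variables, so standard Chernoff--Hoeffding bounds for NA sums apply. Under the assumption that costs are polynomially bounded and $\E[c^T X]=\alpha\cdot c^T x^*\geq\alpha\cdot OPT=\omega(\log n)$, a rescaling by the maximum cost yields a sum with effective range poly-bounded and mean $\omega(\log n)$, so a standard Chernoff tail gives $c^T X\leq(1+o(1))\alpha\cdot c^T x^*$ with probability $1-n^{-\Omega(1)}$, as claimed. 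The only subtle point is to normalize by $\max c$ before applying the NA-Chernoff bound, which is where the ``polynomially bounded costs'' hypothesis is used.
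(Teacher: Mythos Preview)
Your proposal is correct and follows essentially the same approach as the paper: feasibility via Property~\ref{prop:floor/ceil} and the chain \eqref{eqn:hypvc1}--\eqref{eqn:hypvc2}, the expected ratio via Property~\ref{prop:marginals} and linearity, and the high-probability bound via Property~\ref{prop:sr} plus closure of NA under products (\Cref{na-closure}) across the independently-rounded vertices, followed by a Chernoff bound for NA variables. Your write-up is in fact somewhat more detailed than the paper's own proof, which is a brief paragraph invoking exactly these ingredients.
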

\begin{proof}
Our algorithm satisfies the (multi-)coverage constraint, by the above, and has expected approximation ratio $\alpha$ by construction. 
The high-probability bound follows from standard Chernoff bounds for negatively-associated variables, relying on Property \ref{prop:sr} and closure of NA under products (\Cref{na-closure}), relevant since we run independent copies of \Cref{alg:rounding} for each vertex.
\end{proof}

\subsubsection{Negative association, fairness, diversity, and streaming}
\label{sec:app-inter-fair} 
While Section~\ref{sec:app-stoch-opt} only required the fact that Property \ref{prop:sr} yields sharp tail bounds for sums of the $X_i$ with non-negative weights, we go further now and use the negative association guaranteed by Property \ref{prop:sr}, which helps us correlate multiple such sums as well as handle submodular objectives well. The submodular-objective application further benefits from the fact that \Cref{alg:rounding} can also be viewed as a \emph{data-stream} algorithm: note that at when $x_t$ arrives, the algorithm need only remember $s_{t-1}$ and $S_t$, and hence can be implemented with $O(\log n)$ space. 

\emph{Intersectionality} is a key notion in the study of fairness, where the intersection of multiple attributes (e.g., race, gender, age) can impact the outcomes of a person in a more fine-grained manner than does standard group-fairness (where we typically take a single attribute). Its study has naturally impacted AI/ML fairness as well---see, e.g., \cite{pmlr-v81-buolamwini18a} and its followups---and is a topic of much debate in terms of precise definitions (see, e.g., \cite{kong:intersectional-fairness}). We make a contribution to this nascent area, while being aware that many such formulations/contributions in the still-early stages of research in AI/ML fairness are speculative. 

Consider people arriving online and requesting a resource whose availability expands over time: e.g., vaccines for a new pathogen or a popular new model of car. It is anticipated that by time $t$, at most $a_t$ units of this resource will be available. An ML system that adapts to the changing realities on the ground (e.g., how the pathogen is spreading and which communities seem to be impacted the most) decides online, the probability $x_t$ with which to allocate the resource to the request arriving at time $t$; we naturally require $\sum_{i=1}^t x_i \leq a_t$ for all $t$. We respond to this in real time by constructing $X_t \in \{0,1\}$ using \Cref{alg:rounding}; Property \ref{prop:floor/ceil} ensures that we never exceed the supply-limits $a_t$. (In reality, a \emph{batch} of requests will arrive at time $t$, which of course \Cref{alg:rounding} naturally extends to.) The ML system outputting the vector $x$ may be (required to be) unware of protected attributes that characterize our underlying demographic groups and their intersections: given $x$, can we show that while intersectional unfairness may be present in $x$ (in which case the ML system needs to be refined), our rounding algorithm tries to at least \emph{limit the scope} of such intersectional unfairness in some way?

We formulate this problem as follows. Suppose each person requesting the resource has three attributes $A_1, A_2$, and $A_3$: the ``three" here is just for simplicity, and it is easy to see that the framework below generalizes to any number of attributes. Suppose $U_1, U_2, \ldots, U_{\alpha}$ is some given partition of the population according to their values for $A_1$ (e.g., if $A_1$ is age, the $U_i$'s may be disjoint age-intervals). Similarly, suppose we have a partition of the population according to $A_2$ as $V_1, V_2, \ldots, V_{\beta}$ and according to $A_3$ as $W_1, W_2, \ldots, W_{\gamma}$. In order to model intersectionality, let us define the subgroup $G_{i,j,k} := U_i \cap V_j \cap W_k$. Let $s_{i,j,k} = \sum_{r \in G_{i,j,k}} X_r$ be the random variable denoting the amount of resource received in total by members of $G_{i,j,k}$. We ask: ``While intersectional unfairness may be present, say inherently in the vector $x$, does our rounding at least limit the extent of it?" That this is indeed true follows from Property \ref{prop:sr} which guarantees negative association, as follows. For any sequence of thresholds $(t_{i,j,k})$, 
the probabilities of different subgroups $G_{i,j,k}$ receiving intersectional unfairness in the sense of $s_{i,j,k} \leq t_{i,j,k}$, can be bounded as well as if the $X_r$'s had been independent: 
\begin{itemize}
    \item $\forall S \subseteq \left([\alpha] \times [\beta] \times [\gamma]\right), ~\Pr[\bigwedge_{(i,j,k) \in S} (s_{i,j,k} \leq t_{i,j,k})] \leq \prod_{(i,j,k) \in S} \Pr[s_{i,j,k} \leq t_{i,j,k}]$; and, more generally, 
    \item in disease-spread and meme-spread-like contexts, we are often interested in monotone non-linear functions that model phase transitions: e.g., if a heavily-interacting subgroup receives fewer than a threshold of vaccines, an explosive epidemic could happen within that subgroup---with monotone behavior away from this threshold. Given any sequence of monotone functions (all increasing or all decreasing) $(f_{i,j,k})$, we have
    \[ \forall S \subseteq \left([\alpha] \times [\beta] \times [\gamma]\right), ~\E\left[\prod_{(i,j,k) \in S} f_{i,j,k}(s_{i,j,k})\right] \leq \prod_{(i,j,k) \in S} \E\left[f_{i,j,k}(s_{i,j,k})\right]. \] 
    As is well-known, letting $\chi(\cdot)$ be the indicator function,
    this implies that $\sum_{(i,j,k) \in S} \chi(s_{i,j,k} \leq t_{i,j,k})$ has Chernoff-like concentration around its mean; in particular, while some subgroups in $S$ may face intersectional unfairness (that needs to be separately analyzed by inspecting $x$), it is unlikely that many do. 
\end{itemize}

We next present an application to diversity in search results. In addition to the classical interest in this problem in information retrieval \cite{boyce-IR}, there has been much recent work in 
search-result diversification, motivated, e.g., by the fact that the document-set retrieved should account for the interests of the
user population \cite{DBLP:journals/sigir/CarbinellG17,DBLP:conf/sigir/ClarkeKCVABM08}. By developing a model of knowledge about the topics the query or the documents may refer to, the work of 
\cite{DBLP:conf/wsdm/AgrawalGHI09} presents an $(1 - 1/e)$--approximation for their objective of maximizing average-user satisfaction with the search results. Given a search query $q$, a corpus of $N$ documents $\mathcal{D}$, and a bound $k$ on the number of documents to retrieve from $\mathcal{D}$ for $q$, a monotone submodular function $f: 2^{\mathcal{D}} \rightarrow [0,1]$ is developed in \cite{DBLP:conf/wsdm/AgrawalGHI09}, where 
$f(T)$ is the (approximate) mix of relevance and diversity if the set $T$ with $|T| \leq k$ is then returned to the user. The problem is thus to approximately maximize $f(T)$ subject to $|T| \leq k$. Consider the setting where $\mathcal{D}$ is very large, we only get streaming access to it, and where, after encountering document $d \in \mathcal{D}$, we get ML advice on the probability $x_d$ with which to choose $d$. \Cref{alg:rounding} is tailor-made for this, as it needs only $O(\log N)$ space. Furthermore, by the already-known \Cref{thm:stochastic-order}, we obtain that our final expected value $\E[f(X_1, X_2, \ldots, X_N)]$ is at least as large as if we had instead chosen the $X_i$'s independently with marginals $x_i$. 
Moreover, by \Cref{submod-concentration}, the  subdmodular objective value obtained exhibits strong lower tail bounds (i.e., is unlikely to fall much below its expectation).

	\section{Lower Bounds} \label{sec:lower-bound}
In this section we present some impossibility results for ODRSes.

A simple example in \cite{devanur2013randomized} rules out a rounding ratio of one, or even just greater than $\frac{7}{8} = 0.875$. In this section we improve this bound to $(2\sqrt{2}-2) \approx 0.828$ with a similarly simple example. 

Before providing our lower bound, we provide an even simpler example ruling out a rounding ratio of one. This example highlights the need of ODRSes with high rounding ratio to create some form of negative correlation between all subsets of offline nodes, motivating our lower bound example. (We are also hopeful that this insight may inform future ODRSes.)

\paragraph{Example.} 
Suppose an ODRS $\calA$ with rounding ratio of one exists, and apply it to the following input on $n=3$ offline nodes and 4 online nodes: For $k=1,2,3$, online node $k$ only neighbors offline node $k$, and $x_{kk}=1/2$. So, each offline node's matched status so far is a $\Ber(1/2)$ random variable. 
Finally, online node $t=4$ arrives, neighboring some two offline nodes, $i,j$, and $x_{it}=x_{jt}=1/2$. The above is clearly a feasible fractional matching, but since we assumed that $\calA$ has a rounding ratio of one, it must match $t$ with probability one. Since $t$ cannot be matched if both $i$ are $j$ previously matched, and both are matched with probability $1/2$ before time $t$, this requires $i$'s and $j$'s matched status before time $t$ to be perfectly negatively correlated. That is, $i$ is unmatched iff $j$ is matched, and vice versa. Since $i,j$ could be any two offline nodes, this requires perfect negative correlation between any two of these three variables, which is impossible.\footnote{Perfect negative pairwise correlation between three binary variables $X_1,X_2,X_3\in \{0,1\}$ can be stated succinctly as $X_1+X_2 = X_1+X_3 = X_2+X_3=1$, but this linear system only has a \emph{fractional} solution $X_1=X_2=X_3=1/2$.} Thus, we reach a contradiction, and an ODRS with rounding ratio of one is impossible.

To generalize the above and obtain a concrete numeric bound, we need the following fact, whereby (near-)positive pairwise correlation is unavoidable in a large set of Bernoulli random variables. (See \Cref{almost-positive-cylinder-dependence} for a potentially useful generalization to $k$-wise correlations.)
\begin{fact}\label{pairwise-positive-correlation}
	Let $Y_i\sim Ber(p)$ for $i=1,\dots,n$ be (possibly dependent) Bernoulli variables. Then, $$\max_{i,j}\Cov(Y_i,Y_j)\geq -2p/(n-1).$$
\end{fact}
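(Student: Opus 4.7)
The plan is to use the non-negativity of the variance of the sum $S=\sum_{i=1}^n Y_i$, which directly controls the sum (and hence the average) of the pairwise covariances.

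Concretely, I would begin by expanding
\[
0 \le \Var(S) = \sum_{i=1}^n \Var(Y_i) + 2\sum_{i<j}\Cov(Y_i,Y_j) = np(1-p) + 2\sum_{i<j}\Cov(Y_i,Y_j),
\]
so that
\[
\sum_{i<j} \Cov(Y_i,Y_j) \;\ge\; -\frac{np(1-p)}{2}.
\]
Averaging over the $\binom{n}{2}$ unordered pairs gives
\[
\frac{1}{\binom{n}{2}}\sum_{i<j}\Cov(Y_i,Y_j) \;\ge\; -\frac{p(1-p)}{n-1},
\]
and since the maximum of a collection of numbers is at least their average,
\[
\max_{i\ne j}\Cov(Y_i,Y_j) \;\ge\; -\frac{p(1-p)}{n-1} \;\ge\; -\frac{p}{n-1} \;\ge\; -\frac{2p}{n-1},
\]
which yields the claimed bound (and is in fact slightly stronger, as expected for such a simple second-moment argument).

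There is no real obstacle here; the only thing to watch is to avoid double counting when passing between ordered and unordered pairs, and to note that the bound is vacuous/trivial for $n=1$, so we implicitly use $n\ge 2$. I would present this as a one-line proof: nonnegativity of $\Var(\sum_i Y_i)$ combined with $\Var(Y_i)=p(1-p)\le p$ and an averaging argument over pairs.
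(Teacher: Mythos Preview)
Your proof is correct and follows essentially the same approach as the paper: non-negativity of $\Var\bigl(\sum_i Y_i\bigr)$ combined with the bound $\Var(Y_i)\le p$ and an averaging over pairs. Your version is in fact slightly tighter (yielding $-p(1-p)/(n-1)$ before relaxing), but the method is identical.
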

\begin{proof}
	This bound follows from non-negativity of variance and the following, after rearranging.
	\begin{align*}
		0 & \leq \Var\left(\sum_i Y_i\right) = \sum_{i} \Var(Y_i) + \sum_{i,j}\Cov(Y_i,Y_j) \leq n\cdot p + 
		{n \choose 2}\max_{ij}\Cov(Y_i,Y_j). \qedhere
	\end{align*}
\end{proof}

\begin{lem}\label{lem:LB}
	Any online rounding algorithm for bipartite fractional matchings has some fractional matching $\mathbf{x}=\frac{1}{2}\cdot \mathbf{1}^E$ on graph $G=(V,E)$ such that for some edge $e\in E$, the output matching $\calM$ satisfies
	$$\Pr[e\in \calM] \leq (2\sqrt{2}-2)\cdot x_e  \approx 0.828 \cdot x_e.$$
\end{lem}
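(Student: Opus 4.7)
The plan is to generalize the 3-offline-node vignette preceding the lemma, letting $n\to\infty$ and quantifying the constraint via a second-moment argument closely related to \Cref{pairwise-positive-correlation}. The hard instance consists of $n$ ``phase-1'' arrivals---for each $k=1,\dots,n$, the $k$-th online node has a single edge of fraction $\tfrac12$ to a distinct offline node $v_k$---followed by a ``phase-2'' arrival $t$ with edges of fraction $\tfrac12$ to two adversarially chosen offline nodes $v_i,v_j$. This is a valid fractional matching of the required form $\mathbf{x}=\tfrac12\cdot\mathbf{1}^E$, and it will suffice to show that, for any ODRS with rounding ratio $\alpha$ and any $n$, the constraint imposed by a phase-2 edge forces $\alpha(1-1/n)+\alpha^2/4\le 1$; sending $n\to\infty$ then yields $\alpha\le 2\sqrt{2}-2$ by the quadratic root.

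Proceeding by contradiction, suppose the ODRS has rounding ratio $\alpha$, and write $Y_k:=\mathds{1}[v_k \text{ matched during phase 1}]$ with $p_k:=\E[Y_k]$. The rounding guarantee on each phase-1 edge (which has value $\tfrac12$) gives $p_k\ge \alpha/2$ for every $k$. A second-moment / averaging calculation then yields a pair $(i,j)$ with
\[
\E[Y_iY_j] \;\geq\; \frac{1}{n(n-1)}\Bigl(\E\bigl[(\textstyle\sum_k Y_k)^2\bigr] - \sum_k p_k\Bigr) \;\geq\; \frac{1}{n(n-1)}\Bigl(\tfrac{n^2\alpha^2}{4} - n\Bigr) \;=\; \frac{n\alpha^2/4 - 1}{n-1}.
\]
The adversary now reveals $t$ as incident to precisely this pair. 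Since $t$ matches at most one neighbor and $(i,t)\in\calM$ requires $v_i$ to be unmatched at time $t$ (symmetrically for $j$), summing the rounding guarantees for the two phase-2 edges yields
\[
\alpha \;\leq\; \Pr[(i,t)\in\calM]+\Pr[(j,t)\in\calM] \;\leq\; \Pr[Y_i=0\,\vee\,Y_j=0] \;=\; 1-\E[Y_iY_j] \;\leq\; 1 - \frac{n\alpha^2/4 - 1}{n-1}.
\]
Rearranging gives $\alpha(1-1/n)+\alpha^2/4 \le 1$; sending $n\to\infty$ produces $\alpha+\alpha^2/4 \le 1$, i.e.\ $\alpha\le 2\sqrt{2}-2$, a contradiction whenever $\alpha > 2\sqrt{2}-2$.

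The main subtlety I expect is that the covariance-based bound only guarantees the \emph{existence} of a pair $(i,j)$ with the required correlation, so the adversary must be allowed to defer declaring $t$'s neighborhood until phase 1 is complete. This is harmless because an ODRS must satisfy its rounding guarantee against every online continuation of every prefix of arrivals, and we need only exhibit a single offending edge in a single (finite) instance. Minor verifications remain routine: checking the identity $\alpha+\alpha^2/4=1 \iff \alpha = 2\sqrt{2}-2$ (the LHS is strictly increasing on $[0,1]$), and converting ``$\alpha\le 2\sqrt{2}-2$ for every ODRS'' into the precise ``some edge satisfies $\Pr[e\in\calM]\le (2\sqrt{2}-2)\cdot x_e$'' form of the lemma via the contrapositive.
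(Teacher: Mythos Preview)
Your proof is correct and rests on the same second-moment/covariance idea as the paper (essentially \Cref{pairwise-positive-correlation}): among many Bernoulli indicators, some pair must be nearly positively correlated, and the adversary places the final online node on that pair. The difference is in the hard instance. The paper has each phase-1 online node neighboring \emph{two} distinct offline nodes, works with indicators $Y_t$ for the \emph{online} nodes being matched, couples these down to $\mathrm{Ber}(p)$ variables, applies \Cref{pairwise-positive-correlation} to find a correlated pair $(t,t')$, and then averages over the four offline nodes in $N(t)\times N(t')$ to extract the offending pair $(i,j)$. Your instance is strictly simpler: each phase-1 online node has a \emph{single} edge, so the offline and online matched indicators coincide, and you apply the second-moment bound directly to the offline indicators $\{Y_k\}$ without any coupling or secondary averaging. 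This buys a cleaner one-line derivation of $\E[Y_iY_j]\ge (n\alpha^2/4-1)/(n-1)$ and the same limiting inequality $\alpha+\alpha^2/4\le 1$. Both proofs only give $\Pr[e\in\calM]\le (2\sqrt{2}-2+o(1))\cdot x_e$ on any finite instance, so (as you note and as the paper also does) the literal lemma statement should be read as ruling out rounding ratio strictly above $2\sqrt{2}-2$.
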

\begin{proof}
	Consider an input with $x_e=1/2$ for each edge, with the first $n$ online nodes neighboring two distinct offline nodes, with one last online node neighboring two (judiciously chosen) offline nodes, from two prior online nodes' neighborhoods.
	We wish to show that for some instantiation of this family of $\mathbf{x}$ and $G$, some online node (and hence some edge) is matched with low probability. Specifically, letting $p=2\sqrt{2}-2$, we wish to show that for some such $\mathbf{x}$ and $G$, 
	\begin{align}\label{eqn:unmatched-node}
		\exists t\in [n+1] \textrm{ such that } \Pr[t\in V(\calM)]\leq p+\frac{p}{2(n-1)},
	\end{align}	
	Since each online node $t$ has $\sum_i x_{it}=1$, this implies that some edge $e$ is matched with probability at most $(2\sqrt{2}-2+\frac{p}{2(n-1)})\cdot x_e$, by linearity of expectation.
	Taking $n$ to be sufficiently large rules out any $(2\sqrt{2}-2+\epsilon)$-approximate rounding. We turn to prove \Cref{eqn:unmatched-node}.
	
	If any of the first online nodes is matched with probability less than $p$, we are done. Otherwise, let $Y_t = \mathds{1}[t\in V(\calM)]$ be indicators for online node $t$ being matched. We couple these with variables $Z_t \geq \Ber(p)$ such that $Y_t\geq Z_t$ always. By \Cref{pairwise-positive-correlation}, there exist two online nodes $t\neq t'$ with $\Cov(Z_t,Z_{t'})\geq -2p/(n-1)$.
	Let $N(t)=\{i_1,i_2\}$ and $N(t')=\{i_3,i_4\}$ be the neighborhoods of $t$ and $t'$, respectively. By a simple averaging argument, some pair of nodes $(i,j)\in N(t)\times N(t')$ are both matched before the online last arrival with probability at least 
	$$\frac{\Pr[Y_t,Y_{t'}]}{4} \geq \frac{\Pr[Z_t,Z_{t'}]}{4} \geq  \frac{\Pr[Z_t]\cdot \Pr[Z_{t'}]-2p/(n-1)}{4} \geq \frac{p^2 -2p/(n-1)}{4}.$$ 
	So, if the last online node neighbors $\{i,j\}$, it is matched with probability at most $1-\frac{p^2 -2p/(n-1)}{4}$.
	However, as $p=2\sqrt{2}-2$ is a root of $1-y-\frac{y^2}{4}$, this probability is precisely $1-\frac{p^2}{4} + \frac{p}{2(n-1)} = p + \frac{p}{2(n-1)}$, as desired.
\end{proof}

\paragraph{Inevitable near-positive cylinder dependence.}
A useful extension of \Cref{pairwise-positive-correlation} of possible independent interest argues that for any (sufficiently large) subset of Bernoulli variables, some subset of these variables must be (nearly) positively cylinder dependent. In \Cref{appendix:lower-bounds} we prove this Ramsey-theoretic lemma.

\begin{restatable}{lem}{almostpositive}\label{almost-positive-cylinder-dependence}
	Let $r\in \mathbb{N}$ and $0\leq \epsilon\leq p \leq 1$. Then, there exists some $n = n_r(p,\epsilon)$ such that any Bernoulli variables $Y_1,\dots,Y_n\sim \Ber(p)$ contains some subset $I\subseteq [n]$ of size $|I|=2^r$ such that 
	$$\E\left[\prod_{i} Y_i\right] \geq \prod_i \E[Y_i] - \epsilon.$$ 
\end{restatable}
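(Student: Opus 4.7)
I would sidestep induction on $r$ and instead give a direct averaging argument whose core tool is Jensen's inequality applied to $X^k$ for $X = \sum_{i=1}^n Y_i$ and $k = 2^r$. Since $f(x)=x^k$ is convex on $[0,\infty)$, Jensen gives
\[
\E[X^k] \;\geq\; (\E[X])^k \;=\; (np)^k.
\]
This should be the workhorse: it lets us pass from the single moment $\E[Y_i]=p$ to an averaged lower bound on products of many $Y_i$'s.

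The next step is to expand $X^k = \sum_{(i_1,\dots,i_k)\in[n]^k} Y_{i_1}\cdots Y_{i_k}$ and exploit $Y_i^2=Y_i$ (since the variables are $\{0,1\}$-valued) to rewrite $Y_{i_1}\cdots Y_{i_k}=\prod_{i\in S} Y_i$, where $S$ is the set of distinct indices in the tuple. Grouping by $S$,
\[
\E[X^k] \;=\; \sum_{s=0}^k \mathrm{Surj}(k,s)\!\!\sum_{|S|=s}\!\!\E\Bigl[\prod_{i\in S}Y_i\Bigr]
\;\leq\; k!\!\!\sum_{|S|=k}\!\!\E\Bigl[\prod_{i\in S}Y_i\Bigr] \;+\; \underbrace{\sum_{s=0}^{k-1} s^k\binom{n}{s}}_{=\,O_k(n^{k-1})},
\]
using the trivial bounds $\mathrm{Surj}(k,s)\leq s^k$ and $\E[\prod_{i\in S}Y_i]\leq 1$ on the lower-order terms.

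Combining with Jensen and dividing by $k!\binom{n}{k}$, the average $\bar q := \binom{n}{k}^{-1}\sum_{|S|=k}\E[\prod_{i\in S}Y_i]$ satisfies
\[
\bar q \;\geq\; \frac{(np)^k}{k!\binom{n}{k}} \;-\; \frac{O_k(n^{k-1})}{k!\binom{n}{k}} \;=\; p^k \;-\; O_k(1/n),
\]
where I used $k!\binom{n}{k}=n(n{-}1)\cdots(n{-}k{+}1)=n^k(1-o_k(1))$. Picking $n = n_r(p,\epsilon)$ large enough as a function of $k=2^r$ and $\epsilon$ to make the error term at most $\epsilon$, we conclude $\bar q \geq p^{2^r}-\epsilon$. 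Since the average over size-$2^r$ subsets $S$ is at least $p^{2^r}-\epsilon$, at least one such $I$ achieves $\E[\prod_{i\in I}Y_i]\geq\prod_{i\in I}\E[Y_i]-\epsilon$, giving the lemma.

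The only real obstacle is bookkeeping the lower-order terms cleanly; the substantive content is a one-line appeal to Jensen. Note that this proof does not even use the hypothesis $\epsilon\leq p$---I'd guess that assumption is there only to make the inequality nontrivial (otherwise $p^{2^r}-\epsilon$ may be negative and the bound is vacuous).
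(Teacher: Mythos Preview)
Your argument is correct. The Jensen step $\E[X^k]\geq (np)^k$ together with the expansion
$\E[X^k]=\sum_{s=1}^{k}\mathrm{Surj}(k,s)\sum_{|S|=s}\E[\prod_{i\in S}Y_i]$ and the crude bound on the $s\leq k-1$ terms does yield $\bar q\geq p^k - O_k(1/n)$, and averaging then produces the desired subset. Two cosmetic remarks: the ``$=$'' before $p^k-O_k(1/n)$ should be ``$\geq$'' (the ratio $n^k/(n)_k$ is $\geq 1$, which only helps you), and the $s=0$ term is harmless since $\mathrm{Surj}(k,0)=0$.

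The paper proceeds quite differently: it inducts on $r$, using the variance identity of Fact~\ref{pairwise-positive-correlation} as the base case, and in the inductive step extracts many near-uncorrelated \emph{pairs} $\{i,j\}$, forms the products $Z_s=Y_iY_j$, couples each $Z_s$ below by a $\mathrm{Ber}(p^2-\eps/2^{2^r})$ variable, and applies the hypothesis at level $r-1$ to the coupled variables. Your route is considerably more direct and elementary---no induction, no coupling---and as you note it works for \emph{every} $k$, not only $k=2^r$; the restriction to powers of two in the paper is an artifact of the doubling step. What the paper's argument buys is a much smaller $n$: its recurrence gives $n_r(p,\eps)=O(2^{2^r}p/\eps)$, whereas carrying constants through your bound yields $n$ on the order of $(2k)^{k+1}/\eps\approx 2^{(r+1)2^r}/\eps$, which is doubly-exponentially worse in $r$. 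For the purposes of the lemma as stated (mere existence of $n$) this is immaterial, but it is worth noting if one cares about quantitative dependence. Your observation that $\eps\leq p$ is unused is also correct; the paper likewise treats the case $p^{2^r}\leq\eps$ as trivially vacuous.
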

    \section{Summary and Open Questions}
In this work we provided a methodical study of online dependent rounding schemes (ODRSes) for bipartite $b$-matching. We obtained optimal rounding ratios with strong concentration for star graphs (uniform-matroid rounding) and improved bounds beyond a ratio of $1-\frac{1}{e}$ for ($b$-)matchings, including the first results for $b$-matchings not obtained via OCRSes. (Consequently, ours is the only ratio beyond $1/2$ known for this problem.)
We furthermore provided a number of applications of our ODRSes and their algorithmic approach.
Beyond the natural question of improving our rounding ratios for $b$-matchings (and the approximation/competitive ratios of their derived applications), our work raises a number of directions for future research.

\paragraph{Online applications.} We provided a number of online applications of our ODRSes.
What further applications do these schemes have? The bipartite setting seems particularly relevant for online scheduling problems, with machines and jobs represented by offline and online nodes, respectively.
For minimization problems, it may prove useful to observe that our ODRSes for $b$-matching have modest two-sided error: edges $(i,t)$ are matched with probability at most the probability that $i$ bids for $t$, namely $\hat{x}_{i,t}\leq (1+\delta)x_{i,t} \approx 1.064\cdot x_{i,t}$. 
Moreover, these latter events satisfy strong upper-tail bounds, by our ODRS' strong negative correlation properties (see Property \ref{prop:sr} and \Cref{NA-S_it}).

\paragraph{Streaming applications.}
We note that our ODRSes are not only online algorithms, but moreover are streaming (for uniform matroids) or semi-streaming algorithms (for arbitrary $b$-matchings). Indeed, they require us to only remember $O(\log n)$-bit partial sums and a single number per offline node. 
Does this property have further applications in streaming or graph-analytics contexts?

\paragraph{Beyond matchings.}
Finally, we recall that oblivious online contention resolution schemes (OCRSes) yield ODRSes, though this connection does not yield optimal rounding ratios.
Recent years have seen an explosion of work on OCRSes for increasingly-rich families of arrival models and combinatorial constraints. These have been fueled in large part by connections to various other online and economic applications, most notably the prophet-inequality problem under more involved combinatorial constraints (see discussion in the influential work of \citet{kleinberg2012matroid}).
Can a similarly rich theory of online dependent rounding more broadly be developed, capturing other combinatorial constraints beyond matchings?

\color{black}

	\subsection*{Acknowledgements.} 
    We thank Uri Feige for pointing out the connection of our work to offline CRSes, and thank Sahil Singla for discussions about (lack of prior) examples of black-box application of offline CRSes to online problems.
	We thank Manish Purohit for asking about the two-sided error of our ODRSes, which we anticipate may be of use in future applications for online minimization problems.
    Finally, we thank the anonymous reviewers for comments on presentation.
	
	\appendix
    \section*{APPENDIX}
    \section{Additional Preliminaries}\label{appendix:prelims}

In this section we provide omitted proofs and additional preliminaries not covered in \Cref{sec:prelims}.

\subsection{Contention Resolution Schemes}
We start with a brief, self-contained proof of \Cref{lem:CRS}, restated below for ease of reference.
\CRS*
\begin{proof}
    The lemma for product distributions follows from the work of \citet{feige2006allocation}. We focus on the more general case, addressed by \citet{bansal2021contention}, but proven here for completeness.

    Let $\alpha := \min_{S\subseteq [n]}\frac{\Pr[R\cap S \neq \emptyset]}{\sum_{i\in S} v_i}$.
    We prove the existence of conditional probabilities $p_{i,S}$ with $p_{i,S}=0$ if $i\not\in S$ and $\sum_i p_{i,S}=1$ for all $S$ such that setting $\Pr[O = \{i\} \mid R = S]=p_{i,S}$ yields $\Pr[O = \{i\}] = \sum_{S} \Pr[O = \{i\} \mid R = S]\cdot \Pr[R = S] \geq \alpha \cdot v_i$.
	We prove the existence of these probabilities using the max-flow/min-cut theorem applied to the following directed flow network that has a source vertex $src$, a sink vertex $sink$, and two other disjoint sets of vertices $A$ and $B$. Vertices on side $A$ correspond to subsets of elements in $[n]$, with an edge of capacity $\Pr[R=S]$ from $src$ to the node in $A$ representing set $S$. Vertices on side $B$ correspond to elements $i\in [n]$, with an edge of capacity $\alpha \cdot v_i$ leaving this vertex to $sink$. Infinite-capacity directed edges go from each vertex  in $A$ corresponding to some set $S$, to vertices in $B$ corresponding to each $i\in S$. Each vertex $i\in S$ in $B$, has a directed edge to $sink$ with capacity $\alpha\cdot v_i$. 
	This network is reminiscent of the network used to prove Hall's Theorem via the max-flow/min-cut theorem.
	Indeed, $\min_{S\subseteq [n]}\frac{\Pr[R\cap S \neq \emptyset]}{\sum_{i\in S} v_i}\geq \alpha$ is a (scaled) version of Hall's condition, and implies by the same argument the existence of a ``perfect'' flow $f$ of value $\alpha \cdot \sum_i v_i$, implying that each edge of the form $(i, sink)$ has its capacity $\alpha\cdot v_i$ saturated by this flow.
	This in turn implies the existence of the probabilities $p_{i,S}$ as desired, where if the flows through the edges  $(src,S)$ and $(S,i)$ are $f_{src, S}$ and $f_{S,i}$ respectively, then $p_{i,S} = f_{S,i}/\Pr[R = S]$ (if $\Pr[R = S] = 0$, we take arbitrary non-negative $p_{i,S}$ with $p_{i,S}=0$ if $i\not\in S$ and $\sum_{i: i \in S} p_{i,S}=1$). Thus, the above algorithm outputs a set $O\subseteq R$ of size at most one with each element belonging to $O$ with the requisite probability.
	The running time of the algorithm for general distributions follows by polytime solubility of the maximum flow problem.
\end{proof}

\subsection{Negative correlation properties}
Here we formalize the strong notion of negative correlation that our online level-set algorithm enjoys.

\begin{Def}
	A distribution $\mu:2^{[n]}\to \mathbb{R}_{\geq 0}$ is \emph{strongly Rayleigh} if its generating polynomial,
	$$g_\mu(z) = \sum_{S\subseteq [n]} \mu(S)\prod_{i\in S} z_i,$$
	is \emph{real stable}, i.e., it has no root $z\in \mathbb{C}^n$ satisfying $\Im(z_i)>0$ for all $i\in [n]$.
\end{Def}

As stated in \Cref{sec:prelims} and proven in \cite{borcea2009negative}, the strong Rayleigh property implies NA.

\begin{lem}
    If $(X_1,\dots,X_n)$ are strongly Rayleigh binary r.v.s, then they are NA.
\end{lem}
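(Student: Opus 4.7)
The plan is to follow Borcea--Brändén--Liggett and exploit closure properties of real stable polynomials, with three conceptual steps: (i) conditioning preserves SR, (ii) SR forces pairwise negative correlation, and (iii) these two ingredients together yield full NA.

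First I would show that the SR property is preserved under conditioning on any single coordinate. Two standard preservation theorems for real stable polynomials in several variables are that specialization $z_i \mapsto c$ at a real constant preserves stability, and that $\partial_{z_i}$ preserves stability (via Hurwitz's theorem on non-vanishing analytic limits). Applied to $g_\mu$, these imply that $g_\mu|_{z_i = 0}$ and $(\partial_{z_i} g_\mu)|_{z_i = 0}$ are still real stable; up to normalization, these are exactly the generating polynomials of $\mu$ conditioned on $X_i = 0$ and on $X_i = 1$. Iterating, every conditional of $\mu$ on any fixed assignment to any subset of coordinates is again SR. Specializing $z_i \mapsto 1$ (again real) similarly preserves stability, so every marginal is also SR.

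Second, I would extract pairwise negative correlation from stability. Freezing all coordinates other than $z_i, z_j$ to $1$ yields a multi-affine bivariate polynomial $h(z_i, z_j) = a + b z_i + c z_j + d z_i z_j$ with non-negative coefficients that is still real stable. A short direct calculation (substituting $z_i = z_j = s + it$ and tracking the imaginary part) shows that non-vanishing of $h$ on the open upper half-plane squared forces $ad \leq bc$. Translating back via $a + b = \Pr[X_j = 0]$, $a + c = \Pr[X_i = 0]$, etc., this is $\Pr[X_i = 1, X_j = 1] \leq \Pr[X_i = 1]\Pr[X_j = 1]$. Combined with step one, this gives \emph{conditional} pairwise negative correlation: $\mathrm{Cov}(X_i, X_j \mid X_S = x_S) \leq 0$ for every $i \neq j$, every $S \subseteq [n]\setminus\{i,j\}$, and every $x_S \in \{0,1\}^S$.

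The main obstacle is bootstrapping conditional pairwise NC into full NA on disjoint monotone functions. I would first reduce, by a layer-cake/threshold decomposition of non-decreasing functions into non-negative combinations of indicators of up-sets, the NA inequality for disjoint $I, J$ to the up-set inequality $\Pr[X_I \in U,\, X_J \in V] \leq \Pr[X_I \in U]\cdot \Pr[X_J \in V]$ for up-sets $U \subseteq \{0,1\}^I$ and $V \subseteq \{0,1\}^J$. I would then induct on $|I \cup J|$, conditioning on one coordinate $k$ and using that (a) the restriction of an up-set to a coordinate-slice is again an up-set, (b) the conditioned distribution is SR by step one, so the inductive hypothesis applies inside each slice. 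The delicate ingredient, and what the bulk of the BBL argument actually handles, is controlling the sign of the outer covariance $\mathrm{Cov}_{X_k}(\E[\mathds{1}[X_I \in U] \mid X_k],\, \E[\mathds{1}[X_J \in V] \mid X_k])$: conditional pairwise NC alone is not enough, and one needs the stochastic-monotonicity property of SR distributions, which in turn comes from the stronger polynomial inequality $g \cdot \partial_i \partial_j g \leq \partial_i g \cdot \partial_j g$ holding as polynomials (not merely at $z = \mathbf{1}$). Establishing this pointwise polynomial inequality from real stability, and then feeding it into the induction, is the technical heart of the implication SR $\Rightarrow$ NA.
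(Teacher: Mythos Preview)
The paper does not supply its own proof of this lemma: it simply states the result and attributes it to Borcea, Br\"and\'en, and Liggett \cite{borcea2009negative}. Your proposal is a correct high-level sketch of precisely that argument---closure of SR under conditioning and marginalization via real-stability preservers, the bivariate Rayleigh inequality $ad \leq bc$ for multi-affine stable $a + bz_i + cz_j + dz_iz_j$, and the Feder--Mihail-style induction (with the stochastic-monotonicity ingredient you flag) that upgrades conditional pairwise negative correlation to full negative association. In other words, you have reconstructed the proof the paper defers to the literature; there is no independent argument in the paper to compare against.
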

Moreover, since SRP is closed under conditioning \cite{borcea2009negative}, we find that SRP distributions are NA even after conditioning.

The strong Rayleigh property, central to the area of geometry of polynomials, has been highly influential in recent years.
To the best of our knowledge, aside from the classic balls-and-bins process, its variants and conditional counterparts (see \cite{branden2012negative}), our online level-set algorithm is the only online algorithm known to satisfy such a strong negative-correlation property. 
We believe that this property of our algorithm will find further applications outside of this work.

\paragraph{Submodular dominance.} As stated in \Cref{sec:prelims}, negative association implies stochastic dominance in the submodular order. To formally define the above, we briefly recall the definition of submodular functions, which are set functions capturing the notion of diminishing returns.

\begin{Def}
	A set function $f:2^{[n]} \to \mathbb{R}$ is \emph{submodular} if for all sets $A\subseteq B\subseteq [n]$ and element $x\in [n]\setminus B$, we have that
	$$f(A\cup \{x\}) - f(A) \geq f(B\cup \{x\}) - f(B).$$
\end{Def}

The following submodular dominance result of \citet{christofides2004connection}, later generalized by Singla and Qiu \cite{qiu2022submodular} implies that our online level-set algorithm not only preserves weighted objectives losslessly, but also preserves submodular objectives. (See \Cref{appendix:level-sets}.)

\begin{thm}\label{thm:stochastic-order}
	Let $X_1,\dots,X_n$ be binary NA random variables and let $X^*_1,\dots,X^*_n$ be independent random variables that are place-wise equal to the $X_1,\dots,X_n$ respectively in distribution.
	Then, for any monotone submodular function $f$, 
	$$\E[f(X_1,\dots,X_n)] \geq \E[f(X^*_1,\dots,X^*_n)].$$
\end{thm}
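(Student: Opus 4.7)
The plan is to reduce to a two-variable argument via submodularity's multilinear structure, and then extend to arbitrary $n$ by a decorrelation induction.

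For the base case $n=2$, any function on $\{0,1\}^2$ admits the multilinear form
\[ f(x_1,x_2) = f(0,0) + \alpha_1 x_1 + \alpha_2 x_2 + \gamma\, x_1 x_2, \]
where $\alpha_i = f(e_i) - f(0,0)$ and $\gamma = f(1,1) + f(0,0) - f(1,0) - f(0,1)$. Submodularity of $f$ is precisely the statement $\gamma \leq 0$. Since $X_i$ and $X_i^*$ share marginals $p_i$, linearity of expectation yields
\[ \E[f(X_1,X_2)] - \E[f(X_1^*,X_2^*)] = \gamma \cdot \bigl(\E[X_1 X_2] - p_1 p_2\bigr) \geq 0, \]
where the inequality uses $\gamma \leq 0$ together with $\E[X_1 X_2] \leq p_1 p_2$, the latter being Definition~\ref{def1} applied with $I=\{1\}$, $J=\{2\}$ and both component functions equal to the identity.

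For general $n \geq 3$, I would interpolate between $X$ and $X^*$ via a sequence $Z^{(0)} = X, Z^{(1)}, \ldots, Z^{(n)} = X^*$, where $Z^{(k)}$ has its first $k$ coordinates replaced by independent Bernoullis with the correct marginals, joined as a product with the remaining coordinates whose joint law is the marginal of $X$ on indices $k+1,\ldots,n$ (hence still NA, by closure under projection and under products with independent NA blocks, Lemma~\ref{na-closure}). Telescoping reduces the task to proving $\E[f(Z^{(k-1)})] \geq \E[f(Z^{(k)})]$ for each $k$, which in turn should follow from the base-case multilinear identity applied in the direction of coordinate $k$, after conditioning on the independent block and on an auxiliary randomization that aggregates the remaining NA coordinates into a single binary statistic.

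The main obstacle is that conditioning an NA vector on a realization of some of its coordinates does not in general preserve NA of the rest, so the naive ``peel one coordinate at a time'' strategy does not plug directly into the base case inside an inner conditional expectation. The classical resolution (Christofides 2004, and Qiu--Singla 2022, whose statement we quote) is a global rather than conditional accounting: expand $\E[f(X)] - \E[f(X^*)]$ into a sum of pairwise cross-terms whose coefficients $\gamma_{ij}(x_{-\{i,j\}}) \leq 0$ arise from pairwise submodularity of the restriction of $f$ to coordinates $i,j$, and then apply NA \emph{unconditionally} through closure under non-overlapping monotone compositions (Lemma~\ref{na-closure}) to control each cross-term. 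Monotonicity of $f$ (beyond submodularity) enters precisely in order to align the signs so that the cross-pair interactions aggregate to a non-negative total; orchestrating this sign bookkeeping across all $\binom{n}{2}$ pairs is the bulk of the technical work.
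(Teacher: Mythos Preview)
The paper does not prove this theorem; it quotes it as a known result of Christofides--Vaggelatou (2004), later generalized by Qiu--Singla (2022), so there is no in-paper argument to compare against. That said, your own proof sketch can be sharpened substantially, and in doing so you will see that the ``main obstacle'' you flag is not actually an obstacle.

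Your base case is clean and correct. For the hybrid argument, you propose to compare $Z^{(k-1)}$ and $Z^{(k)}$ and then worry that conditioning on some NA coordinates destroys NA of the rest. But you never need to condition on NA coordinates. At step $k$, condition only on the \emph{independent} block $a=(Z_1,\ldots,Z_{k-1})$; under $Z^{(k-1)}$ the remaining vector $(X_k,X_{k+1},\ldots,X_n)$ is NA (it is a marginal of $X$), while under $Z^{(k)}$ coordinate $k$ is independent of the rest. Writing $f(a,x_k,y)=f(a,0,y)+x_k\cdot g(y)$ with $g(y)=f(a,1,y)-f(a,0,y)$, submodularity says $g$ is coordinatewise non-increasing, and the one-step difference equals $\Cov\bigl(X_k,\,g(X_{k+1},\ldots,X_n)\bigr)$. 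This covariance is non-negative directly from Definition~\ref{def1} applied to $X_k$ (increasing) and $-g$ (increasing) on disjoint index sets---no conditioning on NA coordinates, no ``aggregation into a single binary statistic,'' and no pairwise bookkeeping over $\binom{n}{2}$ terms is required.

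This also shows that your final remark about monotonicity is off: the argument above uses only submodularity (to make $g$ non-increasing) and NA; monotonicity of $f$ is not needed for \Cref{thm:stochastic-order} as stated, and indeed the cited sources prove it without that hypothesis.
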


Moreover, the following result of Duppala et al.~\cite{duppala2023concentration} even implies concentration of the above.

\begin{thm}\label{submod-concentration}
    	Let $X_1,\dots,X_n$ be binary NA random variables and let $X^*_1,\dots,X^*_n$ be independent random variables that are place-wise equal to the $X_1,\dots,X_n$ respectively in distribution.
     Then, for any monotone submodular function $f$, with $\mu:=\E[f(X^*_1,\dots,X^*_n)]$ and real value $\delta>0$, 
$$\Pr[f(X_1,\dots,X_n)\leq (1-\delta)\cdot \mu]\leq \exp\left(-\frac{\delta^2\mu}{2}\right).$$
\end{thm}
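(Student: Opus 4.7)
The plan is to reduce the lower-tail bound under the NA assumption to the analogous bound for \emph{independent} variables via a moment generating function (MGF) comparison, using a concave-transform strengthening of the submodular stochastic dominance in Theorem~\ref{thm:stochastic-order}. The independent case is then the classical multiplicative Chernoff-type bound for submodular functions.

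First, for any $t>0$, the standard Chernoff--Markov argument gives
\[
\Pr[f(X)\leq (1-\delta)\mu] \leq e^{t(1-\delta)\mu}\cdot \E[e^{-tf(X)}].
\]
The key observation is that $\phi(z):=-e^{-tz}$ is nondecreasing and concave, and that $\phi\circ f$ is monotone submodular whenever $\phi$ is monotone concave and $f$ is monotone submodular. This follows by writing $\phi(f(A\cup\{e\}))-\phi(f(A))=\int_{f(A)}^{f(A\cup\{e\})}\phi'(s)\,ds$ and similarly for $B\supseteq A$, and using that $\phi'$ is nonnegative and nonincreasing while $f(A)\leq f(B)$ and $f(A\cup\{e\})-f(A)\geq f(B\cup\{e\})-f(B)$. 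Applying the concave-composition strengthening of Theorem~\ref{thm:stochastic-order} (due to Qiu--Singla), I obtain $\E[\phi(f(X))]\geq \E[\phi(f(X^*))]$, which rearranges to
\[
\E[e^{-tf(X)}]\leq \E[e^{-tf(X^*)}].
\]

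With the MGF transferred to the independent setting, it remains to establish the classical bound $\E[e^{-tf(X^*)}]\leq \exp(\mu(e^{-t}-1))$ (under a $1$-Lipschitz normalization of $f$, which we may assume by rescaling). Plugging this into the Markov step and optimizing at $t=\ln\tfrac{1}{1-\delta}$ yields the standard multiplicative Chernoff form $\bigl(e^{-\delta}/(1-\delta)^{1-\delta}\bigr)^{\mu}\leq \exp(-\delta^2\mu/2)$, as desired.

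The main obstacle I anticipate is the independent-case MGF inequality, since unlike for linear functions one cannot factor $\E[e^{-tf(X^*)}]$ as a product of one-coordinate MGFs. I would either invoke it directly from the multilinear-extension-based concentration results of Chekuri--Vondr\'ak--Zenklusen, or derive it by induction on coordinates: reveal one $X_i^*$ at a time and use the submodular marginal inequality $f(S\cup\{i\})-f(S)\geq f(T\cup\{i\})-f(T)$ for $S\subseteq T$ to show that each conditional MGF is dominated by that of a Bernoulli with mean $\E[X_i^*]$ acting additively on $f$, which telescopes to the sum-case bound.
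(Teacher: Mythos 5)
The paper does not prove this theorem; it imports it from Duppala et al.~\cite{duppala2023concentration}, and your architecture --- transfer the lower-tail MGF from the NA vector to the independent one via submodular dominance applied to $z\mapsto -e^{-tz}$ composed with $f$, then invoke the independent-case bound --- is essentially the argument in that source. Your composition step is fine (and in fact you need no ``strengthening'' of \Cref{thm:stochastic-order}: since $\phi\circ f$ is itself monotone submodular, the theorem as stated already gives $\E[e^{-tf(X)}]\leq \E[e^{-tf(X^*)}]$), and the final Chernoff optimization is standard. However, there are two genuine soft spots. First, the normalization: the bound $\exp(-\delta^2\mu/2)$ is \emph{not} invariant under rescaling $f$, so ``we may assume $1$-Lipschitzness by rescaling'' is not a valid reduction --- if $f$ has marginal gains bounded by $M$, rescaling only yields $\exp(-\delta^2\mu/(2M))$. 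The hypothesis that $f$ has marginal values in $[0,1]$ is genuinely needed (take $f=M\sum_i X_i$ with $M$ large to see the statement fail without it); it is implicit in \cite{duppala2023concentration} and is silently omitted in the statement above, so you must assume it, not manufacture it.

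Second, the independent-case inequality $\E[e^{-tf(X^*)}]\leq \exp(\mu(e^{-t}-1))$ cannot be obtained by the coordinate-revealing induction you sketch. Conditioning on the first $n-1$ coordinates and applying the inductive bound to $f(\cdot)$ and $f(\cdot\cup\{n\})$ leaves you needing a ``mixture of exponentials $\leq$ exponential of the mixture'' step, which is exactly the wrong direction of Jensen; and replacing the random conditional marginal $f(S\cup\{i\})-f(S)$ by a prefix-independent quantity via submodularity forces you to the worst-case marginal $f([n]\setminus\{i\}\cup\{i\})-f([n]\setminus\{i\})$, which can be $0$ even when $\mu$ is large (e.g.\ coverage functions), so the exponent degrades and does not telescope to $\mu$. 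The correct way to close this step is the one your first alternative gestures at, stated precisely: a monotone submodular $f$ with marginals in $[0,1]$ is self-bounding, and the entropy-method bound of Boucheron--Lugosi--Massart for self-bounding functions gives $\log\E[e^{\lambda(Z-\E Z)}]\leq \E[Z](e^{\lambda}-\lambda-1)$ for $\lambda\leq 0$, which is exactly the MGF form you need; this is also what underlies the lower-tail bounds of Chekuri--Vondr\'ak--Zenklusen \cite{chekuri2010dependent} and Vondr\'ak's concentration note. Note that citing their \emph{tail bound} alone does not suffice, since your NA-to-independent transfer happens at the level of the MGF, so you must quote (or reprove) the MGF/entropy-method form.
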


\subsection{Odds and Ends}
We will also make use of the following direct corollary of convexity.
\begin{claim}\label{convexity}
	Let $f$ be a non-negative concave function. Then, for any $0\leq t\leq \alpha$, 
	$$\frac{f(t)}{t}\geq \frac{f(\alpha)}{\alpha}.$$
\end{claim}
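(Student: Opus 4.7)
The plan is to view $t$ as a convex combination of the endpoints $0$ and $\alpha$ and then invoke the definition of concavity directly, together with non-negativity of $f$ at the left endpoint. Concretely, assuming for now that $0 < t \leq \alpha$, I would write
\[
t \;=\; \Bigl(1-\tfrac{t}{\alpha}\Bigr)\cdot 0 \;+\; \tfrac{t}{\alpha}\cdot \alpha,
\]
and apply concavity of $f$ on the interval containing $0$, $t$, and $\alpha$ to deduce
\[
f(t) \;\geq\; \Bigl(1-\tfrac{t}{\alpha}\Bigr) f(0) \;+\; \tfrac{t}{\alpha}\, f(\alpha).
\]

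Next I would use the hypothesis that $f$ is non-negative, in particular $f(0)\geq 0$, to drop the first term on the right, yielding $f(t) \geq (t/\alpha)\, f(\alpha)$. Dividing both sides by $t>0$ gives the desired inequality $f(t)/t \geq f(\alpha)/\alpha$. The edge case $t=0$ is degenerate since $f(t)/t$ is not defined; in the applications of the claim in the paper, this case does not arise (the ratio is only invoked for $t>0$), so I would simply remark that the inequality is to be read for $t>0$, or equivalently as $\alpha\cdot f(t) \geq t\cdot f(\alpha)$, which holds trivially at $t=0$ by non-negativity of $f(\alpha)$.

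There is no real obstacle here: the statement is a one-line consequence of concavity plus $f(0)\geq 0$. The only subtlety worth flagging is that ``non-negative'' is used to dispose of the $f(0)$ contribution; without some lower bound at $0$ (or another way of anchoring the function at the left endpoint) the inequality would fail, e.g.\ for an affine function with negative intercept. I would therefore state the step ``$f(0)\geq 0$'' explicitly so the role of the non-negativity hypothesis is visible.
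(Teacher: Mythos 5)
Your proof is correct and follows essentially the same route as the paper: write $t$ as a convex combination of $0$ and $\alpha$, apply concavity, and drop the $f(0)$ term using non-negativity. Your explicit handling of the degenerate case $t=0$ is a minor (reasonable) addition the paper omits.
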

\begin{proof}
	Writing $t$ as a convex combination of $\alpha$ and $0$, the claim follows from convexity, as follows.
	\begin{align*}
		f(t) & = f\left(\frac{t}{\alpha}\cdot \alpha + \left(1-\frac{t}{\alpha}\right)\cdot 0\right) \geq \frac{t}{\alpha}\cdot f(\alpha) + \left(1-\frac{t}{\alpha}\right)\cdot f(0) \geq \frac{t}{\alpha}\cdot f(\alpha). \qedhere
	\end{align*}
\end{proof}

	\section{Deferred Proofs of \Cref{sec:rounding-matchings}}\label{appendix:b-matching}

In this section we provide deferred proofs of claims from \Cref{sec:rounding-matchings}, restated below for ease of reference. 

\minimizer*
\begin{proof}
	Taking the derivative of $g(y)$, we get $$g'(y)=\exp(-(1-y)\cdot (1+\delta))\cdot \left((1-\eps)-(1-y\cdot (1-\eps))\cdot (1+\delta)\right),$$
	which vanishes at $y^*=\frac{\eps+\delta}{(1-\eps)(1+\delta)}$.
    Next, we consider the second derivative of $g$.
    $$g''(y)=g'(y)\cdot (1+\delta) + \exp(-(1-y)\cdot (1+\delta))\cdot (1-\eps)\cdot (1+\delta).$$
    The first summand, $g'(y)\cdot (1+\delta)$, vanishes at $y^*$, by the above, while the second summand is positive, since $\eps\leq 1$ and $\delta\geq 0$.
    We conclude that $y^*$ is the global minimum of $g$.
\end{proof}

\sufficient*
\begin{proof}
	Fix $\epsilon,\delta$. For notational simplicity, let $f=f_{\epsilon,\delta}$ and let $z^*=z^*_{\epsilon,\delta}$.
	Since $f$ is the sum of twice-differentiable convex functions (namely an exponential and a linear term), we have that $f''(z)\geq 0$ for all $z$. Therefore, if $f'(z^*)\geq 0$, then for all $z\geq z^*$ we have that
	\begin{align*}
		f'(z) = f'(z^*)+\int_{z^*}^z f''(x)\,dx \geq f'(z^*) \geq 0.
	\end{align*} 
	Thus, if moreover $f(z^*)\geq 0$, then the desired statement follows, i.e., for all $x\geq z^*$ we have that
	\begin{align*}
		f(z) &= f(z^*) + \int_{z^*}^z f'(x) \, dx \geq f(z^*) \geq 0. \qedhere 
	\end{align*}
\end{proof}

\subsection{Polytime implementation}\label{sec:polytime}
So far, we ignored computational aspects of our ODRS of \Cref{thm:ODRS}. In this section we show how our grouping method used to increase our rounding ratio beyond $1-\frac{1}{e}$ moreover allows us to compute $\Pr[P_t=S]$ for all sets $S\subseteq [n]$ in polynomial time --- and thus obtain a polytime ODRS --- while only decreasing our rounding ratio by an arbitrarily small constant $\gamma>0$.

\polymatching*
\begin{proof}
    First, we scale down all the fractions $x_{i,t}$ by a $(1-\gamma)$ factor (this results in a valid fractional matching).
    As this additional downscaling is linear, this results in each offline node having fractional degree $\hat{s}_{i,t}\leq 1-\gamma$ in the assignment $\hat{x}$, by \Cref{hat-degrees}.
    Consequently, by the same argument as \Cref{first-fit}, each bin but at most one in $B\in \calB_t$ has $\sum_{i\in B} \hat{x}_{i,t}\leq \frac{\gamma}{2}$.
    But since $\sum_i \hat{x}_{i,t}\leq \sum_i x_{i,t}\cdot (1+\delta)\leq (1+\delta)$, this implies that each bin but one has total $x$-value at least $\sum_{i\in B} x_{it} \geq \gamma/2(1+\delta)$. 
    Consequently, by the fractional matching constraint, whereby $\sum_i x_{i,t}\leq 1$, 
    there are at most $2(1+\delta)/\gamma+1=O(1/\gamma)$ such non-empty bins.
    
    The benefit of few bins is that now at each time $t$ the set of bidders is a set of at most $O(1/\gamma)$ offline nodes, and so there are at most $n^{O(1/\eps)}$ possible sets $S\subseteq [n]$ in the support of the distribution over $P_t$.
    Moreover, for each such set $S\subseteq [n]$, it is straightforward to compute the probability that $S$ is the set of first-time proposers at time $t$ in time $n^{O(1/\gamma)}$. (Briefly, for each subset $S$, we guess for each of the $O(1/\gamma)$ nodes in $S$ at what time they made their first bid. The probability of each of these $n^{O(1/\gamma)}$ guessed events is then easy to compute in polynomial time, by computing the probability that these nodes did bid in these guessed time steps and no other times.) All in all, this allows to compute, exactly, the support of the distribution of $P_t$, in time $n^{O(1/\gamma)}$. 
    Therefore, by \Cref{lem:CRS}, the CRS invocations take time $\poly(n^{O(1/\gamma)}=n^{O(1/\gamma)}$ per time step.
    On the other hand, having scaled down the values $x_{i,t}$ by  a factor of $(1-\gamma)$ trivially incurs a multiplicative loss of $(1-\gamma)$ in the rounding ratio, and so the obtained ODRS has rounding ratio of $\ratio(1-\gamma)\geq \ratio-\gamma$.
\end{proof}

\subsection{The $b$-matching ODRS}\label{sec:b-matching-extension}
In this section we extend our ODRS from matchings to $b$-matchings.
To avoid repetition, we only outline the changes needed in the algorithm and analysis, and defer the proofs to \Cref{app:b-matching-gen}.

\paragraph{The change to the core algorithm.}
As we did (implicitly) in \Cref{alg:bucketing}, we let the online level-set algorithm dictate the marginal probabilities with which an offline node $i$ should bid to $t$, based on the number of times $i$ has bid by time $t$, which we denote by $S_{i,t}$. (This corresponds to $S_t$ in \Cref{alg:rounding} run from the perspective of the offline node $i$.) We process offline nodes at different times $t$ as follows, with the choice of probabilities guided by \Cref{alg:rounding} to guarantee that $\E[S_{i,t+1} - S_{i,t}] = x_{i,t}$, by Property \ref{prop:marginals}, and $S_{i,t}\in [\lfloor s_{i,t} \rfloor, \lceil s_{i,t}\rceil]$, by Property \ref{prop:floor/ceil}.

For nodes with $\lceil s_{t+1} \rceil = \lceil s_{t} \rceil$ --- whose fractional degrees both before and after time $t$ lie in the range $[\lfloor s_{i,t} \rfloor, \lceil s_{i,t} \rceil]$ --- we group these offline nodes and have them bid (at most one bid per bin) as in \Cref{alg:bucketing}, with the following changes: $s_{i,t}$ is replaced by its fractional part, $s_{i,t} - \lfloor s_{i,t} \rfloor$: this is done both in the bidding probability, so these items have size $\frac{x_{i,t}}{1-(s_{i,t}-\lfloor s_{i,t} \rfloor)}$ when bin packing, and in the condition $s_{i,t}\leq \theta$, which are now replaced by the condition $s_{i,t} - \lfloor s_{i,t}\rfloor \leq \theta$ (and similarly for the condition $s_{i,t}>\theta$), with $\theta$ to be chosen later.
Moreover, we add a candidate offline node $i\in C_t$ to the list of potential matches $P_t$ (the bidders) if $S_{i,t} < \lfloor s_{i,t} \rfloor$ (corresponding to $i\in F\cap C_t$ in \Cref{alg:bucketing} for simple matchings).

For all offline nodes with $\lfloor s_{i,t+1} \rfloor = \lfloor s_{i,t} \rfloor + 1$, we place these nodes in singleton bins, and we have them bid with probability if $S_{i,t} = \lfloor s_{i,t}\rfloor$, or with probability $\frac{s_{i,t+1}-\lfloor s_{i,t+1} \rfloor}{s_{i,t}-\lfloor s_{i,t} \rfloor}$ if $S_{i,t} = \lceil s_{i,t} \rceil = \lfloor s_{i,t+1}\rfloor$. 

\paragraph{Change to core ODRS' analysis:} \Cref{first-fit} and its proof hold unchanged for our extension of the core ODRS. \Cref{proposal-prob-perturbed-rephrased}, too, holds unchanged, although the proof, deferred to \Cref{app:b-matching-gen}, is now slightly more involved. Specifically, the key ingredient, proving that the events $[P_t\cap S\cap B]$ over different bins $B$ are negatively associated (NA) requires a proof that the variables $S_{i,t}$ are NA.

\paragraph{Change to the improved ODRS.}
In our ODRS for $b$-matchings, since offline nodes $i$  for which $\lfloor s_{i,t+1} \rfloor = \lfloor s_{i,t} \rfloor + 1$ are grouped into singelton bins, we wish to apply a markup to these nodes' $x_{i,t}$-values, if $x_{i,t}$ is small. We therefore consider the following assignment, with $0\leq \theta_1\leq \theta_2\leq 1$ satisfying 
$\theta_2-\theta_1=\frac{\delta}{\eps+\delta}$, 
 to be chosen shortly.
\begin{align}\label{eqn:x-hat-b-matching}
    \hat{x}_{i,t} = \int_{z=s_{i,t}}^{s_{i,t+1}} (1-\eps)\cdot \mathds{1} \left[\lfloor z\rfloor \in [\theta_1,\theta_2) \right] + (1+\delta) \cdot \mathds{1} \left[\lfloor z\rfloor \not\in [\theta_1,\theta_2) \right] \, dz.
\end{align}
In words: this assignment decreases the part of $x_{i,t}$ corresponding to the parts of $i$'s fractional degree
that are bounded away from $0$ and $1$ (specifically, those in the range $[\theta_1,\theta_2)$, and increases the part of the fractional degree that is somewhat close to $0$ or $1$, in the ranges $[0,\theta_1)$ and $[\theta_2,1]$. 
Our new ODRS then runs the modified core ODRS with inputs $\theta := \theta_2\cdot (1-\eps) + \theta_1\cdot (\delta+\eps)$ and $\mathbf{\hat{x}},\mathbf{x}$. The choice of the parameter $\theta$ is taken to have $\hat{s}_{i,t} - \lfloor \hat{s}_{i,t}\rfloor \leq \theta$ if and only if ${s}_{i,t} - \lfloor {s}_{i,t}\rfloor \leq \theta_2$.

\paragraph{Change to the improved ODRS' analysis.} First, \Cref{hat-degrees} generalizes as follows, arguing that modified fractional degrees $\hat{s}_{i,t}=\sum_{t'<t}\hat{x}_{i,t'}$ fall within the same integer range for all $(i,t)$.
\begin{fact}\label{hat-degrees-b-matching}
For every offline node $i\in [n]$ and time $t$, we have that 
$\lfloor \hat{s}_{i,t}\rfloor = \lfloor s_{i,t} \rfloor$ and 
$\lceil \hat{s}_{i,t}\rceil = \lceil s_{i,t} \rceil.$ 
\end{fact}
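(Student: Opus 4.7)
The plan is to show that on every unit interval $[k,k{+}1]$ of the integration variable, the integrand defining $\hat{x}_{i,t}$ integrates to exactly $1$, while being pointwise non-negative. Together, these two facts will place $\hat{s}_{i,t}$ inside the same integer cell as $s_{i,t}$, which is exactly what the two claimed equalities assert.

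Concretely, I would first rewrite the cumulative quantity $\hat{s}_{i,t} = \sum_{t'<t}\hat{x}_{i,t'}$ as the single integral
\[
\hat{s}_{i,t} = \int_{0}^{s_{i,t}} \bigl[(1-\eps)\cdot \mathds{1}[\{z\}\in[\theta_1,\theta_2)] + (1+\delta)\cdot \mathds{1}[\{z\}\notin[\theta_1,\theta_2)]\bigr]\, dz,
\]
where $\{z\}:=z-\lfloor z\rfloor$ denotes the fractional part (this is the intended reading of the indicator in \eqref{eqn:x-hat-b-matching}). Since $\eps\in[0,1]$ and $\delta\in[0,1]$, both values $(1-\eps)$ and $(1+\delta)$ are non-negative, so the integrand is non-negative everywhere, and therefore $\hat{s}_{i,t}$ is monotone non-decreasing in $s_{i,t}$.

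Next, I would compute the integral over one full unit interval $[k,k+1]$. The set $\{z\in[k,k+1] : \{z\}\in[\theta_1,\theta_2)\}$ has Lebesgue measure $\theta_2-\theta_1$, and its complement (within $[k,k+1]$) has measure $1-(\theta_2-\theta_1)$. Using the identity $\theta_2-\theta_1=\tfrac{\delta}{\eps+\delta}$ from the construction, the integral equals
\[
(1-\eps)\cdot\frac{\delta}{\eps+\delta}+(1+\delta)\cdot\frac{\eps}{\eps+\delta}=\frac{\delta-\eps\delta+\eps+\eps\delta}{\eps+\delta}=1.
\]
So the mapping $s\mapsto\hat s(s):=\int_0^s(\cdots)\,dz$ satisfies $\hat s(k)=k$ for every non-negative integer $k$ and is monotone non-decreasing.

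Finally, I would conclude by cases on $s_{i,t}$. If $s_{i,t}$ is an integer $k$, then $\hat{s}_{i,t}=k$ as well, and both floors and both ceilings coincide with $k$. Otherwise let $k:=\lfloor s_{i,t}\rfloor$, so $s_{i,t}\in(k,k+1)$; by monotonicity $\hat{s}_{i,t}\geq \hat s(k)=k$, and since the integrand is bounded above by $1+\delta<\infty$ and vanishes nowhere on a positive-measure subset... more cleanly, by monotonicity $\hat{s}_{i,t}\leq \hat s(k+1)=k+1$, with equality ruled out because the integrand is strictly positive a.e.\ and $s_{i,t}<k+1$. Thus $\hat{s}_{i,t}\in(k,k+1)$, giving $\lfloor\hat{s}_{i,t}\rfloor=k=\lfloor s_{i,t}\rfloor$ and $\lceil\hat{s}_{i,t}\rceil=k+1=\lceil s_{i,t}\rceil$, as desired. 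The only mild subtlety in the argument is keeping track of the boundary case where $s_{i,t}$ lands exactly on an integer, which is precisely why we chose $\theta_2-\theta_1=\tfrac{\delta}{\eps+\delta}$: that choice makes the per-unit-interval integral exactly $1$, preserving integer values on the nose.
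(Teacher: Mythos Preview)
Your proof is correct. The paper does not give an explicit proof of this fact; it merely states it as a generalization of \Cref{hat-degrees} and moves on. Your argument supplies exactly the missing details: telescoping $\hat{s}_{i,t}=\int_0^{s_{i,t}}(\cdots)\,dz$, observing that the integrand is non-negative, and verifying that the per-unit-interval integral equals $1$ precisely because $\theta_2-\theta_1=\tfrac{\delta}{\eps+\delta}$. Two small remarks: (i) you rightly read the indicator in \eqref{eqn:x-hat-b-matching} as depending on the fractional part $z-\lfloor z\rfloor$ rather than on $\lfloor z\rfloor$ (the latter would be meaningless since $\theta_1,\theta_2\in[0,1]$); (ii) in the non-integer case you argue the strict upper bound $\hat{s}_{i,t}<k+1$ but only state $\hat{s}_{i,t}\geq k$ for the lower side before concluding $\hat{s}_{i,t}\in(k,k+1)$; the strict lower bound follows by the same positivity-of-the-integrand reasoning and is worth stating explicitly. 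With that tiny polish, the proof is complete and arguably cleaner than a case-by-case computation mirroring the proof of \Cref{hat-degrees}.
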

\Cref{hat-degrees-b-matching} and Property \ref{prop:floor/ceil} of \Cref{alg:rounding} imply that offline node $i$ makes some $S_{i,t}\in [\lfloor s_{i,t}\rfloor, \lceil s_{i,t} \rceil]$ many bids by time $t$. As $i$ is not matched more time than it bids, this implies that the extended ODRS outputs a valid $b$-matching.

Next, the key lemma in the analysis of \Cref{alg:ODRS}, namely \Cref{conditional-ratio}, needs to be extended slightly, and in particular the condition of \Cref{eqn:conditional-ratio} needs to hold for all $z\geq \max\left\{\frac{1-\theta_2}{2(1+\delta)},\theta_1\right\}$. 
To make this constraint as less restrictive as possible, we take $\theta_1$ and $\theta_2$ to guarantee the needed $\theta_2 - \theta_1 = \frac{\delta}{\eps+\delta}$, and also satisfy $\frac{1-\theta_2}{2(1+\delta)} = \theta_1$, resulting in $\theta_1 = \frac{\eps}{(3+2\delta)(\eps+\delta)}$ and $\theta_2 = 
\frac{\eps+3\delta+2\delta^2}{(3+2\delta)(\eps+\delta)}$.
This modification of the range for which we require Condition \eqref{eqn:conditional-ratio} to hold will again allow us to argue that for all bins $B$ except for at most one bin $B^*$ (to be characterized shortly), we have the following: 
\begin{align}\label{eqn:use-of-condition}
1-\sum_{i\in B}\hat{x}_{i,t} \leq \exp\left(-\sum_{i\in B}x_{i,t}(1+\delta)\right)
\end{align}
Specifically, for bins containing a single offline node $i$ with $\lfloor s_{i,t+1} \rfloor = \lfloor s_{i,t} \rfloor + 1$, we have the following: 
If $s_{i,t+1}-\lfloor s_{i,t+1} \rfloor \geq \theta_1$, then  
$$x_{i,t} = s_{i,t+1} - s_{i,t} \geq s_{i,t+1}-\lfloor s_{i,t+1} \rfloor\geq \theta_1.$$
Similarly, if $s_{i,t}-\lfloor s_{i,t+1} \rfloor \leq \theta_2$, then 
$$x_{i,t} = s_{i,t+1} - s_{i,t} \geq \lfloor s_{i,t+1}\rfloor - s_{i,t} = 1 + \lfloor s_{i,t} \rfloor - s_{i,t} \geq 1- \theta_2 \geq \frac{1-\theta_2}{2(1+\delta)}.$$
In both cases, Condition \eqref{eqn:conditional-ratio} holding for all $z\geq \max\left\{\frac{1-\theta_2}{2(1+\delta)},\theta_1\right\}$ implies \Cref{eqn:use-of-condition}.
In the alternative scenario where $s_{i,t+1}-\lfloor s_{i,t+1} \rfloor \leq \theta_1$ and $s_{i,t}-\lfloor s_{i,t} \rfloor \leq \theta_2$, we have that $\hat{x}_{i,t} = (1+\delta)\cdot x_{i,t}$, and \Cref{eqn:use-of-condition} follows from the basic inequality $1-z\leq \exp(-z)$.
Finally, by \Cref{first-fit}, for all bins $B$ containing offline nodes $i$ with $\lfloor s_{i,t} \rfloor = \lfloor s_{i,t+1} \rfloor$ and $\hat{s}_{i,t}\leq \theta$ (i.e., $s_{i,t}\leq \theta_2$), except for at most one bin $B^*$, we have that $$\sum_{i\in B}{x}_{i,t}\cdot (1+\delta)\geq \sum_{i\in B}\hat{x}_{i,t}\geq \frac{1-\theta}{2} \geq \frac{1-\theta_2}{2},$$
where the last inequality, whereby $\theta\leq \theta_2$, is proven similarly to \Cref{hat-degrees-b-matching}.
So, Condition \eqref{eqn:conditional-ratio} holding for all $z\geq \frac{1-\theta_2}{2(1+\delta)}$ implies \Cref{eqn:use-of-condition} for all remaining bins other than $B^*$.
The analysis then proceeds as in \Cref{conditional-ratio}, yielding the following.

\begin{lem}\label{conditional-ratio-b-matching}
	If $\epsilon,\delta \in [0,1]$ guarantee that for all $z\geq \max\left\{\frac{1-\theta_2}{2},\,\theta_1\right\} = \frac{\epsilon}{\mathbf{(3+2\delta)}\cdot (\epsilon+\delta)}$, the expression 
	\begin{align*}
		f_{\epsilon,\delta}(z) := \exp(-z\cdot (1+\delta)) - (1-z\cdot (1-\eps))
	\end{align*}
    is non-negative, then the extension of \Cref{alg:ODRS} to $b$-matchings using scaling \Cref{eqn:x-hat-b-matching} and parameters $\epsilon$ and $\delta$ has rounding ratio at least 
	$$1-\exp\left(-1-\delta+\frac{\eps+\delta}{1-\eps}\right)\cdot\frac{1-\eps}{1+\delta}.$$
\end{lem}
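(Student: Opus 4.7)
\medskip

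\noindent\textbf{Proof proposal for \Cref{conditional-ratio-b-matching}.}
The plan is to follow the template of \Cref{conditional-ratio}, substituting in the new scaling $\hat{\bf x}$ from \Cref{eqn:x-hat-b-matching}, the new threshold $\theta = \theta_2(1-\eps) + \theta_1(\eps+\delta)$, and the extended bucketing rule used in the $b$-matching core routine. First, I would invoke the unchanged generalization of \Cref{proposal-prob-perturbed-rephrased} applied to $\mathbf{\hat x}$ to bound, for any $S\subseteq[n]$ and time $t$,
\[
\Pr[S\cap P_t\neq \emptyset]\;\geq\; 1 - \prod_{B\in \calB_t}\Big(1-\sum_{i\in B\cap S}\hat x_{i,t}\Big).
\]
As in the simple-matching case, I would then split the bins $B\in\calB_t$ into three disjoint families: (i) ``grouped'' bins output by First-Fit on low-fractional-part nodes, i.e.\ those with $\lfloor s_{i,t+1}\rfloor = \lfloor s_{i,t}\rfloor$ and $s_{i,t}-\lfloor s_{i,t}\rfloor\le \theta_2$; (ii) singleton bins corresponding to integer crossings $\lfloor s_{i,t+1}\rfloor = \lfloor s_{i,t}\rfloor+1$; and (iii) singleton bins for the remaining ``high-fractional-part'' nodes. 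The goal of the main technical step is to verify, for every bin in (i)--(iii) except possibly one exceptional bin $B^\star$ of type (i), the key inequality
\[
1-\sum_{i\in B}\hat x_{i,t}\;\leq\;\exp\Big(-\sum_{i\in B} x_{i,t}(1+\delta)\Big). \tag{$\star$}
\]

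For (i), precisely as in \Cref{first-fit} applied with threshold $\theta$, First-Fit guarantees $\sum_{i\in B}\hat x_{i,t}\geq (1-\theta)/2$ for all but one exceptional bin, which (since $\theta\leq\theta_2$, an easy computation paralleling \Cref{hat-degrees-b-matching}) gives $\sum_{i\in B}x_{i,t}(1+\delta)\geq (1-\theta_2)/2$, and ($\star$) then follows from the hypothesis that $f_{\eps,\delta}(z)\geq 0$ for $z\geq (1-\theta_2)/2$. For (iii), the same First-Fit argument applied in the second bucketing pass gives an analogous lower bound on $\sum_{i\in B} x_{i,t}(1+\delta)$ exceeding $(1-\theta_2)/2$. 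For (ii), I would case-split on whether $s_{i,t+1}-\lfloor s_{i,t+1}\rfloor\ge \theta_1$ or $s_{i,t}-\lfloor s_{i,t}\rfloor\leq \theta_2$ (as flagged in the surrounding discussion): in each case $x_{i,t}\geq \min\{\theta_1, (1-\theta_2)/(2(1+\delta))\}$, and the hypothesis, applied at such $z$, yields ($\star$). In the complementary subcase both fractional parts lie outside $[\theta_1,\theta_2)$, so $\hat x_{i,t}=(1+\delta)\,x_{i,t}$ and ($\star$) follows from $1-z\le e^{-z}$.

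Combining the per-bin bounds ($\star$) across all bins except $B^\star$ (and using $1-\sum_{i\in B^\star\cap S}\hat x_{i,t}\le 1-\sum_{i\in B^\star\cap S}x_{i,t}(1-\eps)$ for the exceptional bin, exactly as in \Cref{conditional-ratio}) gives
\[
\frac{\Pr[S\cap P_t\neq\emptyset]}{\sum_{i\in S} x_{i,t}}\;\geq\; \frac{1-\exp\!\big(-(1-\sum_{i\in B^\star}x_{i,t})(1+\delta)\big)\cdot\big(1-\sum_{i\in B^\star}x_{i,t}(1-\eps)\big)}{\sum_{i\in S} x_{i,t}}.
\]
Now I would push $\sum_{i\in S}x_{i,t}$ up to $1$ by the same concavity / dummy-node argument used in \Cref{conditional-ratio} (invoking \Cref{convexity}), reducing the minimization to the scalar function $g(y)$ of \Cref{claim:miminimzer} in the variable $y=\sum_{i\in B^\star}x_{i,t}$. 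The claim then yields the stated lower bound $1-\exp(-1-\delta+\tfrac{\eps+\delta}{1-\eps})\cdot\tfrac{1-\eps}{1+\delta}$ on the ratio, and feeding this into \Cref{lem:CRS} delivers the promised rounding ratio.

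The main obstacle will be item (ii): carefully verifying ($\star$) for the integer-crossing singleton bins in the ``intermediate'' regime, and in particular choosing the range threshold $\max\{(1-\theta_2)/2,\theta_1\}$ tightly enough that, once optimized subject to $\theta_2-\theta_1=\delta/(\eps+\delta)$, the two subcases line up and yield the combined constraint $z\ge \eps/((3+2\delta)(\eps+\delta))$ stated in the lemma. Everything else (NA of bidder indicators, the convexity-based worst-case set $S$, and the scalar optimization through $g$) transfers essentially verbatim from \Cref{conditional-ratio}.
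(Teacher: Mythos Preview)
Your overall plan tracks the paper's argument closely: the handling of the grouped low-fractional-part bins (i), the integer-crossing singletons (ii), the single exceptional bin $B^\star$, and the final concavity/scalar-optimization step via \Cref{claim:miminimzer} and \Cref{lem:CRS} are all correct and mirror the proof of \Cref{conditional-ratio}. There is, however, one concrete gap in your treatment of the high-fractional-part bins (iii).

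You claim that ``the same First-Fit argument applied in the second bucketing pass gives an analogous lower bound on $\sum_{i\in B} x_{i,t}(1+\delta)$ exceeding $(1-\theta_2)/2$.'' This is false: in the second pass the item sizes are $\hat{x}_{i,t}/\bigl(1-(\hat{s}_{i,t}-\lfloor\hat{s}_{i,t}\rfloor)\bigr)$ with fractional part exceeding $\theta$, so the denominators are \emph{smaller} than $1-\theta$, and the First-Fit half-full guarantee yields no useful lower bound on $\sum_{i\in B}\hat{x}_{i,t}$ (the inequality goes the wrong way). Worse, even if it did, you would pick up a \emph{second} exceptional bin from this pass, and the downstream scalar minimization only accommodates one $B^\star$. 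The correct (and simpler) argument for (iii)---which is exactly what the paper uses already in the simple-matching proof of \Cref{conditional-ratio} for bins $B\not\subseteq L_t$---is to observe that for every node $i$ in such a bin the fractional part of $z$ stays in $[\theta_2,1)$ throughout the interval $[s_{i,t},s_{i,t+1}]$ (no integer crossing, starting above $\theta_2$), so the scaling \eqref{eqn:x-hat-b-matching} gives $\hat{x}_{i,t}=(1+\delta)\,x_{i,t}$ outright; then $(\star)$ follows from $1-z\le e^{-z}$ with no exceptional bin and no appeal to the hypothesis on $f_{\eps,\delta}$. With this fix the rest of your proposal goes through verbatim.
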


Combined with \Cref{sufficient-condition}, the above lemma then yields our $b$-matching ODRS' rounding ratio, after solving the appropriate modification of the mathematical program used to prove \Cref{thm:ODRS}.

\begin{thm}\label{thm:ODRS-b-matching}
    The extension of \Cref{alg:ODRS} to $b$-matchings, run with $(\eps,\delta)\approx (0.0347,0.0425)$, achieves a rounding ratio of $\ratiobmatching$.
\end{thm}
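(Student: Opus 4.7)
The plan is to mirror the proof of Theorem~\ref{thm:ODRS} essentially verbatim, only tracking the different threshold that appears in the $b$-matching extension. By Lemma~\ref{conditional-ratio-b-matching}, to obtain the claimed rounding ratio it suffices to exhibit parameters $(\eps,\delta)\in[0,1]^2$ such that
\[
f_{\eps,\delta}(z):=\exp(-z(1+\delta))-(1-z(1-\eps))\geq 0 \quad \text{for all } z\geq a,\qquad a:=\frac{\eps}{(3+2\delta)(\eps+\delta)},
\]
and such that $1-\exp\!\bigl(-1-\delta+\frac{\eps+\delta}{1-\eps}\bigr)\cdot\frac{1-\eps}{1+\delta}\geq \ratiobmatching$. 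The function $f_{\eps,\delta}$ here is \emph{identical} to the one in the simple-matching case; only the endpoint $a$ has shrunk, from $\frac{\eps}{2(\eps+\delta)}$ in Theorem~\ref{thm:ODRS} to $\frac{\eps}{(3+2\delta)(\eps+\delta)}$ here. Consequently, Lemma~\ref{sufficient-condition} applies without any change, and reduces the infinite family of constraints $\{f_{\eps,\delta}(z)\geq 0:z\geq a\}$ to the two pointwise constraints $f_{\eps,\delta}(a)\geq 0$ and $f'_{\eps,\delta}(a)\geq 0$.

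This reduction yields the following mathematical program, whose optimum lower bounds the best rounding ratio achievable by the analysis:
\begin{align*}
\max\quad & \alpha = 1-\exp\!\left(-1-\delta+\frac{\eps+\delta}{1-\eps}\right)\cdot\frac{1-\eps}{1+\delta} \\
\textrm{s.t.}\quad & \exp(-a(1+\delta)) - 1 + a(1-\eps)\geq 0, \\
 & -(1+\delta)\exp(-a(1+\delta)) + (1-\eps)\geq 0, \\
 & a = \frac{\eps}{(3+2\delta)(\eps+\delta)}, \\
 & \eps,\delta\in[0,1].
\end{align*}
I would then feed this two-variable non-linear program into an off-the-shelf numerical solver. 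The claim is that the solver returns the maximizer $(\eps,\delta)\approx(0.0347,0.0425)$ with optimal value $\ratiobmatching$; plugging these parameters into the $b$-matching extension of Algorithm~\ref{alg:ODRS} (as described just before Lemma~\ref{conditional-ratio-b-matching}, using the scaling $\hat{\mathbf{x}}$ in~\eqref{eqn:x-hat-b-matching}) then yields the theorem.

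There is no genuine mathematical obstacle beyond what was already handled in Theorem~\ref{thm:ODRS} and Lemma~\ref{conditional-ratio-b-matching}; the proof amounts to a numerical optimization that is identical in structure to the matching case. The only thing worth flagging is \emph{why} the answer is slightly worse than $\ratio$: the smaller threshold $a=\frac{\eps}{(3+2\delta)(\eps+\delta)}<\frac{\eps}{2(\eps+\delta)}$ tightens both of the pointwise constraints, forcing the optimum of the program downward. This smaller threshold traces back (through the derivation of Lemma~\ref{conditional-ratio-b-matching}) to the possible appearance of singleton bins arising from offline nodes with $\lfloor s_{i,t+1}\rfloor=\lfloor s_{i,t}\rfloor+1$ and small $x_{i,t}\geq \theta_1$, which the choice $\theta_1=\frac{\eps}{(3+2\delta)(\eps+\delta)}$, $\theta_2=\frac{\eps+3\delta+2\delta^2}{(3+2\delta)(\eps+\delta)}$ is designed to accommodate by balancing the two sources of ``leftover'' bins $B^*$ at each time step. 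Everything else carries over unchanged.
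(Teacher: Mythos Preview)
Your proposal is correct and follows the same approach the paper itself uses: combine Lemma~\ref{conditional-ratio-b-matching} with Lemma~\ref{sufficient-condition} to reduce the infinite constraint family to the two pointwise constraints $f_{\eps,\delta}(a)\geq 0$ and $f'_{\eps,\delta}(a)\geq 0$ at the new threshold $a=\frac{\eps}{(3+2\delta)(\eps+\delta)}$, and then numerically optimize the resulting program. The paper's proof sketch is even terser than yours, merely pointing to ``the appropriate modification of the mathematical program used to prove \Cref{thm:ODRS}''; your write-up makes this modification explicit and adds a useful remark on why the optimum drops slightly.
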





\subsection{Deferred proofs of the $b$-matching ODRS}\label{app:b-matching-gen}

In this section we prove that our extension of \Cref{alg:bucketing} from matching to $b$-matchings results in similar benefits when much grouping occurs. We start by introducing some useful notation.

    Our online level-set rounding algorithm, \Cref{alg:rounding}, makes at most one random choice when deciding whether to allocate another online node to the center of the star.
    Our $b$-matching ODRS follows the same logic (from the perspective of the offline node, playing the role of the star's center).
    Denote by $C_{i,t}$ an indicator for the relevant coin for $i$ coming up heads. Our extended core ODRS, \Cref{alg:bucketing}, correlates these coins for different offline nodes $i$ at time $t$.
    Let $S_{i,t}\in \{\lfloor s_{i,t} \rfloor, \lceil s_{i,t} \rceil\}$ be the number of times $i$ bids by time $t$. By our algorithm's definition, we have the following expression for $[S_{i,t} \neq \lceil s_{i,t} \rceil]$.
    \begin{align*}
        [S_{i,t} \neq \lceil s_{i,t} \rceil] = \begin{cases}
        [S_{i,t-1} \neq \lceil s_{i,t-1} \rceil]\cdot (1-C_{i,t-1}) & \textrm{if } \lceil s_{i,t} \rceil = \lceil s_{i,t-1} \rceil \\
        1 - [S_{i,t-1} = \lceil s_{i,t-1} \rceil]\cdot C_{i,t-1} & \textrm{if }  \lceil s_{i,t} \rceil > \lceil s_{i,t-1} \rceil.
        \end{cases}
    \end{align*}

\begin{lem}\label{NA-S_it}
    For any time $t$, the random variables $\{S_{i,t}\}_i$ are NA.
\end{lem}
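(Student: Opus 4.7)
The plan is to exhibit $S_{i,t}$ as a monotone function of a disjoint slice of an already-NA family of Bernoulli variables, then invoke the two closure properties of \Cref{na-closure}. Specifically, for fixed $t$, introduce the index sets $I_i := \{(i,t') : t' < t\}$, which are disjoint across $i$. The variables $S_{i,t}$ depend only on the coins $\{C_{i,t'}\}_{t' < t}$ (via the recurrence just displayed), so on the disjoint sets $I_i$. It then suffices to show that (i) the collection $\{C_{i,t'}\}_{i,t'}$ is NA, and (ii) $S_{i,t}$ is a coordinate-wise non-decreasing function of $\{C_{i,t'}\}_{t' < t}$.

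For step (i): at every time $t'$ and bin $B\in\calB_{t'}$, the coin indicators $\{C_{i,t'}\}_{i\in B}$ are drawn from a single $U\sim\textrm{Uni}[0,1]$ such that at most one of them is $1$, i.e.\ $\sum_{i\in B} C_{i,t'}\leq 1$; thus \Cref{lem:0-1} yields NA within each bin. Since different bins at the same time step and different time steps use independent random draws, closure of NA under products (\Cref{na-closure}) upgrades this to joint NA of the whole family $\{C_{i,t'}\}_{i,t'}$.

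For step (ii), I would prove by induction on $t$ that flipping any single coin $C_{i,t^*}$ from $0$ to $1$ (holding all others fixed) can only weakly increase $S_{i,t}$. The base case $t\leq t^*$ is trivial. For $t=t^*+1$, both branches of the recurrence show directly that the indicator $[S_{i,t}\neq\lceil s_{i,t}\rceil]$ is non-increasing in $C_{i,t^*}$, so $S_{i,t}$ is non-decreasing. For $t>t^*+1$, the inductive step is a short case analysis on whether $\lceil s_{i,t}\rceil=\lceil s_{i,t-1}\rceil$: in either case, the recurrence shows that $S^{0}_{i,t-1}\leq S^{1}_{i,t-1}$ combined with identical $C_{i,t-1}$ in both runs implies $S^{0}_{i,t}\leq S^{1}_{i,t}$ (the only potentially problematic case, $S^0_{i,t}=\lceil s_{i,t}\rceil$ with $S^1_{i,t}=\lfloor s_{i,t}\rfloor$, is immediately ruled out by the induction hypothesis).

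Combining (i) and (ii), $\{S_{i,t}\}_i$ is obtained by applying concordant (all non-decreasing) functions $f_i$ to the disjoint index sets $\{I_i\}_i$ of the NA family $\{C_{j,t'}\}_{j,t'}$, so by the second clause of \Cref{na-closure} the variables $\{S_{i,t}\}_i$ are NA, as claimed. The main obstacle is the monotonicity verification in step (ii): the recurrence's two branches behave differently, and care is needed to confirm that the propagation through the second branch (when $\lceil s_{i,t}\rceil$ increments) does not introduce any non-monotone dependence; however, the case analysis above shows this reduces to checking that ``$S_{i,t-1}$ at its maximum and $C_{i,t-1}=1$'' is itself a monotone condition, which it is.
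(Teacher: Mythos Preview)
Your proposal is correct and uses the same ingredients as the paper's proof: \Cref{lem:0-1} for within-bin NA of the $C_{i,t'}$, closure under products to extend across bins and time steps, and the displayed recurrence to verify monotonicity before applying the function-composition closure. The only organizational difference is that the paper runs an induction on $t$ directly on the statement ``$\{S_{i,t}\}_i$ are NA,'' at each step combining the inductively-NA $\{S_{i,t-1}\}_i$ with the fresh (independent) $\{C_{i,t-1}\}_i$ via the product closure and then checking that $[S_{i,t}\neq\lceil s_{i,t}\rceil]$ is a decreasing function of the pair $(S_{i,t-1},C_{i,t-1})$; you instead establish joint NA of all coins upfront and push the induction into a separate coordinate-wise monotonicity lemma for $S_{i,t}$ as a function of $\{C_{i,t'}\}_{t'<t}$. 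Both routes are equally valid, and your decoupling of ``the coins are NA'' from ``the map is monotone'' is arguably a bit cleaner, at the cost of spelling out the monotonicity induction that the paper's one-step check sidesteps.
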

\begin{proof}
    Proof by induction on $t\geq 1$. The base case is trivial, as $S_{i,1}=0$  deterministically for all $i$. 
    We turn to the inductive step for $t$, assuming that the variables $S_{i,t-1}$ are NA.
    First, by \Cref{lem:0-1}, for any bin $B\in \calB_{t-1}$ at time $t-1$, the binary variables $C_{i,t-1}$ for all nodes $i\in B$, whose sum is at most one, are NA.
    Moreover, the distributions over different bins are independent, and so by closure of NA under products (\Cref{na-closure}), all $C_{i,t-1}$ are NA.
    Moreover, the variables $C_{i,t-1}$ are independent from the variables $S_{i,t-1}$ (who are functions only of $C_{i,t'}$ for all $t'<t-1$), and consequently, by our inductive hypothesis and another application of closure of NA under products (\Cref{na-closure}), all the variables $\{S_{i,t-1},C_{i,t-1}\}_i$ are NA.
    Therefore, since $S_{i,t-1}\in [\lfloor s_{i,t-1} \rfloor, \lceil s_{i,t-1} \rceil]$, we find that $[S_{i,t}\neq \lceil s_{i,t}\rceil]$ is a decreasing function of $S_{i,t-1}$ and $C_{i,t-1}$, i.e., they are decreasing functions of disjoint NA variables.
    Therefore, by yet another application of closure of NA under products (\Cref{na-closure}), the events $\{[S_{i,t}\neq \lceil s_{i,t} \rceil]\}_i$ are NA.
    Finally, since $S_{i,t} = \lceil s_{i,t} \rceil - [S_{i,t}\neq \lceil s_{i,t} \rceil]$ are monotone decreasing functions of NA variables, the variables $S_{i,t}$ are themselves NA, as desired.
\end{proof}

We are now ready to prove that our generalization of \Cref{alg:bucketing} results in bins essentially simulating offline nodes bidding with a probability equal to their constituent nodes' bids. That is, we are ready to prove the following generalization of \Cref{proposal-prob-perturbed-rephrased}.
\begin{lem}\label{proposal-prob-perturbed-b-matching}
	For any time $t$ and set of offline nodes $S\subseteq[n]$, the $b$-matching extension of \Cref{alg:bucketing} satisfies
	$$\Pr[S\cap P_t \neq \emptyset] \geq 1 - \prod_{B\in \calB_t}\left(1-\sum_{i\in B\cap S} x_{it}\right).$$
\end{lem}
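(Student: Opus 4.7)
The plan is to follow the template of \Cref{proposal-prob-perturbed-rephrased}, replacing the ``bid-once'' variables $C_{i,t}\cdot F_{i,t}$ with the per-step bid indicators $B_{i,t}:=S_{i,t+1}-S_{i,t}\in\{0,1\}$, and leveraging \Cref{NA-S_it} in place of the elementary NA of $\{0,1\}$-valued sums bounded by one. First, by Property \ref{prop:marginals} of \Cref{alg:rounding} applied to the $i$-th copy of the level-set process, we have $\E[B_{i,t}]=x_{i,t}$ unconditionally. Second, as in the simple-matching core, at most one offline node in each bin $B$ is selected as a candidate at time $t$ (the bids within the bin are coupled via a single uniform $U_B$), and the selected candidate is precisely the unique $i\in B$ with $B_{i,t}=1$. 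Hence $\sum_{i\in B}B_{i,t}\leq 1$ deterministically, so by linearity of expectation,
\[
\Pr[P_t\cap S\cap B=\emptyset\mid \vec{S}_t]=1-\sum_{i\in B\cap S}p_{i,t}(S_{i,t}),
\]
where $p_{i,t}(k)$ denotes the conditional bidding probability prescribed by \Cref{alg:rounding} when $S_{i,t}=k$.

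Given $\vec{S}_t=(S_{i,t})_i$, the bidding decisions across different bins depend on independent uniforms $\{U_B\}_B$, so the events $\{[P_t\cap S\cap B=\emptyset]\}_B$ are conditionally independent. Writing $\pi_B(\vec{S}_t):=1-\sum_{i\in B\cap S}p_{i,t}(S_{i,t})\in[0,1]$, we obtain
\[
\Pr[P_t\cap S=\emptyset]=\E_{\vec{S}_t}\!\left[\prod_{B\in\calB_t}\pi_B(\vec{S}_t)\right].
\]
A short case split on whether $\lfloor s_{i,t+1}\rfloor=\lfloor s_{i,t}\rfloor$ or $\lfloor s_{i,t+1}\rfloor>\lfloor s_{i,t}\rfloor$, applied to the cases of \Cref{alg:rounding}, shows that $p_{i,t}(k)$ is non-increasing in $k\in\{\lfloor s_{i,t}\rfloor,\lceil s_{i,t}\rceil\}$ (intuitively, an offline node that has already overshot to the upper rounding side has less remaining budget to bid). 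Consequently each $\pi_B$ is a non-decreasing function of the set $\{S_{i,t}\mid i\in B\cap S\}$, and these index sets are pairwise disjoint across $B\in\calB_t$.

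By \Cref{NA-S_it}, the variables $\{S_{i,t}\}_i$ are NA, so by closure of NA under non-decreasing functions of disjoint index sets (\Cref{na-closure}), the family $\{\pi_B(\vec{S}_t)\}_{B\in\calB_t}$ is NA. Iterating the NA inequality on the non-negative, non-decreasing product then yields
\[
\Pr[P_t\cap S=\emptyset]=\E\!\left[\prod_{B}\pi_B\right]\leq \prod_{B\in\calB_t}\E[\pi_B]=\prod_{B\in\calB_t}\!\left(1-\sum_{i\in B\cap S}x_{i,t}\right),
\]
using $\E[p_{i,t}(S_{i,t})]=\E[B_{i,t}]=x_{i,t}$. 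Rearranging gives the claimed lower bound on $\Pr[P_t\cap S\neq\emptyset]$.

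The main obstacle is verifying the monotonicity of $p_{i,t}(\cdot)$: one must check that in every one of the four cases of \Cref{alg:rounding}, moving $S_{i,t}$ from $\lfloor s_{i,t}\rfloor$ to $\lceil s_{i,t}\rceil$ weakly decreases the probability that $i$ bids at time $t$. Once this case check is carried out, the remainder is routine bookkeeping combining \Cref{NA-S_it} with the closure properties of NA from \Cref{na-closure}.
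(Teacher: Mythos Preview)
Your proof is correct and relies on the same key ingredient as the paper, namely \Cref{NA-S_it} combined with the closure properties of NA. The packaging differs, however, and yours is arguably cleaner. The paper keeps the fresh time-$t$ coins $C_{i,t}$ in the picture, shows that the joint family $\{S_{i,t}\}_i\cup\{C_{i,t}\}_i$ is NA, and then expresses each indicator $\mathds{1}[P_t\cap S\cap B=\emptyset]$ as a function of this family; because the indicators for the two bin types (those with $\lceil s_{i,t+1}\rceil=\lceil s_{i,t}\rceil$ versus the singleton bins with an integer crossing) depend on $S_{i,t}$ and $C_{i,t}$ with opposite monotonicities, the paper must first flip $S_{i,t}\mapsto\lceil s_{i,t}\rceil-S_{i,t}$ for one bin type before invoking \Cref{na-closure}. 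You instead integrate out the coins by conditioning on $\vec S_t$, reducing to the real-valued functions $\pi_B(\vec S_t)=1-\sum_{i\in B\cap S}p_{i,t}(S_{i,t})$; your observation that $p_{i,t}(k)$ is non-increasing in $k$ in \emph{both} bin types then makes every $\pi_B$ uniformly non-decreasing in the $S_{i,t}$'s, so a single application of \Cref{na-closure} to $\{S_{i,t}\}_i$ suffices, with no sign flip. The monotonicity case check you flag is indeed routine: for non-crossing bins $p_{i,t}(\lfloor s_{i,t}\rfloor)=\frac{x_{i,t}}{1-(s_{i,t}-\lfloor s_{i,t}\rfloor)}\geq 0=p_{i,t}(\lceil s_{i,t}\rceil)$, and for crossing singleton bins $p_{i,t}(\lfloor s_{i,t}\rfloor)=1\geq \frac{s_{i,t+1}-\lfloor s_{i,t+1}\rfloor}{s_{i,t}-\lfloor s_{i,t}\rfloor}=p_{i,t}(\lceil s_{i,t}\rceil)$.
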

\begin{proof}
    We denote by $\calB'_t\subseteq \calB_t$ the set of singleton bins $B=\{i\}$ for which $\lceil s_{i,t+1} \rceil > \lceil s_{i,t} \rceil$.
    For such bins $B\in \calB'_t$, $B=\{i\}$, we have that $\Pr[P_t\cap S\cap B = \emptyset] = 1-x_{i,t}$.
    For other bins $B\in \calB_t\setminus \calB'_t$, we have that $\Pr[i\in P_t] = x_{i,t}$, by independence of $C_{i,t}$ and $S_{i,t}$ 
    and since $\Pr[C_{i,t}]=\frac{x_{i,t}}{1-(s_{i,t}-\lfloor s_{i,t}\rfloor)}$ and $\Pr[S_{i,t}=\lfloor s_{i,t} \rfloor] = s_{i,t}-\lfloor s_{i,t}\rfloor$ 
    (the latter following from our online level-set algorithm's analysis). Therefore, by $\sum_{i\in B}C_{i,t}\leq 1$ implying disjointness of the events $[i\in P_t]$ for $i\in B$, we have that $\Pr[P_t \cap S\cap B = \emptyset] = 1-\sum_{i\in B\cap S} x_{i,t}$. It remains to show that the events $[P_t \cap S\cap B = \emptyset]$ for different $B$ are negative cylinder dependent, giving the desired inequality,
    \begin{align*}
    \Pr[P_t\cap S = \emptyset] = \Pr\left[\bigwedge_{B\in \calB_t} (P_t \cap S\cap B = \emptyset)\right] \leq \prod_{B\in \calB_t} \Pr[P_t \cap S\cap B = \emptyset] = \prod_{B\in \calB_t} \left(1-\sum_{i\in B\cap S} x_{i,t}\right).
    \end{align*}
    Now, for bins $B=\{i\}\in \calB'_t$, we have that $\mathds{1}[P_t\cap S\cap B = \emptyset] =1 - [S_{i,t} = \lceil s_{i,t} \rceil]\cdot C_{i,t}$.
    For non-singleton bins $B\in \calB_t$, we have that
    $\mathds{1}[P_t \cap S\cap B = \emptyset] = 1-\sum_{i\in B\cap S} C_{i,t}\cdot [S_{i,t}=\lfloor s_{i,t}\rfloor]$. 
    In both cases, the indicators $\mathds{1}[P_t \cap S\cap B = \emptyset]$ are NA, as they 
    are monotone decreasing functions of disjoint NA variables (here, we need to first apply decreasing functions that map $S_{i,t}$ to $\lceil s_{i,t}\rceil - S_{i,t}$ for all $i\in B\in \calB_t\setminus \calB'_t$ and all other variables to themselves). Hence, by closure of NA under such functions (\Cref{na-closure}), the variables $\mathds{1}[P_t \cap S\cap B = \emptyset]$ are NA.
    Therefore, by \Cref{na-implies-neg-cylinder}, these variables are negative cylinder dependent,
    yielding the required inequality above. The lemma follows.   
\end{proof}
    
\section{Deferred Proofs of \Cref{sec:level-sets}}\label{appendix:level-sets}

We start by briefly proving that \Cref{alg:linear-srinivasan-rounding} terminates and preserves marginals.

\begin{fact}
\Cref{alg:linear-srinivasan-rounding} terminates, and satisfies $\Pr[i\in \calS]=x_i$ for all $i\in [n]$ (Property \ref{prop:marginals}).
\end{fact}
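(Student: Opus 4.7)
My plan has two parts: first establishing termination (so that $\calS$ is well-defined), and then verifying the marginal identity by a one-step invariance argument.

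For termination, I would first observe that STEP preserves the sum $\sum_i y_i$ in both branches: in the ``$A+B<1$'' case the pair is mapped to $(A+B,0)$ or $(0,A+B)$, both with the same sum $A+B$, and in the ``$A+B \ge 1$'' case to $(1,A+B-1)$ or $(A+B-1,1)$, again summing to $A+B$. Consequently $\sum_i y_i = \sum_i x_i \in \mathbb{Z}$ throughout. Next, in either branch of STEP at least one of the two modified entries is pushed to $\{0,1\}$, so $|frac(y)|$ decreases by at least one per invocation. Combined with the integrality of $\sum_i y_i$, which rules out $|frac(y)|=1$, we get $|frac(y)|$ dropping from at most $n$ to $0$ in at most $n-1$ rounds, so the while-loop terminates.

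For marginals, the key computation is that STEP preserves conditional expectations of its two arguments. In the first branch, $\mathbb{E}[\text{new }A \mid A,B] = \frac{A}{A+B}(A+B)+\frac{B}{A+B}\cdot 0 = A$, and symmetrically for $B$. In the second branch, a short algebraic simplification of $\frac{(1-B)\cdot 1 + (1-A)(A+B-1)}{2-A-B}$ gives $A$, and again symmetrically for $B$. All other coordinates are untouched by STEP. Thus $\mathbb{E}[y_i^{t+1} \mid y^t] = y_i^t$ for every $i$, so $(y_i^t)_t$ is a martingale in $t$. Since STEP terminates almost surely after finitely many steps with $y_i \in \{0,1\}$ for all $i$, the optional stopping theorem (or a trivial induction, since the total number of STEPs is bounded by $n-1$) gives $\mathbb{E}[y_i^{\text{final}}] = \mathbb{E}[y_i^0] = x_i$. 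Because $y_i^{\text{final}}\in\{0,1\}$, this is exactly $\Pr[i\in\calS]=x_i$, as desired.

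I do not expect any real obstacle: the only spot requiring care is the algebraic identity $(1-B)+(1-A)(A+B-1)=A(2-A-B)$ used to verify the martingale property in the second branch, which is routine. The specific choice of $i_1,i_2$ in Algorithm~\ref{alg:linear-srinivasan-rounding} plays no role in this part of the argument (it will matter later for Property~\ref{prop:floor/ceil}).
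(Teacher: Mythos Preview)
Your proof is correct and takes essentially the same approach as the paper: establish termination by noting that STEP preserves the total sum (hence integrality of $\sum_i y_i$) while strictly shrinking $|frac(y)|$, and establish the marginals by observing that STEP preserves the expectation of each coordinate, then pushing this through by induction. The paper's argument is terser but identical in substance; your martingale phrasing and explicit algebra are fine elaborations, and your remark that the specific choice of $i_1,i_2$ is irrelevant here is also made in the paper.
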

\begin{proof}
It is easy to see that STEP preserves the sum of its inputs/outputs while decreasing the number of fractional such inputs/outputs. Consequently, $\sum_i y_i$ is integral during the execution of \Cref{alg:linear-srinivasan-rounding}, while the number of fractional $y_i$ decreases and cannot be one, so this algorithm terminates with all $y_i$ binary.
Moreover, STEP is easily seen to preserve these values' expectations, and so by induction we have that $\E[y_i] = x_i$ for all $i$, i.e., \Cref{alg:linear-srinivasan-rounding} satisfies Property \ref{prop:marginals}. 
\end{proof}

\offlineprefix*
\begin{proof}
	We prove this for all $j$ by induction on $t\geq 0$. The base case is trivial. Fix a time $t$ and let $i_1$ and $i_2$ be as in \Cref{alg:linear-srinivasan-rounding} at time $t+1$. If $i_1,i_2\not\in [j]$ or $i_1,i_2\in [j]$, then, since STEP$(y_{i_1},y_{i_2})$ does not affect $Y^t_j$ in the former case and preserves the sums $y_{i_1}+y_{i_2}$ and $Y_j$ in the latter case, the inductive hypothesis implies $Y^{t+1}_j = Y^t_j \in \left[\lfloor s_j \rfloor, \lceil s_j \rceil \right]$. Conversely, if $i_1\in [j]$ and $i_2\not\in [j]$ (and therefore $i_1=j$), we have that $y^{t+1}_i = y^t_i\in \{0,1\}$ for all $i\in [j-1]$, while $y^{t+1}_{i_1}\in [0,1]$. Since by the inductive hypothesis we have that $Y^t_j = Y^t_{j-1} + y^t_j \in \left[\lfloor s_j \rfloor, \lceil s_j \rceil \right]$ and $Y^t_{j-1}=Y^{t+1}_{j-1}$ is the sum of 0-1 terms, while $y^t_j\in (0,1)$, this implies that $Y^{t+1}_j = Y^{t+1}_{j-1} + y^{t+1}_j = Y^{t}_{j-1} + y^{t+1}_j\in \left[\lfloor s_j \rfloor, \lceil s_j \rceil \right]$.
\end{proof}

\subsection{Direct (partial) analysis of \Cref{alg:rounding}}In this section we provide a simple, self-contained proof that \Cref{alg:rounding} satisfies the first two properties we desire of it. 
In the paper body, we use this direct proof to prove that our algorithm's output distribution is the same as its offline counterpart, from which we also obtain the third desideratum, namely Property \ref{prop:sr}.

\begin{lem}\label{level-set:simple}
	\Cref{alg:rounding} is well-defined (i.e., $p_t\in [0,1]$ for all times $t$ and any realization of the randomness) and it satisfies properties \ref{prop:marginals} and \ref{prop:floor/ceil}.
\end{lem}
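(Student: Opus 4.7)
I would prove both Properties \ref{prop:marginals} and \ref{prop:floor/ceil} simultaneously by induction on $t\geq 0$, strengthening the statement to the joint invariant: (a) $S_t\in\{\lfloor s_t\rfloor,\lceil s_t\rceil\}$ always, and (b) $\Pr[S_t=\lfloor s_t\rfloor+1]=s_t-\lfloor s_t\rfloor$. Together these pin down the (at most) two-point distribution of $S_t$, give $\E[S_t]=s_t$, and yield Property \ref{prop:marginals} by telescoping, $\E[X_t]=\E[S_t]-\E[S_{t-1}]=s_t-s_{t-1}=x_t$. The base case $t=0$ is immediate from $S_0=0=s_0$.

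For the inductive step at time $t$, I would first observe that $x_t\in[0,1]$ forces $\lfloor s_t\rfloor\in\{\lfloor s_{t-1}\rfloor,\lfloor s_{t-1}\rfloor+1\}$, and that by (a) at time $t-1$ we have $|\calS|=S_{t-1}\in\{\lfloor s_{t-1}\rfloor,\lceil s_{t-1}\rceil\}$. I then split into the two cases for $\lfloor s_t\rfloor$, verifying in each that the appropriate branch of the piecewise definition of $p_t$ applies, that it gives a value in $[0,1]$, and that the resulting $S_t$ satisfies (a) and (b).

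In Case~I ($\lfloor s_t\rfloor=\lfloor s_{t-1}\rfloor$): if $S_{t-1}=\lceil s_{t-1}\rceil$ then $|\calS|=\lceil s_t\rceil$ so the first branch triggers, $p_t=0$, and $S_t=\lceil s_t\rceil$; if $S_{t-1}=\lfloor s_{t-1}\rfloor=\lfloor s_t\rfloor$ the third branch triggers, and the defining inequality $s_t<\lceil s_{t-1}\rceil$ of this case yields $x_t\leq \lceil s_{t-1}\rceil-s_{t-1}$, so $p_t\in[0,1]$. A one-line calculation using the inductive value $\Pr[S_{t-1}=\lceil s_{t-1}\rceil]=s_{t-1}-\lfloor s_{t-1}\rfloor$ gives $\Pr[S_t=\lfloor s_t\rfloor+1]=(s_{t-1}-\lfloor s_{t-1}\rfloor)+x_t=s_t-\lfloor s_t\rfloor$. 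In Case~II ($\lfloor s_t\rfloor=\lfloor s_{t-1}\rfloor+1$, so $\lceil s_{t-1}\rceil=\lfloor s_t\rfloor$): if $S_{t-1}=\lfloor s_{t-1}\rfloor$ then $|\calS|<\lfloor s_t\rfloor$ and the second branch triggers ($p_t=1$), forcing $S_t=\lfloor s_t\rfloor$; if $S_{t-1}=\lceil s_{t-1}\rceil=\lfloor s_t\rfloor$, which by (a)--(b) can only happen when $s_{t-1}\notin\mathbb{Z}$, the fourth branch triggers, $p_t\leq 1$ reduces algebraically to $x_t\leq 1$, and adding yields $S_t=\lceil s_t\rceil$. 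Collating both subcases with the inductive hypothesis again yields $\Pr[S_t=\lfloor s_t\rfloor+1]=s_t-\lfloor s_t\rfloor$.

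The main obstacle is verifying that the catch-all fifth branch ``else~$\Rightarrow p_t=0$'' is never actually reached on a trajectory consistent with the inductive hypothesis: its trigger $|\calS|=\lfloor s_t\rfloor>\lfloor s_{t-1}\rfloor$ combined with $s_{t-1}\in\mathbb{Z}$ is contradictory, since when $s_{t-1}\in\mathbb{Z}$, invariants (a)--(b) at $t-1$ force $S_{t-1}=s_{t-1}=\lfloor s_{t-1}\rfloor$, and then $\lfloor s_t\rfloor>\lfloor s_{t-1}\rfloor$ together with $x_t\leq 1$ forces $x_t=1$ and $S_{t-1}<\lfloor s_t\rfloor$, activating the second branch instead. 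Beyond this corner and the analogous integer-boundary bookkeeping needed to keep the branches for $s_{t-1}\in\mathbb{Z}$ or $s_t\in\mathbb{Z}$ aligned with Cases~I and~II, every verification reduces to a one-line arithmetic identity.
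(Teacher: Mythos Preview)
Your proposal is correct and follows essentially the same inductive approach as the paper: both prove well-definedness, \ref{prop:marginals} and \ref{prop:floor/ceil} together by induction on $t$, with the two-point distribution of $S_t$ (your invariant (b), the paper's $q_t=\lfloor s_t\rfloor+1-s_t$) as the key strengthening. The only differences are cosmetic: you fold (b) directly into the inductive hypothesis and derive $\E[X_t]=x_t$ by telescoping $\E[S_t]-\E[S_{t-1}]$, whereas the paper first derives $q_t$ from \ref{prop:marginals}+\ref{prop:floor/ceil} at time $t-1$ and then computes $\E[X_t]$ by total probability over $S_{t-1}$.
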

\begin{proof}
	We prove the above three properties (including $p_t\in [0,1]$) by strong induction on $t\geq 0$. The base case, $t=0$, is trivial. As our inductive hypothesis (I.H.), we assume that these properties hold for all $t'\leq t-1$ and prove them for all $t'\leq t$, starting with Property \ref{prop:floor/ceil}. By the I.H., $S_{t-1}\leq \lceil s_{t-1} \rceil \leq \lceil s_t \rceil$; so, since $p_t=0$ if $S_{t-1} = \lceil s_{t}\rceil$, we trivially have that $S_{t}\leq \lceil s_{t} \rceil$. 
	Moreover, by the I.H., $S_{t-1} \geq \lfloor s_{t-1} \rfloor = \lfloor s_t - x_{t} \rfloor \geq  \lfloor s_{t} \rfloor - 1$; so, since $p_{t}=1$ if $S_{t-1}<\lfloor s_{t} \rfloor$, in which case $S_{t-1} = \lfloor s_t \rfloor - 1$, we have that $S_{t}\geq \lfloor s_{t} \rfloor$. 
	Thus, $S_{t} \in \{\lfloor s_{t} \rfloor, \lceil s_{t} \rceil\}$, proving Property \ref{prop:floor/ceil}.
	
	Next we prove that \Cref{alg:rounding} is well-defined, as $p_t\in [0,1]$ (regardless of prior randomness). 
	The first non-trivial case is when $S_{t-1}=\lfloor s_{t} \rfloor = \lfloor s_{t-1} \rfloor$. Non-negativity of $p_t$---a ratio of non-negative terms---is immediate, while $$p_t = \frac{x_t}{\lfloor s_{t-1} \rfloor + 1 - s_{t-1}} = \frac{s_t - s_{t-1}}{\lfloor s_{t-1} \rfloor + 1 - s_{t-1}} \leq \frac{\lfloor s_t \rfloor + 1 - s_{t-1}}{\lfloor s_{t-1} \rfloor + 1 - s_{t-1}} = \frac{\lfloor s_{t-1} \rfloor + 1 - s_{t-1}}{\lfloor s_{t-1} \rfloor + 1 - s_{t-1}} = 1.$$ The second non-trivial case is when $S_{t-1} = \lfloor s_{t} \rfloor > \lfloor s_{t-1} \rfloor,$ in which case $\lfloor s_{t} \rfloor = \lfloor s_{t-1} \rfloor + 1$. Again, non-negativity of $p_t$ is immediate, while $$p_t = \frac{s_t - \lfloor s_t \rfloor}{s_{t-1} - \lfloor s_{t-1} \rfloor} = \frac{s_{t-1}+x_t - \lfloor s_{t-1} \rfloor - 1}{s_{t-1} - \lfloor s_{t-1} \rfloor} \leq \frac{s_{t-1} - \lfloor s_{t-1} \rfloor}{s_{t-1} - \lfloor s_{t-1} \rfloor} = 1.$$

	To prove Property \ref{prop:marginals}, we will need a closed form for $q_t := \Pr[S_{t} = \lfloor s_t \rfloor]$, useful when discussing $\Pr[S_{t-1} = \lfloor s_{t} \rfloor]$, the probability that $t$ may be added to $\calS$, and $\Pr[S_{t-1} < \lfloor s_t \rfloor]$, the probability that $t$ \emph{must} be added to $\calS$ (to satisfy Property \ref{prop:floor/ceil}).
	Note that $S_t \sim \lfloor s_t \rfloor + \textrm{Ber}(1-q_t)$ by Property \ref{prop:floor/ceil}, while $\E[S_t] = \sum_{t'\leq t} \E[X_{t'}] = \sum_{t'\leq t} x_{t'} = s_t$, by Property \ref{prop:marginals} (for all $t'\leq t$) and linearity of expectation. Combining these observations, we obtain $s_t = \E[S_t] = \lfloor s_t \rfloor + 1-q_t$, 
	and therefore, 
	\begin{align}\label{eqn:q_t}
	q_t = \lfloor s_t \rfloor + 1 - s_t.
	\end{align}
	We now use the closed form for $q_{t-1}$ to prove Property \ref{prop:marginals} for time $t$. 
	The easy case is when $\lfloor s_t \rfloor = \lfloor s_{t-1} \rfloor$. Here we have that $\E[X_t \mid S_{t-1} = \lfloor s_{t} \rfloor] = \frac{x_t}{q_{t-1}}$, while trivially $\E[X_t \mid S_{t-1} = \lceil s_{t} \rceil] = 0$. Consequently, since $S_{t-1}\in \{\lfloor s_{t-1}\rfloor, \lceil s_{t-1} \rceil\} = \{\lfloor s_{t}\rfloor, \lceil s_{t} \rceil\}$ in this case, then by total probability we have that $\E[X_t] = q_{t-1}\cdot \frac{x_t}{q_{t-1}} + 0 = x_t$, as desired.

	Finally, we consider the case where $\lfloor s_{t} \rfloor > \lfloor s_{t-1} \rfloor$, i.e., $\lfloor s_{t} \rfloor = \lfloor s_{t-1} \rfloor + 1$. 
	If in addition $s_{t-1}=\lfloor s_{t-1} \rfloor$ (i.e., $s_{t-1}$ is integral, implying that $s_t = s_{t-1}+1$, and hence $x_t = s_t - s_{t-1} = 1$), we have by Property \ref{prop:floor/ceil} that $S_{t-1}=s_{t-1} < \lfloor s_t \rfloor$ (with probability one), and so we add $t$ to $\calS$ with probability one, resulting in $\E[X_t] = 1 = x_t$, as desired.	
	If, conversely, we  have that $s_{t-1}\neq \lfloor s_{t-1}\rfloor < \lfloor s_{t}\rfloor$, then by total probability, we obtain the desired equality as follows.
	\begin{align*}
	\E[X_t] & = \Pr[S_{t-1} = \lfloor s_{t-1}\rfloor]\cdot 1 + \Pr[S_{t-1} \neq \lfloor s_{t-1}\rfloor] \cdot \frac{s_t - \lfloor s_t \rfloor}{s_{t-1} - \lfloor s_{t-1} \rfloor} \\
	& = q_{t-1} + (1-q_{t-1})\cdot \frac{x_t + s_{t-1} - (\lfloor s_{t-1} \rfloor + 1)}{s_{t-1} - \lfloor s_{t-1} \rfloor} && \begin{cases} s_t=x_t + s_{t-1} \\ \lfloor s_t\rfloor = \lfloor s_{t-1}\rfloor +1
	\end{cases} \\
	& = q_{t-1} + (1-q_{t-1})\cdot \frac{x_t - q_{t-1}}{1-q_{t-1}} && \textrm{\Cref{eqn:q_t}}\\
	& = x_t. && \qedhere
	\end{align*}
\end{proof}

\subsection{Coupling the offline and online algorithms}

Here we prove that our online level-set algorithm, \Cref{alg:rounding}, induces the same output distribution as the offline algorithm of \cite{srinivasan2001distributions}, \Cref{alg:linear-srinivasan-rounding}
\equivalence*
\begin{proof}
We prove by induction on $t\in [n]$ that the following holds:
\begin{equation}
    \label{eqn:online-equals-offline}
    \forall (b_1, \ldots, b_{t-1}) \in \{0,1\}^{t-1}, ~
\Pr_\calD[X_t = 1 \bigm| (\forall i < t, ~X_i = b_i)] =
\Pr_{\calD'}[X_t = 1 \bigm| (\forall i < t, ~X_i = b_i)]. 
\end{equation}
The (strong) inductive hypothesis, whereby \Cref{eqn:online-equals-offline} holds for all $t'< t$ clearly implies that for all $(b_1,\dots,b_t)\in \{0,1\}^{t-1}$ we have that $\Pr_\calD[(\forall i<t, ~X_i=b_i)]=\Pr_{\calD'}[(\forall i<t, ~X_i=b_i)]$.
Similarly, \Cref{eqn:online-equals-offline} holds for all $t\leq n$ iff the two distributions are the same.

The base case $t = 1$ follows from Property \ref{prop:marginals}. So suppose $t > 1$. Fix $b_1, \ldots, b_{t-1}\in \{0,1\}$, and let $\mathcal{E}$ denote the event ``$\forall i < t, ~X_i = b_i$" (measured w.r.t.\ 
$\calD$ or w.r.t.\ $\calD'$). Then, by the strong induction hypothesis, 
$\Pr_\calD[\mathcal{E}] = \Pr_{\calD'}[\mathcal{E}]$, and in particular,
$\Pr_\calD[\mathcal{E}] \not= 0$ iff $\Pr_{\calD'}[\mathcal{E}] \not= 0$; thus we may assume that the events conditioned on in the LHS and in the RHS of \eqref{eqn:online-equals-offline} both have nonzero (identical) probabilities. This is the only place where we will need the induction hypothesis. 

Recall that $s_{t-1} = \sum_{i=1}^{t-1} x_i$. Furthermore, although $S_{t-1}$ technically depends on the rounding algorithm, it is natural to take it to be $\sum_{i=1}^{t-1} b_i$ here. Note that since both the offline and online algorithms satisfy Property \ref{prop:floor/ceil}, we have that
$S_{t-1} \in \{\lfloor s_{t-1} \rfloor,\lceil s_{t-1} \rceil\}$. 

\smallskip \noindent \textbf{Case I: $s_{t-1} \in \mathbb{Z}$.} This is an easy case. It is immediate that 
$\Pr_\calD[X_t = 1 \bigm| \mathcal{E}] = x_t$ here. This is also not hard to see this for the offline algorithm. 
Recall that the offline algorithm starts with a copy $y$ of $x$. 
Since $s_{t-1} \in \mathbb{Z}$, the offline algorithm would set $X_i$ for exactly $s_{t-1}$ indices $i \in [t-1]$ to one (and the rest in $[t-1]$ to zero), and would then recursively run on the suffix $(y_t, y_{t+1}, \ldots, y_n)$ of $y$ with no correlation with the rounding thus far. Furthermore, $y_j = x_j$ for all $j \geq t$. Thus, by Property \ref{prop:marginals} applied to this suffix, we have that 
$\Pr_{\calD'}[X_t = 1 \bigm| \mathcal{E}] = x_t$ also.

\smallskip 
We may thus assume from now on that $z \doteq s_{t-1} - \lfloor s_{t-1} \rfloor$ lies in $(0,1)$.

\smallskip \noindent \textbf{Case II: $s_{t-1} \not\in \mathbb{Z}$ and $z + x_t \leq 1$.} An easy sub-case here is that 
$S_{t-1} = \lceil s_{t-1} \rceil$; here, since both the offline and online algorithms satisfy Property \ref{prop:floor/ceil}, it is immediate that in this sub-case
\[ \Pr_\calD[X_t = 1 \bigm| \mathcal{E}] = \Pr_{\calD'}[X_t = 1 \bigm| \mathcal{E}] = 0. \]
We may thus assume that 
$S_{t-1} = \lfloor s_{t-1} \rfloor$. By the definition of the online algorithm \ref{alg:rounding} we have that
\begin{equation}
    \label{eqn:caseii-online}
 \Pr_\calD[X_t = 1 \bigm| \mathcal{E}] = \frac{x_t}{1 - z}.   
\end{equation}
We now analyze $\Pr_{\calD'}[X_t = 1 \bigm| \mathcal{E}]$. Consider the offline algorithm \emph{just before} it involves $x_t$ in STEP: at this point in time, it has:
\begin{itemize}
\item set $\lfloor s_{t-1} \rfloor$ of the variables $(X_i: i < t)$ to one;
\item set $t - 2 - \lfloor s_{t-1} \rfloor$ of the variables $(X_i: i < t)$ to zero; and, recalling again that the offline algorithm starts with a copy $y$ of $x$,
\item it has assigned the current value $z$ to one variable $y_{i^*}$ where $i^* \in [t-1]$ is a random variable from some distribution. 
\end{itemize}
At this point, the offline algorithm is basically recursively run on the sequence $(y_{i^*}, y_t, y_{t+1}, \ldots, y_n)$, where $y_{i^*} = z$ and $y_j = x_j$ for all $j \geq t$. Crucially, since $S_{t-1} = \lfloor s_{t-1} \rfloor$, the conditioning on $\mathcal{E}$ says that the variable $y_{i^*}$ is \emph{eventually rounded to $0$}. Thus, we can compute $\Pr_{\calD'}[X_t = 1 \bigm| \mathcal{E}]$ as follows. Suppose the offline algorithm is run on the sequence $(u_1, u_2, \ldots, u_{n-t+2}) \doteq (y_{i^*}, y_t, y_{t+1}, \ldots, y_n)$, to output a random bit-vector $(U_1, U_2, \ldots, U_{n-t+2})$; then,
\begin{equation}
    \label{eqn:caseii-offline}
  \Pr_{\calD'}[X_t = 1 \bigm| \mathcal{E}] = \Pr_{\calD'}[U_2 = 1 \bigm| U_1 = 0]. 
\end{equation}
But since $u_1 + u_2 = z + x_t \leq 1$, the first STEP applied to $u_1$ and $u_2$ ensures that \emph{at most one of $u_1$ and $u_2$ gets rounded to one}. This yields
\[ \Pr_{\calD'}[U_2 = 1 \bigm| U_1 = 0] = 
\frac{Pr_{\calD'}[(U_2 = 1) \wedge (U_1 = 0)]}{Pr_{\calD'}[U_1 = 0]} 
= \frac{Pr_{\calD'}[U_2 = 1]}{Pr_{\calD'}[U_1 = 0]}  = \frac{x_t}{1 - z}, \]
where the second equality holds since at most one of $U_1$ and $U_2$ is one; we are thus done by \eqref{eqn:caseii-online}. 

\smallskip \noindent \textbf{Case III: $s_{t-1} \not\in \mathbb{Z}$ and $z + x_t > 1$.} This is handled similarly. The easy sub-case here is that 
$S_{t-1} = \lfloor s_{t-1} \rfloor$; here, since both the offline and online algorithms satisfy Property \ref{prop:floor/ceil}, it is immediate that
$$\Pr_\calD[X_t = 1 \bigm| \mathcal{E}] = \Pr_{\calD'}[X_t = 1 \bigm| \mathcal{E}] = 1.$$
We may thus assume that 
$S_{t-1} = \lceil s_{t-1} \rceil$. By the definition of the online algorithm \ref{alg:rounding} we have that
\begin{equation}
    \label{eqn:caseiii-online}
 \Pr_\calD[X_t = 1 \bigm| \mathcal{E}] = \frac{x_t+z-1}{z}.   
\end{equation}
Now, the offline algorithm is, like in Case II, essentially run on the sequence $(u_1, u_2, \ldots, u_{n-t+2}) \doteq (y_{i^*}, y_t, y_{t+1}, \ldots, y_n)$---where $y_{i^*}$ is initially $z$---to output a random bit-vector $(U_1, U_2, \ldots, U_{n-t+2})$, \emph{with the key difference from Case II being that $U_1$ will eventually be set to $1$}. So, 
\begin{equation}
    \label{eqn:caseiii-offline}
  \Pr_{\calD'}[X_t = 1 \bigm| \mathcal{E}] = \Pr_{\calD'}[U_2 = 1 \bigm| U_1 = 1]. 
    \end{equation}
Now, since $u_1 + u_2 = z + x_t > 1$, the first STEP applied to $u_1$ and $u_2$ ensures that \emph{at least} one of $u_1$ and $u_2$ gets rounded to one. This implies 
\begin{eqnarray*}
\Pr_{\calD'}[U_2 = 1 \bigm| U_1 = 1] & = & 
\frac{Pr_{\calD'}[U_2 = 1] - Pr[(U_2 = 1) \wedge (U_1 = 0)]}{Pr_{\calD'}[U_1 = 1]} \\
& = & \frac{Pr_{\calD'}[U_2 = 1] - Pr_{\calD'}[U_1 = 0]}{Pr_{\calD'}[U_1 = 1]}
\\
& = & \frac{x_t + z - 1}{z},
\end{eqnarray*}
where the second equality holds since at least one of $U_1$ and $U_2$ is one;
we are thus done by \eqref{eqn:caseiii-online}. This concludes the proof of the inductive step.
\end{proof}

	\section{Deferred Proofs of \Cref{sec:lower-bound}}\label{appendix:lower-bounds}

In this section we prove that any sufficiently many binary variables must contain some $2^r$ with some (near-)positive $2^r$-wise correlation, generalizing the special case of $r=1$ given by \Cref{pairwise-positive-correlation}.
\almostpositive*
\begin{proof}
	We prove this claim for all $0\leq \epsilon\leq p \leq 1$ by induction on $r\geq 1$, and note that the claim is trivial if $p^{2^r}\leq \eps$, in which case the RHS is negative, so $n_r(p,\eps)=2^r$ suffices.
        \Cref{pairwise-positive-correlation} proves the base case, showing that $n_1(p,\epsilon) \leq \lceil \frac{2p}{\epsilon}+1\rceil$ suffices. To prove the inductive step, 
        consider a set of $k:=n_1(p,\frac{\epsilon}{2^{2^{r}}})+2m$ variables $Y_1,\dots, Y_k\sim \Ber(p)$, for $m$ to be determined shortly. Then, we find a sequence of disjoint pairs $I_1,I_2,\dots,I_m\subseteq [n]$, where for each $I_s=\{i,j\}$, $s\in [m]$, we have that $\Cov(Y_i,Y_j)\geq -\frac{\epsilon}{2^{2^{r}}}$.
        The existence of such pairs follows by repeatedly invoking  \Cref{pairwise-positive-correlation} applied to the variables $\{Y_i \mid i\in [n]\setminus \bigcup_{\ell=1}^s {I_\ell}\}$ to obtain the pair $I_{s+1}$. 
	Now, for each pair $I_s = \{i,j\}$ as above, define a new variable $Z_s := Y_{i}\cdot Y_j$. By $\Cov(Y_i,Y_j)\geq -\frac{\epsilon}{2^{2^{r}}}$, we have that $\Pr[Z_s] \geq p^2 - \frac{\epsilon}{2^{2^{r}}}$.
	Next, couple these combined variables with equiprobable variables $A_s \sim \Ber(p^2 -\frac{\epsilon}{2^{2^{r}}})$ with $Z_s \geq A_s$ always.
        Then, if we take $m = n_{r-1}(p^2-\frac{\epsilon}{2^{2^{r}}},\frac{\epsilon}{2})$, the inductive hypothesis applied to the $m$ variables $A_s$ implies the existence of a subset $S\subseteq [m]$ of $2^{r-1}$ of these variables satisfing the following.
	\begin{align*}
		\E\left[\prod_{i\in \bigcup_{s\in S} I_s} Y_i \right] & = \E\left[\prod_{s\in S} Z_s \right] \geq 
		\E\left[\prod_{s\in S} A_s \right] \geq \left(p^2-\frac{\epsilon}{2^{2^{r}}}\right)^{2^{r-1}} - \frac{\epsilon}{2} \geq p^{2^r} - 2^{2^{r-1}}\cdot \frac{\epsilon}{2^{2^{r}}} - \frac{\epsilon}{2} \geq p^{2^r} - \epsilon.
	\end{align*}
        Above, the first inequality follows by the coupling $Z_s\geq A_s$. The second inequality follows from the inductive hypothesis. The third inequality follows from $(a-b)^k \geq a^k - 2^k\cdot b$ for any $a,b\in [0,1]$ and $k\in \mathbb{N}$ (as seen by expanding $(a-b)^k$). The final inequality follows from $2^{2^{r}-1}\geq 2^{2^{r-1}}$ for $r\geq 1$.
	We conclude that $\{Y_i \mid i\in \bigcup_{s\in S} I_s\}$ is the desired subset of $2^r$ variables in $Y_1,\dots,Y_n$.
\end{proof}

\begin{remark}
	We did not attempt to optimize the minimum possible $n_r(p,\eps)$ above. Nonetheless, for concrete bounds, we note that if $p^{2^r}\geq \eps$, then our proof yields bounds satisfying the following recurrence.
	\begin{align*}
		n_1(p,\epsilon) & \leq \left\lceil\frac{2p}{\epsilon} + 1 \right\rceil \leq \frac{4p}{\epsilon}, \\
		n_r(p,\epsilon) & \leq n_1\left(p,\frac{\epsilon}{2^{2^{r}}}\right) + 2n_{r-1}\left(p^2 - \frac{\epsilon}{2^{2^{r}}},\frac{\epsilon}{2}\right) \leq 
		\frac{4\cdot 2^{2^{r}} p}{\epsilon} + 2n_{r-1}\left(p^2 - \frac{\epsilon}{2^{2^{r}}},\frac{\epsilon}{2}\right).
	\end{align*}
	Noting that this recurrence is monotone increasing in $p$, we have that $n_r(p,\eps)=O\left(2^{2^r}\cdot \frac{p}{\eps}\right)$.
\end{remark}

	\bibliographystyle{acmsmall}
	\bibliography{abb,ultimate}

\begin{thebibliography}{}

\bibitem[\protect\citeauthoryear{}{sit}{}]{site:ALPS}
Algorithms with predictions ({ALPS}).
\newblock \url{https://algorithms-with-predictions.github.io/}.
\newblock Accessed: 2024-10-24.

\bibitem[\protect\citeauthoryear{Aggarwal, Goel, Karande, and Mehta}{Aggarwal
  et~al\mbox{.}}{2011}]{aggarwal2011online}
{\sc Aggarwal, G.}, {\sc Goel, G.}, {\sc Karande, C.}, {\sc and} {\sc Mehta,
  A.} 2011.
\newblock Online vertex-weighted bipartite matching and single-bid budgeted
  allocations.
\newblock In {\em Proceedings of the 22nd Annual ACM-SIAM Symposium on Discrete
  Algorithms (SODA)}. 1253--1264.

\bibitem[\protect\citeauthoryear{Aggarwal, Motwani, Shah, and Zhu}{Aggarwal
  et~al\mbox{.}}{2003}]{aggarwal2003switch}
{\sc Aggarwal, G.}, {\sc Motwani, R.}, {\sc Shah, D.}, {\sc and} {\sc Zhu, A.}
  2003.
\newblock Switch scheduling via randomized edge coloring.
\newblock In {\em Proceedings of the 44th Symposium on Foundations of Computer
  Science (FOCS)}. 502--512.

\bibitem[\protect\citeauthoryear{Agrawal, Gollapudi, Halverson, and
  Ieong}{Agrawal et~al\mbox{.}}{2009}]{DBLP:conf/wsdm/AgrawalGHI09}
{\sc Agrawal, R.}, {\sc Gollapudi, S.}, {\sc Halverson, A.}, {\sc and} {\sc
  Ieong, S.} 2009.
\newblock Diversifying search results.
\newblock In {\em Proceedings of the 2nd Symposium on Web Search and Data
  Mining (WSDM)}. 5--14.

\bibitem[\protect\citeauthoryear{Aouad, AmaniHamedani, and Saberi}{Aouad
  et~al\mbox{.}}{2024}]{aouad2024adaptive}
{\sc Aouad, A.}, {\sc AmaniHamedani, A.}, {\sc and} {\sc Saberi, A.} 2024.
\newblock Adaptive policies and approximation schemes for dynamic matching.
\newblock
  \href{https://www.birs.ca/events/2024/2-day-workshops/24w2021/videos/watch/202404070934-Aouad.html}{Video}
  and
  \href{https://indico.math.cnrs.fr/event/10544/contributions/11704/attachments/5242/8658/Slides\%20-\%20Ali\%20Aouad.pdf}{slides}.
\newblock accessed: 2024-10-24.

\bibitem[\protect\citeauthoryear{Bahmani, Mehta, and Motwani}{Bahmani
  et~al\mbox{.}}{2012}]{bahmani2012online}
{\sc Bahmani, B.}, {\sc Mehta, A.}, {\sc and} {\sc Motwani, R.} 2012.
\newblock Online graph edge-coloring in the random-order arrival model.
\newblock {\em Theory of Computing\/}~{\em 8,\/}~1, 567--595.

\bibitem[\protect\citeauthoryear{Bansal and Cohen}{Bansal and
  Cohen}{2021}]{bansal2021contention}
{\sc Bansal, N.} {\sc and} {\sc Cohen, I.~R.} 2021.
\newblock Contention resolution, matrix scaling and fair allocation.
\newblock In {\em Proceedings of the 19th Workshop on Approximation and Online
  Algorithms (WAOA)}. 252--274.

\bibitem[\protect\citeauthoryear{Bar-Noy, Motwani, and Naor}{Bar-Noy
  et~al\mbox{.}}{1992}]{bar1992greedy}
{\sc Bar-Noy, A.}, {\sc Motwani, R.}, {\sc and} {\sc Naor, J.} 1992.
\newblock The greedy algorithm is optimal for on-line edge coloring.
\newblock {\em Information Processing Letters (IPL)\/}~{\em 44,\/}~5, 251--253.

\bibitem[\protect\citeauthoryear{Bechtel, Dughmi, and Patel}{Bechtel
  et~al\mbox{.}}{2022}]{bechtel2022delegated}
{\sc Bechtel, C.}, {\sc Dughmi, S.}, {\sc and} {\sc Patel, N.} 2022.
\newblock Delegated {P}andora's box.
\newblock 666--693.

\bibitem[\protect\citeauthoryear{Ben-David, Borodin, Karp, Tardos, and
  Wigderson}{Ben-David et~al\mbox{.}}{1994}]{ben1994power}
{\sc Ben-David, S.}, {\sc Borodin, A.}, {\sc Karp, R.}, {\sc Tardos, G.}, {\sc
  and} {\sc Wigderson, A.} 1994.
\newblock On the power of randomization in on-line algorithms.
\newblock {\em Algorithmica\/}~{\em 11,\/}~1, 2--14.

\bibitem[\protect\citeauthoryear{Bhattacharya, Grandoni, and Wajc}{Bhattacharya
  et~al\mbox{.}}{2021}]{bhattacharya2021online}
{\sc Bhattacharya, S.}, {\sc Grandoni, F.}, {\sc and} {\sc Wajc, D.} 2021.
\newblock Online edge coloring algorithms via the nibble method.
\newblock In {\em Proceedings of the 32nd Annual ACM-SIAM Symposium on Discrete
  Algorithms (SODA)}. 2830--2842.

\bibitem[\protect\citeauthoryear{Birnbaum and Mathieu}{Birnbaum and
  Mathieu}{2008}]{birnbaum2008line}
{\sc Birnbaum, B.} {\sc and} {\sc Mathieu, C.} 2008.
\newblock On-line bipartite matching made simple.
\newblock {\em ACM SIGACT News\/}~{\em 39,\/}~1, 80--87.

\bibitem[\protect\citeauthoryear{Blanc and Charikar}{Blanc and
  Charikar}{2021}]{blanc2021multiway}
{\sc Blanc, G.} {\sc and} {\sc Charikar, M.} 2021.
\newblock Multiway online correlated selection.
\newblock In {\em Proceedings of the 62nd Symposium on Foundations of Computer
  Science (FOCS)}. 1277--1284.

\bibitem[\protect\citeauthoryear{Blikstad, Svensson, Vintan, and Wajc}{Blikstad
  et~al\mbox{.}}{2024a}]{blikstad2024online}
{\sc Blikstad, J.}, {\sc Svensson, O.}, {\sc Vintan, R.}, {\sc and} {\sc Wajc,
  D.} 2024a.
\newblock Online edge coloring is (nearly) as easy as offline.
\newblock In {\em Proceedings of the 56th Annual ACM Symposium on Theory of
  Computing (STOC)}. 36--46.

\bibitem[\protect\citeauthoryear{Blikstad, Svensson, Vintan, and Wajc}{Blikstad
  et~al\mbox{.}}{2024b}]{blikstad2024simple}
{\sc Blikstad, J.}, {\sc Svensson, O.}, {\sc Vintan, R.}, {\sc and} {\sc Wajc,
  D.} 2024b.
\newblock Simple and asymptotically optimal online bipartite edge coloring.
\newblock In {\em Proceedings of the 7th Symposium on Simplicity in Algorithms
  (SOSA)}. 331--336.

\bibitem[\protect\citeauthoryear{Blikstad, Svensson, Vintan, and Wajc}{Blikstad
  et~al\mbox{.}}{2025}]{blikstad2025deterministic}
{\sc Blikstad, J.}, {\sc Svensson, O.}, {\sc Vintan, R.}, {\sc and} {\sc Wajc,
  D.} 2025.
\newblock Deterministic online bipartite edge coloring.
\newblock In {\em Proceedings of the 36th Annual ACM-SIAM Symposium on Discrete
  Algorithms (SODA)}.

\bibitem[\protect\citeauthoryear{Borcea, Br{\"a}nd{\'e}n, and Liggett}{Borcea
  et~al\mbox{.}}{2009}]{borcea2009negative}
{\sc Borcea, J.}, {\sc Br{\"a}nd{\'e}n, P.}, {\sc and} {\sc Liggett, T.} 2009.
\newblock Negative dependence and the geometry of polynomials.
\newblock {\em Journal of the American Mathematical Society\/}~{\em 22,\/}~2,
  521--567.

\bibitem[\protect\citeauthoryear{Boyce}{Boyce}{1982}]{boyce-IR}
{\sc Boyce, B.} 1982.
\newblock Beyond topicality: A two stage view of relevance and the retrieval
  process.
\newblock {\em Info.\ Processing and Management\/}~{\em 18}, 105--–109.

\bibitem[\protect\citeauthoryear{Br{\"a}nd{\'e}n and Jonasson}{Br{\"a}nd{\'e}n
  and Jonasson}{2012}]{branden2012negative}
{\sc Br{\"a}nd{\'e}n, P.} {\sc and} {\sc Jonasson, J.} 2012.
\newblock Negative dependence in sampling.
\newblock {\em Scandinavian Journal of Statistics\/}~{\em 39,\/}~4, 830--838.

\bibitem[\protect\citeauthoryear{Braverman, Derakhshan, and
  Molina~Lovett}{Braverman et~al\mbox{.}}{2022}]{braverman2022max}
{\sc Braverman, M.}, {\sc Derakhshan, M.}, {\sc and} {\sc Molina~Lovett, A.}
  2022.
\newblock Max-weight online stochastic matching: Improved approximations
  against the online benchmark.
\newblock In {\em Proceedings of the 23rd ACM Conference on Economics and
  Computation (EC)}. 967--985.

\bibitem[\protect\citeauthoryear{Braverman, Derakhshan, Pollner, Saberi, and
  Wajc}{Braverman et~al\mbox{.}}{2025}]{braverman2025new}
{\sc Braverman, M.}, {\sc Derakhshan, M.}, {\sc Pollner, T.}, {\sc Saberi, A.},
  {\sc and} {\sc Wajc, D.} 2025.
\newblock New philosopher inequalities for online bayesian matching, via
  pivotal sampling.
\newblock In {\em Proceedings of the 36th Annual ACM-SIAM Symposium on Discrete
  Algorithms (SODA)}.

\bibitem[\protect\citeauthoryear{Brubach, Grammel, Ma, and Srinivasan}{Brubach
  et~al\mbox{.}}{2021}]{brubach2021improved}
{\sc Brubach, B.}, {\sc Grammel, N.}, {\sc Ma, W.}, {\sc and} {\sc Srinivasan,
  A.} 2021.
\newblock Improved guarantees for offline stochastic matching via new ordered
  contention resolution schemes.
\newblock {\em Proceedings of the 55th Annual Conference on Neural Information
  Processing Systems (NIPS)\/}~{\em 34}, 27184--27195.

\bibitem[\protect\citeauthoryear{Bruggmann and Zenklusen}{Bruggmann and
  Zenklusen}{2022}]{bruggmann2022optimal}
{\sc Bruggmann, S.} {\sc and} {\sc Zenklusen, R.} 2022.
\newblock An optimal monotone contention resolution scheme for bipartite
  matchings via a polyhedral viewpoint.
\newblock {\em Mathematical Programming\/}~{\em 191,\/}~2, 795--845.

\bibitem[\protect\citeauthoryear{Buchbinder, Naor, and Wajc}{Buchbinder
  et~al\mbox{.}}{2023}]{buchbinder2023lossless}
{\sc Buchbinder, N.}, {\sc Naor, J.~S.}, {\sc and} {\sc Wajc, D.} 2023.
\newblock Lossless online rounding for online bipartite matching (despite its
  impossibility).
\newblock In {\em Proceedings of the 34th Annual ACM-SIAM Symposium on Discrete
  Algorithms (SODA)}. 2030--2068.

\bibitem[\protect\citeauthoryear{Buolamwini and Gebru}{Buolamwini and
  Gebru}{2018}]{pmlr-v81-buolamwini18a}
{\sc Buolamwini, J.} {\sc and} {\sc Gebru, T.} 2018.
\newblock Gender shades: Intersectional accuracy disparities in commercial
  gender classification.
\newblock In {\em Proceedings of the 1stConference on Fairness, Accountability
  and Transparency (FAccT)}. 77--91.

\bibitem[\protect\citeauthoryear{Carbonell and Goldstein}{Carbonell and
  Goldstein}{2017}]{DBLP:journals/sigir/CarbinellG17}
{\sc Carbonell, J.~G.} {\sc and} {\sc Goldstein, J.} 2017.
\newblock The use of {MMR}, diversity-based reranking for reordering documents
  and producing summaries.
\newblock {\em {SIGIR} Forum\/}~{\em 51,\/}~2, 209--210.

\bibitem[\protect\citeauthoryear{Chawla, Hartline, Malec, and Sivan}{Chawla
  et~al\mbox{.}}{2010}]{chawla2010multi}
{\sc Chawla, S.}, {\sc Hartline, J.~D.}, {\sc Malec, D.~L.}, {\sc and} {\sc
  Sivan, B.} 2010.
\newblock Multi-parameter mechanism design and sequential posted pricing.
\newblock In {\em Proceedings of the 42nd Annual ACM Symposium on Theory of
  Computing (STOC)}. 311--320.

\bibitem[\protect\citeauthoryear{Chekuri, Vondr\'{a}k, and Zenklusen}{Chekuri
  et~al\mbox{.}}{2010}]{chekuri2010dependent}
{\sc Chekuri, C.}, {\sc Vondr\'{a}k, J.}, {\sc and} {\sc Zenklusen, R.} 2010.
\newblock Dependent randomized rounding via exchange properties of
  combinatorial structures.
\newblock In {\em Proceedings of the 51st Symposium on Foundations of Computer
  Science (FOCS)}. 575--584.

\bibitem[\protect\citeauthoryear{Christofides and Vaggelatou}{Christofides and
  Vaggelatou}{2004}]{christofides2004connection}
{\sc Christofides, T.~C.} {\sc and} {\sc Vaggelatou, E.} 2004.
\newblock A connection between supermodular ordering and positive/negative
  association.
\newblock {\em Journal of Multivariate analysis\/}~{\em 88,\/}~1, 138--151.

\bibitem[\protect\citeauthoryear{Clarke, Kolla, Cormack, Vechtomova, Ashkan,
  B{\"{u}}ttcher, and MacKinnon}{Clarke
  et~al\mbox{.}}{2008}]{DBLP:conf/sigir/ClarkeKCVABM08}
{\sc Clarke, C. L.~A.}, {\sc Kolla, M.}, {\sc Cormack, G.~V.}, {\sc Vechtomova,
  O.}, {\sc Ashkan, A.}, {\sc B{\"{u}}ttcher, S.}, {\sc and} {\sc MacKinnon,
  I.} 2008.
\newblock Novelty and diversity in information retrieval evaluation.
\newblock In {\em Proceedings of the 31st ACM SIGIR Conference on Research and
  Development in Information Retrieval (SIGIR)}. 659--666.

\bibitem[\protect\citeauthoryear{Cohen, Peng, and Wajc}{Cohen
  et~al\mbox{.}}{2019}]{cohen2019tight}
{\sc Cohen, I.~R.}, {\sc Peng, B.}, {\sc and} {\sc Wajc, D.} 2019.
\newblock Tight bounds for online edge coloring.
\newblock In {\em Proceedings of the 60th Symposium on Foundations of Computer
  Science (FOCS)}. 1--25.

\bibitem[\protect\citeauthoryear{Cohen and Wajc}{Cohen and
  Wajc}{2018}]{cohen2018randomized}
{\sc Cohen, I.~R.} {\sc and} {\sc Wajc, D.} 2018.
\newblock Randomized online matching in regular graphs.
\newblock In {\em Proceedings of the 29th Annual ACM-SIAM Symposium on Discrete
  Algorithms (SODA)}. 960--979.

\bibitem[\protect\citeauthoryear{Devanur and Mehta}{Devanur and
  Mehta}{2022}]{devanur2022online}
{\sc Devanur, N.} {\sc and} {\sc Mehta, A.} 2022.
\newblock Online matching in advertisement auctions.
\newblock In {\em Online and Matching-Based Market Design}. Cambridge
  University Press.

\bibitem[\protect\citeauthoryear{Devanur, Jain, and Kleinberg}{Devanur
  et~al\mbox{.}}{2013}]{devanur2013randomized}
{\sc Devanur, N.~R.}, {\sc Jain, K.}, {\sc and} {\sc Kleinberg, R.~D.} 2013.
\newblock Randomized primal-dual analysis of ranking for online bipartite
  matching.
\newblock In {\em Proceedings of the 24th Annual ACM-SIAM Symposium on Discrete
  Algorithms (SODA)}. 101--107.

\bibitem[\protect\citeauthoryear{Dubhashi, Jonasson, and Ranjan}{Dubhashi
  et~al\mbox{.}}{2007}]{dubhashi2007positive}
{\sc Dubhashi, D.}, {\sc Jonasson, J.}, {\sc and} {\sc Ranjan, D.} 2007.
\newblock Positive influence and negative dependence.
\newblock {\em Combinatorics, Probability and Computing\/}~{\em 16,\/}~01,
  29--41.

\bibitem[\protect\citeauthoryear{Dubhashi and Ranjan}{Dubhashi and
  Ranjan}{1996}]{dubhashi1996balls}
{\sc Dubhashi, D.} {\sc and} {\sc Ranjan, D.} 1996.
\newblock Balls and bins: A study in negative dependence.
\newblock {\em BRICS Report Series\/}~{\em 3,\/}~25.

\bibitem[\protect\citeauthoryear{Dudeja, Goswami, and Saks}{Dudeja
  et~al\mbox{.}}{2025}]{dudeja2025randomized}
{\sc Dudeja, A.}, {\sc Goswami, R.}, {\sc and} {\sc Saks, M.} 2025.
\newblock Randomized greedy online edge coloring succeeds for dense and
  randomly-ordered graphs.
\newblock In {\em Proceedings of the 36th Annual ACM-SIAM Symposium on Discrete
  Algorithms (SODA)}.

\bibitem[\protect\citeauthoryear{Dughmi}{Dughmi}{2020}]{dughmi2020outer}
{\sc Dughmi, S.} 2020.
\newblock The outer limits of contention resolution on matroids and connections
  to the secretary problem.
\newblock In {\em Proceedings of the 47th International Colloquium on Automata,
  Languages and Programming (ICALP)}. 42:1--42:18.

\bibitem[\protect\citeauthoryear{Dughmi}{Dughmi}{2022}]{dughmi2022matroid}
{\sc Dughmi, S.} 2022.
\newblock Matroid secretary is equivalent to contention resolution.
\newblock In {\em Proceedings of the 13th Innovations in Theoretical Computer
  Science Conference (ITCS)}. 58:1--58:23.

\bibitem[\protect\citeauthoryear{Duppala, Li, Luque, Srinivasan, and
  Valieva}{Duppala et~al\mbox{.}}{2023}]{duppala2023concentration}
{\sc Duppala, S.}, {\sc Li, G.~Z.}, {\sc Luque, J.}, {\sc Srinivasan, A.}, {\sc
  and} {\sc Valieva, R.} 2023.
\newblock Concentration of submodular functions under negative dependence.
\newblock {\em arXiv preprint arXiv:2309.05554\/}.

\bibitem[\protect\citeauthoryear{D\"utting, Feldman, Kesselheim, and
  Lucier}{D\"utting et~al\mbox{.}}{2020}]{dutting2020prophet}
{\sc D\"utting, P.}, {\sc Feldman, M.}, {\sc Kesselheim, T.}, {\sc and} {\sc
  Lucier, B.} 2020.
\newblock Prophet inequalities made easy: Stochastic optimization by pricing
  nonstochastic inputs.
\newblock {\em SIAM Journal on Computing (SICOMP)\/}~{\em 49,\/}~3.

\bibitem[\protect\citeauthoryear{Eden, Feldman, Fiat, and Segal}{Eden
  et~al\mbox{.}}{2021}]{eden2018economic}
{\sc Eden, A.}, {\sc Feldman, M.}, {\sc Fiat, A.}, {\sc and} {\sc Segal, K.}
  2021.
\newblock An economics-based analysis of ranking for online bipartite matching.
\newblock In {\em Proceedings of the 4th Symposium on Simplicity in Algorithms
  (SOSA)}. 107--110.

\bibitem[\protect\citeauthoryear{Ezra, Feldman, Gravin, and Tang}{Ezra
  et~al\mbox{.}}{2020}]{ezra2020online}
{\sc Ezra, T.}, {\sc Feldman, M.}, {\sc Gravin, N.}, {\sc and} {\sc Tang,
  Z.~G.} 2020.
\newblock Online stochastic max-weight matching: prophet inequality for vertex
  and edge arrival models.
\newblock In {\em Proceedings of the 21st ACM Conference on Economics and
  Computation (EC)}. 769--787.

\bibitem[\protect\citeauthoryear{Fahrbach, Huang, Tao, and
  Zadimoghaddam}{Fahrbach et~al\mbox{.}}{2020}]{fahrbach2020edge}
{\sc Fahrbach, M.}, {\sc Huang, Z.}, {\sc Tao, R.}, {\sc and} {\sc
  Zadimoghaddam, M.} 2020.
\newblock Edge-weighted online bipartite matching.
\newblock In {\em Proceedings of the 61st Symposium on Foundations of Computer
  Science (FOCS)}. 412--423.

\bibitem[\protect\citeauthoryear{Feige and Vondr\'{a}k}{Feige and
  Vondr\'{a}k}{2006}]{feige2006allocation}
{\sc Feige, U.} {\sc and} {\sc Vondr\'{a}k, J.} 2006.
\newblock The allocation problem with submodular utility functions.
\newblock In {\em Proceedings of the 47th Symposium on Foundations of Computer
  Science (FOCS)}. 667--676.

\bibitem[\protect\citeauthoryear{Feldman, Korula, Mirrokni, Muthukrishnan, and
  P{\'a}l}{Feldman et~al\mbox{.}}{2009}]{feldman2009online2}
{\sc Feldman, J.}, {\sc Korula, N.}, {\sc Mirrokni, V.}, {\sc Muthukrishnan,
  S.}, {\sc and} {\sc P{\'a}l, M.} 2009.
\newblock Online ad assignment with free disposal.
\newblock In {\em Proceedings of the 5th Conference on Web and Internet
  Economics (WINE)}. 374--385.

\bibitem[\protect\citeauthoryear{Feldman, Gravin, and Lucier}{Feldman
  et~al\mbox{.}}{2015}]{feldman2015combinatorial}
{\sc Feldman, M.}, {\sc Gravin, N.}, {\sc and} {\sc Lucier, B.} 2015.
\newblock Combinatorial auctions via posted prices.
\newblock In {\em Proceedings of the 26th Annual ACM-SIAM Symposium on Discrete
  Algorithms (SODA)}. 123--135.

\bibitem[\protect\citeauthoryear{Feldman, Svensson, and Zenklusen}{Feldman
  et~al\mbox{.}}{2016}]{feldman2016online}
{\sc Feldman, M.}, {\sc Svensson, O.}, {\sc and} {\sc Zenklusen, R.} 2016.
\newblock Online contention resolution schemes.
\newblock In {\em Proceedings of the 27th Annual ACM-SIAM Symposium on Discrete
  Algorithms (SODA)}. 1014--1033.

\bibitem[\protect\citeauthoryear{Fu, Lu, Tang, Turkieltaub, Wu, Wu, and
  Zhang}{Fu et~al\mbox{.}}{2022}]{fu2022oblivious}
{\sc Fu, H.}, {\sc Lu, P.}, {\sc Tang, Z.~G.}, {\sc Turkieltaub, A.}, {\sc Wu,
  H.}, {\sc Wu, J.}, {\sc and} {\sc Zhang, Q.} 2022.
\newblock Oblivious online contention resolution schemes.
\newblock In {\em Proceedings of the 5th Symposium on Simplicity in Algorithms
  (SOSA)}. 268--278.

\bibitem[\protect\citeauthoryear{Fu, Tang, Wu, Wu, and Zhang}{Fu
  et~al\mbox{.}}{2021}]{fu2021random}
{\sc Fu, H.}, {\sc Tang, Z.~G.}, {\sc Wu, H.}, {\sc Wu, J.}, {\sc and} {\sc
  Zhang, Q.} 2021.
\newblock Random order vertex arrival contention resolution schemes for
  matching, with applications.
\newblock In {\em Proceedings of the 48th International Colloquium on Automata,
  Languages and Programming (ICALP)}. 68:1--68:20.

\bibitem[\protect\citeauthoryear{Gandhi, Khuller, Parthasarathy, and
  Srinivasan}{Gandhi et~al\mbox{.}}{2006}]{gandhi2006dependent}
{\sc Gandhi, R.}, {\sc Khuller, S.}, {\sc Parthasarathy, S.}, {\sc and} {\sc
  Srinivasan, A.} 2006.
\newblock Dependent rounding and its applications to approximation algorithms.
\newblock {\em Journal of the ACM (JACM)\/}~{\em 53,\/}~3, 324--360.

\bibitem[\protect\citeauthoryear{Gao, He, Huang, Nie, Yuan, and Zhong}{Gao
  et~al\mbox{.}}{2021}]{gao2021improved}
{\sc Gao, R.}, {\sc He, Z.}, {\sc Huang, Z.}, {\sc Nie, Z.}, {\sc Yuan, B.},
  {\sc and} {\sc Zhong, Y.} 2021.
\newblock Improved online correlated selection.
\newblock In {\em Proceedings of the 62nd Symposium on Foundations of Computer
  Science (FOCS)}. 1265--1276.

\bibitem[\protect\citeauthoryear{Guruganesh and Lee}{Guruganesh and
  Lee}{2018}]{guruganesh2018understanding}
{\sc Guruganesh, G.} {\sc and} {\sc Lee, E.} 2018.
\newblock Understanding the correlation gap for matchings.
\newblock In {\em Proceedings of the 37th Conference on Foundations of Software
  Technology and Theoretical Computer Science}. 32:1--32:15.

\bibitem[\protect\citeauthoryear{Hosseini, Huang, Igarashi, and Shah}{Hosseini
  et~al\mbox{.}}{2024}]{hosseini2024class}
{\sc Hosseini, H.}, {\sc Huang, Z.}, {\sc Igarashi, A.}, {\sc and} {\sc Shah,
  N.} 2024.
\newblock Class fairness in online matching.
\newblock {\em Artificial Intelligence\/}~{\em 335}, 104177.

\bibitem[\protect\citeauthoryear{Huang, Shu, and Yan}{Huang
  et~al\mbox{.}}{2022}]{huang2022power}
{\sc Huang, Z.}, {\sc Shu, X.}, {\sc and} {\sc Yan, S.} 2022.
\newblock The power of multiple choices in online stochastic matching.
\newblock In {\em Proceedings of the 54th Annual ACM Symposium on Theory of
  Computing (STOC)}. 91--103.

\bibitem[\protect\citeauthoryear{Huang, Tand, and Wajc}{Huang
  et~al\mbox{.}}{2024}]{huang2024online}
{\sc Huang, Z.}, {\sc Tand, Z.~G.}, {\sc and} {\sc Wajc, D.} 2024.
\newblock Online matching: A brief survey.
\newblock {\em ACM SIGecom Exchanges\/}~{\em 22,\/}~1, 135--158.

\bibitem[\protect\citeauthoryear{Joag-Dev and Proschan}{Joag-Dev and
  Proschan}{1983}]{joag1983negative}
{\sc Joag-Dev, K.} {\sc and} {\sc Proschan, F.} 1983.
\newblock Negative association of random variables with applications.
\newblock {\em The Annals of Statistics\/}, 286--295.

\bibitem[\protect\citeauthoryear{Kalyanasundaram and Pruhs}{Kalyanasundaram and
  Pruhs}{2000}]{kalyanasundaram2000optimal}
{\sc Kalyanasundaram, B.} {\sc and} {\sc Pruhs, K.~R.} 2000.
\newblock An optimal deterministic algorithm for online $b$-matching.
\newblock {\em Theoretical Computer Science (TCS)\/}~{\em 233,\/}~1, 319--325.

\bibitem[\protect\citeauthoryear{Karp, Vazirani, and Vazirani}{Karp
  et~al\mbox{.}}{1990}]{karp1990optimal}
{\sc Karp, R.~M.}, {\sc Vazirani, U.~V.}, {\sc and} {\sc Vazirani, V.~V.} 1990.
\newblock An optimal algorithm for on-line bipartite matching.
\newblock In {\em Proceedings of the 22nd Annual ACM Symposium on Theory of
  Computing (STOC)}. 352--358.

\bibitem[\protect\citeauthoryear{Kleinberg and Weinberg}{Kleinberg and
  Weinberg}{2012}]{kleinberg2012matroid}
{\sc Kleinberg, R.} {\sc and} {\sc Weinberg, S.~M.} 2012.
\newblock Matroid prophet inequalities.
\newblock In {\em Proceedings of the 44th Annual ACM Symposium on Theory of
  Computing (STOC)}. 123--136.

\bibitem[\protect\citeauthoryear{Kong}{Kong}{2022}]{kong:intersectional-fairness}
{\sc Kong, Y.} 2022.
\newblock Are “intersectionally fair” ai algorithms really fair to women of
  color? a philosophical analysis.
\newblock In {\em Proceedings of the 5thConference on Fairness, Accountability,
  and Transparency (FAccT)}. 485–494.

\bibitem[\protect\citeauthoryear{Krengel and Sucheston}{Krengel and
  Sucheston}{1978}]{krengel1978semiamarts}
{\sc Krengel, U.} {\sc and} {\sc Sucheston, L.} 1978.
\newblock On semiamarts, amarts, and processes with finite value.
\newblock {\em Probability on Banach spaces\/}~{\em 4}, 197--266.

\bibitem[\protect\citeauthoryear{Kulkarni, Liu, Sah, Sawhney, and
  Tarnawski}{Kulkarni et~al\mbox{.}}{2022}]{kulkarni2022online}
{\sc Kulkarni, J.}, {\sc Liu, Y.~P.}, {\sc Sah, A.}, {\sc Sawhney, M.}, {\sc
  and} {\sc Tarnawski, J.} 2022.
\newblock Online edge coloring via tree recurrences and correlation decay.
\newblock In {\em Proceedings of the 54th Annual ACM Symposium on Theory of
  Computing (STOC)}. 2958--2977.

\bibitem[\protect\citeauthoryear{Lattanzi, Lavastida, Moseley, and
  Vassilvitskii}{Lattanzi et~al\mbox{.}}{2020}]{lattanzi2020online}
{\sc Lattanzi, S.}, {\sc Lavastida, T.}, {\sc Moseley, B.}, {\sc and} {\sc
  Vassilvitskii, S.} 2020.
\newblock Online scheduling via learned weights.
\newblock In {\em Proceedings of the 31st Annual ACM-SIAM Symposium on Discrete
  Algorithms (SODA)}. 1859--1877.

\bibitem[\protect\citeauthoryear{Lavastida, Moseley, Ravi, and Xu}{Lavastida
  et~al\mbox{.}}{2021}]{lavastida2021learnable}
{\sc Lavastida, T.}, {\sc Moseley, B.}, {\sc Ravi, R.}, {\sc and} {\sc Xu, C.}
  2021.
\newblock Learnable and instance-robust predictions for online matching, flows
  and load balancing.
\newblock In {\em Proceedings of the 29th Annual European Symposium on
  Algorithms (ESA)}. 59:1--59:17.

\bibitem[\protect\citeauthoryear{Li and Xian}{Li and Xian}{2021}]{li2021online}
{\sc Li, S.} {\sc and} {\sc Xian, J.} 2021.
\newblock Online unrelated machine load balancing with predictions revisited.
\newblock In {\em International Conference on Machine Learning}. 6523--6532.

\bibitem[\protect\citeauthoryear{MacRury and Ma}{MacRury and
  Ma}{2024}]{macrury2024random}
{\sc MacRury, C.} {\sc and} {\sc Ma, W.} 2024.
\newblock Random-order contention resolution via continuous induction:
  Tightness for bipartite matching under vertex arrivals.
\newblock In {\em Proceedings of the 56th Annual ACM Symposium on Theory of
  Computing (STOC)}. 1629--1640.

\bibitem[\protect\citeauthoryear{MacRury, Ma, and Grammel}{MacRury
  et~al\mbox{.}}{2023}]{macrury2023random}
{\sc MacRury, C.}, {\sc Ma, W.}, {\sc and} {\sc Grammel, N.} 2023.
\newblock On (random-order) online contention resolution schemes for the
  matching polytope of (bipartite) graphs.
\newblock In {\em Proceedings of the 34th Annual ACM-SIAM Symposium on Discrete
  Algorithms (SODA)}. 1995--2014.

\bibitem[\protect\citeauthoryear{Mehta}{Mehta}{2013}]{mehta2013online}
{\sc Mehta, A.} 2013.
\newblock Online matching and ad allocation.
\newblock {\em Foundations and Trends{\textregistered} in Theoretical Computer
  Science\/}~{\em 8,\/}~4, 265--368.

\bibitem[\protect\citeauthoryear{Mehta, Saberi, Vazirani, and Vazirani}{Mehta
  et~al\mbox{.}}{2007}]{mehta2007adwords}
{\sc Mehta, A.}, {\sc Saberi, A.}, {\sc Vazirani, U.}, {\sc and} {\sc Vazirani,
  V.} 2007.
\newblock Adwords and generalized online matching.
\newblock {\em Journal of the ACM (JACM)\/}~{\em 54,\/}~5, 22.

\bibitem[\protect\citeauthoryear{Mitzenmacher and Vassilvitskii}{Mitzenmacher
  and Vassilvitskii}{2020}]{MitzVas20}
{\sc Mitzenmacher, M.} {\sc and} {\sc Vassilvitskii, S.} 2020.
\newblock Algorithms with predictions.
\newblock {\em CoRR\/}~{\em abs/2006.09123}.

\bibitem[\protect\citeauthoryear{Nuti and Vondr{\'a}k}{Nuti and
  Vondr{\'a}k}{2023}]{nuti2023towards}
{\sc Nuti, P.} {\sc and} {\sc Vondr{\'a}k, J.} 2023.
\newblock Towards an optimal contention resolution scheme for matchings.
\newblock In {\em Proceedings of the 24th Conference on Integer Programming and
  Combinatorial Optimization (IPCO)}. 378--392.

\bibitem[\protect\citeauthoryear{Papadimitriou, Pollner, Saberi, and
  Wajc}{Papadimitriou et~al\mbox{.}}{2021}]{papadimitriou2021online}
{\sc Papadimitriou, C.}, {\sc Pollner, T.}, {\sc Saberi, A.}, {\sc and} {\sc
  Wajc, D.} 2021.
\newblock Online stochastic max-weight bipartite matching: Beyond prophet
  inequalities.
\newblock In {\em Proceedings of the 22nd ACM Conference on Economics and
  Computation (EC)}. 763--764.

\bibitem[\protect\citeauthoryear{Patel and Wajc}{Patel and
  Wajc}{2024}]{patel2024combinatorial}
{\sc Patel, N.} {\sc and} {\sc Wajc, D.} 2024.
\newblock Combinatorial stationary prophet inequalities.
\newblock In {\em Proceedings of the 35th Annual ACM-SIAM Symposium on Discrete
  Algorithms (SODA)}. 4605--4630.

\bibitem[\protect\citeauthoryear{Pemantle and Peres}{Pemantle and
  Peres}{2014}]{pemantle2014concentration}
{\sc Pemantle, R.} {\sc and} {\sc Peres, Y.} 2014.
\newblock Concentration of lipschitz functionals of determinantal and other
  strong rayleigh measures.
\newblock {\em Combinatorics, Probability and Computing\/}~{\em 23,\/}~1,
  140--160.

\bibitem[\protect\citeauthoryear{Pollner, Roghani, Saberi, and Wajc}{Pollner
  et~al\mbox{.}}{2022}]{pollner2022improved}
{\sc Pollner, T.}, {\sc Roghani, M.}, {\sc Saberi, A.}, {\sc and} {\sc Wajc,
  D.} 2022.
\newblock Improved online contention resolution for matchings and applications
  to the gig economy.
\newblock In {\em Proceedings of the 23rd ACM Conference on Economics and
  Computation (EC)}. 321--322.

\bibitem[\protect\citeauthoryear{Qiu and Singla}{Qiu and
  Singla}{2022}]{qiu2022submodular}
{\sc Qiu, F.} {\sc and} {\sc Singla, S.} 2022.
\newblock Submodular dominance and applications.
\newblock In {\em Proceedings of the 25th International Workshop on
  Approximation Algorithms for Combinatorial Optimization Problems (APPROX)}.
  44:1--44:21.

\bibitem[\protect\citeauthoryear{Saberi and Wajc}{Saberi and
  Wajc}{2021}]{saberi2021greedy}
{\sc Saberi, A.} {\sc and} {\sc Wajc, D.} 2021.
\newblock The greedy algorithm is \emph{not} optimal for on-line edge coloring.
\newblock In {\em Proceedings of the 48th International Colloquium on Automata,
  Languages and Programming (ICALP)}. 109:1--109:18.

\bibitem[\protect\citeauthoryear{Srinivasan}{Srinivasan}{2001}]{srinivasan2001distributions}
{\sc Srinivasan, A.} 2001.
\newblock Distributions on level-sets with applications to approximation
  algorithms.
\newblock In {\em Proceedings of the 42nd Symposium on Foundations of Computer
  Science (FOCS)}. 588--597.

\bibitem[\protect\citeauthoryear{Srinivasan}{Srinivasan}{2007}]{DBLP:conf/soda/Srinivasan07}
{\sc Srinivasan, A.} 2007.
\newblock Approximation algorithms for stochastic and risk-averse optimization.
\newblock In {\em Proceedings of the 18th Annual ACM-SIAM Symposium on Discrete
  Algorithms (SODA)}. 1305--1313.

\bibitem[\protect\citeauthoryear{Swamy and Shmoys}{Swamy and
  Shmoys}{2006}]{swamy2006approximation}
{\sc Swamy, C.} {\sc and} {\sc Shmoys, D.~B.} 2006.
\newblock Approximation algorithms for 2-stage stochastic optimization
  problems.
\newblock {\em ACM SIGACT News\/}~{\em 37,\/}~1, 33--46.

\bibitem[\protect\citeauthoryear{Swamy and Shmoys}{Swamy and
  Shmoys}{2012}]{DBLP:journals/siamcomp/SwamyS12}
{\sc Swamy, C.} {\sc and} {\sc Shmoys, D.~B.} 2012.
\newblock Sampling-based approximation algorithms for multistage stochastic
  optimization.
\newblock {\em SIAM Journal on Computing (SICOMP)\/}~{\em 41,\/}~4, 975--1004.

\bibitem[\protect\citeauthoryear{Tang, Wu, and Wu}{Tang
  et~al\mbox{.}}{2022}]{tang2022fractional}
{\sc Tang, Z.~G.}, {\sc Wu, J.}, {\sc and} {\sc Wu, H.} 2022.
\newblock (fractional) online stochastic matching via fine-grained offline
  statistics.
\newblock In {\em Proceedings of the 54th Annual ACM Symposium on Theory of
  Computing (STOC)}. 77--90.

\bibitem[\protect\citeauthoryear{Torrico and Toriello}{Torrico and
  Toriello}{2022}]{torrico2022dynamic}
{\sc Torrico, A.} {\sc and} {\sc Toriello, A.} 2022.
\newblock Dynamic relaxations for online bipartite matching.
\newblock {\em INFORMS Journal on Computing\/}~{\em 34,\/}~4, 1871--1884.

\bibitem[\protect\citeauthoryear{Vazirani}{Vazirani}{2001}]{vazirani2001approximation}
{\sc Vazirani, V.~V.} 2001.
\newblock {\em Approximation algorithms}. Vol.~1.
\newblock Springer.

\bibitem[\protect\citeauthoryear{Williamson and Shmoys}{Williamson and
  Shmoys}{2011}]{williamson2011design}
{\sc Williamson, D.~P.} {\sc and} {\sc Shmoys, D.~B.} 2011.
\newblock {\em The design of approximation algorithms}.
\newblock Cambridge university press.

\end{thebibliography}
	\end{document}